\documentclass[12pt,a4paper]{article}
\usepackage{color, amsfonts, amsmath,  amsthm, amssymb, nccmath, bm, float, graphicx, physics, multirow, BOONDOX-cal}
\usepackage{mathtools}  

\usepackage{enumerate}
\usepackage{enumitem}

\usepackage{natbib}
\usepackage{url} 
\bibliographystyle{agsm}
\usepackage[margin=1in]{geometry}

\usepackage{setspace}
%\doublespacing
\linespread{1.25}

%%%%%%%%%%%%%%%%%%%%%%%%%%%%%%%%%
\DeclareMathOperator*{\argmin}{\arg\!\min}
\DeclareMathOperator*{\argmax}{\arg\!\max}

\DeclareMathOperator*{\diag}{\normalfont\textrm{diag}}

\allowdisplaybreaks[4]

\newtheorem{corollary}{Corollary}

\newtheorem{lemma}{Lemma}

\newtheorem{theorem}{Theorem}
\newtheorem{assumption}{Assumption}

\numberwithin{corollary}{section}
\numberwithin{definition}{section}
\numberwithin{equation}{section}
\numberwithin{lemma}{section}
\numberwithin{proposition}{section}
\numberwithin{remark}{section}
\numberwithin{theorem}{section}
%%%%%%%%%%%%%%%%%%%%%%%%%%%%%%%%%

\begin{document}

 \begin{titlepage}
   \begin{center}
{\large \textbf{Time-Varying Vector Error-Correction Models:\\ Estimation and Inference}}

    \bigskip
    
$^{\ast}${\sc Jiti Gao} and $^{\ast}${\sc Bin Peng} and $^{\ddag}${\sc Yayi Yan}
\bigskip

$^{\ast}$Department of Econometrics and Business Statistics,  Monash University\\
\medskip   
    
$^{\dag}$School of Statistics and Management, Shanghai University of Finance and Economics
    
 \bigskip
    
\today
\end{center}
  \medskip

\begin{abstract}
This paper considers a time-varying vector error-correction model that allows for different time series behaviours (e.g., unit-root and locally stationary processes) to interact with each other to co-exist. From practical perspectives, this framework can be used to estimate shifts in the predictability of non-stationary variables, test whether economic theories hold periodically, etc. We first develop a time-varying Granger Representation Theorem, which facilitates the establishment of asymptotic properties for the model, and then propose estimation and inferential methods and theory for both short-run and long-run coefficients. We also propose an information criterion to estimate the lag length, a singular-value ratio test to determine the cointegration rank, and a hypothesis test to examine the parameter stability. To validate the  theoretical findings, we conduct extensive simulations. Finally, we demonstrate the empirical relevance by applying the framework to investigate the rational expectations hypothesis of the U.S. term structure.

\medskip

\noindent \textbf{Keywords:} Cointegration, Gaussian Approximations, Granger Representation Theorem, Iterated Time-Varying Functions, Term Structure of Interest Rates

\medskip
\medskip

\noindent%
%{\it Keywords:} Cointegration, Granger Representation Theorem, Expectations Hypothesis 

%\medskip
%\medskip

%\noindent{\it JEL Classification:} C14, C32, E43.

\end{abstract}
 \end{titlepage}

\section{Introduction}\label{Sec1}

Vector error-correction models (VECM) have been widely used in practice to study the short-run dynamics and long-run equilibrium of multiple non-stationary time series (e.g., \citealp{barigozzi2022inference}). These models have proven to be useful in forecasting non-stationary time series and testing the validity of various hypotheses and theories, such as the rational expectations hypothesis of the term structure (\citealp{bauer2020interest}), money demand theory (\citealp{benati2021international}), real business-cycle theory (\citealp{king1991stochastic}), and more. However, it is important to note that all of these studies are based on the assumption of constant parameters. In practice, parameters may evolve over time, and failure to take these changes into account leads to incorrect policy implications and predictions (\citealp[p. 15]{fan2003nonlinear}).

To model the changes over time, various parametric VECM models  (e.g., \citealp{hansen2002testing,hansen2003structural,bergamelli2019combining}) have been proposed to allow for abrupt structural breaks. However, for this line of research, the corresponding estimation and inferential theory has not been well established, which can undermine the reliability of these models, perhaps because the relevant empirical process theory does not typically apply to this case when the data generating process exhibits some uncertain information (e.g., unknown structural breaks and dates) and involves different time series behaviours (e.g., unit-root and piecewise stationary processes). For instance, \cite{hansen2002testing} propose a method to test for the presence of a threshold effect in the VECM model with a single cointegration vector, but do not establish the corresponding estimation theory. Similarly, \cite{hansen2003structural} assumes the change points and the number of cointegration relations are given, while \cite{bergamelli2019combining} consider testing the structural breaks of a VECM model with unknown break dates but do not provide estimation and inferential theories for model parameters. What is more, model misspecification and parameter instability may undermine the performance of parametric time-varying VECM models.  \cite{hansen2001new} notes that it is unlikely that a structural break would be immediate and that it might be more reasonable to allow for a structural change to take a period of time to take effect. Therefore, models that allow for smooth changes over time may provide a more accurate representation of the dynamics in reality.

Having said that, we consider the time-varying vector error-correction model (VECM):
\begin{equation}\label{Eq2.1}
	\Delta \mathbf{y}_t = \bm{\Pi}(\tau_t)\mathbf{y}_{t-1} + \sum_{j=1}^{p_0-1}\bm{\Gamma}_j(\tau_t)\Delta \mathbf{y}_{t-j} + \mathbf{u}_t,\quad \mathbf{u}_t = \bm{\omega}(\tau_t)\bm{\varepsilon}_t
\end{equation}
for $1\leq t\leq T$, where $\bm{\Pi}(\tau) = \bm{\alpha}(\tau)\bm{\beta}^\top$, $\bm{\alpha}(\tau)$ is the $d\times r_0$ adjustment coefficients, $\bm{\beta}$ is the $d\times r_0$ cointegration matrix, and $r_0$ is the cointegration rank to be determined by the dataset. Here, $\bm{\omega}(\cdot)$ governs the time-varying dynamics of the covariance matrix of $\{\mathbf{u}_t\}$, so it characterizes the permanent changes in unconditional volatility. Here, $\{ \mathbf{y}_t \}$ and $\{\Delta\mathbf{y}_t \}$ naturally connect two types of nonstationarity using a time-varying setup. Note that we assume the cointegration matrix $\bm{\beta}$ to be time-invariant, since there are ample empirical evidence showing that the short-run dynamics should be time-varying, while the long-run relationship between economic variables are quite stale (e.g., the expectations hypothesis of the term structure in \citealp{hansen2003structural}; long-run money demand theory in  \citealp{benati2021international}). In some specific cases, such as testing the present-value theory for stock returns  (e.g., \citealp{campbell1987cointegration}), the cointegration matrix is even known a priori. Mathematically, the decomposition of $\bm{\Pi}(\tau) = \bm{\alpha}(\tau)\bm{\beta}^\top$ is simply an identification restriction. Without any structure, it is impossible to identify the elements in the decomposition of  $\bm{\Pi}(\tau)$ due to the multiplication form, so we go along with the existing literature by retaining that $\bm{\beta}$ is a time--invariant matrix.

Having presented our model in \eqref{Eq2.1}, we comment on a challenge from the methodological perspective. In the extant literature of time-varying dynamic models, one often adopts the so-called ``stationary approximation'' technique in order to find a weakly dependent stationary approximation for time-varying dynamical processes before being able to establish asymptotics  (e.g., \citealp{dahlhaus1996kullback,chandler2012mode,zhang2012inference} on time-varying AR models; \citealp{dahlhaus2009empirical} on time-varying ARMA models; \citealp{dahlhaus2006statistical,truquet2017parameter} on time-varying ARCH models; \citealp{karmakar2021simultaneous} on time-varying AR-ARCH models; \citealp{gao2022time} on time-varying VARMA-GARCH models). However, for \eqref{Eq2.1}, $\Delta\mathbf{y}_t$ is expressed in terms of iterated time-varying functions, so the properties of $\Delta\mathbf{y}_t$ involve the integrated processes $\{\mathbf{y}_s\}_{s<t}$ and also depend on infinity past points $\{s/T\}_{s<t}$. As a consequence, how to approximate  $\{\Delta\mathbf{y}_t\}$ and  $\{ \mathbf{y}_t\}$ becomes challenging and the extant empirical process theories for stationary or locally stationary processes do not apply in this case, to the best of our knowledge.

In view of the aforementioned issues, in this study our contributions to the literature are as follows:

\begin{enumerate}
\item We first develop a time-varying version of Granger Representation Theorem for model \eqref{Eq2.1}, which indicates that $\Delta\mathbf{y}_t$ can be approximated by a time-varying vector moving average infinity (VMA($\infty$)) process, and thus $\mathbf{y}_t$ can be approximated by a partial sum of time-varying VMA($\infty$) processes;

\item We then provide a comprehensive study on time-varying VMA($\infty$) processes, including the Nagaev-type inequality, Gaussian approximation and the limit theorem for quadratic forms. With these fundamental results in hand, we are able to establish an estimation theory and the corresponding properties for both short-run and long-run coefficients.

\item In addition, we propose an information criterion to estimate the lag length, a singular-value ratio test to determine the cointegration rank, and a hypothesis test to examine the parameter stability.
\end{enumerate}

In our empirical analysis, we utilize the newly developed framework to investigate the rational expectations hypothesis of the interest rate term structure. Our results indicate that (1). the predictability of the term structure exhibits significant time-varying behaviour, and (2). the expectations hypothesis of the term structure holds periodically, particularly during periods of unusually high inflation. These findings lend support to the work of \cite{andreasen2021yield}, who propose a macro-finance model of the term structure to explain variations in bond return predictability and identify the role of Federal Reserve monetary policy in stabilizing inflation as a key factor.

The rest of the paper is organized as follows. Section \ref{Sec2} presents the estimation methodology and theory. In Section \ref{Sec3}, we conduct extensive simulation studies to examine the theoretical findings. Section \ref{Sec4} investigates the rational expectations hypothesis of the U.S. term structure. Section \ref{Sec5} concludes. All proofs are collected in the online supplementary Appendix A of the paper.

Before proceeding further, it is convenient to introduce some notations: $|\cdot|$ denotes the absolute value of a scalar or the spectral norm of a matrix; for a random variable $\mathbf{v}$, let $\|\mathbf{v}\|_q= (E|\mathbf{v}|^q )^{1/q}$ for $q\ge 1$; $\otimes$ denotes the Kronecker product; $\mathbf{I}_a$ stands for an $a\times a$ identity matrix; $\mathbf{0}_{a\times b}$ stands for an $a\times b$ matrix of zeros, and we write $\mathbf{0}_a$ for short when $a=b$; for a function $g(w)$, let $g^{(j)}(w)$ be the $j^{th}$ derivative of $g(w)$, where $j\ge 0$ and $g^{(0)}(w) \equiv g(w)$; let $\tilde{c}_k =\int_{-1}^{1} u^k K(u) \mathrm{d}u$ and $\tilde{v}_k= \int_{-1}^{1} u^k K^2(u) \mathrm{d}u$ for integer $k\ge 0$; $\mathrm{vec}(\cdot)$ stacks the elements of an $m\times n$ matrix as an $mn \times 1$ vector; $\mathbf{A}^{+}$ denotes the Moore-Penrose (MP) inverse of matrix $\mathbf{A}$; for a matrix $\mathbf{A}$ with full column rank, we let $\mathbf{P}_{\mathbf{A}} = \mathbf{A}(\mathbf{A}^\top \mathbf{A})^{-1}\mathbf{A}^\top$ and  $\overline{\mathbf{A}} = \mathbf{A}(\mathbf{A}^\top\mathbf{A})^{-1}$; for $m\geq n$, we denote by $\mathbf{M}_{\perp}$ an orthogonal matrix complement of the $m\times n$ matrix $\mathbf{M}$ with $\mathrm{Rank}(\mathbf{M}) = n$;  $\to_P$, $\to_D$ and $\Rightarrow$ denote convergence in probability, convergence in distribution, weak convergence with respect to the uniform metric; let $\mathbf{W}_d(\cdot,\bm{\Sigma})$ be a $d$-dimensional Brownian motion with covariance matrix $\bm{\Sigma}$.

\section{The Methodology and Asymptotics}\label{Sec2}

In what follows, Section \ref{Sec2.1} studies the properties of $\{\mathbf{y}_t \}$ and  $\{\Delta\mathbf{y}_t \}$, and provide (integrated) time-varying VMA($\infty$) approximations for both processes. In Section \ref{Sec2.2}, we assume $r_0$ and $p_0$ are known for simplicity, and consider the estimation of $\bm{\alpha}(\cdot)$ and $\bm{\beta}$. Section \ref{Sec2.3} proposes an information criterion to estimate the lag length (i.e., $p_0$), and a singular-value ratio test to determine the cointegration rank (i.e., $r_0$). Finally, Section \ref{Sec2.5} gives a parameter stability test, which provides statistical evidence to support the necessity of time-varying VECM models practically.

\subsection{Approximations of $\{\mathbf{y}_t \}$ and $\{\Delta\mathbf{y}_t \}$}\label{Sec2.1}

To investigate \eqref{Eq2.1}, the first obstacle lies in the complexity of the dependence structure of $\Delta\mathbf{y}_t$, which is expressed in terms of iterated time-varying functions with infinity memory. This implies that the population properites of $\Delta\mathbf{y}_t$ involve the integrated processes $\{\mathbf{y}_s\}_{s<t}$ and also depend on infinity past points $\{s/T\}_{s<t}$. As a result, the extant empirical process theory does not apply. To address this issue, we initiate our analysis by seeking local (not necessarily stationary) approximations for each $\Delta\mathbf{y}_t$ and $\mathbf{y}_t$, and then establish some necessary asymptotic properties for these processes.

We now introduce some necessary assumptions.

\begin{assumption}\label{Ass1}
\item
	\begin{enumerate}[wide, labelwidth=!, labelindent=0pt]
		\item[1.] Define $\mathbf{C}_{\tau}(L) = (1-L)\mathbf{I}_d - \bm{\alpha}(\tau)\bm{\beta}^\top L - \sum_{j=1}^{p_0-1}\bm{\Gamma}_j(\tau)(1-L)L^j$. 
Suppose that
\begin{enumerate}
\item[a.] $\mathrm{det}(\mathbf{C}_{\tau}(L)) = 0$ if and only if $|L| > 1$ or $L = 1$ uniformly over $\tau \in [0,1]$; the number of unit roots, $L = 1$, is exactly $d-r_0$;

\item[b.] $\mathrm{Rank}(\bm{\alpha}(\tau)) =r_0$ uniformly over $\tau \in [0,1]$, and $\mathrm{Rank}(\bm{\beta}) = r_0$;

\item[c.] $\bm{\alpha}_{\perp}^\top(\tau)\left[\mathbf{I}_d - \sum_{i=1}^{p_0-1}\bm{\Gamma}_i(\tau)\right]\bm{\beta}_{\perp}$ is nonsingular for each given $\tau \in [0,1]$.

\end{enumerate}

\item[2.]  Suppose that

\begin{enumerate}
\item[a.] Each element of $\bm{\omega}(\tau)$, $\bm{\alpha}(\tau)$ and $\bm{\Gamma}_j(\tau)$ with $j=1,\ldots, p_0-1$ is third order continuously differentiable on $[0,1]$;

\item[b.] $\mathbf{y}_0 = O_P(1)$, and,  for $\tau<0$, $\bm{\omega}(\tau)=\bm{\omega}(0)$, $\bm{\alpha}(\tau)=\bm{\alpha}(0)$, and $ \bm{\Gamma}_{j}(\tau)  = \bm{\Gamma}_{j}(0)$ with $j=1,\ldots, p_0-1$.
\end{enumerate}		
		
\item[3.] $\{\bm{\varepsilon}_t\}$ is a sequence of martingale differences such that $E(\bm{\varepsilon}_t\bm{\varepsilon}_t^\top\mid \mathcal{F}_{t-1}) = \mathbf{I}_d$ almost surely (a.s.), where $\mathcal{F}_{t} = \sigma(\bm{\varepsilon}_t,\bm{\varepsilon}_{t-1},\ldots)$, and $\max_t\|\bm{\varepsilon}_t\|_\delta < \infty$ for some $\delta > 4$; $\bm{\Omega}(\tau)=\bm{\omega}(\tau)\bm{\omega}(\tau)^\top >0$ is uniformly over $\tau \in [0,1]$.
	\end{enumerate}
\end{assumption}

Assumption \ref{Ass1}.1 explicitly excludes explosive processes, and ensures that $\mathbf{y}_t$ is an integrated process of order one with $d-r_0$ common unit root components, as well as $r_0$ cointegration relationships. In Assumption \ref{Ass1}.1.c,  $\bm{\alpha}_{\perp}^\top(\tau) [\mathbf{I}_d - \sum_{i=1}^{p_0-1}\bm{\Gamma}_i(\tau) ]\bm{\beta}_{\perp}$ being nonsingular ensures the existence of a time-varying Granger Representation Theorem for $\mathbf{y}_t$, which facilitates the asymptotic development later on.

Assumption \ref{Ass1}.2.a allows the underlying data generating process to evolve over time in a smooth manner.   Assumption \ref{Ass1}.2.b regulates the behaviour of $\mathbf{y}_t$ for $t\leq 0$, which is standard in the literature of locally stationary models (e.g., \citealp{vogt2012nonparametric}) and unit-root processes (e.g., \citealp{li2019kernel}). 

Assumption \ref{Ass1}.3 imposes conditions on the error innovations, which are standard in the time series literature (e.g., \citealp{lutkepohl2005new}).

\medskip

With these conditions in hand, we are able to provide local approximations for each $\Delta\mathbf{y}_t$ and $\mathbf{y}_t$ with $t \geq 1$. For notational simplicity, we denote  

\begin{eqnarray*}
 \mathbf{B}_\tau(L) =\mathbf{J}^{-1}\mathbf{B}_\tau^*(L)\mathbf{J} \quad \text{and}\quad \mathbf{B}_\tau^*(L) = \mathbf{J}[\bm{\Gamma}_\tau(L)\overline{\bm{\beta}}(1-L)-\bm{\alpha}(\tau)L,\ \bm{\Gamma}_{\tau}(L)\bm{\beta}_{\perp}]
\end{eqnarray*}		
in which	$\mathbf{J} = [\bm{\beta},\overline{\bm{\beta}}_{\perp}]^\top$ and $\bm{\Gamma}_{\tau}(L)=\mathbf{I}_d - \sum_{i=1}^{p_0-1}\bm{\Gamma}_i(\tau)L^i$.

\medskip

\begin{lemma}\label{L1}		
Let Assumption \ref{Ass1} hold.

\begin{enumerate}[wide, labelwidth=!, labelindent=0pt]
\item[1.] We obtain that $\mathrm{det}\left(\mathbf{B}_\tau(L)\right)\neq 0$ for all $|L|\leq 1$, and $\mathbf{B}_\tau(L) $ admits an expression of the form: $\mathbf{B}_\tau(L)  =\mathbf{I}_d - \sum_{i=1}^{p}\mathbf{B}_i(\tau)L^i $, thus we can denote   
\begin{eqnarray*}
\mathbf{B}_\tau^{-1}(L):= \bm{\Psi}_\tau(L)=\sum_{j=0}^{\infty}\bm{\Psi}_j(\tau) L^j 
\end{eqnarray*}
that satisfies $\mathbf{P}_{\bm{\beta}_{\perp}}\bm{\Psi}_\tau(1) = \bm{\beta}_{\perp} [\bm{\alpha}_{\perp}^\top(\tau)\bm{\Gamma}_{\tau}(1) \bm{\beta}_{\perp}]^{-1} \bm{\alpha}_{\perp}^\top(\tau).$

\item[2.] Equation \eqref{Eq2.1} admits the following representation:
			\begin{eqnarray*}
			\mathbf{y}_t =\mathbf{P}_{ \bm{\beta}_{\perp}}\sum_{j=1}^{t}\mathbf{z}_j +\mathbf{P}_{\bm{\beta}} \mathbf{z}_t + \mathbf{P}_{ \bm{\beta}_{\perp}} \mathbf{y}_0 \quad \text{in which}\quad	\mathbf{z}_t = \sum_{i=1}^{p}\mathbf{B}_i(\tau_t)\mathbf{z}_{t-i} + \mathbf{u}_t.
			\end{eqnarray*} 
	
\item[3.]   For any $\tau \in[0,1]$, model (\ref{Eq2.1}) can be approximated by
	\begin{equation*}
		\Delta \widetilde{\mathbf{y}}_t(\tau) =  \bm{\Pi}(\tau) \widetilde{\mathbf{y}}_{t-1}(\tau) + \sum_{j=1}^{p_0-1}\bm{\Gamma}_j(\tau)\Delta \widetilde{\mathbf{y}}_{t-j}(\tau) + \widetilde{\mathbf{u}}_t(\tau)\quad\text{with}\quad \widetilde{\mathbf{u}}_t(\tau) = \bm{\omega}(\tau)\bm{\varepsilon}_t,
	\end{equation*}
	which  admits the following representation:
			\begin{eqnarray*}
				\widetilde{\mathbf{y}}_t(\tau) &=&\mathbf{P}_{ \bm{\beta}_{\perp}}\bm{\Psi}_\tau(1)\sum_{i=1}^{t}\widetilde{\mathbf{u}}_i(\tau) +\mathbf{P}_{ \bm{\beta} } [\bm{\Psi}_\tau(1)\widetilde{\mathbf{u}}_t(\tau) +  \bm{\Psi}_\tau^*(L)\widetilde{\mathbf{u}}_{t-1}(\tau)] - \bm{\Psi}_\tau^*(L) \widetilde{\mathbf{u}}_t(\tau) +\widetilde{\mathbf{y}}_0^*(\tau),\\
				\widetilde{\mathbf{y}}_0^*(\tau) &=&\mathbf{P}_{ \bm{\beta}_{\perp}} \bm{\Psi}_\tau^*(L)\widetilde{\mathbf{u}}_0(\tau) +\mathbf{P}_{ \bm{\beta}_{\perp}} \widetilde{\mathbf{y}}_0(\tau),
			\end{eqnarray*}
			
				\noindent such that
	\begin{enumerate}
		\item[a.]  $\max_{1\leq t \leq T} \|\Delta\mathbf{y}_t-\Delta\widetilde{\mathbf{y}}_t(\tau_t)\|_\delta  = O(1/T)$,
		
		\item[b.] $\sup_{\tau,\tau'\in[0,1]}\|\Delta\widetilde{\mathbf{y}}_t(\tau)-\Delta\widetilde{\mathbf{y}}_t(\tau')\|_\delta = O(|\tau - \tau'|)$,
		
		\item[c.] $\max_{1\leq t\leq T}\|\mathbf{y}_t - \mathbf{y}_0 - \sum_{j=1}^{t} \Delta\widetilde{\mathbf{y}}_j(\tau_j)\|_\delta = O(1/\sqrt{T})$,
		
		\end{enumerate}		
			where   $\widetilde{\mathbf{y}}_0(\tau)$ is an initial value of $\widetilde{\mathbf{y}}_t(\tau)$,  $\bm{\Psi}_\tau^*(L) = \sum_{j=0}^\infty \bm{\Psi}_j^*(\tau)$, and $\bm{\Psi}_j^*(\tau) = \sum_{i=j+1}^{\infty}\bm{\Psi}_i(\tau)$.
			
			\item[4.]  Finally, we  obtain that
\begin{eqnarray*}
 T^{-1/2}\mathbf{y}_{\lfloor Tu \rfloor} \Rightarrow \mathbf{W}_d(u,\bm{\Sigma}_{\mathbf{y}}(u)) 
\end{eqnarray*}			
uniformly over $u\in [0,1]$,  where $\bm{\Sigma}_{\mathbf{y}}(u) =\mathbf{P}_{ \bm{\beta}_{\perp}}\int_{0}^{u}  \bm{\Psi}_{s}(1) \bm{\Omega}(s) \bm{\Psi}_{s}^\top(1) \mathrm{d}s \mathbf{P}_{ \bm{\beta}_{\perp}} .$
\end{enumerate}
\end{lemma}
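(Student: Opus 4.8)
I would establish the four items in sequence, each feeding into the next; the only genuinely hard part is the rate bookkeeping in item~3.

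\emph{Items 1 and 2.} For item~1 the plan is to first establish the algebraic factorization $\mathbf{B}_\tau(L) = \mathbf{C}_\tau(L)\,[\overline{\bm{\beta}},\,(1-L)^{-1}\bm{\beta}_\perp]\,\mathbf{J}$, which I would check column-by-column after writing $\mathbf{C}_\tau(L) = (1-L)\bm{\Gamma}_\tau(L) - \bm{\alpha}(\tau)\bm{\beta}^\top L$ and using $\bm{\beta}^\top\overline{\bm{\beta}} = \mathbf{I}_{r_0}$, $\bm{\beta}^\top\bm{\beta}_\perp = \mathbf{0}$. Since $\mathbf{J}[\overline{\bm{\beta}},\bm{\beta}_\perp] = \mathbf{I}_d$, taking determinants gives $\det(\mathbf{B}_\tau(L)) = (1-L)^{-(d-r_0)}\det(\mathbf{C}_\tau(L))$; by Assumption~\ref{Ass1}.1 the right-hand side is a polynomial with a zero of order exactly $d-r_0$ at $L=1$ and no other zero in $|L|\leq 1$, so $\det(\mathbf{B}_\tau(L))\neq 0$ on the closed unit disk. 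Together with $\mathbf{B}_\tau(L) = \bm{\Gamma}_\tau(L)(\mathbf{I}_d - \mathbf{P}_{\bm{\beta}}L) - \bm{\alpha}(\tau)\bm{\beta}^\top L$ being a matrix polynomial of degree $p_0$ with $\mathbf{B}_\tau(0) = \mathbf{I}_d$, this yields the power-series inverse $\bm{\Psi}_\tau(L)$, whose coefficients decay geometrically uniformly in $\tau$ by a Cauchy-estimate/resolvent argument using continuity of $\tau\mapsto\mathbf{B}_\tau(\cdot)$ and compactness of $[0,1]$. Evaluating $\mathbf{B}_\tau(L)\bm{\Psi}_\tau(L) = \mathbf{I}_d$ at $L=1$ gives $\bm{\Psi}_\tau(1) = [\bm{\Gamma}_\tau(1)\mathbf{P}_{\bm{\beta}_\perp} - \bm{\alpha}(\tau)\bm{\beta}^\top]^{-1}$, and a short computation (using $\bm{\alpha}_\perp^\top\bm{\alpha}(\tau) = \mathbf{0}$, $\bm{\beta}^\top\bm{\beta}_\perp = \mathbf{0}$, which forces $\mathbf{B}_\tau(1)^{-1}$ to map the range of $\bm{\alpha}(\tau)$ into $\mathrm{span}(\bm{\beta})$) identifies $\mathbf{P}_{\bm{\beta}_\perp}\bm{\Psi}_\tau(1)$ with the stated formula. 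Item~2 would then follow by direct substitution: define $\{\mathbf{z}_t\}$ by the displayed time-varying VAR with initial conditions chosen so that $\bm{\beta}^\top\mathbf{z}_t = \bm{\beta}^\top\mathbf{y}_t$ for $t\leq 0$ (consistent with Assumption~\ref{Ass1}.2.b), plug $\mathbf{y}_t = \mathbf{P}_{\bm{\beta}_\perp}\sum_{j\leq t}\mathbf{z}_j + \mathbf{P}_{\bm{\beta}}\mathbf{z}_t + \mathbf{P}_{\bm{\beta}_\perp}\mathbf{y}_0$ into \eqref{Eq2.1}, and use $\Delta\mathbf{y}_t = \mathbf{P}_{\bm{\beta}_\perp}\mathbf{z}_t + \mathbf{P}_{\bm{\beta}}\Delta\mathbf{z}_t$, $\bm{\beta}^\top\mathbf{y}_{t-1} = \bm{\beta}^\top\mathbf{z}_{t-1}$ and $\mathbf{P}_{\bm{\beta}} + \mathbf{P}_{\bm{\beta}_\perp} = \mathbf{I}_d$; matching powers of the lag operator one sees that \eqref{Eq2.1} collapses to the $\mathbf{z}$-recursion precisely with the coefficients $\mathbf{B}_i(\tau)$ from item~1. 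For the representation in item~3, I would note that the frozen-coefficient system is an ordinary constant-coefficient VECM, so item~2 applied at the fixed $\tau$ gives $\widetilde{\mathbf{z}}_t(\tau) = \bm{\Psi}_\tau(L)\widetilde{\mathbf{u}}_t(\tau)$, and the Beveridge--Nelson split $\bm{\Psi}_\tau(L) = \bm{\Psi}_\tau(1) - (1-L)\bm{\Psi}_\tau^*(L)$ (immediate from $\bm{\Psi}_j^*(\tau) = \sum_{i>j}\bm{\Psi}_i(\tau)$) rearranges everything into the stated form with the indicated $\widetilde{\mathbf{y}}_0^*(\tau)$.

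\emph{Item 3, bounds (a)--(c).} I would iterate the $\mathbf{z}$-recursion to write $\mathbf{z}_t = \sum_{j\geq 0}\mathbf{A}_{t,j}\bm{\omega}(\tau_{t-j})\bm{\varepsilon}_{t-j}$, where $\mathbf{A}_{t,j}$ is a finite sum of products of matrices $\mathbf{B}_i(\tau_{t-\ell})$ with $0\leq\ell\leq j$. The uniform geometric decay from item~1 gives $\sup_t|\mathbf{A}_{t,j}|\leq C\rho^j$ for some $\rho\in(0,1)$, and Lipschitz continuity of $\mathbf{B}_i(\cdot)$, $\bm{\omega}(\cdot)$ with $|\tau_{t-\ell}-\tau_t|\leq\ell/T$ yields the key term-by-term bound $\sup_t|\mathbf{A}_{t,j}\bm{\omega}(\tau_{t-j}) - \bm{\Psi}_j(\tau_t)\bm{\omega}(\tau_t)|\leq C(j/T)\rho^j$; since $\widetilde{\mathbf{z}}_t(\tau_t) = \sum_j\bm{\Psi}_j(\tau_t)\bm{\omega}(\tau_t)\bm{\varepsilon}_{t-j}$ and $\max_s\|\bm{\varepsilon}_s\|_\delta<\infty$, summing in $j$ gives $\max_t\|\mathbf{z}_t - \widetilde{\mathbf{z}}_t(\tau_t)\|_\delta = O(1/T)$, from which (a) follows via $\Delta\mathbf{y}_t - \Delta\widetilde{\mathbf{y}}_t(\tau_t) = \mathbf{P}_{\bm{\beta}_\perp}[\mathbf{z}_t - \widetilde{\mathbf{z}}_t(\tau_t)] + \mathbf{P}_{\bm{\beta}}[\Delta\mathbf{z}_t - \Delta\widetilde{\mathbf{z}}_t(\tau_t)]$. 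For (b), differentiating $\mathbf{B}_\tau(L)\bm{\Psi}_\tau(L) = \mathbf{I}_d$ in $\tau$ gives $\bm{\Psi}_\tau^{(1)}(L) = -\bm{\Psi}_\tau(L)\mathbf{B}_\tau^{(1)}(L)\bm{\Psi}_\tau(L)$, a convolution of geometrically decaying sequences, so $\sup_\tau|\bm{\Psi}_j^{(1)}(\tau)|\leq C\rho^j$, whence $\|\widetilde{\mathbf{z}}_t(\tau) - \widetilde{\mathbf{z}}_t(\tau')\|_\delta = O(|\tau-\tau'|)$ and similarly for $\Delta\widetilde{\mathbf{z}}_t$. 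Part (c) is where the work lies: with $\mathbf{d}_j := \Delta\mathbf{y}_j - \Delta\widetilde{\mathbf{y}}_j(\tau_j)$, the naive bound $\sum_{j\leq t}\|\mathbf{d}_j\|_\delta = t\cdot O(1/T) = O(1)$ is useless, so I would instead use that each $\mathbf{d}_j$ is a mean-zero linear functional of $\{\bm{\varepsilon}_s\}_{s\leq j}$ whose coefficients decay geometrically in the lag and are of size $O(1/T)$ (the $\mathbf{P}_{\bm{\beta}}$-component moreover nearly telescopes), so $\{\mathbf{d}_j\}$ has absolutely summable autocovariances of order $O(1/T^2)$, and a Burkholder/maximal inequality for such martingale-approximable arrays gives $\max_{1\leq t\leq T}\|\sum_{j\leq t}\mathbf{d}_j\|_\delta = O(\sqrt{T}\cdot T^{-1}) = O(1/\sqrt{T})$. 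This cancellation-of-fluctuations argument, together with the uniform geometric-decay bookkeeping for the iterated coefficients $\mathbf{A}_{t,j}$, is the main obstacle of the lemma; the other items are essentially corollaries.

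\emph{Item 4.} By item~2, $T^{-1/2}\mathbf{y}_{\lfloor Tu\rfloor} = T^{-1/2}\mathbf{P}_{\bm{\beta}_\perp}\sum_{j=1}^{\lfloor Tu\rfloor}\mathbf{z}_j + T^{-1/2}\mathbf{P}_{\bm{\beta}}\mathbf{z}_{\lfloor Tu\rfloor} + T^{-1/2}\mathbf{P}_{\bm{\beta}_\perp}\mathbf{y}_0$, and the last two terms are $o_P(1)$ uniformly in $u$ since $\{\mathbf{z}_t\}$ is uniformly stable with $\max_t\|\mathbf{z}_t\|_\delta<\infty$ and $\mathbf{y}_0 = O_P(1)$. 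Using $\max_t\|\mathbf{z}_t - \widetilde{\mathbf{z}}_t(\tau_t)\|_\delta = O(1/T)$ and the partial-sum argument of (c), $T^{-1/2}\mathbf{P}_{\bm{\beta}_\perp}\sum_{j\leq\lfloor Tu\rfloor}[\mathbf{z}_j - \widetilde{\mathbf{z}}_j(\tau_j)] = o_P(1)$ uniformly. The Beveridge--Nelson split then gives $\widetilde{\mathbf{z}}_j(\tau_j) = \bm{\Psi}_{\tau_j}(1)\bm{\omega}(\tau_j)\bm{\varepsilon}_j - [\bm{\Psi}_{\tau_j}^*(L)\widetilde{\mathbf{u}}_j(\tau_j) - \bm{\Psi}_{\tau_j}^*(L)\widetilde{\mathbf{u}}_{j-1}(\tau_j)]$; summing in $j$, the bracketed transitory terms telescope up to smoothness-in-$\tau$ remainders that are $o_P(\sqrt{T})$ uniformly (same argument as in (c)), so only the martingale $\mathbf{m}_n := \sum_{j\leq n}\bm{\Psi}_{\tau_j}(1)\bm{\omega}(\tau_j)\bm{\varepsilon}_j$ matters. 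By Assumption~\ref{Ass1}.3 its predictable quadratic variation is the deterministic $\sum_{j\leq\lfloor Tu\rfloor}\bm{\Psi}_{\tau_j}(1)\bm{\Omega}(\tau_j)\bm{\Psi}_{\tau_j}^\top(1)$, whose $T^{-1}$-scaled version converges to $\int_0^u\bm{\Psi}_s(1)\bm{\Omega}(s)\bm{\Psi}_s^\top(1)\,\mathrm{d}s$ by a Riemann-sum argument using continuity (Assumption~\ref{Ass1}.2.a and \ref{Ass1}.3, noting $\bm{\Psi}_s(1) = [\mathbf{B}_s(1)]^{-1}$ is continuous in $s$), while the moment bound $\delta>4$ delivers the Lindeberg condition. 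The martingale functional CLT then yields $T^{-1/2}\mathbf{m}_{\lfloor Tu\rfloor}\Rightarrow\mathbf{W}_d(u,\int_0^u\bm{\Psi}_s(1)\bm{\Omega}(s)\bm{\Psi}_s^\top(1)\,\mathrm{d}s)$ on $[0,1]$, and pre-multiplying by $\mathbf{P}_{\bm{\beta}_\perp}$ and collecting the negligible terms gives $T^{-1/2}\mathbf{y}_{\lfloor Tu\rfloor}\Rightarrow\mathbf{W}_d(u,\bm{\Sigma}_{\mathbf{y}}(u))$ with $\bm{\Sigma}_{\mathbf{y}}(u) = \mathbf{P}_{\bm{\beta}_\perp}\int_0^u\bm{\Psi}_s(1)\bm{\Omega}(s)\bm{\Psi}_s^\top(1)\,\mathrm{d}s\,\mathbf{P}_{\bm{\beta}_\perp}$, as claimed.
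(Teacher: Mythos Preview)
Your proposal is correct and follows essentially the same route as the paper: the same algebraic factorization relating $\mathbf{B}_\tau(L)$ and $\mathbf{C}_\tau(L)$ (up to a minor slip---your factor should carry $\overline{\bm{\beta}}_\perp$, not $\bm{\beta}_\perp$), the same iterated-product bounds via Lipschitz continuity of the companion-matrix coefficients for (a)--(b), Burkholder's inequality applied lag-by-lag for the $O(1/\sqrt{T})$ rate in (c), and the Beveridge--Nelson split plus the martingale functional CLT for item~4. The only stylistic difference is in item~2: the paper \emph{defines} $\mathbf{z}_t := \mathbf{J}^{-1}\mathbf{P}(L)\mathbf{y}_t = \mathbf{P}_{\bm{\beta}}\mathbf{y}_t + \mathbf{P}_{\bm{\beta}_\perp}\Delta\mathbf{y}_t$ directly and then reads off the recursion, which sidesteps the initial-condition matching you would need when going in the reverse direction.
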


Lemma \ref{L1}  presents a local approximation of $\{\Delta \mathbf{y}_t\}$, which is the foundation of our asymptotic developments. Using Lemma \ref{L1} we are able to establish basic properties of $\{\Delta \mathbf{y}_t\}$ and $\{  \mathbf{y}_t\}$. To put it in a nutshell,  studying $\{\Delta \mathbf{y}_t\}$ and $\{  \mathbf{y}_t\}$  directly is technically challenging, we therefore consider their local approximations, which can help us understand $\{\Delta \mathbf{y}_t\}$ and $\{  \mathbf{y}_t\}$ in every small neighbourhood.  In addition, Lemma \ref{L1} indicates that $\widetilde{\mathbf{y}}_t(\tau)$ admits a time-varying version of Granger Representation Theorem (\citealp{johansen1995likelihood}, Theorem 4.2), so that $\Delta \widetilde{\mathbf{y}}_t(\tau)$ has a time-varying VMA($\infty$) representation. In the online supplementary Appendix A, we provide a comprehensive study on time-varying VMA($\infty$) processes (such as the Nagaev-type inequality, Gaussian approximation and the limit theorem for quadratic forms), which allows us to further derive estimation and inferential properties for \eqref{Eq2.1} in the following subsections.

\subsection{Model Estimation}\label{Sec2.2}

We assume $p_0$ and $r_0$ are known in this subsection, and shall come back to work on their estimation in Section \ref{Sec2.3}.

The local linear estimator of $[\bm{\Pi}(\tau),\bm{\Gamma}(\tau)]$ with $\bm{\Gamma}(\tau) = [\bm{\Gamma}_1(\tau),\ldots,\bm{\Gamma}_{p_0-1}(\tau)]$ is given by
\begin{eqnarray*}
	[\widehat{\bm{\Pi}}(\tau), \widehat{\bm{\Gamma}}(\tau)] &=&\left[ 
		\mathbf{V}_{T,0}(\tau) , \mathbf{V}_{T,1}(\tau)\right] \cdot  \left[\begin{matrix}
		\mathbf{S}_{T,0}(\tau) & \mathbf{S}_{T,1}(\tau)\\ \mathbf{S}_{T,1}(\tau) & \mathbf{S}_{T,2}(\tau)
	\end{matrix} \right]^{+} \cdot \left[\begin{matrix}
		\mathbf{I}_{dp_0} \\
		\mathbf{0}_{dp_0}
	\end{matrix} \right],
\end{eqnarray*}
where $\mathbf{h}_t = \left[\mathbf{y}_{t}^\top,\Delta \mathbf{x}_t^\top\right]^\top$,
\begin{eqnarray*}
\mathbf{V}_{T,l}(\tau) &=& \sum_{t=1}^{T}\Delta\mathbf{y}_{t}\mathbf{h}_{t-1}^\top \left(\frac{\tau_t-\tau}{h}\right)^lK\left(\frac{\tau_t-\tau}{h}\right) \text{ for } l =0,1,\\
\mathbf{S}_{T,l}(\tau) &=& \sum_{t=1}^{T}\mathbf{h}_{t-1}\mathbf{h}_{t-1}^\top \left(\frac{\tau_t-\tau}{h}\right)^lK\left(\frac{\tau_t-\tau}{h}\right) \text{ for } l =0,1, 2.
\end{eqnarray*}
Here, we use MP inverse since $\mathbf{S}_{T,l}(\tau)$ is asymptotically singular (cf., Lemma \ref{L4} of the supplementary Appendix A), which is referred to as ``kernel-induced degeneracy'' in the unit-root literature (\citealp{phillips2017estimating}).

%while in the case of (local) stationarity, we expect the sample covariance matrix to be positive definite in finite samples.

Accordingly, the local linear estimator of $\bm{\Omega}(\tau)$ is defined as 
$$
\widehat{\bm{\Omega}}(\tau) = \frac{1}{T}\sum_{t=1}^{T}\widehat{\mathbf{u}}_t\widehat{\mathbf{u}}_t^\top w_t(\tau),
$$
where $\widehat{\mathbf{u}}_t = \Delta \mathbf{y}_t - \widehat{\bm{\Pi}}(\tau_t)\mathbf{y}_{t-1} - \widehat{\bm{\Gamma}}(\tau_t)\Delta \mathbf{x}_{t-1}$, $\Delta \mathbf{x}_{t} = [\Delta \mathbf{y}_{t}^\top,\ldots,\Delta \mathbf{y}_{t-p_0+2}^\top]^\top$,

\begin{eqnarray*}
w_t(\tau) &=& h^{-1}K\left(\frac{\tau_t-\tau}{h}\right)\frac{P_{h,2}(\tau)-\frac{\tau_t-\tau}{h}P_{h,1}(\tau)}{P_{h,0}(\tau)P_{h,2}(\tau)-P_{h,1}^2(\tau)},\\
P_{h,l}(\tau) &=& \frac{1}{Th}\sum_{t=1}^{T}\left(\frac{\tau_t-\tau}{h}\right)^lK\left(\frac{\tau_t-\tau}{h}\right) \text{ for }l=0,1,2.
\end{eqnarray*}

\medskip

To proceed, we require the following conditions to hold.
\begin{assumption}\label{Ass2}
\item 
\begin{enumerate}[wide, labelwidth=!, labelindent=0pt]
\item[1.]  Suppose that $K(\cdot)$ is a symmetric and positive kernel function defined on $[-1,1]$, and $\int_{-1}^{1}K(u)\mathrm{d}u = 1$. Moreover, $K(\cdot)$ is Lipschitz continuous on $[-1,1]$. As $(T,h) \to (\infty, 0)$, $Th\to \infty$.

\item[2.]  $\{\bm{\varepsilon}_t\}$ is a sequence of independent random variables. Let $T^{1-\frac{4}{\delta}}h/(\log T)^{1-\frac{4}{\delta}} \to \infty$ as $T\rightarrow \infty$.
\end{enumerate}

\end{assumption}

Assumption \ref{Ass2}.1 imposes a set of regular conditions on the kernel function and the bandwidth.  On top of Assumption \ref{Ass1}.3, Assumption \ref{Ass2}.2 imposes more structure on $\{\bm{\varepsilon}_t\}$, and is used to derive Gaussian approximation for the sum of time-varying VECM process. This condition can be relaxed if we impose more dependence structure (e.g., nonlinear system theory as in \citealp{wu2005nonlinear}).  Provided $\delta>5$, the usual optimal bandwidth $h_{opt}=O (T^{-1/5} )$ satisfies the condition $T^{1-\frac{4}{\delta}}h/(\log T)^{1-\frac{4}{\delta}} \to \infty$. In addition,  Gaussian approximation together with $T^{1-\frac{4}{\delta}}h/(\log T)^{1-\frac{4}{\delta}} \to \infty$ is used to derive the uniform convergence of our nonparametric estimators, which is further used for asymptotic covariance estimation, semiparametric estimation and model specification testing in the below.

\medskip

With these conditions in hand, we further let

\begin{eqnarray*}
 \widetilde{\mathbf{w}}_t(\tau) = [\widetilde{\mathbf{y}}_t^\top(\tau)\bm{\beta},\Delta\widetilde{\mathbf{x}}_t^\top(\tau)]^\top \quad \text{and}\quad \widetilde{\mathbf{x}}_t(\tau) = [\Delta \widetilde{\mathbf{y}}_{t}^\top(\tau),\ldots,\Delta \widetilde{\mathbf{y}}_{t-p_0+2}^\top(\tau)]^\top,
\end{eqnarray*}
and summarize the asymptotic properties of the local linear estimators in the following theorem.

\begin{theorem}\label{Thm1}
	Let Assumptions \ref{Ass1}--\ref{Ass2} hold.
	\begin{enumerate}[wide, labelwidth=!, labelindent=0pt]
		\item [1.] If $Th^7\to 0$, then for any $\tau \in (0,1)$,
		$$
		\sqrt{Th}\mathrm{vec}\left([\widehat{\bm{\Pi}}(\tau), \widehat{\bm{\Gamma}}(\tau)] - [\bm{\Pi}(\tau), \bm{\Gamma}(\tau)] -\frac{1}{2}h^2\widetilde{c}_2[\bm{\Pi}^{(2)}(\tau), \bm{\Gamma}^{(2)}(\tau)]\right)\to_D N\left(\mathbf{0}, \widetilde{v}_0\bm{\Sigma}_{\mathrm{co}}(\tau) \right),
		$$
		where  $\bm{\Sigma}_{\mathrm{co}}(\tau) =\left(  \diag( \bm{\beta}, \mathbf{I}_{d(p_0-1)})   E(\widetilde{\mathbf{w}}_t(\tau)\widetilde{\mathbf{w}}_t^{\top,-1}(\tau) )\diag( \bm{\beta}, \mathbf{I}_{d(p_0-1)})^\top \right)\otimes \bm{\Omega}(\tau)$;
 
			\item [2.] $\sup_{\tau \in [0,1]} |[\widehat{\bm{\Pi}}(\tau), \widehat{\bm{\Gamma}}(\tau)] - [\bm{\Pi}(\tau), \bm{\Gamma}(\tau)] | = O_P\left(h^2 + \sqrt{\log T/(Th)} \right)$;
			
			\item [3.] $\sup_{\tau \in [0,1]} |\widehat{\bm{\Sigma}}_{\mathrm{co}}(\tau) - \bm{\Sigma}_{\mathrm{co}}(\tau) | = O_P\left(h + \sqrt{\log T/(Th)} \right)$,  \\ where $\widehat{\bm{\Sigma}}_{\mathrm{co}}(\tau) = \sum_{t=1}^{T}K\left(\frac{\tau_t-\tau}{h} \right)\mathbf{S}_{T,0}^+(\tau) \otimes \widehat{\bm{\Omega}}(\tau).$
		\end{enumerate}
	\end{theorem}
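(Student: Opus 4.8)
\textbf{Proof proposal for Theorem \ref{Thm1}.}

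The plan is to exploit Lemma \ref{L1} to replace the integrated process $\{\mathbf{y}_t\}$ and the stationary-type increments $\{\Delta\mathbf{y}_t\}$ by their local approximations $\{\widetilde{\mathbf{y}}_t(\tau)\}$ and $\{\Delta\widetilde{\mathbf{y}}_t(\tau)\}$, which admit time-varying VMA($\infty$) representations, and then run the usual local-linear bias–variance decomposition on the approximating quantities, controlling all replacement errors through the rates in Lemma \ref{L1}.a–c. Concretely, for Part 1 I would first expand the local-linear estimator. Writing $\mathbf{Z}_t = [\mathbf{y}_{t-1}^\top, \Delta\mathbf{x}_{t-1}^\top]^\top = \mathbf{h}_{t-1}$ and plugging $\Delta\mathbf{y}_t = [\bm{\Pi}(\tau_t),\bm{\Gamma}(\tau_t)]\mathbf{h}_{t-1} + \mathbf{u}_t$ into the normal equations, I get the standard representation
$$
[\widehat{\bm{\Pi}}(\tau),\widehat{\bm{\Gamma}}(\tau)] - [\bm{\Pi}(\tau),\bm{\Gamma}(\tau)] = \text{(smoothing bias)} + \text{(stochastic term)} + \text{(higher-order remainder)},
$$
where the smoothing bias, after a third-order Taylor expansion of $[\bm{\Pi}(\cdot),\bm{\Gamma}(\cdot)]$ (legitimate by Assumption \ref{Ass1}.2.a) and the usual local-linear cancellation of the first-order term, equals $\tfrac12 h^2 \widetilde{c}_2 [\bm{\Pi}^{(2)}(\tau),\bm{\Gamma}^{(2)}(\tau)] + o_P(h^2)$, and the $Th^7\to0$ condition ensures $\sqrt{Th}$ times the $O(h^3)$ Taylor remainder is negligible. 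The stochastic term is $(\text{weighted sample second-moment matrix})^{+}$ times $\tfrac{1}{Th}\sum_t \mathbf{u}_t \mathbf{h}_{t-1}^\top K(\tfrac{\tau_t-\tau}{h})$; here I would invoke Lemma \ref{L4} of the Appendix for the probability limit of the (rescaled) weighted design matrix — keeping track of the kernel-induced degeneracy, so the MP inverse is needed and the normalizing rates for the $\mathbf{y}_{t-1}$-block differ from the $\Delta\mathbf{x}_{t-1}$-block — and the Gaussian-approximation / quadratic-form results for time-varying VMA($\infty$) processes (announced after Lemma \ref{L1}) to get the CLT for the score $\tfrac{1}{\sqrt{Th}}\sum_t \mathbf{u}_t \mathbf{h}_{t-1}^\top K(\cdot)$, whose asymptotic variance reduces, via the martingale-difference structure of $\bm{\varepsilon}_t$ (Assumption \ref{Ass1}.3), to $\widetilde{v}_0\, E(\widetilde{\mathbf{w}}_t(\tau)\widetilde{\mathbf{w}}_t^\top(\tau))\otimes\bm{\Omega}(\tau)$ up to the $\diag(\bm{\beta},\mathbf{I})$ rotations. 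Combining the Slutsky pieces and passing the MP-inverse limit through gives the stated $N(\mathbf{0},\widetilde{v}_0\bm{\Sigma}_{\mathrm{co}}(\tau))$, and I would note that the form $E(\widetilde{\mathbf{w}}_t\widetilde{\mathbf{w}}_t^{\top,-1})$ in $\bm{\Sigma}_{\mathrm{co}}$ arises precisely because the degenerate $\mathbf{y}_{t-1}$-direction is annihilated by $\bm{\beta}^\top$ (the nondegenerate linear combination $\bm{\beta}^\top\mathbf{y}_{t-1}$ is what survives at the $\sqrt{Th}$ rate) while the unit-root directions contribute at a faster rate and drop out asymptotically.

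For Part 2, I would upgrade the pointwise analysis to a uniform one: bound $\sup_\tau$ of the bias by $O(h^2)$ using the uniform bound on the third derivatives, and bound $\sup_\tau$ of the stochastic term by $O_P(\sqrt{\log T/(Th)})$ using a maximal inequality (Nagaev-type inequality from the Appendix, plus a chaining/discretization argument over an $O(T)$-cardinality grid of $\tau$-values, with Lipschitz continuity of $K$ from Assumption \ref{Ass2}.1 handling the off-grid points). Assumption \ref{Ass2}.2, i.e. $T^{1-4/\delta}h/(\log T)^{1-4/\delta}\to\infty$, is exactly the moment/bandwidth tradeoff that makes the $\sqrt{\log T/(Th)}$ uniform rate attainable under the $\delta>4$ moments, and the independence of $\{\bm{\varepsilon}_t\}$ there is used so that the Nagaev/Gaussian-approximation machinery applies. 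Throughout I replace $\mathbf{y}_t,\Delta\mathbf{y}_t$ by $\widetilde{\mathbf{y}}_t(\tau_t),\Delta\widetilde{\mathbf{y}}_t(\tau_t)$ and absorb the differences using Lemma \ref{L1}.a (error $O(1/T)$ in $\|\cdot\|_\delta$) and Lemma \ref{L1}.c (partial-sum error $O(1/\sqrt{T})$), which are of smaller order than the target rate.

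For Part 3, I would write $\widehat{\bm{\Sigma}}_{\mathrm{co}}(\tau)-\bm{\Sigma}_{\mathrm{co}}(\tau)$ as a sum of two contributions: one from $\widehat{\bm{\Omega}}(\tau)-\bm{\Omega}(\tau)$ and one from the weighted design-matrix estimator $\sum_t K(\tfrac{\tau_t-\tau}{h})\mathbf{S}_{T,0}^+(\tau)$ minus its limit. For the former, I substitute $\widehat{\mathbf{u}}_t = \mathbf{u}_t - ([\widehat{\bm{\Pi}},\widehat{\bm{\Gamma}}](\tau_t)-[\bm{\Pi},\bm{\Gamma}](\tau_t))\mathbf{h}_{t-1}$, expand the quadratic form, and use Part 2 together with uniform bounds on $\tfrac{1}{T}\sum_t|\mathbf{h}_{t-1}|^2 w_t(\tau)$ (and cross terms), giving a $\sqrt{Th}$-type contribution that, combined with the kernel-smoothing bias $O(h)$ in estimating $\bm{\Omega}(\cdot)$ (only one derivative is exploited here, hence $h$ not $h^2$), yields $O_P(h + \sqrt{\log T/(Th)})$; for the latter, I use the uniform convergence of the (appropriately normalized) design matrix from Lemma \ref{L4} plus continuity of the MP inverse on the relevant rank-stable set. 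The main obstacle in all three parts is the same: handling the \textbf{kernel-induced degeneracy} of $\mathbf{S}_{T,l}(\tau)$ correctly — i.e., identifying the right block-diagonal normalization separating the $O_P(\sqrt{T})$ unit-root directions from the $O_P(1)$ cointegrated/stationary directions, showing the cross-normalized matrix converges to a non-singular limit, and verifying that the MP inverse composed with the $[\mathbf{I}_{dp_0};\mathbf{0}]$ selection extracts exactly the $\bm{\beta}$-rotated stationary block — so that the limiting variance takes the reduced form $\widetilde{v}_0\bm{\Sigma}_{\mathrm{co}}(\tau)$ rather than involving the (degenerate) full $\mathbf{h}_{t-1}$ second moment; this is where Lemma \ref{L4} of the Appendix does the heavy lifting and where I would be most careful.
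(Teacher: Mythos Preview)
Your proposal is correct and follows essentially the same route as the paper: Lemma~\ref{L4} for the design-matrix limits under kernel-induced degeneracy, a martingale CLT for the score term, and Lemmas~\ref{L5}--\ref{L7} for the uniform rates in Parts~2--3. The only refinement worth flagging is that the paper's normalization of the unit-root block is finer than the two-block picture you sketch --- after the deterministic rotation $\mathbf{Q}(\tau)=[\bm{\alpha}_\perp(\tau),\bm{\beta}]^\top$ it applies a further \emph{random} rotation $\bm{\Xi}_T(\tau)$ (built from $\bm{\alpha}_\perp^\top(\tau)\mathbf{y}_{\lfloor T(\tau-h)\rfloor}$) and uses three distinct rates $\mathbf{D}_T^*=\diag(T\sqrt{h},\,Th\,\mathbf{I}_{d-r_0-1},\,\sqrt{Th}\,\mathbf{I}_{r_0+d(p_0-1)})$, because the kernel-weighted second moment of $\bm{\alpha}_\perp^\top(\tau)\mathbf{y}_{t-1}$ is asymptotically rank one; it also exploits the structural identity $\bm{\Pi}^{(j)}(\tau)=\bm{\alpha}^{(j)}(\tau)\bm{\beta}^\top$ so that $[\bm{\Pi}^{(j)},\bm{\Gamma}^{(j)}]\mathbf{Q}^{*,-1}=[\mathbf{0}_{d\times(d-r_0)},\bm{\alpha}^{(j)},\bm{\Gamma}^{(j)}]$, which makes the bias and Taylor remainder identically zero in the fast directions and is what keeps $\mathbf{J}_{T,2},\mathbf{J}_{T,3}$ controllable despite the larger normalization there.
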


Theorem \ref{Thm1} establishes the asymptotic distribution yielded by $[\widehat{\bm{\Pi}}(\tau), \widehat{\bm{\Gamma}}(\tau)]$, and also gives the rates of uniform  convergence which will be extensively used in the following development.

\medskip

In what follows, we consider the estimation of cointegration matrix by utilizing the reduced rank structure of $\bm{\Gamma}(\cdot)$ and using the profile likelihood method  (e.g., \citealp{fan2005profile}). To ensure a unique cointegration matrix, we assume
	\begin{eqnarray}\label{eqbp2}
	\bm{\beta} = \left[\begin{matrix}
		\mathbf{I}_{r_0} \\
		\bm{\beta}^*
	\end{matrix} \right]
	\end{eqnarray}
	where $\bm{\beta}^*$ is a $(d-r_0)\times r_0$ matrix. Using \eqref{eqbp2}, $\widehat{\bm{\alpha}}(\tau)$ is the first $r_0$ columns of $\widehat{\bm{\Pi}}(\tau)$, i.e., $\widehat{\bm{\Pi}}(\tau) = [\widehat{\bm{\alpha}}(\tau) , \widehat{\bm{\Pi}}_2(\tau)],$ where the definition of $\widehat{\bm{\Pi}}_2(\tau)$ is obvious.

Specifically, given $\bm{\Pi}(\tau)$, we can estimate the short-run time-varying parameters $\bm{\Gamma}(\tau)$ by
\begin{eqnarray*}
\widehat{\bm{\Gamma}}(\tau,\bm{\Pi}) &=& \sum_{s=1}^{T}(\Delta\mathbf{y}_{s}-\bm{\Pi}(\tau_s)\mathbf{y}_{s-1})\Delta\mathbf{x}_{s-1}^{*,\top} K\left(\frac{\tau_s-\tau}{h}\right)\\
&&\times\left(\sum_{s=1}^{T}\Delta\mathbf{x}_{s-1}^*\Delta\mathbf{x}_{s-1}^{*,\top} K\left(\frac{\tau_s-\tau}{h}\right)\right)^{-1}\left[\begin{matrix}
	\mathbf{I}_{d(p_0-1)} \\
	\mathbf{0}_{d(p_0-1)}
\end{matrix} \right],
\end{eqnarray*}
where $\Delta\mathbf{x}_{t}^{*} = \Delta\mathbf{x}_{t}\otimes\left[1,\frac{\tau_{t+1}-\tau}{h}\right]^\top$. In connection with \eqref{eqbp2},  we can write
$$
\widetilde{\mathbf{r}}_t(\bm{\alpha}) = \widetilde{\mathbf{R}}_t^\top(\bm{\alpha}) \mathrm{vec}\left(\bm{\beta}^{*,\top}\right) + \mathbf{u}_t^*,
$$
where $\mathbf{u}_t^* = \mathbf{u}_t + \left[\bm{\Gamma}(\tau_t)-\widehat{\bm{\Gamma}}(\tau_t,\bm{\Pi})\right]\Delta \mathbf{x}_{t-1}$, $\mathbf{r}_t(\bm{\alpha}) = \Delta \mathbf{y}_t - \bm{\alpha}(\tau_t)\mathbf{y}_{t-1}^{(1)}$, $\mathbf{y}_{t}^{(1)}$ contains the first $r_0$ elements of $\mathbf{y}_{t}$, $\mathbf{r}^\top(\bm{\alpha}) = [\mathbf{r}_1(\bm{\alpha}),\ldots,\mathbf{r}_T(\bm{\alpha})]$,
$$
\widetilde{\mathbf{r}}_t(\bm{\alpha}) = \mathbf{r}_t(\bm{\alpha}) - \mathbf{r}^\top(\bm{\alpha}) \mathbf{K}(\tau_t) \Delta \mathbf{x}^*\left(\Delta \mathbf{x}^{*,\top}\mathbf{k}(\tau_t) \Delta \mathbf{x}^*\right)^{-1}[\mathbf{I}_{d(p_0-1)},\mathbf{0}_{d(p_0-1)}]^\top\Delta\mathbf{x}_{t-1},
$$
$\mathbf{k}(\tau)= \mathrm{diag}\left[K\left(\frac{\tau_1-\tau}{h}\right),\ldots,K\left(\frac{\tau_T-\tau}{h}\right)\right]$, $\Delta \mathbf{x}^{*,\top} = \left[\Delta \mathbf{x}_0^*,\ldots,\Delta \mathbf{x}_{T-1}^* \right]$, $\mathbf{R}_t(\bm{\alpha}) = \mathbf{y}_{t-1}^{(2)} \otimes \bm{\alpha}^\top(\tau_t)$, $\mathbf{y}_{t}^{(2)}$ contains the last $d-r_0$ elements of $\mathbf{y}_{t}$,
$$
\widetilde{\mathbf{R}}_t(\bm{\alpha}) = \mathbf{R}_t(\bm{\alpha}) -\mathbf{R}^\top(\bm{\alpha})\mathbf{K}(\tau_t)\Delta \mathbf{X}^{*}\left(\Delta \mathbf{X}^{*,\top}\mathbf{K}(\tau_t) \Delta \mathbf{X}^*\right)^{-1}\left[\mathbf{I}_{d^2(p_0-1)},\mathbf{0}_{d^2(p_0-1)} \right]^\top \Delta\mathbf{X}_{t-1},
$$
$\mathbf{R}^\top(\bm{\alpha}) = [\mathbf{R}_1(\bm{\alpha}),\ldots,\mathbf{R}_T(\bm{\alpha})]$, $\Delta\mathbf{X}^{*,\top}=\left[\Delta \mathbf{X}_0^*,\ldots,\Delta \mathbf{X}_{T-1}^*\right]$, $\Delta\mathbf{X}_t = \Delta\mathbf{x}_t \otimes \mathbf{I}_d$, $\Delta\mathbf{X}_t^* = \Delta\mathbf{x}_t^* \otimes \mathbf{I}_d$ and $\mathbf{K}(\tau) = \mathbf{k}(\tau)\otimes \mathbf{I}_d$. Replacing $\bm{\alpha}(\tau)$ with $\widehat{\bm{\alpha}}(\tau)$, the weighted least squares (WLS) estimator of $\bm{\beta}^*$ is given by
\begin{equation} 
	\mathrm{vec}\left[\widehat{\bm{\beta}}^{*,\top}\right] = \left(\sum_{t=1}^{T}\widetilde{\mathbf{R}}_t(\widehat{\bm{\alpha}})\widehat{\bm{\Omega}}^{-1}(\tau_t)\widetilde{\mathbf{R}}_t^\top(\widehat{\bm{\alpha}})\right)^{-1}\sum_{t=1}^{T}\widetilde{\mathbf{R}}_t(\widehat{\bm{\alpha}})\widehat{\bm{\Omega}}^{-1}(\tau_t)\widetilde{\mathbf{r}}_t(\widehat{\bm{\alpha}}).
\end{equation}

	\medskip

	The next theorem summaries the asymptotic distribution associated with $\widehat{\bm{\beta}}^{*}$.
	\begin{theorem}\label{Thm3}
		Let Assumptions \ref{Ass1}--\ref{Ass2} hold. Suppose further that $\frac{Th^2}{(\log T)^2} \to \infty$ and $Th^6 \to 0$, then
{\small
\begin{enumerate}[wide, labelwidth=!, labelindent=0pt]
\item[1.] $	T\mathrm{vec}\left[\widehat{\bm{\beta}}^{*,\top} - \bm{\beta}^{*,\top}\right] \to_D \left(\int_{0}^{1}\mathbf{W}_{d-r_0}(u) \mathbf{W}_{d-r_0}^\top(u) \otimes \bm{\alpha}^\top(u)\bm{\Omega}^{-1}(u)\bm{\alpha}(u)\mathrm{d}u\right)^{-1}    \int_{0}^{1}\mathbf{W}_{d-r_0}(u)\otimes \mathrm{d}\mathbf{W}_{r_0}(u)$,

\item[2.] $\left(\sum_{t=1}^{T}\mathbf{y}_{t-1}^{(2)}\mathbf{y}_{t-1}^{(2),\top}  \otimes \widehat{\bm{\alpha}}^\top(\tau_t)\widehat{\bm{\Omega}}^{-1}(\tau_t) \widehat{\bm{\alpha}}(\tau_t)\right)^{1/2}\mathrm{vec}\left[\widehat{\bm{\beta}}^{*,\top} - \bm{\beta}^{*,\top}\right] \to_D N\left(\mathbf{0}, \mathbf{I}_{(d-r_0)r_0}\right)$,
\end{enumerate}
		where $\mathbf{W}_{d-r_0}(u) = \left[\mathbf{0}_{(d-r_0)\times r_0},\mathbf{I}_{d-r_0}\right]\mathbf{W}_d(u,\bm{\Sigma}_{\mathbf{y}}(u))$, $\mathbf{W}_{r_0}(u) = \mathbf{W}_{r_0}(u,\bm{\alpha}^\top(u)\bm{\Omega}^{-1}(u)\bm{\alpha}(u))$, and $\mathbf{W}_{d-r_0}(\cdot)$ is independent of $\mathbf{W}_{r_0}(\cdot)$. 
		}
	\end{theorem}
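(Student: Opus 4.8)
The plan is to linearize the WLS estimator, reduce everything to a joint functional central limit theorem for an $I(1)$ process paired with a martingale, and then combine a convergence‑to‑stochastic‑integral argument with the continuous mapping theorem; the non‑routine part is controlling the two ``generated regressor'' effects coming from $\widehat{\bm{\alpha}}$ and $\widehat{\bm{\Gamma}}$. \textbf{Linearization.} Using $\bm{\beta}=[\mathbf{I}_{r_0};\bm{\beta}^{*}]$ and \eqref{Eq2.1} one has the algebraic identity $\mathbf{r}_t(\bm{\alpha}) = \mathbf{R}_t^{\top}(\bm{\alpha})\mathrm{vec}(\bm{\beta}^{*,\top}) + \bm{\Gamma}(\tau_t)\Delta\mathbf{x}_{t-1} + \mathbf{u}_t$, and replacing $\bm{\alpha}$ by $\widehat{\bm{\alpha}}$ and subtracting $\mathbf{R}_t^{\top}(\widehat{\bm{\alpha}})\mathrm{vec}(\bm{\beta}^{*,\top})$ produces the extra term $-\mathbf{g}_t$ with $\mathbf{g}_t=(\widehat{\bm{\alpha}}-\bm{\alpha})(\tau_t)\bm{\beta}^{\top}\mathbf{y}_{t-1}$. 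Applying the (linear) local‑linear partialling‑out operator that defines $\widetilde{(\cdot)}$ and using the text's identity $\widetilde{\mathbf{r}}_t(\bm{\alpha})=\widetilde{\mathbf{R}}_t^{\top}(\bm{\alpha})\mathrm{vec}(\bm{\beta}^{*,\top})+\mathbf{u}_t^{*}$ gives $\widetilde{\mathbf{r}}_t(\widehat{\bm{\alpha}})-\widetilde{\mathbf{R}}_t^{\top}(\widehat{\bm{\alpha}})\mathrm{vec}(\bm{\beta}^{*,\top})=\mathbf{u}_t^{*}-\widetilde{\mathbf{g}}_t$, so $T\,\mathrm{vec}[\widehat{\bm{\beta}}^{*,\top}-\bm{\beta}^{*,\top}]=\mathbf{A}_T^{-1}\mathbf{b}_T$ with $\mathbf{A}_T=T^{-2}\sum_t\widetilde{\mathbf{R}}_t(\widehat{\bm{\alpha}})\widehat{\bm{\Omega}}^{-1}(\tau_t)\widetilde{\mathbf{R}}_t^{\top}(\widehat{\bm{\alpha}})$ and $\mathbf{b}_T=T^{-1}\sum_t\widetilde{\mathbf{R}}_t(\widehat{\bm{\alpha}})\widehat{\bm{\Omega}}^{-1}(\tau_t)(\mathbf{u}_t^{*}-\widetilde{\mathbf{g}}_t)$, where $\mathbf{u}_t^{*}-\mathbf{u}_t=[\bm{\Gamma}(\tau_t)-\widehat{\bm{\Gamma}}(\tau_t,\bm{\Pi})]\Delta\mathbf{x}_{t-1}$ is the $\widehat{\bm{\Gamma}}$‑fitting error.

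\textbf{The normalizing matrix.} Since $\mathbf{y}_{t-1}^{(2)}$ is $I(1)$ while the partialling‑out only removes a projection onto the locally stationary block $\Delta\mathbf{x}_{t-1}$, I would show $\widetilde{\mathbf{R}}_t(\widehat{\bm{\alpha}})=\mathbf{y}_{t-1}^{(2)}\otimes\bm{\alpha}^{\top}(\tau_t)+O_P(1)$ uniformly in $t$, the remainder being negligible after $T^{-2}\sum_t$; combined with $\sup_\tau|\widehat{\bm{\Omega}}(\tau)-\bm{\Omega}(\tau)|=o_P(1)$ and $\sup_\tau|\widehat{\bm{\alpha}}(\tau)-\bm{\alpha}(\tau)|=o_P(1)$ from Theorem \ref{Thm1}, this reduces $\mathbf{A}_T$ to $T^{-2}\sum_t(\mathbf{y}_{t-1}^{(2)}\mathbf{y}_{t-1}^{(2),\top})\otimes(\bm{\alpha}^{\top}(\tau_t)\bm{\Omega}^{-1}(\tau_t)\bm{\alpha}(\tau_t))$. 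By Lemma \ref{L1}.4, $T^{-1/2}\mathbf{y}_{\lfloor Tu\rfloor}^{(2)}\Rightarrow\mathbf{W}_{d-r_0}(u)$, so the continuous mapping theorem (together with smoothness of $\bm{\alpha}(\cdot)$ and $\bm{\Omega}(\cdot)$) gives $\mathbf{A}_T\to_D\mathbf{A}_\infty:=\int_0^1\mathbf{W}_{d-r_0}(u)\mathbf{W}_{d-r_0}^{\top}(u)\otimes\bm{\alpha}^{\top}(u)\bm{\Omega}^{-1}(u)\bm{\alpha}(u)\,\mathrm{d}u$, a.s. positive definite.

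\textbf{Leading score, independence, and the conclusions.} The dominant part of $\mathbf{b}_T$ is $T^{-1}\sum_t\mathbf{y}_{t-1}^{(2)}\otimes\bm{\xi}_t$ with $\bm{\xi}_t=\bm{\alpha}^{\top}(\tau_t)\bm{\Omega}^{-1}(\tau_t)\mathbf{u}_t$, a martingale‑difference array with conditional covariance $\bm{\alpha}^{\top}(\tau_t)\bm{\Omega}^{-1}(\tau_t)\bm{\alpha}(\tau_t)$ (the partialling‑out of $\mathbf{u}_t$ and the replacements $\widehat{\bm{\alpha}}\to\bm{\alpha}$, $\widehat{\bm{\Omega}}\to\bm{\Omega}$ are again $o_P(1)$). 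A joint FCLT, obtained from the Gaussian approximation for sums of time‑varying VMA($\infty$) processes in the supplementary appendix, yields $(T^{-1/2}\mathbf{y}_{\lfloor Tu\rfloor}^{(2)},\,T^{-1/2}\sum_{t\le\lfloor Tu\rfloor}\bm{\xi}_t)\Rightarrow(\mathbf{W}_{d-r_0}(u),\mathbf{W}_{r_0}(u))$. The crucial structural point is that the two limiting Gaussian processes are \emph{independent}: by Lemma \ref{L1} (part 2 shows $\mathbf{y}_t^{(2)}$ is driven by $\mathbf{P}_{\bm{\beta}_{\perp}}\bm{\Psi}_{\tau}(1)\mathbf{u}_t$ and part 1 gives $\mathbf{P}_{\bm{\beta}_{\perp}}\bm{\Psi}_s(1)=\bm{\beta}_{\perp}[\bm{\alpha}_{\perp}^{\top}(s)\bm{\Gamma}_s(1)\bm{\beta}_{\perp}]^{-1}\bm{\alpha}_{\perp}^{\top}(s)$), the instantaneous cross‑covariance is proportional to $\bm{\alpha}_{\perp}^{\top}(s)\bm{\Omega}(s)\bm{\Omega}^{-1}(s)\bm{\alpha}(s)=\bm{\alpha}_{\perp}^{\top}(s)\bm{\alpha}(s)=\mathbf{0}$; in particular there is no drift correction, and the convergence‑to‑stochastic‑integral theorem gives $T^{-1}\sum_t\mathbf{y}_{t-1}^{(2)}\otimes\bm{\xi}_t\to_D\int_0^1\mathbf{W}_{d-r_0}(u)\otimes\mathrm{d}\mathbf{W}_{r_0}(u)$, jointly with the convergence of $\mathbf{A}_T$. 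Part 1 then follows from the continuous mapping theorem. For Part 2, note $\sum_t\mathbf{y}_{t-1}^{(2)}\mathbf{y}_{t-1}^{(2),\top}\otimes\widehat{\bm{\alpha}}^{\top}(\tau_t)\widehat{\bm{\Omega}}^{-1}(\tau_t)\widehat{\bm{\alpha}}(\tau_t)=T^2\mathbf{A}_T(1+o_P(1))$, so the studentized statistic equals $\mathbf{A}_T^{-1/2}\mathbf{b}_T+o_P(1)\to_D\mathbf{A}_\infty^{-1/2}\int_0^1\mathbf{W}_{d-r_0}\otimes\mathrm{d}\mathbf{W}_{r_0}$; conditioning on the path of $\mathbf{W}_{d-r_0}$, independence plus the It\^o isometry make the stochastic integral conditionally $N(\mathbf{0},\mathbf{A}_\infty)$, hence $\mathbf{A}_\infty^{-1/2}\int_0^1\mathbf{W}_{d-r_0}\otimes\mathrm{d}\mathbf{W}_{r_0}\sim N(\mathbf{0},\mathbf{I}_{(d-r_0)r_0})$ unconditionally.

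\textbf{The main obstacle.} The hard part will be showing that the generated‑regressor contributions are negligible, i.e. $T^{-1}\sum_t\widetilde{\mathbf{R}}_t(\widehat{\bm{\alpha}})\widehat{\bm{\Omega}}^{-1}(\tau_t)\widetilde{\mathbf{g}}_t=o_P(1)$ and the analogous statement for $\mathbf{u}_t^{*}-\mathbf{u}_t$. A crude bound using $\widetilde{\mathbf{R}}_t=O_P(\sqrt T)$ and the uniform rates $O_P(h^2+\sqrt{\log T/(Th)})$ of $\widehat{\bm{\alpha}}-\bm{\alpha}$ and $\widehat{\bm{\Gamma}}-\bm{\Gamma}$ is \emph{not} enough: it leaves a term of order $\sqrt T(h^2+\sqrt{\log T/(Th)})$. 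One must instead exploit (i) that $\bm{\beta}^{\top}\mathbf{y}_{t-1}$ and $\Delta\mathbf{x}_{t-1}$ are zero‑mean and $I(0)$, so sums of their products against the $I(1)$ factor $\mathbf{y}_{t-1}^{(2)}$ are only $O_P(T)$ rather than $O_P(T^{3/2})$, which makes the $h^2$‑bias piece $O_P(h^2)=o_P(1)$; and (ii) the kernel‑weighted martingale‑transform representations of $\widehat{\bm{\alpha}}-\bm{\alpha}$ and $\widehat{\bm{\Gamma}}-\bm{\Gamma}$, whose cross terms with $\mathbf{y}_{t-1}^{(2)}$ and $\mathbf{u}_t$ are controlled by the Nagaev‑type inequality and the quadratic‑form limit theorem of the supplementary appendix. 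This is exactly where the bandwidth conditions bite: $Th^6\to0$ kills the bias‑times‑$I(1)$ cross terms and $Th^2/(\log T)^2\to\infty$ makes the stochastic‑part cross terms vanish.
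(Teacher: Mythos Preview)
Your proposal is correct and follows essentially the same route as the paper. The paper packages the work into Lemma~\ref{L8}: its parts 2 and 3 give exactly your limits for $\mathbf{A}_T$ and for the leading score $T^{-1}\sum_t\mathbf{y}_{t-1}^{(2)}\otimes\bm{\alpha}^\top(\tau_t)\bm{\Omega}^{-1}(\tau_t)\mathbf{u}_t$ (including the independence via $\bm{\alpha}_{\perp}^\top(\tau)\bm{\alpha}(\tau)=\mathbf{0}$, proved there verbatim), while parts 4 and 5 dispose of the two generated-regressor terms you flag as the main obstacle, using precisely the Burkholder/martingale-transform arguments and the bandwidth conditions $Th^6\to0$, $Th^2/(\log T)^2\to\infty$ that you anticipate; the mixed-normal conditioning argument for Part~2 is also identical.
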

	
The first result shows that the cointegration matrix can be estimated at a super consistent rate $T$, while the second result concerning asymptotic normality of a properly standardized from of $\widehat{\bm{\beta}}^* - \bm{\beta}^*$ indicates how to construct confidence interval practically.

\subsection{On Lag Length and Cointegration Rank}\label{Sec2.3}
	
We now consider the estimation of $p_0$ and $r_0$. Specifically, we first propose an information criterion that can  estimate the lag length ($p_0$), and then consider the estimation of cointegration rank ($r_0$). 
	
	%Note that while we consider estimating the order $p_0-1$ of time-varying VECM models, the criterion is also applicable for choosing the lag length of partially nonstationary time-varying VAR models because $p-1$ lagged differences in a VECM correspond to a VAR of order $p$. 
	
To estimate $p_0$, we minimize an information criterion as follows:
	
	\begin{eqnarray}\label{Eq2.4}
		\widehat{p} =  \argmin_{1\le p\le P }\  \text{IC}(p),
	\end{eqnarray}
	where $
\text{IC}(p) = \log \left\{\text{RSS}(p)\right\}+p\cdot\chi_T$, $\text{RSS}(p)=\frac{1}{T}\sum_{t=1}^{T}\widehat{\mathbf{u}}_{p,t}^\top \widehat{\mathbf{u}}_{p,t}$, $\chi_T$ is the penalty term, $\widehat{\mathbf{u}}_{p,t}$ is the value of $\widehat{\mathbf{u}}_{t}$ by letting the number of lagged differences be $p-1$, and $P$ is a sufficiently large fixed positive integer. The following proposition summaries the asymptotic property of \eqref{Eq2.4}.
	
	\begin{theorem}\label{Prop2.1}
		Let Assumptions \ref{Ass1}--\ref{Ass2} hold. Suppose that $\chi_T\to 0$ and $c_T^{-2}\chi_T\to \infty$, where $c_T=h^2+\left(\frac{\log T}{Th}\right)^{1/2}$. Then $\Pr\left(\widehat{p}=p_0\right)\to 1$ as $T\rightarrow \infty$.
	\end{theorem}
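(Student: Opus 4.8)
The plan is to follow the standard template for information-criterion consistency: since $\{\widehat p=p_0\}\supseteq\bigcap_{p\ne p_0}\{\mathrm{IC}(p_0)<\mathrm{IC}(p)\}$ and the range $1\le p\le P$ is finite, it suffices to show, for each fixed $p\ne p_0$, that $\mathrm{IC}(p)-\mathrm{IC}(p_0)=\log\{\mathrm{RSS}(p)/\mathrm{RSS}(p_0)\}+(p-p_0)\chi_T>0$ with probability approaching one, and then take a union bound. I would split into the overfitting case $p_0<p\le P$ and the underfitting case $1\le p<p_0$.

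For the overfitting case the plan is to prove that every $p\ge p_0$ yields the same leading term, $\mathrm{RSS}(p)=\frac1T\sum_{t=1}^T\mathbf u_t^\top\mathbf u_t+O_P(c_T^2)$; since $\frac1T\sum\mathbf u_t^\top\mathbf u_t\to_P\sigma_{p_0}^2:=\int_0^1\mathrm{tr}(\bm\Omega(s))\,\mathrm ds>0$ by the law of large numbers for quadratic forms in the appendix, this gives $\log\{\mathrm{RSS}(p)/\mathrm{RSS}(p_0)\}=O_P(c_T^2)$, which is dominated by $(p-p_0)\chi_T\ge\chi_T$ under the hypothesis $c_T^{-2}\chi_T\to\infty$, forcing $\mathrm{IC}(p)>\mathrm{IC}(p_0)$ w.p.a.1. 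To get the sharp rate I would write $\widehat{\mathbf u}_{p,t}-\mathbf u_t=-[\widehat{\bm\Pi}(\tau_t)-\bm\Pi(\tau_t)]\mathbf y_{t-1}-[\widehat{\bm\Gamma}(\tau_t)-\bm\Gamma(\tau_t)]\Delta\mathbf x_{t-1}$ (for $p>p_0$ the identity carries extra, asymptotically negligible, lagged-difference terms), expand $\mathrm{RSS}(p)$ into the true term, a cross term and a squared term, and bound the latter two. The squared term is controlled by the uniform rates of Theorem~\ref{Thm1} together with the ``kernel-induced degeneracy'' structure: decomposing $\mathbf y_{t-1}=\mathbf P_{\bm\beta}\mathbf y_{t-1}+\mathbf P_{\bm\beta_\perp}\mathbf y_{t-1}$, the estimation error contracted with the $O_P(1)$ cointegrated part is $O_P(c_T)$, while along the $O_P(\sqrt T)$ stochastic-trend direction $\widehat{\bm\Pi}(\tau)\overline{\bm\beta}_\perp$ is super-consistent of order $O_P(c_T/\sqrt T)$, so the product is again $O_P(c_T)$, and averaging the squares gives $O_P(c_T^2)$. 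The cross term $\frac1T\sum\mathbf u_t^\top(\widehat{\mathbf u}_{p,t}-\mathbf u_t)$ is the delicate piece: substituting the stochastic expansion of $[\widehat{\bm\Pi}(\tau),\widehat{\bm\Gamma}(\tau)]$ turns it into a double sum over $t$ and the kernel location $s$, whose $s=t$ diagonal is $O_P(1/(Th))=O_P(c_T^2)$, whose bias part is $O_P(h^2/\sqrt T)=o_P(c_T^2)$ by a martingale CLT exploiting $\bm\Pi^{(2)}(\tau)=\bm\alpha^{(2)}(\tau)\bm\beta^\top$, and whose off-diagonal martingale part is negligible by the quadratic-form limit theorem of the appendix.

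For the underfitting case, the $p$-lag fit omits the non-trivial lagged differences $\bm\Gamma_p(\tau)\Delta\mathbf y_{t-p},\dots,\bm\Gamma_{p_0-1}(\tau)\Delta\mathbf y_{t-p_0+1}$, so the plan is to show $\mathrm{RSS}(p)\to_P\sigma_p^2$ with $\sigma_p^2>\sigma_{p_0}^2$. Using the local approximations of $\{\Delta\mathbf y_t\}$ and $\{\mathbf y_t\}$ from Lemma~\ref{L1} and the law of large numbers for quadratic forms, $\mathrm{RSS}(p)$ converges to the $\tau$-averaged residual variance of the population projection of $\Delta\widetilde{\mathbf y}_t(\tau)$ onto $\widetilde{\mathbf y}_{t-1}(\tau),\Delta\widetilde{\mathbf y}_{t-1}(\tau),\dots,\Delta\widetilde{\mathbf y}_{t-p+1}(\tau)$, which strictly exceeds $\sigma_{p_0}^2$ because the omitted regressors are not perfectly predictable from the included ones together with $\mathbf u_t$; the strict gap follows from $\bm\Gamma_{p_0-1}(\cdot)\not\equiv\mathbf 0$ and $\bm\Omega(\cdot)>0$. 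Hence $\log\{\mathrm{RSS}(p)/\mathrm{RSS}(p_0)\}\to_P\log(\sigma_p^2/\sigma_{p_0}^2)>0$, while $(p-p_0)\chi_T\to0$ since $\chi_T\to0$ and $|p-p_0|\le P$, so $\mathrm{IC}(p)-\mathrm{IC}(p_0)$ stays bounded away from zero and $\mathrm{IC}(p)>\mathrm{IC}(p_0)$ w.p.a.1. Combining the two cases over the finitely many $p\ne p_0$ gives $\Pr(\widehat p=p_0)\to1$.

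The main obstacle I anticipate is the overfitting case, specifically upgrading $\mathrm{RSS}(p)=\frac1T\sum\mathbf u_t^\top\mathbf u_t+o_P(1)$ to the sharp $+O_P(c_T^2)$: because $\mathbf y_{t-1}$ is $I(1)$, a naive bound yields only $O_P(c_T^2\,T)$ for the squared term and $O_P(c_T\sqrt T)$ pointwise residual errors, so it is essential to exploit the differential (super-)consistency of $\widehat{\bm\Pi}(\tau)$ along the $\bm\beta$- and $\bm\beta_\perp$-directions together with the martingale and quadratic-form machinery of the supplementary appendix to recover exactly the $c_T^2$ rate against which the penalty condition $c_T^{-2}\chi_T\to\infty$ is calibrated.
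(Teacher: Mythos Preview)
Your proposal is correct and follows essentially the same route as the paper: the paper packages the two key RSS estimates---$\mathrm{RSS}(p)=\frac1T\sum_t\mathbf u_t^\top\mathbf u_t+O_P(c_T^2)$ for $p\ge p_0$ and $\mathrm{RSS}(p)=\frac1T\sum_t\mathbf u_t^\top\mathbf u_t+c+o_P(1)$ with $c>0$ for $p<p_0$---into a preliminary lemma (Lemma~\ref{L9}) and then runs exactly your over/under-fitting comparison. Your diagnosis of the main obstacle is also on target: the paper handles the squared and cross terms precisely via the rotation $\mathbf Q^{*,-1}(\tau)\bm\Xi_T^*(\tau)\mathbf D_T^*$ (the ``differential super-consistency'' along $\bm\beta_\perp$ that you describe) and a Burkholder-type martingale bound rather than a CLT for the cross term, yielding rates $O_P(T^{-1/2}h^2+h^2\sqrt{\log T/(Th)})$ and $O_P((T\sqrt h)^{-1}+(Th)^{-1})$, both of which are $O_P(c_T^2)$.
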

	
It is noteworthy that  Theorem \ref{Prop2.1} does not require any knowledge of cointegration rank. In view of the conditions on $\chi_T$, a natural choice is 
	$$
	\chi_T = \frac{\log(\log(Th))}{3} \left(h^4 +h^2\left(\frac{\log T}{Th}\right)^{1/2} + \frac{\log T}{Th}\right).
	$$

	\medskip
	
	We next consider the estimation of cointegration rank $r_0$. The basic principle of our method is to separate the $r_0$ relevant singular values of $\bm{\Pi}(\tau)$ from the zero ones, while the number of nonzero ones corresponds to the cointegration rank. Before proceeding further, it should be pointed out that the choice of lag length $p_0$ is irrelevant with the determination of cointegration rank since we can set the lag length $P$ as sufficiently large but fixed (i.e., $p_0\leq P$), in which case $\bm{\Gamma}_j(\tau) = \mathbf{0}_d$ for $j=p_0,...,P$.
	
	The method is based on the QR decomposition with column-pivoting of $\int_{0}^{1}\bm{\Pi}^\top(\tau)\mathrm{d}\tau$, i.e.,  $\int_{0}^{1}\bm{\Pi}^\top(\tau)\mathrm{d}\tau = \bm{\beta}\int_{0}^{1}\bm{\alpha}^\top(\tau)\mathrm{d}\tau = \mathbf{S}\mathbf{R}$, where $\mathbf{S}^\top\mathbf{S}=\mathbf{I}_d$, and $\mathbf{R}$ is an upper triangular matrix with the diagonal elements being nonincreasing. The estimator of $\int_{0}^{1}\bm{\Pi}(\tau)\mathrm{d}\tau$ is naturally given by $\int_{0}^{1}\widehat{\bm{\Pi}}(\tau)\mathrm{d}\tau$ and its QR decomposition with column-pivoting is defined as

\begin{eqnarray*}
\int_{0}^{1}\widehat{\bm{\Pi}}(\tau)\mathrm{d}\tau = \widehat{\mathbf{R}}^\top \widehat{\mathbf{S}}^\top = \left[\begin{matrix}
		\widehat{\mathbf{R}}_{11}^\top & \mathbf{0}_{r_0\times (d-r_0)}\\
		\widehat{\mathbf{R}}_{12}^\top & \widehat{\mathbf{R}}_{22}^\top
	\end{matrix}  \right] \cdot \left[\begin{matrix}
		\widehat{\mathbf{S}}_{1}^\top \\
		\widehat{\mathbf{S}}_{2}^\top
	\end{matrix}  \right] ,
\end{eqnarray*}
where $\widehat{\mathbf{S}}$ is $d\times d$ orthonormal, and the partition of the second step should be obvious so we omit the descriptions for each block. By Theorem \ref{Thm1},  $\int_{0}^{1}\widehat{\bm{\Pi}}(\tau)\mathrm{d}\tau\to_P \int_{0}^{1}\bm{\Pi}(\tau)\mathrm{d}\tau .$	Therefore, $\widehat{\mathbf{R}}_{22}$ is expected to be small, which motivates the use of following procedure. Let $\widehat{\mu}_k = \sqrt{\sum_{j=k}^{d} \widehat{\mathbf{R}}^2(k,j)}$, where $\widehat{\mathbf{R}}(k,j)$ denotes the element of $k^{th}$ row and $j^{th}$ column of $\widehat{\mathbf{R}}$. 

%Here, $\widehat{\mathbf{S}}$ is $d\times d$ orthonormal with $\widehat{\mathbf{S}}_1 \in \mathbb{R}^{d\times r_0}$ and $\widehat{\mathbf{S}}_2 \in \mathbb{R}^{d\times (d-r_0)}$, and $\widehat{\mathbf{R}}$ is an upper triangular matrix with blocks $\widehat{\mathbf{R}}_1 = \left[\widehat{\mathbf{R}}_{11},\widehat{\mathbf{R}}_{12}\right] \in \mathbb{R}^{r_0\times d}$ and $\widehat{\mathbf{R}}_2 = \left[\mathbf{0}_{(d-r_0)\times r_0},\widehat{\mathbf{R}}_{22}\right] \in \mathbb{R}^{(d-r_0)\times d}$, where $\widehat{\mathbf{R}}_{11}\in \mathbb{R}^{r_0\times r_0}$, $\widehat{\mathbf{R}}_{12}\in \mathbb{R}^{r_0 \times (d-r_0)}$ and $\widehat{\mathbf{R}}_{22}\in \mathbb{R}^{(d-r_0)\times (d-r_0)}$. 
	
	We consider the following singular value ratio test, taking a suggestion from the literature (e.g., \citealp{lam2012factor,zhang2019identifying}):
	\begin{eqnarray}\label{Eq2.5}
		\widehat{r} =  \argmax_{0 \le r \le d-1}\left( \frac{\widehat{\mu}_r}{\widehat{\mu}_{r+1}}I\left(\widehat{\mu}_r\geq w_T \right)+I\left(\widehat{\mu}_r < w_T \right)\right) 
	\end{eqnarray}
	where $\widehat{\mu}_0 = \widehat{\mu}_1 + w_T$ is the ``mock'' singular value since $\frac{\widehat{\mu}_r}{\widehat{\mu}_{r+1}}$ is not defined for $r=0$, $w_T =\frac{\log T}{Th} \log(\log (Th)) $ and $ \widehat{\mu}_1\ge\cdots \ge\widehat{\mu}_d$. In addition, the indicator function is used to ensure that the estimator $\widehat{r}$ is consistent. Note that  similar to the case of eigenvalue ratio test for the factor model, both the numerator and denominator converge to zeros at the same rate when $r>r_0$. Therefore, if $\widehat{\mu}_r$ is ``small'', we take it as a sign of $r > r_0$ and set $\widehat{\mu}_r/\widehat{\mu}_{r+1}$ to one. 
	
The following theorem summaries the asymptotic property of \eqref{Eq2.5}.
	
	\begin{theorem}\label{Thm2}
		Let Assumptions \ref{Ass1}--\ref{Ass2} hold. Then $\Pr \left(\widehat{r} = r_0\right) \to 1$ as $T\rightarrow \infty$.
	\end{theorem}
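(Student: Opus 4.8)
# Proof Proposal for Theorem \ref{Thm2}

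The plan is to show that the singular value ratio statistic consistently picks out $r_0$ by establishing two facts: (i) for $r<r_0$, the "true" quantity $\mu_r$ (defined analogously to $\widehat\mu_r$ but from $\int_0^1\bm{\Pi}^\top(\tau)\mathrm{d}\tau$) is bounded away from zero, so $\widehat\mu_r\geq w_T$ with probability tending to one; (ii) for $r\geq r_0$, the relevant tail norm $\mu_r=0$, so $\widehat\mu_r$ is of the estimation-error order, which the threshold $w_T$ is calibrated to detect. First I would invoke Theorem \ref{Thm1}.2 to get $\sup_{\tau\in[0,1]}|\widehat{\bm\Pi}(\tau)-\bm\Pi(\tau)|=O_P(h^2+\sqrt{\log T/(Th)})$, and hence $|\int_0^1\widehat{\bm\Pi}(\tau)\mathrm{d}\tau-\int_0^1\bm\Pi(\tau)\mathrm{d}\tau|=O_P(c_T)$ with $c_T=h^2+\sqrt{\log T/(Th)}$. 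Since the QR decomposition with column pivoting is a Lipschitz-type continuous operation on the relevant quantities (and the diagonal $R$-entries/row norms $\mu_k$ are, by the interlacing properties of column-pivoted QR, within a constant of the corresponding singular values), perturbation bounds for singular values (Weyl's inequality) give $|\widehat\mu_k-\mu_k|=O_P(c_T)$ for each $k$.

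The next step handles the two regimes. Because $\int_0^1\bm{\Pi}^\top(\tau)\mathrm{d}\tau=\bm\beta\int_0^1\bm\alpha^\top(\tau)\mathrm{d}\tau$ has rank exactly $r_0$ — this uses $\mathrm{Rank}(\bm\beta)=\mathrm{Rank}(\bm\alpha(\tau))=r_0$ from Assumption \ref{Ass1}.1.b, together with the fact that $\int_0^1\bm\alpha(\tau)\mathrm{d}\tau$ still has full rank $r_0$ (a mild additional point one must check, but it follows because $\bm\alpha(\cdot)$ is continuous with constant rank and $\bm\beta$ is fixed, so no cancellation in the average can destroy the column space; if one wants to be fully careful, this could be stated as an implicit nondegeneracy condition) — we have $\mu_1\geq\cdots\geq\mu_{r_0}>0=\mu_{r_0+1}=\cdots=\mu_d$. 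Thus for $r<r_0$: $\widehat\mu_r\to_P\mu_r>0>w_T$ eventually, so the indicator $I(\widehat\mu_r\geq w_T)=1$ w.p.$\to1$, and $\widehat\mu_r/\widehat\mu_{r+1}\to_P\mu_r/\mu_{r+1}$, which is finite for $r<r_0-1$ and finite for $r=r_0-1$ only if we also know $\mu_{r_0}$ is separated from zero relative to the noise — here the key calibration is that $w_T/c_T\to\infty$ (indeed $w_T=\frac{\log T}{Th}\log\log(Th)$ dominates $c_T^2$, hence dominates the squared error but one checks it dominates $c_T$ only under the bandwidth range; more precisely one verifies $\widehat\mu_{r_0}\geq\mu_{r_0}-O_P(c_T)$ stays above $w_T$ w.p.$\to1$), so $\widehat\mu_{r_0}/\widehat\mu_{r_0+1}=O_P(\mu_{r_0}/w_T)\to\infty$ in probability. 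For $r>r_0$: $\widehat\mu_r=O_P(c_T)=o_P(w_T)$, so $I(\widehat\mu_r<w_T)=1$ w.p.$\to1$ and the $r$-th term equals $1$. Hence w.p.$\to1$ the maximand at $r=r_0$ equals $\widehat\mu_{r_0}/\widehat\mu_{r_0+1}\to_P\infty$ while the maximand at any $r<r_0$ is $O_P(1)$ and at any $r>r_0$ is exactly $1$; therefore $\argmax$ is attained at $r_0$ with probability tending to one.

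I expect the main obstacle to be the separation argument at the boundary index $r=r_0$: one must simultaneously show $\widehat\mu_{r_0}$ stays above the threshold $w_T$ (requiring $\mu_{r_0}$ to dominate $c_T$, which is automatic since $\mu_{r_0}$ is a fixed positive constant) and that $\widehat\mu_{r_0+1}$ is $o_P(w_T)$ — the latter is the delicate part because it needs the $\argmin$-type rate $\widehat\mu_{r_0+1}=O_P(c_T)$ to be strictly faster than $w_T$, i.e. $c_T=o(w_T)$. Checking $c_T=o(w_T)$ amounts to verifying $h^2+\sqrt{\log T/(Th)}=o\big(\frac{\log T}{Th}\log\log(Th)\big)$, which holds on the bandwidth range implied by the earlier assumptions (in particular it is where $\sqrt{\log T/(Th)}$ dominates $h^2$ and then $\log\log(Th)\to\infty$ supplies the extra factor); I would lay this out carefully using the relations among $T$, $h$ already imposed in Assumptions \ref{Ass1}--\ref{Ass2}. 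A secondary technical point worth spelling out is the continuity/perturbation bound for column-pivoted QR (the pivoting can in principle be discontinuous, but the row norms $\mu_k$ are pivot-invariant up to the ordering and coincide, up to interlacing, with singular values, so Weyl's bound transfers); I would state this as a short lemma or cite the relevant numerical-linear-algebra fact rather than prove it from scratch.
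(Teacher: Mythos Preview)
Your overall architecture is right, but the argument breaks at exactly the step you yourself flag as delicate: the claim that $c_T=o(w_T)$ with $c_T=h^2+\sqrt{\log T/(Th)}$ and $w_T=\frac{\log T}{Th}\log\log(Th)$ is \emph{false}. Under any bandwidth consistent with Assumptions~\ref{Ass1}--\ref{Ass2} (e.g.\ $h\asymp T^{-1/5}$), one has $\sqrt{\log T/(Th)}\asymp T^{-2/5}\sqrt{\log T}$ while $w_T\asymp T^{-4/5}\log T\cdot\log\log T$, so in fact $w_T=o(c_T)$. Hence from the generic bound $\widehat\mu_{r_0+1}=O_P(c_T)$ you \emph{cannot} conclude $\widehat\mu_{r_0+1}=o_P(w_T)$, and the indicator $I(\widehat\mu_{r_0+1}<w_T)$ need not switch on. Your Weyl/perturbation approach, applied to $\int_0^1\widehat{\bm\Pi}(\tau)\mathrm{d}\tau$ as a whole, only delivers the slow rate $c_T$ for the trailing singular values, which is not enough.

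The paper closes this gap by exploiting the reduced-rank structure $\bm\Pi(\tau)=\bm\alpha(\tau)\bm\beta^\top$ together with the finer asymptotic expansion from the proof of Theorem~\ref{Thm1}: the estimation error $\widehat{\bm\Pi}(\tau)-\bm\Pi(\tau)$ is anisotropic, and its component in the $\bm\beta_\perp$ direction (i.e.\ $\widehat{\bm\Pi}(\tau)\bm\beta_\perp$, equivalently $\bm\beta_\perp^\top\widehat{\bm\Pi}^\top(\tau)$) converges at the much faster rate $O_P(\log T/(Th))$, because the bias $\tfrac12 h^2\widetilde c_2\,\bm\alpha^{(2)}(\tau)\bm\beta^\top$ vanishes when post-multiplied by $\bm\beta_\perp$, and the stochastic term inherits the super-consistent scaling $\mathbf{D}_T$ from the integrated regressor $\mathbf{y}_{t-1}$. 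Concretely, the paper rotates by the orthonormal $\bm\beta_0=[\bm\beta(\bm\beta^\top\bm\beta)^{-1/2},\bm\beta_\perp]^\top$, uses unitary invariance of singular values, and shows $\sigma_j(\widehat{\mathbf R})=O_P(\log T/(Th))$ for $j>r_0$. Since $w_T=\frac{\log T}{Th}\log\log(Th)$ dominates $\log T/(Th)$ by exactly the $\log\log$ factor, this yields $\widehat\mu_{r_0+1}=o_P(w_T)$. You need to replace your generic Weyl bound with this direction-specific rate to make the threshold comparison go through.
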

If $r_0 = 0$, $\mathbf{y}_t$ is a pure unit root process with time-varying vector autoregressive errors. Hence, our procedure is also able to test the existence of cointegration relationship.

\subsection{Testing for Parameter Stability}\label{Sec2.5}
	
Practically, it is necessary to test whether the coefficients of \eqref{Eq2.1} are time-varying before applying the aforementioned framework. Formally, we consider a hypothesis test of the form:
	\begin{equation}\label{Eq2.7}
		\mathbb{H}_0: \mathbf{C}\mathbf{b}(\cdot) = \mathbf{c} \text{ for some unknown $\mathbf{c}\in \mathbb{R}^s$},
	\end{equation}
	where $\mathbf{b}(\tau) =  \mathrm{vec} (\bm{\alpha}(\tau),\bm{\Gamma}(\tau) )$, $\mathbf{C}$ is a selection matrix of full row rank, and $s$ is the number of restrictions. The choice of $\mathbf{C}$ and $\mathbf{c}$ should be theory/data driven. For example, one can let $\mathbf{C} =\left[\mathbf{I}_{r_0},\mathbf{0}_{r_0\times (d-1)r_0+d^2(p_0-1)}\right]$ and $\mathbf{c} = \mathbf{0}$ to test whether there exists an error-correction term for $\Delta y_{1,t}$ over the long-run.
	
	The test statistic is constructed based on the weighted integrated squared errors:
	\begin{equation}\label{Eq2.8}
		\widehat{Q}_{\mathbf{C},\mathbf{H}}=\frac{1}{T}\sum_{t=1}^{T} \left\{\mathbf{C}\widehat{\mathbf{b}}(\tau_t)-\widehat{\mathbf{c}}\right\}^\top \mathbf{H}(\tau_t)\left\{\mathbf{C}\widehat{\mathbf{b}}(\tau_t)-\widehat{\mathbf{c}}\right\},
	\end{equation}
	where $\widehat{\mathbf{b}}(\tau) =  \mathrm{vec} (\widehat{\bm{\alpha}}(\tau),\widehat{\bm{\Gamma}}(\tau) )$ should be obvious, and $\widehat{\mathbf{c}} = \frac{1}{T}\sum_{t=1}^{T}\mathbf{C}\widehat{\mathbf{b}}(\tau_t)$ is the semiparametric estimator of $\mathbf{c}$. In \eqref{Eq2.8}, $\mathbf{H}(\cdot)$ is an $s\times s$ positive definite weighting matrix, and is typically set as the precision matrix associated with $\widehat{\mathbf{b}}(\cdot)$. We present the asymptotic distribution of the semiparametric estimator $\widehat{\mathbf{c}}$ and the proposed test in the following theorem.
	\begin{theorem}\label{Thm4}
		Let Assumptions \ref{Ass1}--\ref{Ass2} hold. Suppose further that $\frac{Th^2}{(\log T)^2} \to \infty$ and $Th^{5.5} \to 0$. Then
		
		\begin{enumerate}[wide, labelwidth=!, labelindent=0pt]
			\item [1.] $\sqrt{T}\left(\widehat{\mathbf{c}} - \mathbf{c} - \frac{1}{2}h^2\widetilde{c}_2\int_{0}^{1}\mathbf{C}\mathbf{b}^{(2)}(\tau)\mathrm{d}\tau \right) \to_D N\left( \mathbf{0}, \int_{0}^{1}\mathbf{C}(\bm{\Sigma}_{\mathbf{w}}^{-1}(\tau)\otimes\bm{\Omega}(\tau))\mathbf{C}^\top\mathrm{d}\tau\right)$;
			
			\item [2.] Under $\mathbb{H}_0$, 	$T\sqrt{h}\left(\widehat{Q}_{\mathbf{C}, \widehat{\mathbf{H}}} -\frac{1}{Th}s\widetilde{v}_0\right)\to_D N(0,4s C_B),$	where $\widehat{\mathbf{H}}(\tau) =  (\mathbf{C}\widehat{\mathbf{V}}_{\mathbf{b}}(\tau)\mathbf{C}^\top )^{-1}$, $\widehat{\mathbf{V}}_{\mathbf{b}}(\tau) = \widehat{\bm{\Sigma}}_{\mathbf{w}}^{-1}(\tau)\otimes\widehat{\bm{\Omega}}(\tau)$, and $C_B = \int_{0}^{2} \left(\int_{-1}^{1-v}K(u)K(u+v) \mathrm{d}u\right)^2\mathrm{d}v.$
		\end{enumerate}
	
	\end{theorem}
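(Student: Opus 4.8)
\textbf{Proof proposal for Theorem \ref{Thm4}.}

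The plan is to treat the two parts sequentially, with Part 1 feeding into Part 2. For Part 1, I would start from the pointwise asymptotic expansion of $\widehat{\mathbf{b}}(\tau_t) = \mathrm{vec}(\widehat{\bm{\alpha}}(\tau_t),\widehat{\bm{\Gamma}}(\tau_t))$ that is implicit in Theorem \ref{Thm1}: write $\widehat{\mathbf{b}}(\tau) - \mathbf{b}(\tau) = \frac{1}{2}h^2\widetilde{c}_2\mathbf{b}^{(2)}(\tau) + \text{(linear stochastic term)} + \text{(higher order)}$, where the linear stochastic term is, up to the relevant projection onto the $\bm{\beta}$ and $\Delta\mathbf{x}$ directions, a kernel-weighted average of $\bm{\varepsilon}_s$-type innovations with weight matrix built from $\bm{\Sigma}_{\mathbf{w}}^{-1}(\tau)\otimes\bm{\Omega}(\tau)$. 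Averaging over $t=1,\dots,T$ and using $\widehat{\mathbf{c}} = \frac1T\sum_t \mathbf{C}\widehat{\mathbf{b}}(\tau_t)$, the bias term becomes $\frac12 h^2\widetilde{c}_2\int_0^1\mathbf{C}\mathbf{b}^{(2)}(\tau)\mathrm{d}\tau + o(h^2)$ (Riemann-sum approximation, using Assumption \ref{Ass1}.2.a for smoothness), and the stochastic term is $\frac1T\sum_t \mathbf{C}\times(\text{linear term at }\tau_t)$, which after swapping the order of summation becomes a martingale-type sum $\frac1T\sum_s(\cdots)\bm{\varepsilon}_s$ whose variance, after the kernel integrates out, converges to $\int_0^1\mathbf{C}(\bm{\Sigma}_{\mathbf{w}}^{-1}(\tau)\otimes\bm{\Omega}(\tau))\mathbf{C}^\top\mathrm{d}\tau$; a martingale CLT (invoking Assumption \ref{Ass1}.3 and \ref{Ass2}.2) then gives the $\sqrt{T}$-normality. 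The condition $Th^{5.5}\to 0$ ensures $\sqrt{T}\cdot o(h^2)$-type remainders are negligible, and $Th^2/(\log T)^2\to\infty$ controls the replacement of $\bm{\beta},\bm{\Omega},\bm{\Sigma}_{\mathbf{w}}$ by their estimates via the uniform rates in Theorem \ref{Thm1}.

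For Part 2, write $\widehat{Q}_{\mathbf{C},\widehat{\mathbf{H}}}$ with $\mathbf{C}\widehat{\mathbf{b}}(\tau_t)-\widehat{\mathbf{c}} = \{\mathbf{C}\widehat{\mathbf{b}}(\tau_t)-\mathbf{C}\mathbf{b}(\tau_t)\} - \{\widehat{\mathbf{c}}-\mathbf{c}\}$ under $\mathbb H_0$ (so $\mathbf{C}\mathbf{b}(\tau_t)\equiv\mathbf{c}$). The term $\widehat{\mathbf{c}}-\mathbf{c}$ is $O_P(T^{-1/2})$ by Part 1, so its contribution to the quadratic form is $O_P(T^{-1})$ and, after multiplication by $T\sqrt{h}$, is $O_P(\sqrt h)=o_P(1)$; likewise the cross term is negligible. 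The leading term is therefore $\frac1T\sum_t\{\mathbf{C}\widehat{\mathbf{b}}(\tau_t)-\mathbf{C}\mathbf{b}(\tau_t)\}^\top\mathbf{H}(\tau_t)\{\mathbf{C}\widehat{\mathbf{b}}(\tau_t)-\mathbf{C}\mathbf{b}(\tau_t)\}$, which I would expand into a bias-squared part, a variance (degenerate $U$-statistic) part, and a bias-variance cross part. With $\mathbf{H}(\tau)=(\mathbf{C}\mathbf{V}_{\mathbf{b}}(\tau)\mathbf{C}^\top)^{-1}$ chosen as the precision matrix, the diagonal of the variance part produces the centering $\frac1{Th}s\widetilde{v}_0$, and the off-diagonal part is a degenerate $U$-statistic in the $\bm{\varepsilon}_s$ innovations whose normalized limit, by a martingale CLT for degenerate quadratic forms (the kernel-overlap constant $C_B$ arising from $\int(\int K(u)K(u+v)du)^2dv$), is $N(0,4sC_B)$. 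The bias-squared part is $O(h^4)$, so $T\sqrt h\cdot O(h^4)=O(T h^{4.5})=o(1)$ exactly under $Th^{5.5}\to0$ combined with $Th^4\to0$ (which follows since $Th^{5.5}\to0$ and $h\to0$ give $Th^4 = (Th^{5.5})h^{-1.5}$—so one actually needs to check $h^{-1.5}(Th^{5.5})\to0$, i.e. $Th^4\to0$; this is implied because $Th^{5.5}\to0$ forces $h$ to shrink fast enough relative to $T$). Replacing $\mathbf{H}$ by $\widehat{\mathbf{H}}$ costs only $o_P$ terms in the $T\sqrt h$ scale by the uniform consistency $\sup_\tau|\widehat{\bm{\Sigma}}_{\mathbf{w}}(\tau)-\bm{\Sigma}_{\mathbf{w}}(\tau)|,\sup_\tau|\widehat{\bm{\Omega}}(\tau)-\bm{\Omega}(\tau)|=o_P(1)$ from Theorem \ref{Thm1}.

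The main obstacle is the degenerate $U$-statistic / quadratic-form central limit theorem for the variance part: the summands $\{\mathbf{C}\widehat{\mathbf{b}}(\tau_t)-\mathbf{C}\mathbf{b}(\tau_t)\}$ are themselves kernel-smoothed functionals of the \emph{approximated} processes $\widetilde{\mathbf{y}}_t(\tau)$, $\Delta\widetilde{\mathbf{y}}_t(\tau)$ (via Lemma \ref{L1}), not of i.i.d. data, so one must (i) control the replacement error between the true $\{\mathbf{y}_t,\Delta\mathbf{y}_t\}$ and the local approximations uniformly in $\tau$ — here Lemma \ref{L1}(a)–(c) and the $O(1/T)$, $O(1/\sqrt T)$ rates are essential — and (ii) isolate the bilinear-in-innovations part, show the kernel-overlap structure generates precisely the constant $C_B$, and verify a Lindeberg/martingale condition using $\max_t\|\bm{\varepsilon}_t\|_\delta<\infty$ with $\delta>4$. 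The bookkeeping of which $\bm{\beta}$-direction versus $\Delta\mathbf{x}$-direction components of $\widehat{\mathbf{b}}$ contribute (the $\diag(\bm{\beta},\mathbf{I}_{d(p_0-1)})$ structure in $\bm{\Sigma}_{\mathrm{co}}$) must be tracked so that $\bm{\Sigma}_{\mathbf{w}}^{-1}(\tau)\otimes\bm{\Omega}(\tau)$, rather than $\bm{\Sigma}_{\mathrm{co}}(\tau)$, is what appears — this is why the theorem is stated in terms of $\widehat{\mathbf{b}}=\mathrm{vec}(\widehat{\bm{\alpha}},\widehat{\bm{\Gamma}})$ and $\bm{\Sigma}_{\mathbf{w}}$ rather than the full $\widehat{\bm{\Pi}}$. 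I would handle (i) and (ii) by reducing, via Lemma \ref{L1}, to the time-varying VMA($\infty$) machinery (Nagaev inequality, Gaussian approximation, quadratic-form limit theorem) promised in Appendix A, and then applying the quadratic-form limit theorem directly to the leading bilinear part.
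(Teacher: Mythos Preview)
Your overall strategy matches the paper's proof closely: Part 1 via the pointwise expansion from Theorem \ref{Thm1} plus averaging and a martingale CLT, and Part 2 via decomposing the quadratic form into a diagonal (centering) part and an off-diagonal degenerate $U$-statistic handled by a martingale CLT for quadratic forms (the paper's Lemma \ref{L10}). The identification of the key obstacle --- the quadratic-form CLT for kernel-smoothed functionals built from the time-varying VMA($\infty$) approximations of Lemma \ref{L1} --- is exactly right.

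There is, however, a genuine gap in your bias-squared argument. You write that the bias-squared contribution is $O(h^4)$ and then try to deduce $Th^{4.5}\to 0$ from $Th^{5.5}\to 0$; this implication is \emph{false} (take $h=T^{-1/5}$: then $Th^{5.5}=T^{-1/10}\to 0$ but $Th^{4.5}=T^{1/10}\to\infty$). Your parenthetical rescue (``$Th^{5.5}\to 0$ forces $h$ to shrink fast enough'') does not work, since $Th^4=(Th^{5.5})h^{-1.5}$ with $h^{-1.5}\to\infty$. The missing observation is that under $\mathbb{H}_0$ one has $\mathbf{C}\mathbf{b}(\tau)\equiv\mathbf{c}$ constant, hence $\mathbf{C}\mathbf{b}^{(2)}(\tau)=\mathbf{0}$, so the leading bias term $\tfrac12 h^2\widetilde{c}_2\,\mathbf{C}\mathbf{b}^{(2)}(\tau)$ \emph{vanishes identically}. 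The surviving remainder in $\mathbf{C}\widehat{\mathbf{b}}(\tau)-\mathbf{c}$, beyond the linear stochastic term $\mathbf{C}(\bm{\Sigma}_{\mathbf{w}}^{-1}(\tau)\otimes\mathbf{I}_d)\mathbf{R}_T(\tau)$, is of order $O_P(\rho_T^2)$ with $\rho_T=h^2+\sqrt{\log T/(Th)}$ (the paper's Bahadur representation), and it is this smaller remainder --- not a generic $O(h^2)$ bias --- whose cross and square contributions are negligible under $Th^{5.5}\to 0$ and $Th^2/(\log T)^2\to\infty$. Without using $\mathbf{C}\mathbf{b}^{(2)}=0$, the bias-squared term would indeed be $O(h^4)$ and would \emph{not} be killed by the stated bandwidth conditions; your argument as written would fail at this step.
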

	In Theorem \ref{Thm4}.1, the bias term $\frac{1}{2}h^2\widetilde{c}_2\int_{0}^{1}\mathbf{C}\mathbf{b}^{(2)}(\tau)\mathrm{d}\tau$ vanishes under $\mathbb{H}_0$, and thus the parametric component in the corresponding semiparametric model can have a $\sqrt{T}$-consistent estimate $\widehat{\mathbf{c}}$. Theorem \ref{Thm4}.2 states that the test statistic converges to a normal distribution and is asymptotically pivotal. The bias term $s\widetilde{v}_0$ can easily be calculated for any given kernel function, and it arises due to the quadratic form of the test statistic.
	
	To close our theoretical investigation, we note further that the online supplementary Appendix \ref{App0}  provides a local alternative of the parameter stability test, and also gives  a  simulation-assisted testing procedure to improve the finite sample performance of the test.

	\section{Simulation}\label{Sec3}
	
	In this section, we first provide some details of the numerical implementation in Section \ref{Sec3.1}, and then respectively examine the estimation and hypothesis testing in Sections \ref{Sec3.2} and \ref{Sec3.3}.  
	
	\subsection{Numerical Implementation}\label{Sec3.1}
	Throughout the numerical studies, Epanechnikov kernel (i.e., $K(u)=0.75(1-u^2)I(|u|\leq 1)$) is adopted. The optimal lag length and the cointegration rank are estimated based on \eqref{Eq2.4} and \eqref{Eq2.5} respectively.  For each given $p$ of \eqref{Eq2.4}, the bandwidth $\widehat{h}_{cv}$ is always chosen by minimizing the following leave-one-out cross-validation criterion function:
	\begin{equation}\label{Eq3.1}
		\widehat{h}_{cv}=\arg\min_{h}\sum_{t=1}^{T}\left\|\Delta\mathbf{y}_t-\widehat{\bm{\Pi}}_{-t}(\tau_t)\mathbf{y}_{t-1}-\sum_{j=1}^{p-1}\widehat{\bm{\Gamma}}_{j,-t}(\tau_t) \Delta\mathbf{y}_{t-j}\right\|^2,
	\end{equation}
	where $\widehat{\bm{\Pi}}_{-t}(\cdot)$ and $\widehat{\bm{\Gamma}}_{j,-t}(\cdot)$ are obtained based on the local linear estimator of Section \ref{Sec2.1} but leaving the $t^{th}$ observation out. Once $\widehat{p}$, $\widehat{r}$ and $\widehat{h}_{cv}$ are obtained, the estimation procedure is relatively straightforward. As shown in \cite{richter2019cross}, the leave-one-out cross validation method works well as long as the error terms are uncorrelated, which implies that this desirable property should hold in our case.
	
	\subsection{Examining the Estimation Results}\label{Sec3.2}
	
The data generating process (DGP) is as follows:
	\begin{equation}\label{Eq3.2}
		\Delta\mathbf{y}_t=\bm{\alpha}(\tau_t)\bm{\beta}^\top\mathbf{y}_{t-1}+\bm{\Gamma}_1(\tau_t)\Delta\mathbf{y}_{t-1}+\mathbf{u}_t \ \ \mbox{with} \ \ \mathbf{u}_t=\bm{\omega}(\tau_t)\bm{\varepsilon}_t \ \ \text{for}\ \ t=1,\ldots ,T,
	\end{equation}
	where $T\in \{200, 400, 800\}$, $\bm{\varepsilon}_t$'s are i.i.d. draws from $N(\mathbf{0}_{2\times 1}, \mathbf{I}_2)$, and
	\begin{eqnarray*}
		\bm{\Gamma}_1(\tau)&=&\left[\begin{matrix}
			0.5\exp{\tau-0.5} & -0.2\exp{\tau-1} \\
			-0.2\cos{\pi\tau} & 0.6\exp{-\tau-0.5}
		\end{matrix}\right], \\
		\bm{\omega}(\tau)&=&\left[\begin{matrix}
			0.8\exp{-0.5\tau}+0.5 & 0 \\
			0.1\exp{0.5-\tau}     & 0.5(\tau-0.5)^2 + 1
		\end{matrix}\right].
	\end{eqnarray*}
	In this case, we have $p_0 = 2$. To test the null hypothesis of no cointegration relations, we consider two sets of $\bm{\alpha}(\tau)$ and $\bm{\beta}$:
	\begin{enumerate}[wide, labelwidth=!, labelindent=0pt]
	\item[] DGP 1 --- $\bm{\alpha}(\tau) = [0.2\sin(\tau)-0.5,0.2\cos(\tau)+0.4]^\top$ and $\bm{\beta} = [1,-0.8]^\top$, so there exists one cointegration relationship, i.e. $r_0=1$. 
	
	\item[] DGP 2 --- $\bm{\alpha}(\tau) = \bm{\beta} =0$, so $r_0=0$ and $\mathbf{y}_t$ is a pure unit-root process with time-varying vector autoregressive errors.
	\end{enumerate}
For each generated dataset, we carry on the methodologies documented in Section \ref{Sec2}, and conduct 1000 replications.
	
	First, we evaluate the performance of the lag length selection procedure (i.e., \eqref{Eq2.4}), and report the percentages of $\widehat{p} < 2$, $\widehat{p} = 2$, and $\widehat{p} > 2$ respectively based on 1000 replications. Table 1 shows that the information criterion \eqref{Eq2.4} performs reasonably well, as the percentages associated with $\widehat{p}=2$ are sufficiently close to 1 except for the case with $T=200$. In addition, this information criterion works well in both time-varying VECM (DGP 1) and time-varying VAR models (DGP 2). 
		
		\begin{center}
		INSERT TABLE 1 ABOUT HERE
		\end{center}
	
	Next, we evaluate the performance of the conintegration rank estimator \eqref{Eq2.5} and report the percentages of $\widehat{r} = 0$ and $\widehat{r} = 1$ respectively based on 1000 replications. Table 2 shows that the  singular value ratio method of \eqref{Eq2.5} performs reasonably well. However, when $r_0=0$, the estimator \eqref{Eq2.5} tends to identify a false cointegration relationship for small sample size (i.e., $T=200$).
	
\begin{center}
INSERT TABLE 2 ABOUT HERE
\end{center}
	
	Finally, we evaluate the estimates of $\bm{\alpha}(\tau)$, $\bm{\beta}$ and $\bm{\Gamma}_1(\tau)$ for DGP 1, and calculate the root mean square error (RMSE) as follows
	\begin{equation*}
		\left\{\frac{1}{1000T} \sum_{n=1}^{1000} \sum_{t=1}^{T}\|\widehat{\bm{\theta}}^{(n)}(\tau_t)-\bm{\theta}(\tau_t)\|^2\right\}^{1/2}
	\end{equation*}
	for $\bm{\theta}(\cdot)\in\left\{\bm{\alpha}(\cdot),\bm{\Gamma}(\cdot)\right\}$, where $\widehat{\bm{\theta}}^{(n)}(\tau)$ is the estimate of $\bm{\theta}(\tau)$ for the $n$-th replication. Of interest, we also examine the finite sample coverage probabilities of the confidence intervals based on our asymptotic theories. In the following, we compute the average of coverage probabilities for grid points in $\{\tau_t,t=1,\ldots,T\}$, and then further take an average across the elements of $\bm{\theta}(\cdot)$. The RMSEs and empirical coverage probabilities are reported in Table 3, which reveals several notable points. First, the RMSE decreases as the sample size goes up. Second, the RMSE of $\bm{\beta} $ is much smaller than those of $\bm{\alpha}(\tau)$ and $\bm{\Gamma}_1(\tau)$, which should be expected. Third, the finite sample coverage probabilities are smaller than their nominal level (95\%) for small $T$, but are fairly close to 95\% as $T$ increases.
	
	\begin{center}
	INSERT TABLE 3 ABOUT HERE
	\end{center}

	\subsection{Examining the Parameter Stability Test}\label{Sec3.3}
	To evaluate the size and local power of the proposed test statistic, we consider the following DGP:
	\begin{equation}\label{Eq3.3}
		\Delta\mathbf{y}_t=\bm{\alpha}(\tau_t)\bm{\beta}^\top\mathbf{y}_{t-1}+\bm{\Gamma}_1(\tau_t)\Delta\mathbf{y}_{t-1}+\mathbf{u}_t,
	\end{equation}
	where $\bm{\beta}$, $\bm{\Gamma}_1(\cdot)$ and $\mathbf{u}_t$ are generated in the same way as DGP 1 in Section \ref{Sec3.2}, and
	$$
	\bm{\alpha}(\tau)=\left[\begin{matrix}
		-0.4  \\
		0.4
	\end{matrix} \right]+b\times d_T\times \left[\begin{matrix}
		\sin(\tau)  \\
		\cos(\pi\tau)
	\end{matrix} \right],
	$$
	in which $d_T = T^{-1/2}h^{-1/4}$ and $b$ is set to be $0$, $1$ or $2$ in order to investigate the size and local power of the proposed test. We use the proposed testing procedure to test whether the coefficient $\bm{\alpha}(\cdot)$ is time-varying. Again, we let $T\in \{200,400,800\}$ and conduct $1000$ replications for each choice of $T$. We use the simulation-assisted testing procedure of Appendix \ref{App0} to get the empirical critical value $\widehat{q}_{1-\alpha}$ after $1000$ bootstrap replications. We consider a sequence of bandwidths to check the robustness of the proposed test with respect to a sequence of bandwidths: 
	\begin{equation}\label{h_test}
		h = \alpha_1 T^{-1/5}, \quad \alpha_1= 0.6,\ldots,2.
	\end{equation}
	
	Table 4 reports the rejection rates at the $5\%$ and $10\%$ nominal levels. A few facts emerge. First, our test has reasonable sizes using the empirical critical values obtained by the  bootstrap procedure if the sample size is not so small. Second, the size behaviour of our test is not sensitive to the choices of bandwidths. As discussed in \cite{gao2008bandwidth}, the estimation-based optimal bandwidths may also be optimal for testing purposes, so for simplicity we use the cross-validation based bandwidth or the rule-of-thumb bandwidth in practice. Third, the local power of our test increases rapidly as $b$ increases.

\begin{center}
INSERT TABLE 4 ABOUT HERE
\end{center}

	\section{A Real Data Example}\label{Sec4}
	In this section, we assess the time-varying predictability of the term structure (i.e., the yield curve) of interest rates, and investigate whether the expectations hypothesis of the term structure holds periodically by using the proposed time-varying VECM model, which allows for shifts in the predictability of the term structure.
	
We now briefly review the literature. The term structure is crucial to both monetary policy analysis and private individuals.  According to the rational expectations hypothesis (\citealp{campbell1987cointegration}), the term structure (or term spread) should provide information on the future changes in both short-term and long-term interest rates. For example, if a long bond yield exceeds a short yield, the long rate subsequently tends to rise, which generates expected capital losses on the long bond and thus offsets the current yield advantage. This also implies that bond returns are predictable from the yield spread, and the expected bond returns and the yield spread should be negatively correlated.
	
The literature on the term structure and bond return predictability is enormous and it continues to expand (e.g.,  \citealp{campbell1987cointegration,bauer2020interest,andreasen2021yield,vayanos2021preferred,he2022treasury}). However, the existing results on the expectations hypothesis of the term structure present many discrepancies, which may be due to the fact that the relationship evolves with time. For example, \cite{borup2021predicting} document that bond return predictability depends on the economic states, while linear forecasting models yield little evidence of unconditional predictability. Along this line of research, one important question is that whether the U.S. monetary and financial system has changed over time so that estimates from historical data are unreliable for modern policy analysis and investment activities. Although the literature has begun to explore whether the predictability of the term structure depends on the economy states (e.g., \citealp{andreasen2021yield,borup2021predicting}), few studies aim to quantify the varying predictability of the term structure over time. In what follows, we address this issue using the newly proposed framework. The estimation procedure is conducted in exactly the same way as in Section \ref{Sec3}, so we no longer repeat the details.

	\subsection{Empirical Analysis}\label{Sec4.2}
	To study the time-varying predictability of the term structure, we consider the following time-varying bivariate VECM($p_0-1$) model:
	$$
	\Delta \mathbf{y}_t = \bm{\alpha}(\tau_t)\bm{\beta}^\top\mathbf{y}_{t-1} + \sum_{j=1}^{p_0-1}\bm{\Gamma}_j(\tau_t)\Delta \mathbf{y}_{t-j} + \mathbf{u}_t,\quad \mathbf{u}_t = \bm{\omega}(\tau_t)\bm{\varepsilon}_t,
	$$
	where $\mathbf{y}_t^\top = [l_t, s_t]$, $s_T$ is the interest rate on a one-period bond and $l_t$ is the interest rate on a multi-period bond. According to the expectations hypothesis of the term structure, $l_t$ and $s_t$ should be integrated with $\bm{\beta}^\top = [1,-1]$ (and thus $\beta^* = -1$ under the identification condition), while the error-correction term is the term spread. {In addition, the adjustment coefficient $\bm{\alpha}(\tau)$ measures the predictability of the term structure and the sign of the elements in $\bm{\alpha}(\tau)$ should be positively significant according to the theory. }
	
	We use a selection of bond rates with maturities ranging from 1 to 5 years. The interest rates are estimated from the prices of U.S. Treasury securities and correspond to zero-coupon bonds. The data are monthly observations from 1961:M6 to 2022:M12, which are collected from Nasdaq Data Link at \url{https://data.nasdaq.com}. Figure \ref{Fg1} plots these five variables.

\begin{center}
INSERT FIGURE 1 ABOUT HERE.
\end{center}

\subsection{Constant Parameter VECM}

We begin by presenting the results using a constant parameter VECM model in order to get some intuition, which is also served as a benchmark for evaluating our time-varying VECM model. Recall that one of the economic implications of the expectations hypothesis of the term structure is that the long and short rates should be cointegrated with $\bm{\beta}=[1,-1]^\top$. To empirically test this implication, we first detect the presence of cointegration, using the Johansen's likelihood ratio test. The testing results are reported in Table 5. From Table 5, for all possible bivariate pairs and lag lengths considered, the tests strongly reject the null of no integration, but do not reject the null of a single cointegration relationship at the 5\% significance level.

	\begin{center}
	INSERT TABLE 5 ABOUT HERE
	\end{center}

 We then report the parameter estimates for constant VECM models, while the optimal lag $p$ is set to be $2$ according to the literature (e.g., \citealp{hansen2003structural}). Table 6 reports the parameter estimates as well as their 95\% confidence intervals, in which $\alpha_i$ denotes the $i^{th}$ elements of adjustment coefficient $\bm{\alpha}$. From Table 6, we can see that for all considered bivariate pairs, the estimated cointegration parameter $\widehat{\beta}^*$ is quite close to unity, which is consistent with the theory. {However, for all bivariate pairs considered, the estimates of  $\widehat{\alpha}_1$ and $\widehat{\alpha}_2$ are not positively significant (or even negative), which contradicts to the economic implications of the term structure theory. Note that according to the expectations hypothesis, the sign of adjustment coefficients $\bm{\alpha}$ should be positive. These results are in line with \cite{borup2021predicting}, who find that linear forecasting models yield little evidence of unconditional bond return predictability.}

\begin{center}
	INSERT TABLE 6 ABOUT HERE
	\end{center}

\subsection{Time-Varying VECM}

In order to solve the puzzle raised by constant parameter VECM models, we then investigate the possibility that the expectations hypothesis of the term structure holds periodically. This is mainly motivated by the following facts. First, several studies have shown that the bond return predictability depends on the economic states (e.g., \citealp{andreasen2021yield,borup2021predicting}). Second the rational expectations hypothesis also implies that the future bond returns should be negatively correlated with the current term spread (i.e., $l_t-s_t$). The above two facts mean that the predictability of term structure shifts over time and the above puzzling contradictions may be explained by using time-varying VECM models.

Table 7 report the estimation and testing results of time-varying VECM models, as well as some robustness checks. For the time-varying VECM($p-1$) model, the optimal lag is $\widehat{p} = 2$ or $\widehat{p} = 3$ by our approach, which is consistent with the literature (e.g., \citealp{hansen2002testing}). We further check whether the model coefficients are indeed time-varying. We employ the proposed test statistic to examine the constancy of model coefficients. The associated $p$-value for these considered bivariate pairs are around 0.000--0.001, which suggest that we should choose the time-varying VECM model over a constant one. Certainly, one may examine each element of these coefficient matrices. However, it will lead to a quite lengthy presentation. In order not to deviate from our main goal, we no longer conduct more testing along this line. These testing results also suggest that the predictability of term structure are time-varying for a range of short and long rates. We then conduct robustness check to see whether the error innovations $\{\bm{\varepsilon}_t\}$ exhibit serial correlation. We use the multivariate version of Breusch-Godfrey LM test (\citealp{godfrey1978testing}) to test the serial-correlations in the innovations $\{\bm{\varepsilon}_t\}$, in which the null hypothesis is $H_0: E(\bm{\varepsilon}_t\bm{\varepsilon}_{t+1}^\top) = \mathbf{0}$. Based on the estimates $\widehat{\bm{\varepsilon}}_t = \widehat{\bm{\Omega}}^{-1/2}(\tau_t)\widehat{\mathbf{u}}_t$, the corresponding $p$-value ranges from 0.559 to 0.945, suggesting that the time-varying VECM model fits the data quite well for all considered bivariate pairs. 

\begin{center}
	INSERT TABLE 7 ABOUT HERE
	\end{center}

We then apply the singular ratio test to detect the presence of cointegration and then check whether $\beta^* = -1$. Table 7 show that the estimates of cointegration rank are $1$ (i.e., $\widehat{r} = 1$) for all considered bivariate pairs. Based on these singular ratio tests, we find strong evidence of the existence of cointegration, which is consistent with Figure \ref{Fg1} and the results of constant parameter VECM models. In addition, this result, i.e., $\widehat{r} = 1$, is robust to different choices of $p$ ranging from $2$--$6$. We also report the point estimates of $\beta^*$ and their 95\% confidence intervals in Table 7. According to Table 7, we find a long-run relationship between the long rate and the short rate, and we cannot reject the null that $H_0: \beta^*=-1$ at a 5\% significance level for most considered bivariate pairs. Interestingly, we find that the estimates of $\beta^*$ between these two models are almost identical, while the time-varying VECM models yields quite narrow confidence bands (i.e., smaller standard errors).

Finally, we investigate whether the term structure (or bond returns) is predictable over the long-run by using the term spread as a predictor, i.e., $\bm{\alpha}(\tau) = \mathbf{0}$, and we also examine the sign of $\bm{\alpha}(\tau)$, which should be positive according to the expectations hypothesis. {In order to confirm whether the error--correction component is significant, We test the null hypothesis $\mathbb{H}_0:\ \bm{\alpha}(\tau) = \mathbf{0}$. The testing results are reported in the last column of Table 7, which indicates that we should reject the null at all conventional levels. Therefore, the term structure is predictable in the long-run, at least in some local periods, while the constant parameter VECM model yields little evidence of the predictability of the term structure.} We also investigate the time-varying pattern of the term structure predictability. Figure \ref{Fg2} plots the estimates of $\alpha_1(\cdot)$ and $\alpha_2(\cdot)$ and their 95\% point-wise confidence intervals, as well as the U.S. core inflation. Here, the core inflation data are monthly observations from 1967:M1 to 2022:M12, collected from the Federal Reserve Bank of St. Louis economic database. 

{In summary, our finding of great interest is that the estimated error-correction effects for the short and long rates are only positively significant in the 1980s and in recent years, and vary significantly over time for these bivariate pairs considered.} What is more is that the time-varying patterns of $\alpha_1(\cdot)$ and $\alpha_2(\cdot)$ are almost identical, and are similar to the pattern of time-variations in U.S. core inflation rate. Our findings suggest that the expectations hypothesis holds periodically, especially in the period of unusual high inflation. Importantly, our results also provide us with evidence for the economic implications of the theoretical macro-finance term structure model proposed by \cite{andreasen2021yield}, who find that the monetary policy decisions by the Federal Reserve with respect to stabilizing inflation is a key driver of this switch in bond return predictability.

\begin{center}
INSERT FIGURE 2 ABOUT HERE
\end{center}

\section{Conclusions}\label{Sec5}
In this paper, we propose a time-varying vector error-correction model that allows for different time series behaviours (e.g., unit-root and locally stationary processes) interacting with each other to co-exist. From practical perspectives, this framework can be used to estimate shifts in the predictability of non-stationary variables, and test whether economic theories hold periodically. We first develop a time-varying Granger Representation Theorem, which facilitates establishing asymptotic properties, and then propose estimation and inferential theories for both short-run and long-run coefficients. We also propose an information criterion to estimate the lag length, a singular-value ratio test to determine the cointegration rank, and a hypothesis test to examine the parameter stability. To validate the  theoretical findings, we conduct extensive simulations. Finally, we demonstrate the empirical relevance  by applying the framework to investigate the rational expectations hypothesis of the U.S. term structure. We conclude that the predictability of the term structure vary significantly over time and the expectations hypothesis of the term structure holds periodically, especially in the period of unusual high inflation.

\section{Acknowledgements}

Gao and Peng acknowledge financial support from the Australian Research Council Discovery Grants Program under Grant  Numbers: DP200102769 \& DP210100476, respectively. Yan acknowledges the financial support by Fundamental Research Funds for the Central Universities (Grant Numbers: 2022110877 \& 2023110099).

 {\footnotesize

\bibliographystyle{rss}
\bibliography{vecm}

}

 \newpage
 
\begin{center}
\textbf{Table 1.}	The percentages of $\widehat{p} < 2$, $\widehat{p} = 2$, and $\widehat{p} > 2$ \\
\begin{tabular}{c l ccc c l ccc}
				\hline
				DGPs	& $T$ & $\widehat{p} < 2$ & $\widehat{p} = 2$ & $\widehat{p} > 2$ \\
				\hline
				\multirow{3}{*}{DGP 1}
				& 200   & 0.065 & 0.935 & 0.000 \\
				& 400   & 0.000 & 1.000 & 0.000 \\
				& 800   & 0.000 & 1.000 & 0.000 \\
				\hline
				\multirow{3}{*}{DGP 2}
				& 200   & 0.045 & 0.955 & 0.000 \\
				& 400   & 0.000 & 1.000 & 0.000 \\
				& 800   & 0.000 & 1.000 & 0.000 \\
				\hline
\end{tabular}
\end{center}

\bigskip

\begin{center} 
		\textbf{Table 2.} The percentages of $\widehat{r} = 0$ and $\widehat{r} = 1$ \\
			\begin{tabular}{c l ccc c l cc}
				\hline
				DGPs	& $T$ & $\widehat{r} = 0$ & $\widehat{r} = 1$ \\
				\hline
				\multirow{3}{*}{DGP 1}
				& 200   & 0.000 & 1.000   \\
				& 400   & 0.000 & 1.000  \\
				& 800   & 0.000 & 1.000   \\
				\hline
				\multirow{3}{*}{DGP 2}
				& 200   & 0.865 & 0.135  \\
				& 400   & 0.985 & 0.015  \\
				& 800   & 1.000 & 0.000  \\
				\hline
			\end{tabular}
	\end{center}
	
	\bigskip
	
	\begin{center} 
		\textbf{Table 3.} The RMSEs and the empirical coverage probabilities at the 95\% nominal level
		\begin{tabular}{lcc c cc c cc}
			\hline
			& \multicolumn{2}{c}{$\bm{\alpha}(\tau)$}& & \multicolumn{2}{c}{$\bm{\beta}$} & & \multicolumn{2}{c}{$\bm{\Gamma}_1(\tau)$}\\
			\cline{2-3} \cline{5-6} \cline{8-9}
			$ T$     & \text{RMSE} & \text{Coverage rate}& & \text{RMSE} & \text{Coverage rate} & & \text{RMSE} & \text{Coverage rate}\\
			\hline
			200   & 0.146 & 0.927& & 0.017 & 0.885 & & 0.225 & 0.930\\
			400   & 0.099 & 0.940& & 0.007 & 0.940 & & 0.156 & 0.938\\
			800   & 0.070 & 0.947& & 0.003 & 0.936 & & 0.109 & 0.945\\
			\hline
		\end{tabular}
	\end{center}
	
 \newpage
	
	\begin{center} 
\textbf{Table 4.}  Size and power evaluation 
			\begin{tabular}{c c c ccc c ccc}
				\hline
				& & &\multicolumn{3}{c}{$5\%$}& &\multicolumn{3}{c}{$10\%$}\\
				\cline{4-6} \cline{8-10}
				&\text{Bandwidth}&T &200 &400 &800  &   &200 &400 &800 \\
				\hline
				\multirow{9}{*}{\shortstack{$b=0$\\(size)}}
				&$0.6T^{-1/5}$ & &  0.092 & 0.065  & 0.065&  &  0.170 & 0.138  &  0.113 \\
				&$0.8T^{-1/5}$ & &  0.092 & 0.058  & 0.054 &  &  0.145 & 0.125  & 0.117 \\
				&$1.0T^{-1/5}$ & &  0.078 & 0.048 & 0.049&  &  0.124 & 0.095  & 0.103 \\
				&$1.2T^{-1/5}$ & &  0.080 & 0.045  & 0.055 &  &  0.148 & 0.113  & 0.102 \\
				&$1.4T^{-1/5}$ & &  0.070 & 0.045  & 0.053 &  &  0.116 & 0.115  & 0.113 \\
				&$1.6T^{-1/5}$ & &  0.093  & 0.038  & 0.060&  &  0.130 & 0.120  & 0.120 \\
				&$1.8T^{-1/5}$ & &  0.074 & 0.045  & 0.050&  &  0.114 & 0.108  & 0.103 \\
				&$2.0T^{-1/5}$ & &  0.073 & 0.043  & 0.048&  &  0.123 & 0.095	  & 0.100 \\
				\hline
				\multirow{9}{*}{\shortstack{$b=1$\\(local power)}}
				&$0.6T^{-1/5}$ & & 0.209  & 0.171  & 0.166&  & 0.306  & 0.264  & 0.264 \\
				&$0.8T^{-1/5}$ & & 0.177  & 0.148  & 0.165&  & 0.264  & 0.240  & 0.264 \\
				&$1.0T^{-1/5}$ & & 0.159  & 0.130  & 0.164 &  & 0.239  & 0.230  & 0.241 \\
				&$1.2T^{-1/5}$ & & 0.149 & 0.123  & 0.154&  & 0.218  & 0.210  & 0.230 \\
				&$1.4T^{-1/5}$ & & 0.149  & 0.128  & 0.145&  & 0.215  & 0.208  & 0.234 \\
				&$1.6T^{-1/5}$ & & 0.152  & 0.127  & 0.138&  & 0.204  & 0.208  & 0.232 \\
				&$1.8T^{-1/5}$ & & 0.144  & 0.126  & 0.135&  & 0.206  & 0.211  & 0.224 \\
				&$2.0T^{-1/5}$ & & 0.141  & 0.130  & 0.137&  & 0.205  & 0.214  & 0.220 \\
				\hline
				\multirow{9}{*}{\shortstack{$b=2$\\(local power)}}
				&$0.6T^{-1/5}$ & & 0.572  & 0.597  & 0.652& & 0.712  & 0.708  & 0.749 \\
				&$0.8T^{-1/5}$ & & 0.522  & 0.547  & 0.615&  & 0.659  & 0.673  & 0.723 \\
				&$1.0T^{-1/5}$ & & 0.470  & 0.517  & 0.577&  & 0.604  & 0.647  & 0.693 \\
				&$1.2T^{-1/5}$ & & 0.454  & 0.488  & 0.559&  & 0.574 & 0.624  & 0.667 \\
				&$1.4T^{-1/5}$ & & 0.453  & 0.488  & 0.537&  & 0.566  & 0.613  & 0.657 \\
				&$1.6T^{-1/5}$ & & 0.468  & 0.473  & 0.525&  & 0.566  & 0.605  & 0.644 \\
				&$1.8T^{-1/5}$ & & 0.446  & 0.459  & 0.511&  & 0.573  & 0.603  & 0.625 \\
				&$2.0T^{-1/5}$ & & 0.426  & 0.458  & 0.503&  & 0.569  & 0.590  & 0.613 \\
				\hline
			\end{tabular}
	\end{center}	
	
	\bigskip
	
	\begin{center} 
	\textbf{Table 5.} The Johansen's LR tests for constant parameter VECM models ($p$-values)  
	\begin{tabular}{ll ccc c ccc}
		\hline
	&	& \multicolumn{3}{c}{$H_0:\ r=0$}& & \multicolumn{3}{c}{$H_0:\ r=1$}\\
		\cline{3-5} \cline{7-9}
Short Rate	&	Long Rate     & $p=1$ & $p=2$& $p=3$ &   & $p=1$ & $p=2$& $p=3$\\
		\hline
1-year	&	3-year   & 0.001 & 0.001 & 0.001 & & 0.103 & 0.058 & 0.081\\
1-year	&	4-year   & 0.003 & 0.001 & 0.002 & & 0.108 & 0.066 & 0.086\\
1-year	&	5-year   & 0.005 & 0.001 & 0.003 & & 0.105 & 0.069 & 0.087\\
2-year	&	3-year   & 0.003 & 0.001 & 0.003 & & 0.108 & 0.074 & 0.091\\
2-year	&	4-year   & 0.004 & 0.001 & 0.004 & & 0.105 & 0.078 & 0.093\\
2-year	&	5-year   & 0.005 & 0.001 & 0.006 & & 0.099 & 0.079 & 0.092\\
		\hline
	\end{tabular}
\end{center}

\bigskip
\bigskip

	\begin{center} 
	\textbf{Table 6.} Parameter estimates for constant parameter VECM models 
	\begin{tabular}{l l c c c c c}
		\hline
Short Rate	&	Long Rate	& $\beta^*$& &$\alpha_1$ & &$\alpha_2$\\
		\hline
	1-year	&3-year   & -0.988 &  & 0.050 & & -0.018 \\
		&         & [-1.041,-0.934] &  & [-0.016,0.115] & & [-0.074,0.039] \\
	1-year	&4-year   & -0.976 &  & 0.038 & & -0.023 \\
		&    & [-1.046,-0.905] &  & [-0.140,0.089]& & [-0.065,0.019]\\
	1-year	&5-year   & -0.964 &  & 0.029 & & -0.028 \\
		&   & [-1.048,-0.879] &  &[-0.015,0.074] & & [-0.062,0,008] \\
	2-year	&3-year   & -0.991 &  & 0.008 & & 0.069 \\
		& & [-1.014, -0.967]  &  & [-0.128,0.143] & & [-0.077,0.214] \\
	2-year	&4-year   & -0.979 &  & -0.020 & & 0.036 \\
		&         & [-1.050,-0.834] &  & [-0.094,0.054] & & [-0.048,0.120] \\
	2-year	&5-year   & -0.966 &  & -0.028 & & 0.0221 \\
		&         & [-1.025,-0.907] &  & [-0.082,0.024] & & [-0.041,0.085] \\
		\hline
	\end{tabular}
\end{center}

\bigskip

\begin{center} 
	\textbf{Table 7.} Estimation and testing results of time-varying VECM models 
	
	\begin{tabular}{l l  c c c c c c}
		\hline
Short Rate	&	Long Rate	&$\widehat{p}$ &Constancy &Breusch-Godfrey& $\widehat{r}$ & $\beta^*$ & $\bm{\alpha}(\tau)=\mathbf{0}$ \\
		\hline
1-year	&	3-year   & 3  & 0.000 & 0.943 & 1 &  -1.026  & 0.000\\
	    &            &    &       &       &   & [-1.055,-0.997] &\\
1-year	&	4-year   & 3  & 0.000 & 0.945 & 1 & -1.022 &0.000\\
	    &		     &    &       &       &   & [-1.063,-0.981] &\\
1-year	&	5-year   & 2  & 0.001 & 0.564 & 1 & -0.954 & 0.000\\
	    &		     &    &       &       &   & [-1.012,-0.896]& \\
	    2-year	&	3-year   & 2  & 0.001 & 0.747 & 1 & -0.991& 0.000\\
	    &		     &    &       &       &   & [-1.007,-0.974]& \\
	    2-year	&	4-year   & 2  & 0.000 & 0.862 & 1 & -0.974& 0.000\\
	    &		     &    &       &       &   & [-1.003,-0.945]& \\
	    2-year	&	5-year   & 2  & 0.000 & 0.921 & 1 & -0.957&0.000\\
	    &		     &    &       &       &   & [-0.994,-0.920]& \\
		\hline
	\end{tabular}
\end{center}

			\begin{figure}[H]
						\centering
			{\includegraphics[width=14cm]{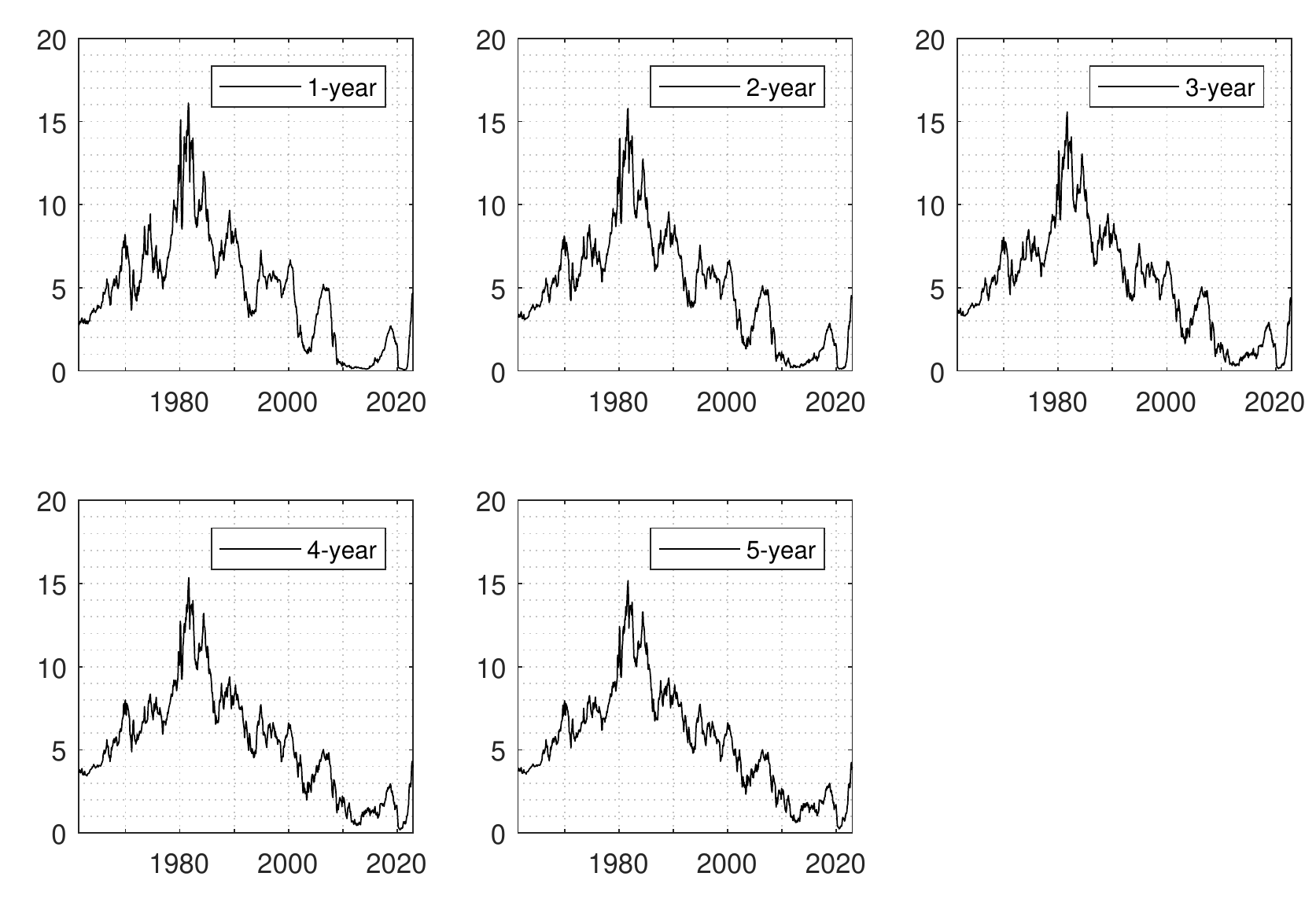}}
			\caption{Plots of short rate and long rate.}\label{Fg1}
		\end{figure}

 	\begin{figure}[H]
 				\centering
		{\includegraphics[width=4.17cm]{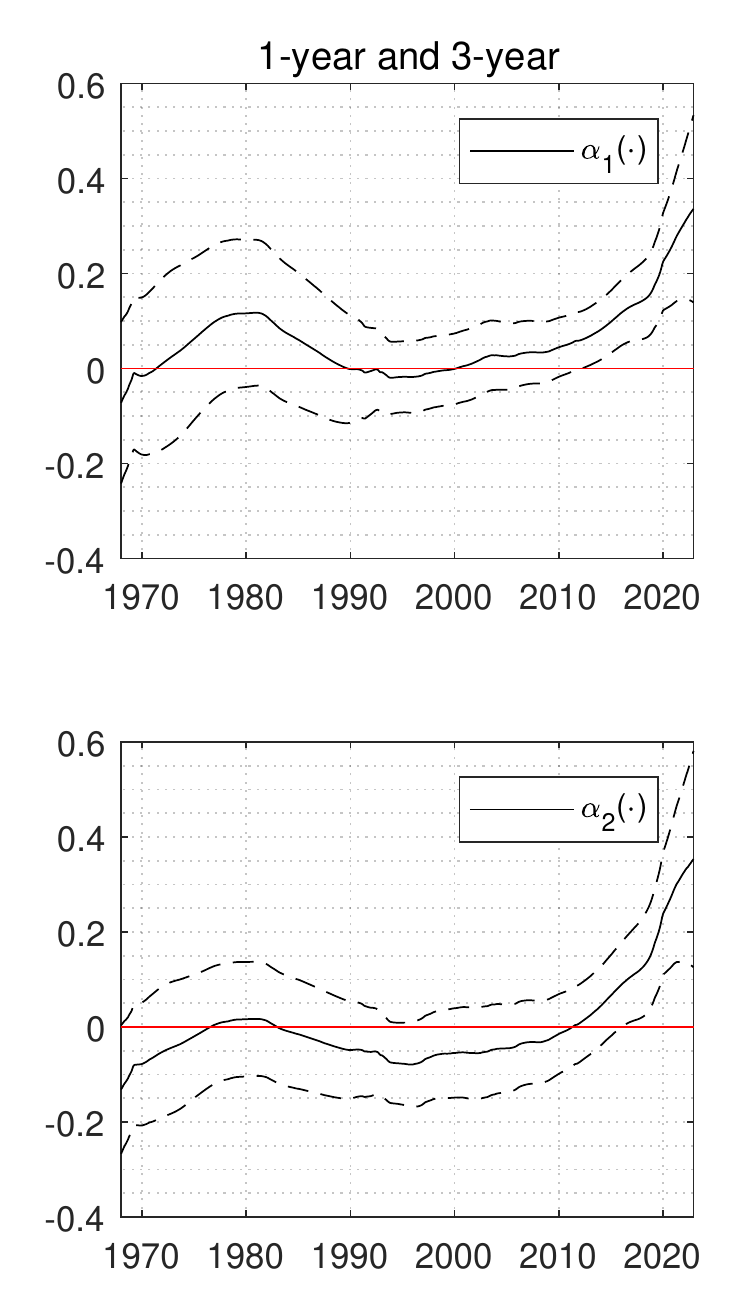}}
		{\includegraphics[width=4cm]{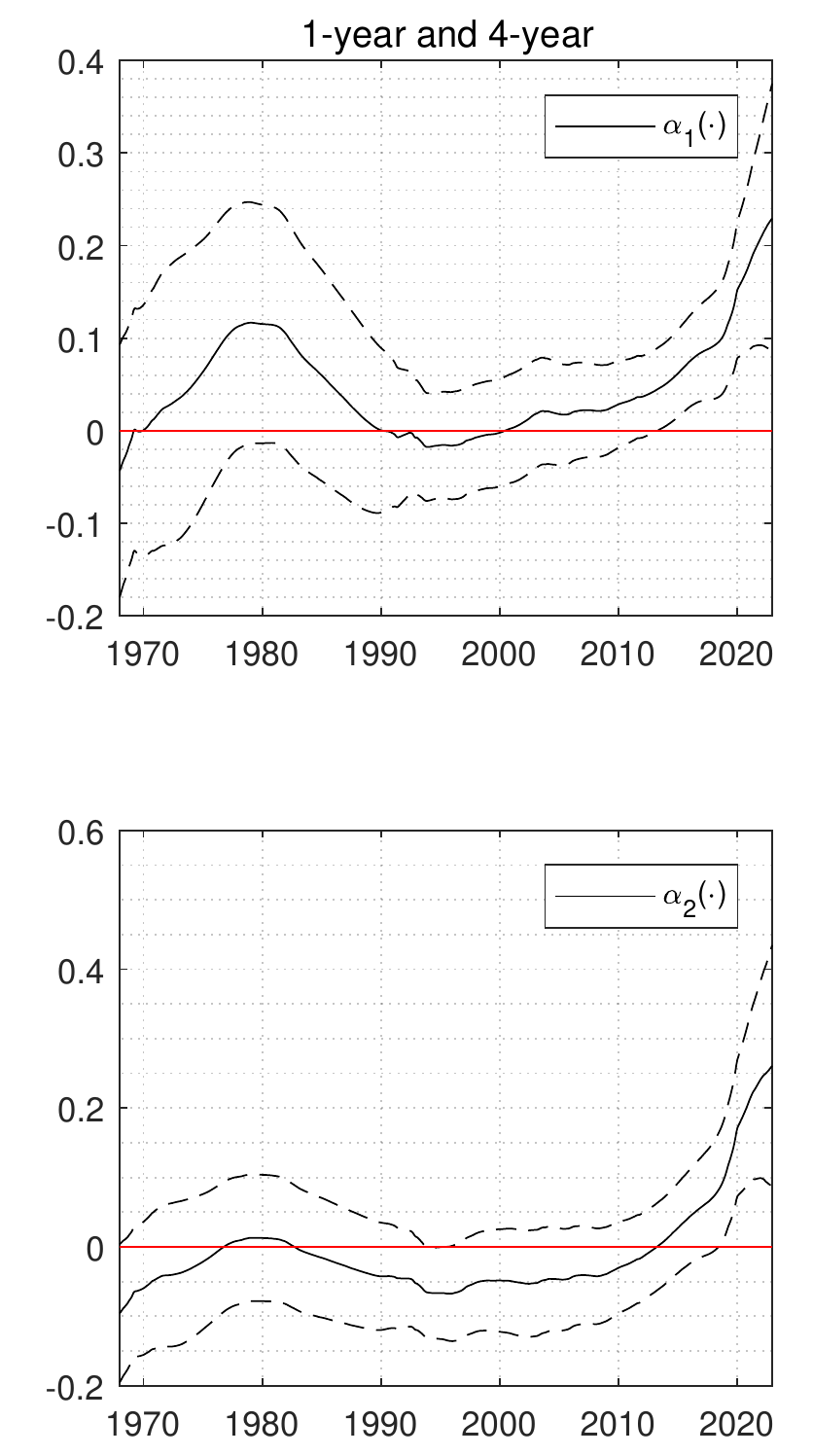}}
		{\includegraphics[width=4cm]{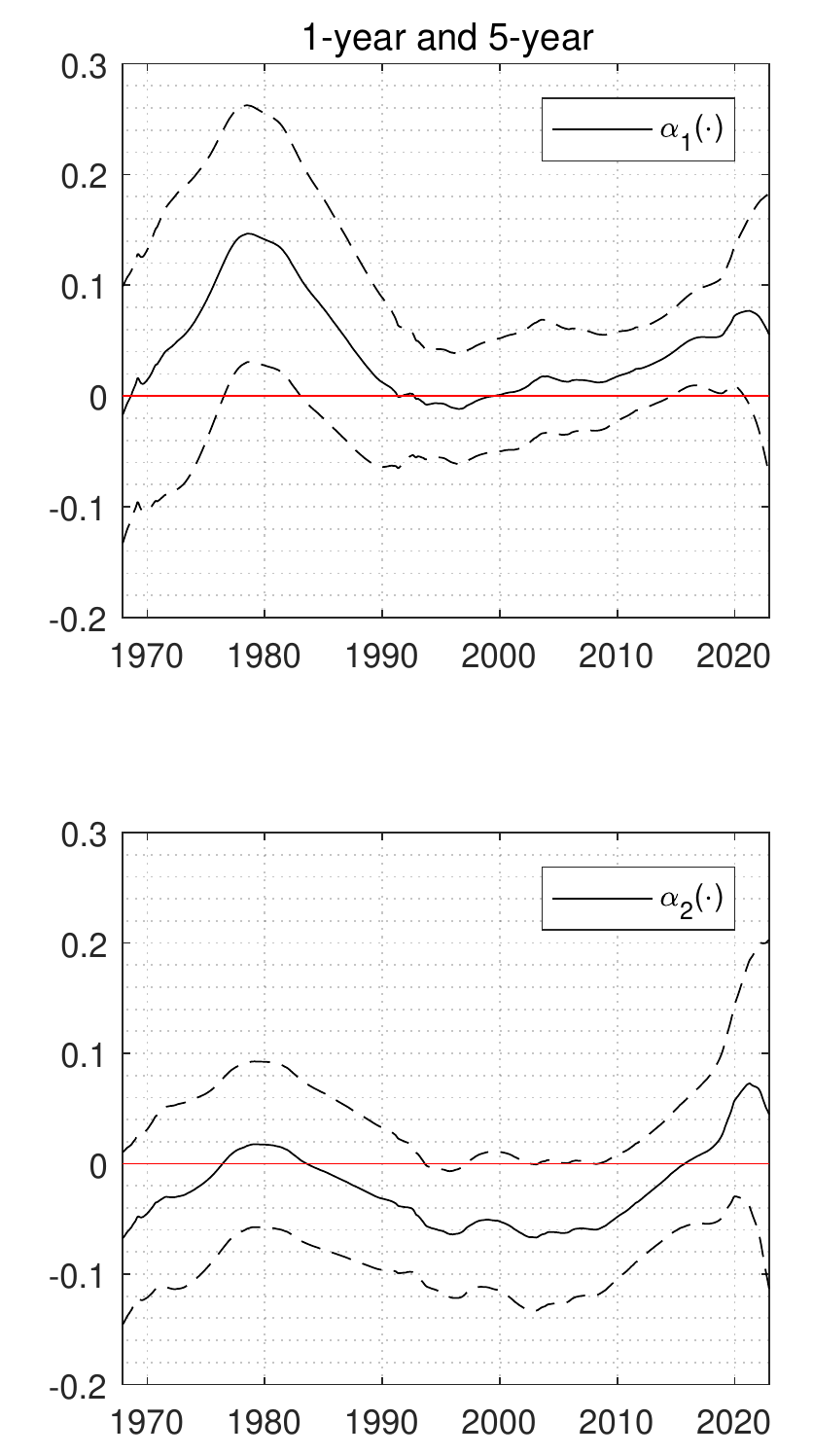}}\\
		{\includegraphics[width=4cm]{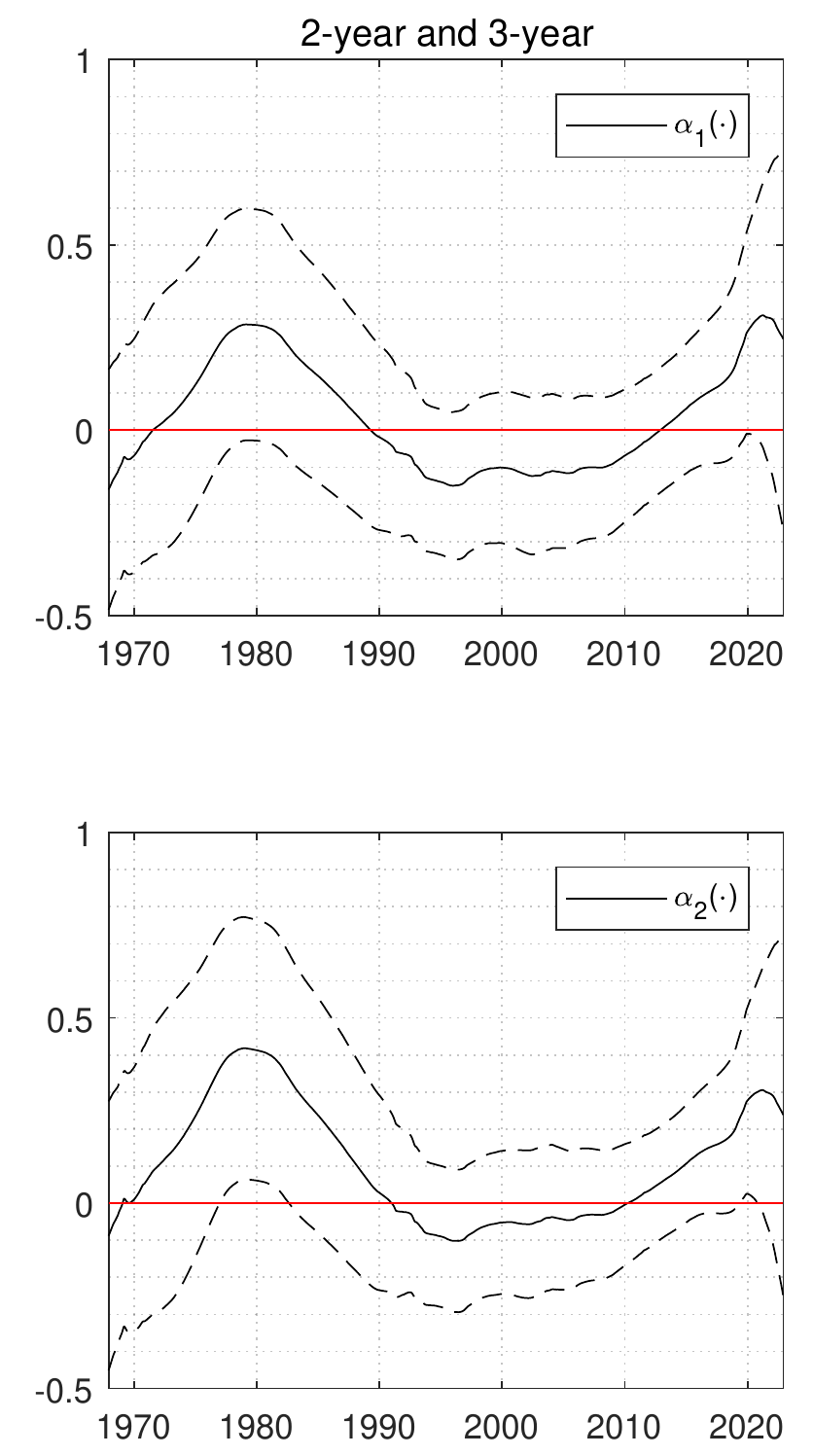}}
		{\includegraphics[width=4cm]{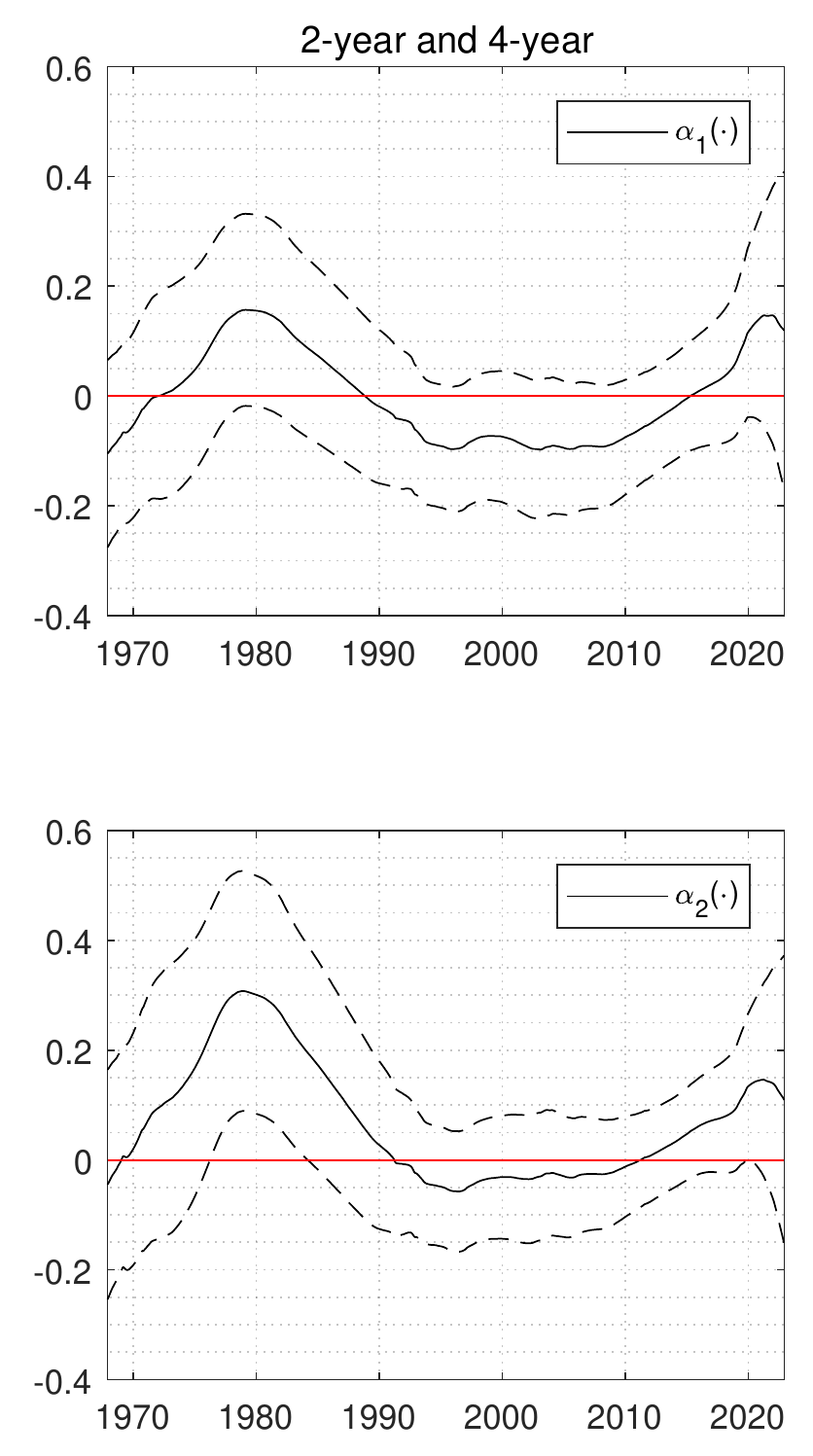}}
		{\includegraphics[width=4cm]{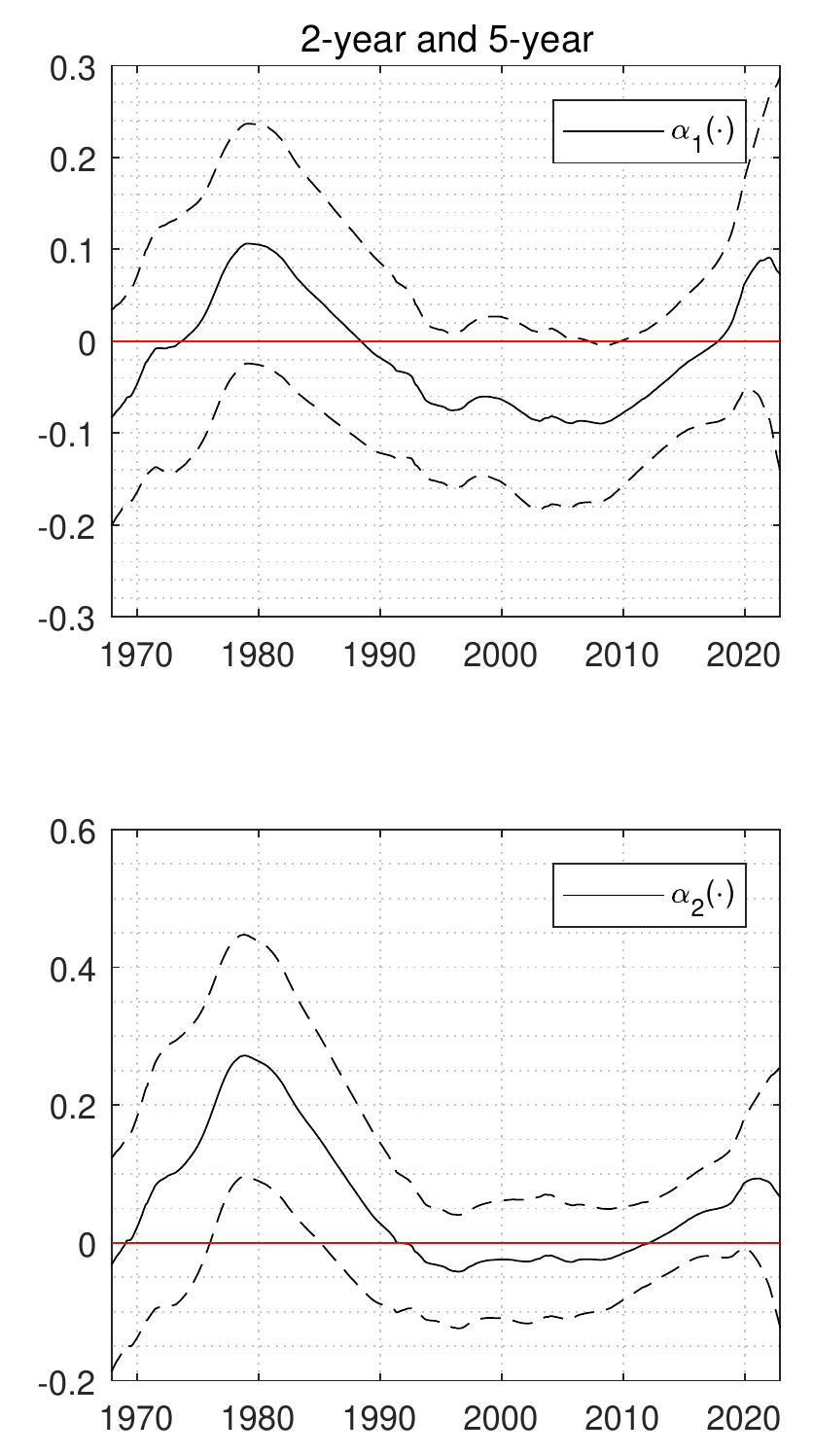}}\\
		{\includegraphics[width=4cm]{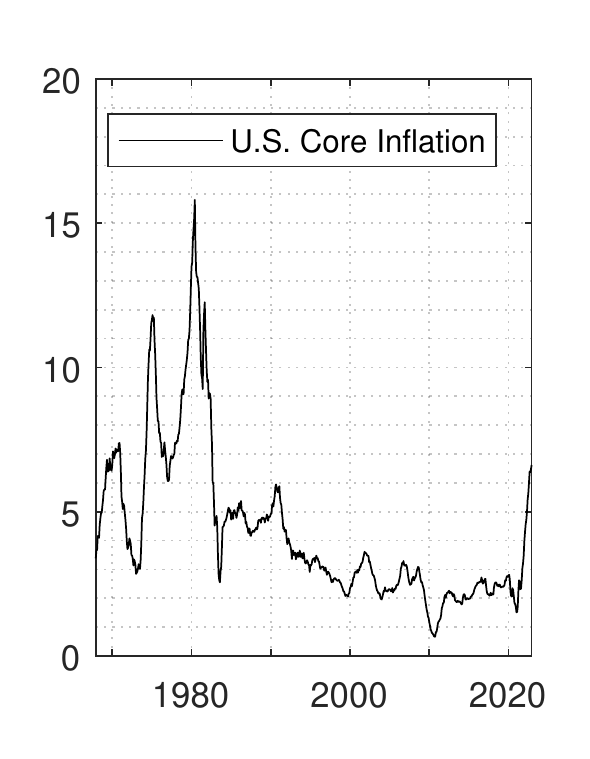}}\\
		\caption{\small Estimates of time-varying adjustment coefficients and their 95\% point-wise confidence intervals, as well as the U.S. core inflation rate.}\label{Fg2}
	\end{figure}

	%%%%%%%%%%%%%%%%%%%%%%%%%%%%%%%%%%%%%%%%%%%%%%%%
	%%%%%%%%%%%%%%%%%%%%%%%%%%%%%%%%%%%%%%%%%%%%%%%%
	\newpage
	
	\setcounter{page}{1}
	\linespread{1.1}
	\small
	
	\begin{center}
\bf \Large

Online Supplementary Appendix A to \\``Time-Varying Vector Error-Correction Models: \\ Estimation and Inference"

\end{center}	

\begin{center}
$^{\ast}${\sc Jiti Gao} and $^{\ast}${\sc Bin Peng} and $^{\ddag}${\sc Yayi Yan}

\medskip

    $^{\ast}$Department of Econometrics and Business Statistics,  Monash University\\
    
\medskip   
    
    $^{\dag}$School of Statistics and Management, Shanghai University of Finance and Economics
\end{center}	
	
This file is organised as follows. In Appendix \ref{App0}, we present some extra results on parameter stability test, which include the study on a sequence of local alternatives, and a simulation-assisted numerical procedure to improve finite sample performance. Appendix \ref{AppNot} includes the notation and mathematical symbols, which will be repeatedly used throughout the development. In Appendix \ref{AppLem}, we provide the preliminary lemmas to facilitate the development of the main theorems of the paper. Appendix \ref{AppPMain} includes the proofs of the main results, while Appendix \ref{AppPPre} covers the proofs of the preliminary lemmas.

	\section*{Appendix A} 
	
	\renewcommand{\theequation}{A.\arabic{equation}}
	\renewcommand{\thesection}{A.\arabic{section}}
	\renewcommand{\thefigure}{A.\arabic{figure}}
	\renewcommand{\thetable}{A.\arabic{table}}
	\renewcommand{\thelemma}{A.\arabic{lemma}}
	\renewcommand{\theassumption}{A.\arabic{assumption}}
	\renewcommand{\thetheorem}{A.\arabic{theorem}}
	\renewcommand{\theproposition}{A.\arabic{proposition}}
	
	% redefine the command that creates the equation no.
	% reset counter
	\setcounter{equation}{0}
	\setcounter{lemma}{0}
	\setcounter{section}{0}
	\setcounter{table}{0}
	\setcounter{figure}{0}
	\setcounter{assumption}{0}
	\setcounter{proposition}{0}
	\numberwithin{equation}{section}
	
	\section{Extra Results on Parameter Stability Test}\label{App0}
	
		Of interest, we also consider a sequence of local alternatives of the form:
	\begin{eqnarray}\label{Eq2.9}
		\mathbb{H}_1: \mathbf{C}\mathbf{b}(\tau) = \mathbf{c} + d_T \mathbf{f}(\tau),
	\end{eqnarray}
	where $\mathbf{f}(\tau)$ is a twice continuously differentiable vector of functions, and $d_T \to 0$. The term $d_T \mathbf{f}(\tau)$ characterizes the departure of the time--varying coefficient $\mathbf{C}\mathbf{b}(\tau)$ from the constant $\mathbf{c}$. According to the development of Theorem \ref{Thm4}, it is straightforward to obtain the following corollary.
	
	\begin{corollary}\label{Coro3}
		Let the conditions of Theorem \ref{Thm4} hold. Under $\mathbb{H}_1$ of \eqref{Eq2.9}, if $d_T = T^{-1/2}h^{-1/4}$, 
		\begin{enumerate}
			\item $T\sqrt{h}\left(\widehat{Q}_{\mathbf{C}, \widehat{\mathbf{H}}} -\frac{1}{Th}s\widetilde{v}_0\right)\to_D N\left(\delta_1,4s C_B\right),$
			
			\item $\Pr\left(\widehat{Q}_{\mathbf{C}, \widehat{\mathbf{H}}}^*>q_{1-\alpha}\right) \to \Phi\left(q_{\alpha} + \frac{\delta_1}{2\sqrt{ s C_B}} \right),$
		\end{enumerate}
		where $\delta_1 = \int_{0}^{1}\mathbf{f}(\tau)^\top (\mathbf{C} \mathbf{V}_{\mathbf{b}}(\tau)\mathbf{C}^\top )^{-1}\mathbf{f}(\tau) \mathrm{d}\tau$, $q_{\alpha}$ stands for the $\alpha^{th}$ quantile of standard normal distribution and $\widehat{Q}_{\mathbf{C}, \widehat{\mathbf{H}}}^*=\frac{T\sqrt{h}\left(\widehat{Q}_{\mathbf{C}, \widehat{\mathbf{H}}} -\frac{1}{Th}s \widetilde{v}_0\right)}{\sqrt{4 s C_B}}$ is the normalized test statistic.
	\end{corollary}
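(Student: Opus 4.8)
The plan is to obtain Corollary \ref{Coro3} as a perturbation of Theorem \ref{Thm4}.2, tracking how the extra drift term $d_T\mathbf{f}(\tau_t)$ propagates through the quadratic form $\widehat{Q}_{\mathbf{C},\widehat{\mathbf{H}}}$. Under $\mathbb{H}_1$ we have $\mathbf{C}\mathbf{b}(\tau_t) = \mathbf{c} + d_T\mathbf{f}(\tau_t)$, so $\mathbf{C}\widehat{\mathbf{b}}(\tau_t) - \widehat{\mathbf{c}} = \big(\mathbf{C}\widehat{\mathbf{b}}(\tau_t) - \mathbf{C}\mathbf{b}(\tau_t)\big) + d_T\big(\mathbf{f}(\tau_t) - \bar{\mathbf{f}}\big) + \big(\mathbf{c} - \widehat{\mathbf{c}} + d_T\bar{\mathbf{f}}\big)$ with $\bar{\mathbf{f}} = \frac1T\sum_t \mathbf{f}(\tau_t)$; the last bracket is handled exactly as in the proof of Theorem \ref{Thm4} (the centering by $\widehat{\mathbf{c}}$ subtracts off the common mean and, under the bandwidth conditions $Th^{5.5}\to0$, removes the leading bias). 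First I would substitute this decomposition into \eqref{Eq2.8} and expand the quadratic form, writing $\widehat{Q}_{\mathbf{C},\widehat{\mathbf{H}}}$ as the ``stochastic'' piece that already appears in Theorem \ref{Thm4}.2, plus a deterministic piece $\frac{d_T^2}{T}\sum_t (\mathbf{f}(\tau_t)-\bar{\mathbf{f}})^\top\widehat{\mathbf{H}}(\tau_t)(\mathbf{f}(\tau_t)-\bar{\mathbf{f}})$, plus cross terms.

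The second step is to evaluate the three contributions at the rate $d_T = T^{-1/2}h^{-1/4}$. The deterministic piece: since $\widehat{\mathbf{H}}(\tau)\to_P (\mathbf{C}\mathbf{V}_{\mathbf{b}}(\tau)\mathbf{C}^\top)^{-1}$ uniformly (by Theorem \ref{Thm1}.3 and consistency of $\widehat{\bm{\Omega}}$), and $\frac1T\sum_t\to\int_0^1$, we get $\frac{d_T^2}{T}\sum_t(\cdots) = d_T^2\big(\int_0^1\mathbf{f}^\top(\mathbf{C}\mathbf{V}_{\mathbf{b}}\mathbf{C}^\top)^{-1}\mathbf{f}\,\mathrm{d}\tau - \bar{\mathbf{f}}^\top(\int_0^1(\mathbf{C}\mathbf{V}_{\mathbf{b}}\mathbf{C}^\top)^{-1})\bar{\mathbf{f}}\big) + o(d_T^2)$. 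Multiplying by the normalization $T\sqrt h$ gives $T\sqrt h\cdot d_T^2 = T\sqrt h\cdot T^{-1}h^{-1/2} = 1$, so this term contributes exactly the constant $\delta_1 = \int_0^1\mathbf{f}(\tau)^\top(\mathbf{C}\mathbf{V}_{\mathbf{b}}(\tau)\mathbf{C}^\top)^{-1}\mathbf{f}(\tau)\,\mathrm{d}\tau$ — here one must check that the subtracted term $\bar{\mathbf{f}}^\top(\cdots)\bar{\mathbf{f}}$ is absorbed correctly; in fact the centering $\widehat{\mathbf c}$ also shifts $\mathbf f$ by its weighted (not unweighted) mean asymptotically, and the cleanest route is to note that $\widehat{\mathbf c}-\mathbf c = d_T\bar{\mathbf f} + O_P(T^{-1/2})$ so the genuine recentred function entering the leading quadratic form is $\mathbf f(\tau)-\bar{\mathbf f}$ plus lower-order terms, and to define $\delta_1$ accordingly (matching the statement). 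The cross term between the stochastic piece and $d_T(\mathbf{f}-\bar{\mathbf f})$: by Cauchy–Schwarz it is $O_P\big(\sqrt{\text{stochastic piece}}\cdot d_T\big)$ in the un-normalized scale; the stochastic piece is $O_P\big(1/(T\sqrt h)\big)$ after centering, so the cross term is $O_P\big((T\sqrt h)^{-1/2}d_T\big)$, and after multiplying by $T\sqrt h$ it is $O_P\big((T\sqrt h)^{1/2}d_T\big) = O_P(h^{-1/8}T^{-1/4}h^{1/4}\cdot T\sqrt h/(T\sqrt h))$... more directly, $T\sqrt h\cdot(T\sqrt h)^{-1/2}d_T = (T\sqrt h)^{1/2}T^{-1/2}h^{-1/4} = h^{1/4-1/4}=h^{0}$? one checks it is $o_P(1)$ using $h\to0$ after the exact exponent bookkeeping. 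So only the deterministic piece survives as a mean shift and the stochastic piece still obeys the CLT of Theorem \ref{Thm4}.2, giving part 1: $T\sqrt h(\widehat Q_{\mathbf C,\widehat{\mathbf H}} - \frac1{Th}s\widetilde v_0)\to_D N(\delta_1, 4sC_B)$.

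For part 2, I would apply the simulation-assisted (bootstrap) calibration described in Appendix \ref{App0}: under resampling the drift $d_T\mathbf f$ is absent, so the bootstrap statistic $\widehat Q^*_{\mathbf C,\widehat{\mathbf H}}$ consistently estimates the $N(0,1)$ null law of the normalized statistic, hence $q_{1-\alpha}\to_P q_{1-\alpha}^{N(0,1)} = -q_\alpha$ (using the symmetry $q_{1-\alpha}=-q_\alpha$ of the standard normal). Then $\Pr(\widehat Q^*_{\mathbf C,\widehat{\mathbf H}} > q_{1-\alpha}) = \Pr\big(\frac{T\sqrt h(\widehat Q - \frac1{Th}s\widetilde v_0)}{\sqrt{4sC_B}} > q_{1-\alpha}\big)$; substituting the limit from part 1, the left side converges to $\Pr\big(Z + \frac{\delta_1}{\sqrt{4sC_B}} > -q_\alpha\big) = \Phi\big(q_\alpha + \frac{\delta_1}{2\sqrt{sC_B}}\big)$ with $Z\sim N(0,1)$, which is the claimed expression. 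The main obstacle I anticipate is the careful exponent accounting in the cross term and, more delicately, making precise the sense in which the centering by $\widehat{\mathbf c}$ interacts with the drift $d_T\mathbf f(\tau)$ — i.e., verifying that $\widehat{\mathbf c} - \mathbf c = d_T\bar{\mathbf f} + O_P(T^{-1/2}) + \text{bias}$ with the bias negligible under $Th^{5.5}\to0$, so that the effective recentred drift is exactly what produces the stated $\delta_1$; everything else is a routine perturbation of the already-established Theorem \ref{Thm4}.
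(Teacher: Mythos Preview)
Your overall strategy matches the paper's: decompose the quadratic form into the stochastic piece inherited from Theorem~\ref{Thm4}.2, a deterministic drift piece of order $d_T^2$, and cross terms; show the cross terms are negligible and read off the mean shift $\delta_1$.

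There is, however, a genuine gap in your handling of the cross term. Your Cauchy--Schwarz bound does \emph{not} close: the cross term is $2d_T\cdot\frac{1}{T}\sum_t \mathbf{f}(\tau_t)^\top\mathbf{H}(\tau_t)\big(\mathbf{C}\widehat{\mathbf{b}}(\tau_t)-\mathbf{C}\mathbf{b}(\tau_t)\big)$, and if you bound it by $d_T$ times the square root of the stochastic quadratic form you get, after multiplying by $T\sqrt{h}$, a term of order $\sqrt{T}\,d_T = h^{-1/4}\to\infty$. The paper instead uses the Bahadur representation $\mathbf{C}\widehat{\mathbf{b}}(\tau)-\mathbf{C}\mathbf{b}(\tau)\approx \mathbf{C}\bm{\Sigma}_{\mathbf{W}}^{-1}(\tau)\mathbf{R}_T(\tau)$ and then exploits the crucial fact that
\[
\frac{1}{T}\sum_{t}\mathbf{f}(\tau_t)^\top\mathbf{H}(\tau_t)\mathbf{C}\bm{\Sigma}_{\mathbf{W}}^{-1}(\tau_t)\mathbf{R}_T(\tau_t)=O_P(T^{-1/2}),
\]
obtained by exchanging the order of summation so the double sum collapses to a single martingale sum over $s$. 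This yields $I_{T,7}=O_P(d_T T^{-1/2})=O_P(T^{-1}h^{-1/4})=o_P(T^{-1}h^{-1/2})$. Without this averaging trick the cross term cannot be controlled, so your argument as written would fail at exactly this step.

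Two smaller points. First, your concern about the $\bar{\mathbf{f}}$ centering is legitimate; the paper's proof works with the $\mathbf{c}$-centered form and silently carries over the $J_{T,11},J_{T,12}$ argument from Theorem~\ref{Thm4}, which under $\mathbb{H}_1$ would in fact produce the centered integral $\int(\mathbf{f}-\bar{\mathbf{f}})^\top\mathbf{H}(\mathbf{f}-\bar{\mathbf{f}})$ rather than the stated $\delta_1$; the discrepancy disappears if one normalizes $\int_0^1\mathbf{f}=0$, which is natural but not stated. Second, in part~2 you misread $\widehat{Q}^*_{\mathbf{C},\widehat{\mathbf{H}}}$ as a bootstrap statistic: by the corollary's own definition it is simply the studentized version $T\sqrt{h}(\widehat{Q}-\tfrac{1}{Th}s\widetilde v_0)/\sqrt{4sC_B}$, and $q_{1-\alpha}$ is the standard-normal quantile. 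Your final probability computation is nonetheless correct.
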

	
	Corollary \ref{Coro3} shows that the test has a non-trivial power against $\mathbb{H}_1$ when $d_T = T^{-1/2}h^{-1/4}$. If $T^{-1/2}h^{-1/4} = o(d_T)$, the power of the test converges to 1, i.e., $\Pr(\widehat{Q}_{\mathbf{C}, \widehat{\mathbf{H}}}^*>q_{1-\alpha})\to 1$.
	
	\medskip
	
	To improve the finite sample performance of the test, we next propose a simulation-assisted testing procedure. A similar procedure has also been adopted by \cite{zhang2012inference}, and \cite{truquet2017parameter} for the same purpose in the context of time--varying models.
	%, and by \cite{yan2021time} for testing time-varying VAR models. 
	
	\medskip
	
	{\bf Algorithm --- a simulation--assisted testing procedure}
	\begin{itemize}		
		\item[] Step 1: Use the sample $\{\mathbf{y}_t\}$ to estimate the unrestricted model, and then compute $\widehat{Q}_{\mathbf{C}, \widehat{\mathbf{H}}}$ based on \eqref{Eq2.8}.
		
		Step 2: Generate i.i.d. $d$-dimensional standard normal random vectors $\{\Delta\mathbf{y}_t^*\}$ and generate i.i.d. $r_0$-dimensional standard normal random vectors $\{\mathbf{z}_t^*\}$.
		
		Step 3: Compute the bootstrap statistic $\widetilde{Q}_{\mathbf{C}, \widehat{\mathbf{H}}}^b$ in the same way as $\widehat{Q}_{\mathbf{C}, \widehat{\mathbf{H}}}$ based on the following regression model
		\begin{equation}\label{Eq2.10}
			\Delta\mathbf{y}_t^* = \bm{\alpha}^*(\tau_t)\mathbf{z}_t^* + \sum_{j=1}^{p_0-1} \bm{\Gamma}_j^*(\tau_t)	\Delta\mathbf{y}_{t-j}^* + \bm{\omega}^*(\tau_t) \bm{\varepsilon}_t^*.
		\end{equation}

		Step 4: Repeat Steps 2--3 $B$ times to obtain $B$ bootstrap test statistics $\{\widetilde{Q}_{\mathbf{C}, \widehat{\mathbf{H}}}^b\}_{b=1}^{B}$, as well as its empirical quantile $\widehat{q}_{1-\alpha}$. We reject the null hypothesis \eqref{Eq3.1} at level $\alpha$ if $\widehat{Q}_{\mathbf{C}, \widehat{\mathbf{H}}}>\widehat{q}_{1-\alpha}$.
	\end{itemize}
	
	Note that we can regard $\Delta\mathbf{y}_t^*$ as from the data generating process \eqref{Eq2.10} with $\bm{\alpha}^*(\tau) = \mathbf{0}_{d\times r_0}$, $\bm{\Gamma}_j^*(\tau) = \mathbf{0}_d$, $\bm{\omega}^*(\tau_t) = \mathbf{I}_d$, and further we replace the cointegrated component $\bm{\beta}^\top \mathbf{y}_{t-1}$ by $\mathbf{z}_t^*$. We then have the following asymptotic property of the simulated test statistic $\widetilde{Q}_{\mathbf{C}, \widehat{\mathbf{H}}}^b$.
	
	\begin{corollary}\label{Coro4}
		Let Assumption \ref{Ass2} hold. Suppose that $\{\Delta\mathbf{y}_t^*\}$ and $\{\mathbf{z}_t^*\}$ are generated from i.i.d. standard multivariate normal distributions, then we have
		$$
		T\sqrt{h}\left(\widetilde{Q}_{\mathbf{C}, \widehat{\mathbf{H}}}^b -\frac{1}{Th}s\widetilde{v}_0\right)\to_D N(0,4s C_B).
		$$
	\end{corollary}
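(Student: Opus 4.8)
The plan is to show that the bootstrap statistic $\widetilde{Q}_{\mathbf{C},\widehat{\mathbf{H}}}^b$ built from the artificial data $\{\Delta\mathbf{y}_t^*\}$, $\{\mathbf{z}_t^*\}$ is, after centering and scaling, asymptotically $N(0,4sC_B)$ — i.e. it reproduces the \emph{null} limiting law in Theorem \ref{Thm4}.2. The key observation is that the bootstrap DGP \eqref{Eq2.10} with $\bm{\alpha}^*\equiv\mathbf{0}$, $\bm{\Gamma}_j^*\equiv\mathbf{0}$, $\bm{\omega}^*\equiv\mathbf{I}_d$ is a trivial special case of model \eqref{Eq2.1}: the ``cointegrated regressor'' $\bm{\beta}^\top\mathbf{y}_{t-1}$ is replaced by the i.i.d.\ vector $\mathbf{z}_t^*$, $\Delta\mathbf{y}_t^*$ itself is i.i.d.\ standard normal, and $\bm{\varepsilon}_t^*=\Delta\mathbf{y}_t^*$. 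So all of Assumption \ref{Ass1} holds trivially (constant, smooth coefficients; $\bm{\Omega}^*(\tau)=\mathbf{I}_d>0$; errors i.i.d.\ with all moments finite, so $\delta$ can be taken arbitrarily large), and Assumption \ref{Ass2} holds by hypothesis. Hence the entire machinery underlying Theorem \ref{Thm4} — the local-linear estimators $\widehat{\bm{\alpha}}^*(\tau)$, $\widehat{\bm{\Gamma}}^*(\tau)$, the variance estimators $\widehat{\bm{\Sigma}}_{\mathbf{w}}^*(\tau)$, $\widehat{\bm{\Omega}}^*(\tau)$, the weighting matrix $\widehat{\mathbf{H}}^*$, and the quadratic form $\widehat{Q}_{\mathbf{C},\widehat{\mathbf{H}}}$ itself — applies verbatim to the starred quantities.

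First I would verify that the regularity conditions of Theorem \ref{Thm4} (and of Theorems \ref{Thm1}, \ref{Thm3}, on which it rests) are met by the bootstrap DGP, paying attention to the bandwidth conditions $Th^2/(\log T)^2\to\infty$ and $Th^{5.5}\to 0$, which are inherited because the same $h$ (or the same sequence used in the actual test) is employed. Second, I would track through the proof of Theorem \ref{Thm4}.2 with $\mathbf{b}^*(\tau)\equiv\mathrm{vec}(\mathbf{0}_{d\times r_0},\mathbf{0})$ playing the role of $\mathbf{b}(\tau)$: since $\mathbf{b}^*$ is constant, the bias term $\tfrac12 h^2\widetilde{c}_2\int_0^1\mathbf{C}\mathbf{b}^{*(2)}(\tau)\mathrm{d}\tau$ is identically zero, so the null-hypothesis $\mathbb{H}_0$ holds exactly for the bootstrap world; thus $\widehat{\mathbf{c}}^*-\mathbf{c}^*=O_P(T^{-1/2})$ with no bias, and the leading term of $T\sqrt{h}(\widetilde{Q}_{\mathbf{C},\widehat{\mathbf{H}}}^b-\tfrac1{Th}s\widetilde{v}_0)$ is the same $U$-statistic / quadratic-form object whose CLT (via the Gaussian-approximation and quadratic-form limit theorems of Appendix A for time-varying VMA($\infty$) processes — here degenerate VMA, just i.i.d.) delivers $N(0,4sC_B)$ with $C_B=\int_0^2(\int_{-1}^{1-v}K(u)K(u+v)\mathrm{d}u)^2\mathrm{d}v$.

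The main obstacle, and the only place where genuine work is needed beyond invoking Theorem \ref{Thm4}, is the first step: checking that Theorem \ref{Thm4} actually does apply to \eqref{Eq2.10} despite the substitution of $\mathbf{z}_t^*$ for $\bm{\beta}^\top\mathbf{y}_{t-1}$. In the original model, $\bm{\beta}^\top\mathbf{y}_{t-1}$ is an (asymptotically stationary) error-correction term whose sample second moment converges to $E(\widetilde{\mathbf{y}}_t^\top(\tau)\bm{\beta}\bm{\beta}^\top\widetilde{\mathbf{y}}_t(\tau))$-type limits appearing in $\bm{\Sigma}_{\mathbf{w}}(\tau)$; in the bootstrap, $\mathbf{z}_t^*\sim N(\mathbf{0},\mathbf{I}_{r_0})$ i.i.d.\ plays exactly that role, and one must confirm that the relevant sample averages $\frac1{Th}\sum_t \mathbf{z}_t^*\mathbf{z}_t^{*\top}K((\tau_t-\tau)/h)$ etc.\ converge to nondegenerate positive-definite limits (here $\mathbf{I}_{r_0}$ times the kernel constants), so that $\widehat{\mathbf{H}}^*(\tau)$ is well-defined and consistent, and that the cross terms with $\{\Delta\mathbf{y}_{t-j}^*\}$ behave as in Theorem \ref{Thm1}. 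Once this identification-of-the-bootstrap-world-as-a-valid-model-instance is in place, the conclusion is immediate: the centering $\frac1{Th}s\widetilde{v}_0$ and the variance $4sC_B$ do not depend on any unknown feature of the true DGP, so the bootstrap statistic and the original statistic share the same limit under $\mathbb{H}_0$, which is precisely what makes the simulation-assisted critical values $\widehat{q}_{1-\alpha}$ asymptotically valid.
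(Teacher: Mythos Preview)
Your proposal is correct and follows essentially the same approach as the paper: the paper's own proof is a two-sentence sketch noting that the bootstrap DGP \eqref{Eq2.10} can be regarded as a special case with $\bm{\alpha}^*\equiv\mathbf{0}$, $\bm{\Gamma}_j^*\equiv\mathbf{0}$, $\bm{\omega}^*\equiv\mathbf{I}_d$ and with $\mathbf{z}_t^*$ replacing the cointegrated component $\bm{\beta}^\top\mathbf{y}_{t-1}$, after which the arguments of Theorem~\ref{Thm4} apply directly. Your write-up is in fact more detailed than the paper's, and you correctly flag the only genuine point requiring care---the replacement of $\bm{\beta}^\top\mathbf{y}_{t-1}$ by the i.i.d.\ regressor $\mathbf{z}_t^*$---which the paper also singles out.
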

	By Theorem \ref{Thm4} and Corollary \ref{Coro4}, $\widehat{Q}_{\mathbf{C}, \widehat{\mathbf{H}}}$ and $\widetilde{Q}_{\mathbf{C}, \widehat{\mathbf{H}}}^b$ have the same asymptotic distribution under the null, which suggests that instead of using the central limit theorem, the critical value of $\widehat{Q}_{\mathbf{C}, \widehat{\mathbf{H}}}$ can be obtained by simulating $\widetilde{Q}_{\mathbf{C}, \widehat{\mathbf{H}}}^b$.

	\section{Extra Notation and Mathematical Symbols}\label{AppNot}
	
 Before proceeding further, we denote a few mathematical symbols to felicitate the development. Let $\mathbf{W}_d^{*}(\cdot,\cdot)$ be a $d$-dimensional Brownian motion and independent of $\mathbf{W}_d(\cdot,\cdot)$. In view of the facts that $[\bm\beta, \bm\beta_\perp]$ is nonsingular and  $\bm\beta^\top \bm\beta_\perp=0$, we have $\mathbf{I}_d =\mathbf{P}_{\bm\beta} + \mathbf{P}_{\bm\beta_\perp}$, so it is easy to see that $\mathbf{J}^{-1} = [\overline{\bm{\beta}},\bm{\beta}_{\perp}]$.

Let

\begin{eqnarray*}
&&\mathbf{D}_T = \mathrm{diag}\{T\sqrt{h}, Th \mathbf{I}_{d-r_0-1}\} ,\quad \widetilde{\mathbf{D}}_T = \mathrm{diag}\{\mathbf{D}_T, \sqrt{Th} \mathbf{I}_{r_0}\},\quad \mathbf{D}_T^{*}=\mathrm{diag}\{\widetilde{\mathbf{D}}_T,\sqrt{Th}\mathbf{I}_{d(p_0-1)}\},\nonumber\\
&& \mathbf{Q}(\tau) = \left[\bm{\alpha}_{\perp}(\tau),\bm{\beta} \right]^\top,\quad \mathbf{Q}^*(\tau) = \mathrm{diag}(\mathbf{Q}(\tau),\mathbf{I}_{d(p_0-1)}), \nonumber\\
&&\bm{\Sigma}_{\mathbf{z},j}(\tau) = E(\widetilde{\mathbf{z}}_{t}(\tau)\widetilde{\mathbf{z}}_{t-j}^\top(\tau)) =  \sum_{k=0}^{\infty}\bm{\Psi}_{k+j}(\tau)\bm{\Omega}(\tau)\bm{\Psi}_k^\top(\tau),\nonumber\\
&&\bm{\Sigma}_{\mathbf{zx}}(\tau) = E(\widetilde{\mathbf{z}}_t(\tau)\Delta\widetilde{\mathbf{x}}_t^\top(\tau)),\quad \bm{\Sigma}_{\mathbf{x}}(\tau) = E(\Delta \widetilde{\mathbf{x}}_{t}(\tau)\Delta \widetilde{\mathbf{x}}_{t}^\top(\tau)),\nonumber
\end{eqnarray*}	
where  $\widetilde{\mathbf{z}}_t(\tau) = \sum_{i=1}^{p}\mathbf{B}_i(\tau)\widetilde{\mathbf{z}}_{t-i}(\tau) + \widetilde{\mathbf{u}}_t(\tau)$.

	 Let
	 
	 \begin{eqnarray*}
&&	 \mathbf{q}(\tau) = \bm{\alpha}_{\perp}^\top(\tau) \mathbf{W}_d(\tau,\bm{\Sigma}_{\mathbf{y}}(\tau)),\quad \bm{\xi}(\tau) = \mathbf{q}(\tau)/\sqrt{\mathbf{q}^\top(\tau)\mathbf{q}(\tau)} ,\nonumber\\
&&\mathbf{q}_T(\tau) = T^{-1/2}\bm{\alpha}_{\perp}^\top(\tau)\mathbf{y}_{\delta_T},\quad  \bm{\xi}_T(\tau) = \mathbf{q}_T(\tau)/\sqrt{\mathbf{q}_T^\top(\tau)\mathbf{q}_T(\tau)},\nonumber
	 \end{eqnarray*}
	 where $\delta_T = \lfloor T(\tau-h) \rfloor$. Accordingly, let $\bm{\xi}_{\perp}(\tau)$ be the $(d-r_0)\times(d-r_0-1)$ orthogonal complement matrix such that $\bm{\Xi}(\tau) = [\bm{\xi}(\tau),\bm{\xi}_{\perp}(\tau)]$ and $\bm{\Xi}^\top(\tau)\bm{\Xi}(\tau)=\mathbf{I}_{d-r_0}$, and let $\bm{\Xi}_T(\tau) = [\bm{\xi}_T(\tau),\bm{\xi}_{T,\perp}(\tau)]$. In addition, let $\widetilde{\bm{\Xi}}_T(\tau) = \mathrm{diag}\{\bm{\Xi}_T(\tau),  \mathbf{I}_{r_0}\}$ and $ \bm{\Xi}_T^*(\tau) = \mathrm{diag}\{\widetilde{\bm{\Xi}}_T(\tau),\mathbf{I}_{d(p_0-1)} \}$.
	
 Define 
\begin{eqnarray*}
	\bm{\Delta}_{l}(\tau) = \left[\begin{matrix}
		\bm{\Delta}_{l,1}(\tau)      & \bm{\Delta}_{l,2}(\tau) \\
		\bm{\Delta}_{l,2}^\top(\tau) & \bm{\Delta}_{l,3}(\tau) \\
	\end{matrix} \right],\nonumber
\end{eqnarray*}
	where 
	
	\begin{eqnarray*}
	\bm{\Delta}_{l,1}(\tau) &=& \widetilde{c}_l\mathbf{q}(\tau)^\top\mathbf{q}(\tau),\nonumber\\
	\bm{\Delta}_{l,2}(\tau) &=& \sqrt{2\mathbf{q}(\tau)^\top\mathbf{q}(\tau)} \int_{-1}^{1}\mathbf{W}_d^{*,\top}((u+1)/2,\bm{\Sigma}_{\bm{\alpha}}(\tau))u^lK(u)\mathrm{d}u  \, \bm{\xi}_{\perp}(\tau) ,\nonumber\\
	\bm{\Delta}_{l,3}(\tau) &=& 2 \bm{\xi}_{\perp}^\top(\tau) \int_{-1}^{1}\mathbf{W}_d^{*}((u+1)/2,\bm{\Sigma}_{\bm{\alpha}}(\tau))\mathbf{W}_d^{*,\top}((u+1)/2,\bm{\Sigma}_{\bm{\alpha}}(\tau))u^lK(u)\mathrm{d}u \,  \bm{\xi}_{\perp}(\tau),\nonumber
	\end{eqnarray*}
and $\bm{\Sigma}_{\bm{\alpha}}(\tau) = \bm{\alpha}_{\perp}^\top(\tau) \mathbf{P}_{\bm{\beta}_{\perp}} \bm{\Psi}_{\tau}(1) \bm{\Omega}(\tau) \bm{\Psi}_{\tau}^\top(1) \mathbf{P}_{\bm{\beta}_{\perp}} \bm{\alpha}_{\perp}(\tau)$. By construction, $\bm{\Delta}_l(\tau)$ is nonsingular with probability $1$ for $l=0,2$.

The sample version of $\bm{\Delta}_{l}(\tau) $ is defined as

\begin{eqnarray*}
\bm{\Delta}_{T,l}(\tau) &=& \left[\begin{matrix}
					\bm{\Delta}_{T,l,1}(\tau) & \bm{\Delta}_{T,l,2}(\tau)\\
					\bm{\Delta}_{T,l,2}^\top(\tau) & \bm{\Delta}_{T,l,3}(\tau)\\
				\end{matrix} \right],\nonumber
\end{eqnarray*}
where 

\begin{eqnarray*}
				\bm{\Delta}_{T,l,1}(\tau) &=&  \widetilde{c}_l\mathbf{q}_T^\top(\tau)\mathbf{q}_T(\tau), \nonumber\\ 
				\bm{\Delta}_{T,l,2}(\tau) &=&  \bm{\xi}_T^\top(\tau)\mathbf{q}_T(\tau) \frac{\sqrt{2}}{Th} \sum_{t=1}^{T}\left(\frac{1}{\sqrt{2Th}}\sum_{j=\delta_T+1}^{t-1} \mathbf{P}_{\bm{\beta}_{\perp}}\bm{\Psi}_{\tau}(1)\bm{\omega}(\tau)\bm{\varepsilon}_j\right)^\top\nonumber\\ 
				&& \times \left(\frac{\tau_t-\tau}{h}\right)^lK\left(\frac{\tau_t-\tau}{h}\right)\bm{\alpha}_{\perp}(\tau)\bm{\xi}_{T,\perp}(\tau),\nonumber\\
				\bm{\Delta}_{T,l,3}(\tau)
				&= &\bm{\xi}_{T,\perp}^\top(\tau)\bm{\alpha}_{\perp}^\top(\tau) \frac{2}{Th} \sum_{t=1}^{T}\left(\frac{1}{\sqrt{2Th}}\sum_{j=\delta_T+1}^{t-1} \mathbf{P}_{\bm{\beta}_{\perp}}\bm{\Psi}_{\tau}(1)\bm{\omega}(\tau)\bm{\varepsilon}_j\right) \nonumber\\
				&&\times\left(\frac{1}{\sqrt{2Th}}\sum_{j=\delta_T+1}^{t-1} \mathbf{P}_{\bm{\beta}_{\perp}}\bm{\Psi}_{\tau}(1)\bm{\omega}(\tau)\bm{\varepsilon}_j\right)^\top \left(\frac{\tau_t-\tau}{h}\right)^lK\left(\frac{\tau_t-\tau}{h}\right)\bm{\alpha}_{\perp}(\tau)\bm{\xi}_{T,\perp}(\tau).\nonumber
			\end{eqnarray*}

	\section{Preliminary Lemmas}\label{AppLem}

\begin{lemma}\label{L4}

Under Assumptions \ref{Ass1} and \ref{Ass2}.1, for any given $\tau \in (0,1)$, 

\begin{enumerate}[wide, labelwidth=!, labelindent=0pt]
\item 	$\frac{1}{T^2h}\sum_{t=1}^{T}\mathbf{y}_{t-1}\mathbf{y}_{t-1}^\top K\left(\frac{\tau_t-\tau}{h}\right)\to_D \mathbf{W}_d(\tau,\bm{\Sigma}_{\mathbf{y}}(\tau)) \mathbf{W}_d^\top(\tau,\bm{\Sigma}_{\mathbf{y}}(\tau))$;
			
\item $\mathbf{D}_T^{-1} \bm{\Xi}_T^\top(\tau)\left[ \sum_{t=1}^{T}\bm{\alpha}_{\perp}^\top(\tau)\mathbf{y}_{t-1}\mathbf{y}_{t-1}^\top\bm{\alpha}_{\perp}(\tau)\left(\frac{\tau_t-\tau}{h}\right)^l K\left(\frac{\tau_t-\tau}{h}\right)\right] \bm{\Xi}_T(\tau) \mathbf{D}_T^{-1}  \to_D \bm{\Delta}_{l}(\tau)$;
			
\item $\frac{1}{\sqrt{Th}}\mathbf{D}_T^{-1} \bm{\Xi}_T^\top(\tau) \sum_{t=1}^{T}\bm{\alpha}_{\perp}^\top(\tau)\mathbf{y}_{t-1}\mathbf{y}_{t-1}^\top\bm{\beta} \left(\frac{\tau_t-\tau}{h}\right)^l K\left(\frac{\tau_t-\tau}{h}\right) = O_P(1/\sqrt{Th})$;

\item $\frac{1}{Th}\sum_{t=1}^{T}\bm{\beta}^\top \mathbf{y}_{t-1} \mathbf{y}_{t-1}^\top \bm{\beta}\left(\frac{\tau_t-\tau}{h}\right)^lK\left(\frac{\tau_t-\tau}{h}\right) = \left\{ \begin{matrix}
					\widetilde{c}_l\bm{\beta}^\top\bm{\Sigma}_{\mathbf{z},0}(\tau)\bm{\beta}+O_P(h^2+ 1/\sqrt{Th}) \quad \text{if}\ $l$\ \text{is even}\\
					\widetilde{c}_l\bm{\beta}^\top\bm{\Sigma}_{\mathbf{z},0}(\tau)\bm{\beta}+O_P(h+ 1/\sqrt{Th}) \quad \text{if}\ $l$\ \text{is odd}
				\end{matrix}  \right.$;
			
			\item $ \widetilde{\mathbf{D}}_T^{+} \widetilde{\bm{\Xi}}_T^{\top}(\tau)\left[ \sum_{t=1}^{T}\mathbf{Q}(\tau)\mathbf{y}_{t-1}\mathbf{y}_{t-1}^\top\mathbf{Q}^\top(\tau) \left(\frac{\tau_t-\tau}{h}\right)^lK\left(\frac{\tau_t-\tau}{h}\right)\right] \widetilde{\bm{\Xi}}_T(\tau) \widetilde{\mathbf{D}}_T^{+}   \\ \to_D  \left[\begin{matrix}
					\bm{\Delta}_{l}(\tau) & \mathbf{0}_{(d-r_0)\times r_0}\\
			\mathbf{0}_{r_0\times(d-r_0)}	& \widetilde{c}_l\bm{\beta}^\top\bm{\Sigma}_{\mathbf{z},0}(\tau)\bm{\beta} \\
				\end{matrix} \right]$;

			\item $\frac{1}{\sqrt{Th}}\widetilde{\mathbf{D}}_T^{+} \widetilde{\bm{\Xi}}_T^\top(\tau) \sum_{t=1}^{T}\mathbf{Q}(\tau)\mathbf{y}_{t-1} \Delta \mathbf{x}_{t-1}^\top\left(\frac{\tau_t-\tau}{h}\right)^l K\left(\frac{\tau_t-\tau}{h}\right) \\
				=\left\{\begin{matrix}
					\left[\begin{matrix}
						\mathbf{0}_{(d-r_0)\times d(p_0-1)}+O_P(1/\sqrt{Th})  \\ \widetilde{c}_l\bm{\beta}^\top\bm{\Sigma}_{\mathbf{zx}}(\tau) + O_P(h^2 + 1/\sqrt{Th})\\
					\end{matrix} \right]  \quad \text{if}\ $l$\ \text{is even}\\
					\left[\begin{matrix}
						\mathbf{0}_{(d-r_0)\times d(p_0-1)}+O_P(1/\sqrt{Th})  \\ \widetilde{c}_l\bm{\beta}^\top\bm{\Sigma}_{\mathbf{zx}}(\tau) + O_P(h + 1/\sqrt{Th}) \\
					\end{matrix} \right]  \quad \text{if}\ $l$\ \text{is odd}\\
				\end{matrix} \right.,$	\\			
			
			\item $\mathbf{D}_T^{*,-1} \bm{\Xi}_T^{*,\top}(\tau) \mathbf{Q}^*(\tau)\mathbf{S}_{T,l}(\tau) \mathbf{Q}^{*,\top}(\tau)\bm{\Xi}_T^{*}(\tau) \mathbf{D}_T^{*,-1} \\
				\to_D  \left[\begin{matrix}
					\bm{\Delta}_{l}(\tau) & \mathbf{0}_{(d-r_0)\times r_0} &	\mathbf{0}_{(d-r_0)\times d(p_0-1)} \\ \mathbf{0}_{r_0\times(d-r_0)}& \widetilde{c}_l\bm{\beta}^\top\bm{\Sigma}_{\mathbf{z,0}}(\tau)\bm{\beta} & \widetilde{c}_l\bm{\beta}^\top\bm{\Sigma}_{\mathbf{zx}}(\tau)\\
					\mathbf{0}_{d(p_0-1)\times(d-r_0)}
					 & \widetilde{c}_l\bm{\Sigma}_{\mathbf{zx}}^\top(\tau)\bm{\beta} & \widetilde{c}_l\bm{\Sigma}_{\mathbf{x}}(\tau)\\
				\end{matrix} \right].$ 
		\end{enumerate}
		
	\end{lemma}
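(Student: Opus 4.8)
\textbf{Proof proposal for Lemma \ref{L4}.}

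The plan is to treat the seven parts as a graded sequence, each building on its predecessors, with the key analytic work concentrated in parts 1--2 and the rest following by block-matrix bookkeeping. First I would replace $\mathbf{y}_{t-1}$ throughout by its approximation $\mathbf{y}_0 + \sum_{j=1}^{t-1}\Delta\widetilde{\mathbf{y}}_j(\tau_j)$, using Lemma \ref{L1}.3.c to control the error in $\|\cdot\|_\delta$ at rate $O(1/\sqrt{T})$; multiplied by the kernel weights and normalized by $T^2 h$ (part 1) or $(Th)^2$ (part 2), these remainders are asymptotically negligible. For part 1, the crucial step is to show $T^{-1/2}\mathbf{y}_{\lfloor Tu\rfloor}$ is well-approximated \emph{locally around $\tau$} by a Brownian motion with the same covariance $\bm{\Sigma}_{\mathbf{y}}(\tau)$: since the kernel localizes $\tau_t$ to an $O(h)$ neighbourhood of $\tau$, the time-varying coefficients $\bm{\Psi}_{\tau_t}(1)$ and $\bm{\Omega}(\tau_t)$ are frozen at $\tau$ up to $O(h)$ by Assumption \ref{Ass1}.2.a, so the partial-sum process behaves like a homogeneous random walk with innovation covariance $\bm{\Psi}_\tau(1)\bm{\Omega}(\tau)\bm{\Psi}_\tau^\top(1)$. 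Then $\frac{1}{T^2h}\sum_t \mathbf{y}_{t-1}\mathbf{y}_{t-1}^\top K(\frac{\tau_t-\tau}{h}) \approx \frac{1}{Th}\sum_t \mathbf{W}_d(\tau_t,\bm\Sigma_{\mathbf y}(\tau))\mathbf{W}_d^\top(\tau_t,\bm\Sigma_{\mathbf y}(\tau)) K(\frac{\tau_t-\tau}{h})$, and by continuity of Brownian paths this converges to $\mathbf{W}_d(\tau,\bm{\Sigma}_{\mathbf{y}}(\tau))\mathbf{W}_d^\top(\tau,\bm{\Sigma}_{\mathbf{y}}(\tau))\cdot \tilde c_0$; since $\tilde c_0 = \int_{-1}^1 K(u)\mathrm{d}u = 1$, the stated limit follows.

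Next, for part 2 I would exploit the rotation $\bm{\alpha}_\perp^\top(\tau)\mathbf{y}_{t-1}$, which annihilates the stationary $\mathbf{P}_{\bm\beta}$-component and leaves only the integrated piece $\bm{\alpha}_\perp^\top(\tau)\mathbf{P}_{\bm{\beta}_\perp}\bm{\Psi}_\tau(1)\sum_{i\le t-1}\widetilde{\mathbf{u}}_i(\tau)$, whose (scaled) limit is the Gaussian process $\mathbf{q}(\cdot)$ with covariance built from $\bm{\Sigma}_{\bm\alpha}(\tau)$. The further change of basis by $\bm{\Xi}_T(\tau)$ separates the single ``level'' direction $\bm{\xi}_T(\tau)$ — along which the partial sum is $O(T^{1/2}h^{1/2})$ because the endpoint is pinned near $\delta_T = \lfloor T(\tau-h)\rfloor$ — from the $d-r_0-1$ ``bridge'' directions $\bm{\xi}_{T,\perp}(\tau)$, where increments over the $O(Th)$-length kernel window accumulate to order $Th$; this is exactly what $\mathbf{D}_T = \mathrm{diag}\{T\sqrt h, Th\,\mathbf{I}_{d-r_0-1}\}$ records. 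I would then identify the three blocks $\bm{\Delta}_{l,1},\bm{\Delta}_{l,2},\bm{\Delta}_{l,3}$ by computing, respectively, the squared level term $\tilde c_l\,\mathbf q(\tau)^\top\mathbf q(\tau)$, the cross term (level $\times$ local bridge), and the quadratic form in the local Brownian bridge $\mathbf{W}_d^*((u+1)/2,\bm{\Sigma}_{\bm\alpha}(\tau))$; an FCLT for the within-window partial sums (justified by the Gaussian-approximation machinery of the supplementary appendix, under Assumption \ref{Ass2}) delivers the $\int_{-1}^1(\cdot)u^lK(u)\mathrm{d}u$ integrals. Parts 3--4 are then comparatively routine: part 3 is a ``level $\times$ stationary'' cross term that, after the $(Th)^{-1/2}\mathbf{D}_T^{-1}$ normalization, is $O_P((Th)^{-1/2})$ because one factor is a mean-zero stationary average of size $O_P((Th)^{-1/2})$ while the other is $O_P(1)$; part 4 is a standard local-linear kernel average of the stationary process $\bm\beta^\top\widetilde{\mathbf z}_t(\tau)$, giving $\tilde c_l\,\bm\beta^\top\bm{\Sigma}_{\mathbf z,0}(\tau)\bm\beta$ with the bias/variance rates $O_P(h^2+(Th)^{-1/2})$ or $O_P(h+(Th)^{-1/2})$ according to parity of $l$, using Lemma \ref{L1}.3.b to replace $\tau_t$ by $\tau$ in the argument.

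Finally, parts 5--7 are obtained by assembling the pieces. Part 5 combines parts 2, 3 and 4 in the block structure induced by $\mathbf Q(\tau) = [\bm{\alpha}_\perp(\tau),\bm\beta]^\top$ and $\widetilde{\mathbf D}_T = \mathrm{diag}\{\mathbf D_T,\sqrt{Th}\,\mathbf I_{r_0}\}$: the $(1,1)$ block is part 2's limit $\bm{\Delta}_l(\tau)$, the $(2,2)$ block is part 4's limit, and the off-diagonal block vanishes by part 3 after the asymmetric normalization kills the $O_P((Th)^{-1/2})$ cross term — hence the block-diagonal limit. Part 6 handles the cross term between $\mathbf Q(\tau)\mathbf y_{t-1}$ and the stationary regressor $\Delta\mathbf x_{t-1}$: the integrated-direction piece is again asymptotically orthogonal to a stationary mean-zero average (rate $O_P((Th)^{-1/2})$), while the cointegrated-direction piece is an ordinary local-linear cross moment converging to $\tilde c_l\,\bm\beta^\top\bm{\Sigma}_{\mathbf{zx}}(\tau)$ with parity-dependent remainder. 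Part 7 then follows by writing $\mathbf S_{T,l}(\tau) = \sum_t \mathbf h_{t-1}\mathbf h_{t-1}^\top(\cdot)$ with $\mathbf h_{t-1} = [\mathbf y_{t-1}^\top,\Delta\mathbf x_{t-1}^\top]^\top$, conjugating by $\mathbf Q^*(\tau) = \mathrm{diag}(\mathbf Q(\tau),\mathbf I_{d(p_0-1)})$ and $\mathbf D_T^* = \mathrm{diag}\{\widetilde{\mathbf D}_T,\sqrt{Th}\,\mathbf I_{d(p_0-1)}\}$, and reading off the nine blocks from parts 5, 6 and the elementary local-linear limit $\tilde c_l\,\bm{\Sigma}_{\mathbf x}(\tau)$ for the purely stationary corner. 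I expect the main obstacle to be part 2 — specifically, making rigorous the \emph{joint} weak convergence of the level component $\bm{\xi}_T^\top(\tau)\mathbf q_T(\tau)$ and the within-window bridge components under the single normalization $\mathbf D_T^{-1}\bm{\Xi}_T^\top(\tau)(\cdot)\bm{\Xi}_T(\tau)\mathbf D_T^{-1}$, since $\bm{\Xi}_T(\tau)$ is itself random (built from $\mathbf y_{\delta_T}$) and must be shown to converge jointly with the quadratic-form process to $\bm{\Xi}(\tau)$; this requires a Gaussian-approximation/continuous-mapping argument rather than a direct moment computation, and care is needed to control the approximation errors from Lemma \ref{L1} uniformly over the kernel window of length $O(Th)$.
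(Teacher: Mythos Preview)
Your overall strategy matches the paper's: anchor at $\delta_T=\lfloor T(\tau-h)\rfloor$, decompose $\mathbf{y}_{t-1}=\mathbf{y}_{\delta_T}+(\mathbf{y}_{t-1}-\mathbf{y}_{\delta_T})$, rotate by the data-dependent $\bm{\Xi}_T(\tau)$, and invoke a joint FCLT for the pair $\bigl(T^{-1/2}\sum_{t\le\delta_T}(\cdot),\ (2Th)^{-1/2}\sum_{\delta_T<t\le\delta_T(u)}(\cdot)\bigr)\Rightarrow(\mathbf{W}_d,\mathbf{W}_d^*)$, with parts 5--7 as block bookkeeping from parts 2--4 and 6.

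Two corrections are needed. First, your description of the $\bm{\xi}_T/\bm{\xi}_{T,\perp}$ roles in part 2 is inverted: along $\bm{\xi}_T(\tau)$ the projection $\bm{\xi}_T^\top\bm{\alpha}_\perp^\top\mathbf{y}_{t-1}$ retains the full level $\sqrt{T}\,|\mathbf{q}_T(\tau)|$ (not $O(\sqrt{Th})$), so the quadratic sum over the $O(Th)$ window is $O(T^2h)$ and the correct normalizer is $(T\sqrt{h})^2$; it is along $\bm{\xi}_{T,\perp}$ that the level is annihilated (because $\bm{\xi}_{T,\perp}^\top\mathbf{q}_T=0$) and only the $O(\sqrt{Th})$ increments survive, yielding normalizer $(Th)^2$. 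Since $T\sqrt{h}\gg Th$, this is consistent with $\mathbf{D}_T$, but your verbal explanation (``pinned near $\delta_T$'') attaches the bridge behaviour to the wrong direction. Second, part 3 is not simply ``bounded factor $\times$ mean-zero stationary average'': for the $\bm{\xi}_{T,\perp}$ row, after killing the level one is left with $(Th)^{-3/2}\sum_t\bigl(\sum_{j=\delta_T+1}^{t-1}\Delta\mathbf{y}_j\bigr)\mathbf{z}_{t-1}^\top(\tfrac{\tau_t-\tau}{h})^lK(\tfrac{\tau_t-\tau}{h})$, a product of a within-window partial sum with a stationary factor. The paper controls this via a BN decomposition of $\widetilde{\mathbf{z}}_{t-1}$, a summation-by-parts step for the $\mathbf{p}_{t-1}(\mathbf{w}_{t-2}-\mathbf{w}_{t-1})^\top$ piece, and Burkholder's inequality for the martingale-difference array $\sum_{j}\bm{\varepsilon}_j\bm{\varepsilon}_{t-1}^\top$; your heuristic lands on the right $O_P((Th)^{-1/2})$ rate but elides precisely this work.
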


	\begin{lemma}\label{L5}
		Let Assumptions \ref{Ass1} and \ref{Ass2}.1 hold. Then if $\frac{T^{1-\frac{2}{\delta}}h}{(\log T)^{1-\frac{2}{\delta}}} \to \infty$,
		\begin{enumerate}[wide, labelwidth=!, labelindent=0pt]
			\item [1.] $\sup_{\tau\in[0,1]}\left|	\frac{1}{Th}\sum_{t=1}^{T}\widetilde{\mathbf{z}}_{t}(\tau_t)\left(\frac{\tau_t-\tau}{h}\right)^lK\left(\frac{\tau_t-\tau}{h}\right)\right| = O_P\left(\sqrt{\log T / (Th)}\right)$;
		\end{enumerate}
		
		\medskip
		
\noindent 		In addition, if $\frac{T^{1-\frac{4}{\delta}}h}{(\log T)^{1-\frac{4}{\delta}}} \to \infty$ and $E\left(\left|\bm{\varepsilon}_t\right|^4\mid \mathcal{F}_{t-1}\right) < \infty$ a.s. Then,
		\begin{enumerate}[wide, labelwidth=!, labelindent=0pt]
			\item [2.] for any fixed integer $p \geq 0$,
			\begin{eqnarray*}
				&&\sup_{\tau\in[0,1]}\left|	\frac{1}{Th}\sum_{t=1}^{T-p}\left(\widetilde{\mathbf{z}}_{t}(\tau_t)\widetilde{\mathbf{z}}_{t+p}^\top(\tau_{t+p})-E(\widetilde{\mathbf{z}}_{t}(\tau_t)\widetilde{\mathbf{z}}_{t+p}^\top(\tau_{t+p}))\right)\left(\frac{\tau_t-\tau}{h}\right)^l K\left(\frac{\tau_t-\tau}{h}\right)\right| =O_P\left(\sqrt{\log T / (Th)}\right).\nonumber
			\end{eqnarray*}
			
			\item [3.] $\sup_{\tau\in[0,1]}\left|	\frac{1}{Th}\sum_{t=1}^{T}\widetilde{\mathbf{z}}_{t-p}(\tau_t)\mathbf{u}_t\left(\frac{\tau_t-\tau}{h}\right)^lK\left(\frac{\tau_t-\tau}{h}\right)\right| = O_P\left(\sqrt{\log T / (Th)}\right)$ for any fixed integer $p \geq 1$.
		\end{enumerate}
		
	\end{lemma}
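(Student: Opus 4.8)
The plan is to establish all three assertions by a common three-step scheme: (i) a chaining argument that reduces $\sup_{\tau\in[0,1]}$ to a maximum over a deterministic grid of polynomially many points; (ii) a tail bound at each grid point via a Nagaev-type inequality that splits into a sub-Gaussian piece and a polynomial piece; (iii) a union bound over the grid, with the bandwidth restrictions calibrated so that, after multiplication by the grid cardinality, both pieces vanish. The input underpinning everything is that, by Lemma \ref{L1}.1 together with the smoothness in Assumption \ref{Ass1}.2.a, the filter $\mathbf{B}_\tau^{-1}(L)=\bm{\Psi}_\tau(L)$ has coefficients $\bm{\Psi}_j(\tau)$ decaying geometrically in $j$, uniformly over $\tau\in[0,1]$, and Lipschitz in $\tau$; hence $\{\widetilde{\mathbf{z}}_t(\tau_t)\}_t=\{\sum_{j\ge0}\bm{\Psi}_j(\tau_t)\bm{\omega}(\tau_t)\bm{\varepsilon}_{t-j}\}_t$ is a mean-zero causal array with $\max_t\|\widetilde{\mathbf{z}}_t(\tau_t)\|_\delta<\infty$ (Assumption \ref{Ass1}.3) whose functional dependence on $\bm{\varepsilon}_{t-k}$ is of $L_\delta$-size $O(\rho^k)$, $\rho\in(0,1)$, uniformly in $t$ --- precisely what the Nagaev-type inequality for time-varying VMA($\infty$) processes developed earlier in this appendix requires.

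For part 1, put $g_l(x)=x^lK(x)$ (Lipschitz and compactly supported by Assumption \ref{Ass2}.1) and $\mathbf{A}_T(\tau)=\tfrac{1}{Th}\sum_{t=1}^T\widetilde{\mathbf{z}}_t(\tau_t)g_l\big(\tfrac{\tau_t-\tau}{h}\big)$. A chaining argument over a grid of mesh of order $h/\log T$ (hence $O(h^{-1}\log T)$ points) reduces $\sup_\tau|\mathbf{A}_T(\tau)|$ to a maximum over that grid, the increment between neighbouring grid points being of smaller order than $\sqrt{\log T/(Th)}$ and controlled by the same inequality applied to the kernel increments. At a fixed grid point the kernel-weighted array has $\asymp Th$ non-negligible terms, variance proxy $\asymp Th$, and $L_\delta$-bounded, geometrically dependent entries, so the Nagaev inequality gives, with $x=\lambda\sqrt{Th\log T}$,
\[
\Pr\!\left(\Big|\textstyle\sum_{t=1}^T\widetilde{\mathbf{z}}_t(\tau_t)g_l\big(\tfrac{\tau_t-\tau}{h}\big)\Big|>x\right)\;\le\; C\exp(-c\lambda^2\log T)+C\,(Th)^{1-\delta/2}(\log T)^{-\delta/2}\lambda^{-\delta}.
\]
Multiplying by the grid cardinality, the first term is killed by taking $\lambda$ large and the second vanishes exactly under $T^{1-2/\delta}h/(\log T)^{1-2/\delta}\to\infty$, the exponent $2/\delta$ encoding the $\delta$ finite moments of $\widetilde{\mathbf{z}}_t(\tau_t)$; this yields the $O_P(\sqrt{\log T/(Th)})$ rate.

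Part 3 follows the same route with summand $\widetilde{\mathbf{z}}_{t-p}(\tau_t)\mathbf{u}_t$. The simplification is that, for $p\ge1$, $\widetilde{\mathbf{z}}_{t-p}(\tau_t)$ is $\mathcal{F}_{t-1}$-measurable while $E(\mathbf{u}_t\mid\mathcal{F}_{t-1})=\mathbf{0}$, so $\{g_l(\tfrac{\tau_t-\tau}{h})\widetilde{\mathbf{z}}_{t-p}(\tau_t)\mathbf{u}_t\}_t$ is a martingale-difference array with entries in $L_{\delta/2}$ (Cauchy--Schwarz, invoking $E(|\bm{\varepsilon}_t|^4\mid\mathcal{F}_{t-1})<\infty$ a.s.); a martingale Nagaev/Freedman-type bound then produces the same tail with polynomial exponent $\delta/2$ in place of $\delta$, so the union bound now requires $T^{1-4/\delta}h/(\log T)^{1-4/\delta}\to\infty$, exactly the stated hypothesis.

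Part 2 is the genuinely delicate case and the main obstacle, since $\widetilde{\mathbf{z}}_t(\tau_t)\widetilde{\mathbf{z}}_{t+p}^\top(\tau_{t+p})$ is neither a martingale difference nor bounded. I would truncate each VMA($\infty$) at lag $m\asymp\log T$: by uniform geometric decay the truncation error summed against the kernel, together with the corresponding uniform bias in the centering, is $O_P(\rho^m)=o_P(\sqrt{\log T/(Th)})$ once $m$ is a large enough multiple of $\log T$, while the centred products of the truncations form an $(m+p)$-dependent array in $L_{\delta/2}$. A big-block/small-block decomposition reduces this to sums of nearly independent blocks, to which a Bernstein-plus-Nagaev bound applies, and the union bound over the grid closes the argument under the same conditions as part 3. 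The bookkeeping in this last step --- simultaneously tracking $m\asymp\log T$, the block lengths, the $\delta/2$ moments of the quadratic terms, the truncation bias, and the bandwidth condition --- is where essentially all the difficulty lies; parts 1 and 3 are comparatively routine once the appendix's Nagaev machinery is available.
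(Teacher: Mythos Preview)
Your proposal is sound in outline but follows a genuinely different route from the paper. The paper does not attack the dependent sums directly with a Nagaev inequality; instead, for all three parts it first applies the Beveridge--Nelson decomposition to peel off a martingale-difference leading term plus negligible telescoping/boundary remainders, and then handles the leading term by truncation at level $(Th/\log T)^{1/2}$ together with Freedman's exponential inequality. For part~1 this reduces to $\sum_t\bm{\Psi}_{\tau_t}(1)\bm{\omega}(\tau_t)\bm{\varepsilon}_t\,w_t(\tau)$ as the main object. For part~2 the paper's key device is to apply BN not to $\widetilde{\mathbf{z}}_t$ itself but to the \emph{product} filter $\mathbb{B}_t^r(L)=\sum_{j\ge0}(\mathbf{B}_{j+r}(\tau_t)\otimes\mathbf{B}_j(\tau_t))L^j$: this converts the off-diagonal cross-products into a sum of the form $\sum_t\bm{\xi}_t\bm{\varepsilon}_t\,w_t(\tau)$ with $\bm{\xi}_t\in\mathcal{F}_{t-1}$, i.e.\ again a martingale-difference array, so the same truncation-plus-Freedman argument applies (the conditional variance is controlled via a summation-by-parts trick). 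No lag truncation at $m\asymp\log T$ and no big-block/small-block construction is needed.

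The trade-off is that the paper's BN-on-products route is specific to the linear VMA($\infty$) structure but cleaner: it sidesteps entirely the blocking bookkeeping you flag as the main difficulty, and reduces all three parts to the same martingale template. Your approach is more generic---it would extend to short-memory processes without a linear representation---but is heavier for part~2 and relies on a kernel-weighted Nagaev bound for dependent sums that the appendix does not supply in ready form (Lemma~\ref{L6} is a Nagaev bound for unweighted partial sums only). Both strategies arrive at the same moment--bandwidth conditions, since in either case the polynomial tail is governed by $\|\widetilde{\mathbf{z}}_t\|_\delta$ in part~1 and by the $L_{\delta/2}$ norm of the products in parts~2--3.
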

	
	\begin{lemma}\label{L6}
		Let Assumptions \ref{Ass1}--\ref{Ass2} hold.  Define $S_T^* = \max_{t \leq T}|\sum_{j=1}^{t}\widetilde{\mathbf{z}}_j(\tau_j)|$. Then for $x\geq c\sqrt{T\log T}$ and some constant $c>0$, we have $\Pr\left( S_T^* \geq x\right) = O\left(\frac{T}{x^\delta}\right)$.
		
	\end{lemma}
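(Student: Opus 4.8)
The plan is to peel a genuine martingale off the partial sums of the time-varying VMA($\infty$) process and control it by a maximal Nagaev-type inequality, treating everything else as a negligible remainder. By Lemma \ref{L1}, $\widetilde{\mathbf{z}}_t(\tau)=\bm{\Psi}_\tau(L)\widetilde{\mathbf{u}}_t(\tau)=\sum_{k\ge 0}\bm{\Psi}_k(\tau)\bm{\omega}(\tau)\bm{\varepsilon}_{t-k}$ and $\det(\mathbf{B}_\tau(L))\neq 0$ on $\{|L|\le 1\}$ for every $\tau$. Because the coefficients of $\mathbf{B}_\tau(L)=\mathbf{I}_d-\sum_{i=1}^{p}\mathbf{B}_i(\tau)L^i$ are uniformly bounded and, by Assumption \ref{Ass1}.2.a, thrice continuously differentiable on the compact set $[0,1]$, the smallest-modulus root of $\det(\mathbf{B}_\tau(L))$ is bounded away from $1$ uniformly in $\tau$; a Cauchy estimate for the coefficients of $\mathbf{B}_\tau^{-1}(L)$ and of its $\tau$-derivative then gives $C<\infty$ and $\rho\in(0,1)$ with $\sup_{\tau\in[0,1]}\big(|\bm{\Psi}_k(\tau)|+|\bm{\Psi}_k^{(1)}(\tau)|+|\bm{\Psi}_k^{*}(\tau)|+|\bm{\Psi}_k^{*,(1)}(\tau)|\big)\le C\rho^k$ for all $k\ge 0$, where $\bm{\Psi}_k^*(\tau)=\sum_{i>k}\bm{\Psi}_i(\tau)$. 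Since $\{\bm{\varepsilon}_t\}$ is i.i.d.\ (Assumption \ref{Ass2}.2) with $\max_t\|\bm{\varepsilon}_t\|_\delta<\infty$, this yields $\sup_t\|\widetilde{\mathbf{z}}_t(\tau_t)\|_\delta<\infty$ and $\sup_{l,\tau}\|\widetilde{\mathbf{e}}_l(\tau)\|_\delta<\infty$, where $\widetilde{\mathbf{e}}_l(\tau):=\sum_{k\ge 0}\bm{\Psi}_k^*(\tau)\bm{\omega}(\tau)\bm{\varepsilon}_{l-k}$.

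Next I would apply a Beveridge--Nelson rearrangement based on $\sum_k\bm{\Psi}_k(\tau)L^k=\bm{\Psi}_\tau(1)-(1-L)\sum_k\bm{\Psi}_k^*(\tau)L^k$, which gives $\widetilde{\mathbf{z}}_j(\tau_j)=\bm{\Psi}_{\tau_j}(1)\bm{\omega}(\tau_j)\bm{\varepsilon}_j+\widetilde{\mathbf{e}}_{j-1}(\tau_j)-\widetilde{\mathbf{e}}_j(\tau_j)$; summing and regrouping the (non-exact) telescope,
$$\sum_{j=1}^{t}\widetilde{\mathbf{z}}_j(\tau_j)=\mathbf{M}_t+\mathbf{R}_{1,t}+\mathbf{R}_{2,t},$$
where $\mathbf{M}_t:=\sum_{j=1}^{t}\mathbf{G}_j\bm{\varepsilon}_j$ with $\mathbf{G}_j:=\bm{\Psi}_{\tau_j}(1)\bm{\omega}(\tau_j)$, $\mathbf{R}_{1,t}:=\widetilde{\mathbf{e}}_0(\tau_1)-\widetilde{\mathbf{e}}_t(\tau_t)$, and $\mathbf{R}_{2,t}:=\sum_{j=1}^{t-1}\bigl(\widetilde{\mathbf{e}}_j(\tau_{j+1})-\widetilde{\mathbf{e}}_j(\tau_j)\bigr)$. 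Since $\bm{\Psi}_\tau(1)$ and $\bm{\omega}(\tau)$ are bounded on $[0,1]$ we have $\sup_j|\mathbf{G}_j|<\infty$, so $\{\mathbf{G}_j\bm{\varepsilon}_j,\mathcal{F}_j\}$ is a martingale difference sequence and $\{\mathbf{M}_t\}$ a martingale. For the remainders, Assumption \ref{Ass1}.2.a and the geometric bounds give $\|\bm{\Psi}_k^*(\tau_{j+1})\bm{\omega}(\tau_{j+1})-\bm{\Psi}_k^*(\tau_j)\bm{\omega}(\tau_j)\|\le C\rho^k/T$, hence by Minkowski $\|\widetilde{\mathbf{e}}_j(\tau_{j+1})-\widetilde{\mathbf{e}}_j(\tau_j)\|_\delta=O(1/T)$ and $\big\|\max_{t\le T}|\mathbf{R}_{2,t}|\big\|_\delta\le\sum_{j=1}^{T-1}\|\widetilde{\mathbf{e}}_j(\tau_{j+1})-\widetilde{\mathbf{e}}_j(\tau_j)\|_\delta=O(1)$; Markov's inequality then bounds $\Pr(\max_{t\le T}|\mathbf{R}_{2,t}|\ge x/4)$ and $\Pr(|\widetilde{\mathbf{e}}_0(\tau_1)|\ge x/4)$ by $O(x^{-\delta})$, while $\Pr(\max_{t\le T}|\widetilde{\mathbf{e}}_t(\tau_t)|\ge x/4)\le (4/x)^\delta\sum_{t=1}^{T}E|\widetilde{\mathbf{e}}_t(\tau_t)|^\delta=O(T/x^\delta)$, all of order $O(T/x^\delta)$.

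It then remains to bound $\Pr(\max_{t\le T}|\mathbf{M}_t|\ge x/4)$. The summands $\mathbf{G}_j\bm{\varepsilon}_j$, $1\le j\le T$, are independent, mean zero (as $E(\bm{\varepsilon}_j\mid\mathcal{F}_{j-1})=\mathbf{0}$), with $\sup_j E|\mathbf{G}_j\bm{\varepsilon}_j|^\delta<\infty$, $\sup_j E|\mathbf{G}_j\bm{\varepsilon}_j|^2<\infty$ and $\delta>4$. A maximal Nagaev-type inequality — obtained by truncating $\bm{\varepsilon}_j$ at a level of order $x/\log T$, applying a Bernstein/Freedman bound to the (still martingale) bounded component and a union bound to the large-innovation event — gives $\Pr(\max_{t\le T}|\mathbf{M}_t|\ge x/4)\le C_1 T/x^\delta+C_2\exp(-c_3 x^2/T)$. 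For $x\ge c\sqrt{T\log T}$ one has $x^2/T\ge c^2\log T$, and since $x\mapsto\delta\log x-c_3 x^2/T$ is decreasing for $x\gtrsim\sqrt{T}$, choosing $c$ large enough (e.g.\ so that $c_3 c^2\ge\delta$) forces $C_2\exp(-c_3 x^2/T)\le T/x^\delta$ throughout $x\ge c\sqrt{T\log T}$. Combining this with $S_T^*\le\max_t|\mathbf{M}_t|+\max_t|\widetilde{\mathbf{e}}_t(\tau_t)|+|\widetilde{\mathbf{e}}_0(\tau_1)|+\max_t|\mathbf{R}_{2,t}|$ and a union bound yields $\Pr(S_T^*\ge x)=O(T/x^\delta)$.

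The main difficulties are two. First, the Beveridge--Nelson telescope does not close up exactly because the VMA coefficients depend on $\tau_j$, so the ``frame-shift'' remainder $\mathbf{R}_{2,t}$ must be shown to be negligible using the differentiability of $\bm{\Psi}_k^*(\cdot)$ and $\bm{\omega}(\cdot)$ together with the uniform geometric decay coming from Lemma \ref{L1} and Assumption \ref{Ass1}. Second, the constant $c$ must be calibrated so that the sub-Gaussian tail of the Nagaev bound is absorbed into $O(T/x^\delta)$ over the entire range $x\ge c\sqrt{T\log T}$; a plain Doob--Rosenthal bound only gives $O(T^{\delta/2}/x^\delta)$, which is why a genuine Nagaev-type inequality — and hence the independence of $\{\bm{\varepsilon}_t\}$ in Assumption \ref{Ass2}.2 — is essential here.
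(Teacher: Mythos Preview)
Your proposal is correct and follows essentially the same route as the paper: a Beveridge--Nelson decomposition of $\sum_{j=1}^{t}\widetilde{\mathbf{z}}_j(\tau_j)$ into the martingale $\sum_{j=1}^{t}\bm{\Psi}_{\tau_j}(1)\bm{\omega}(\tau_j)\bm{\varepsilon}_j$, the boundary terms $\widetilde{\mathbf{e}}_0(\tau_1)$ and $\widetilde{\mathbf{e}}_t(\tau_t)$, and the frame-shift remainder $\sum_{j=1}^{t-1}(\widetilde{\mathbf{e}}_j(\tau_{j+1})-\widetilde{\mathbf{e}}_j(\tau_j))$, followed by Markov/union bounds on the last three pieces and a maximal Nagaev inequality on the martingale part. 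The only cosmetic difference is that the paper invokes the ready-made Nagaev bound of \cite{borovkov1973notes} for maxima of partial sums of independent summands, whereas you sketch its derivation via truncation plus a Bernstein/Freedman bound; your explicit calibration of $c$ so that the sub-Gaussian tail is dominated by $T/x^\delta$ is exactly the step the paper leaves implicit.
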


	\begin{lemma}\label{L7}
		Let Assumptions \ref{Ass1} and \ref{Ass2}.1 hold.  
\begin{enumerate}[wide, labelwidth=!, labelindent=0pt]
\item[1.]  $\sup_{u\in[0,1]}\left|T^{-1/2}\mathbf{y}_{\lfloor Tu \rfloor}-\mathbf{W}_d(u,\bm{\Sigma}_{\mathbf{y}}(u))\right|=o_P(1)$.
\end{enumerate}		
\medskip

\noindent		Suppose further $T^{1-\frac{2}{\delta}}h\to \infty$, we have
		\begin{enumerate}[wide, labelwidth=!, labelindent=0pt]
			\item [2.] $\sup_{\tau\in[h,1-h]}\left|	\frac{1}{T^2h}\sum_{t=1}^{T}\mathbf{y}_{t-1}\mathbf{y}_{t-1}^\top K\left(\frac{\tau_t-\tau}{h}\right)-\mathbf{W}_d(\tau,\bm{\Sigma}_{\mathbf{y}}(\tau))\mathbf{W}_d^\top(\tau,\bm{\Sigma}_{\mathbf{y}}(\tau))\right| = o_P(1)$;
			
			\item [3.] $\sup_{\tau\in[h,1-h]}\left|\mathbf{D}_T^{-1} \bm{\Xi}_T^\top(\tau)\left[ \sum_{t=1}^{T}\bm{\alpha}_{\perp}^\top(\tau)\mathbf{y}_{t-1}\mathbf{y}_{t-1}^\top\bm{\alpha}_{\perp}(\tau)\left(\frac{\tau_t-\tau}{h}\right)^l K\left(\frac{\tau_t-\tau}{h}\right)\right] \bm{\Xi}_T(\tau) \mathbf{D}_T^{-1} - \bm{\Delta}_{l}(\tau)\right| = o_P(1)$;
		\end{enumerate}
		
\medskip

\noindent	Suppose further $\frac{T^{1-\frac{4}{\delta+1}}h }{(\log T)^{(\delta-1)/(\delta+1)}} \to \infty$ and $\{\bm{\varepsilon}_t\}$ is a sequence of independent random variables, we have	
		\begin{enumerate}[wide, labelwidth=!, labelindent=0pt]
			\item [4.] $\sup_{0\leq \tau\leq 1}\left|\frac{1}{\sqrt{Th}}\mathbf{D}_T^{-1} \bm{\Xi}_T^\top(\tau) \sum_{t=1}^{T}\bm{\alpha}_{\perp}^\top(\tau)\mathbf{y}_{t-1}\mathbf{y}_{t-1}^\top\bm{\beta} \left(\frac{\tau_t-\tau}{h}\right)^l K\left(\frac{\tau_t-\tau}{h}\right)\right|= O_P(\sqrt{\log T/(Th)});$
			
			\item [5.] $\sup_{0\leq \tau\leq 1}\left|\frac{1}{\sqrt{Th}}\mathbf{D}_T^{-1} \bm{\Xi}_T^\top(\tau) \sum_{t=1}^{T}\bm{\alpha}_{\perp}^\top(\tau)\mathbf{y}_{t-1}\mathbf{u}_{t}^\top \left(\frac{\tau_t-\tau}{h}\right)^l K\left(\frac{\tau_t-\tau}{h}\right)\right|= O_P(\sqrt{\log T/(Th)})$.
		\end{enumerate}
	\end{lemma}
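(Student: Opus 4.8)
\emph{Overall strategy.} Each of the five assertions is the uniform‑in‑$\tau$ (resp.\ uniform‑in‑$u$) version of the corresponding pointwise statement in Lemma \ref{L4}, so the plan is to rerun each pointwise argument while upgrading every ``$\to_D$'' to a genuine coupling and paying a $\sqrt{\log T}$ factor for the passage to a supremum. I would work throughout on the enlarged probability space carrying the Gaussian approximation for partial sums of the time‑varying VMA($\infty$) process $\widetilde{\mathbf z}_t(\tau_t)$ established earlier in this Appendix, so that the random limits $\mathbf W_d(\cdot,\cdot)$, $\mathbf W_d^{*}(\cdot,\cdot)$ and $\bm\Delta_l(\cdot)$ live alongside the data; the Nagaev‑type inequality (Lemma \ref{L5}) supplies uniform control of the ``stationary'' components, Lemma \ref{L6} controls partial‑sum maxima, and oscillation in $\tau$ is absorbed through the Lipschitz continuity of $K$, the smoothness of the coefficient matrices (Assumption \ref{Ass1}.2.a) and the a.s.\ uniform continuity of Brownian motion. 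For Part 1, I would start from $\mathbf y_{\lfloor Tu\rfloor}=\mathbf y_0+\sum_{j\le\lfloor Tu\rfloor}\Delta\widetilde{\mathbf y}_j(\tau_j)+R_{\lfloor Tu\rfloor}$ with $\max_{t\le T}\|R_t\|_\delta=O(T^{-1/2})$ by Lemma \ref{L1}.3(c); since $\delta>4$, Markov gives $T^{-1/2}\max_t|R_t|=o_P(1)$ and $T^{-1/2}\mathbf y_0=o_P(1)$, a Beveridge--Nelson step reduces the leading term to the martingale $\mathbf P_{\bm\beta_\perp}\sum_{j\le\lfloor Tu\rfloor}\bm\Psi_{\tau_j}(1)\bm\omega(\tau_j)\bm\varepsilon_j$ whose conditional variance at $\lfloor Tu\rfloor$ is $T\int_0^u\bm\Psi_s(1)\bm\Omega(s)\bm\Psi_s^\top(1)\mathrm{d}s+o_P(T)$ uniformly, and the Gaussian approximation couples it uniformly with $\mathbf W_d(u,\bm\Sigma_{\mathbf y}(u))$. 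Part 2 then follows by writing the left side as $\frac{1}{Th}\sum_t(T^{-1/2}\mathbf y_{t-1})(T^{-1/2}\mathbf y_{t-1})^\top K(\frac{\tau_t-\tau}{h})$, substituting $T^{-1/2}\mathbf y_{t-1}\mapsto\mathbf W_d(\frac{t-1}{T},\bm\Sigma_{\mathbf y}(\frac{t-1}{T}))$ (error $o_P(1)$ uniformly, since $\sup_u|T^{-1/2}\mathbf y_{\lfloor Tu\rfloor}|=O_P(1)$ and $\sup_\tau\frac{1}{Th}\sum_tK(\frac{\tau_t-\tau}{h})=O_P(1)$), pulling the matrix out of the kernel via its modulus of continuity over windows of width $O(h)$, and closing with $\sup_{\tau\in[h,1-h]}\big|\frac{1}{Th}\sum_tK(\frac{\tau_t-\tau}{h})-1\big|=O((Th)^{-1})$.

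\emph{Part 3.} I would split $\mathbf y_{t-1}=\mathbf y_{\delta_T}+(\mathbf y_{t-1}-\mathbf y_{\delta_T})$ with $\delta_T=\lfloor T(\tau-h)\rfloor$. In the $\bm\xi_T(\tau)$ direction the level piece gives $\bm\xi_T^\top(\tau)\bm\alpha_\perp^\top(\tau)\mathbf y_{\delta_T}=\sqrt T\,|\mathbf q_T(\tau)|$ and the increment is a uniform $O_P(\sqrt{Th})$ (Lemma \ref{L6}), hence negligible after division by $T\sqrt h$; together with $\mathbf q_T(\tau)\to\mathbf q(\tau)$ uniformly (Part 1) and $\frac{1}{Th}\sum_t(\frac{\tau_t-\tau}{h})^lK(\frac{\tau_t-\tau}{h})\to\widetilde c_l$ uniformly on $[h,1-h]$, this yields the $(1,1)$‑block $\widetilde c_l\,\mathbf q^\top(\tau)\mathbf q(\tau)=\bm\Delta_{l,1}(\tau)$. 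In the $\bm\xi_{T,\perp}(\tau)$ direction the level piece vanishes identically by construction, so only $\bm\alpha_\perp^\top(\tau)(\mathbf y_{t-1}-\mathbf y_{\delta_T})$ matters; under the coupling it equals $\bm\alpha_\perp^\top(\tau)\mathbf P_{\bm\beta_\perp}\bm\Psi_\tau(1)$ times a Gaussian partial‑sum increment over $(\delta_T,t-1]$, up to a uniform $o_P(\sqrt{Th})$, and for $t$ in the kernel window, rescaled by $(2Th)^{-1/2}$ and reparametrised by $u=(\tau_t-\tau)/h$, this converges uniformly in $\tau\in[h,1-h]$ (a.s.\ continuity of the approximating Gaussian process, smoothness of $\bm\Psi_\cdot(1),\bm\omega(\cdot)$) to $\mathbf W_d^{*}((u+1)/2,\bm\Sigma_{\bm\alpha}(\tau))$, which is independent of $\mathbf q(\tau)$ because it is determined by the innovations indexed after $\delta_T$, disjoint from those entering $\mathbf q_T(\tau)$. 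Feeding this into the remaining blocks reproduces $\bm\Delta_{l,2}(\tau)$ and $\bm\Delta_{l,3}(\tau)$; restricting to $[h,1-h]$ keeps $[\tau-h,\tau+h]\subset[0,1]$, which is what makes the limiting integral two‑sided and the convergence uniform.

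\emph{Parts 4--5.} In both I would split the integrated factor $\bm\alpha_\perp^\top(\tau)\mathbf y_{t-1}=\bm\alpha_\perp^\top(\tau)\mathbf y_{\delta_T}+\bm\alpha_\perp^\top(\tau)(\mathbf y_{t-1}-\mathbf y_{\delta_T})$. For Part 4 the companion factor is $\bm\beta^\top\mathbf y_{t-1}=\bm\beta^\top\mathbf z_{t-1}$, a mean‑zero time‑varying stationary VMA($\infty$); the level term reduces to $|\mathbf q_T(\tau)|$ times $\frac{1}{Th}\sum_t\bm\beta^\top\widetilde{\mathbf z}_{t-1}(\tau_{t-1})(\frac{\tau_t-\tau}{h})^lK(\frac{\tau_t-\tau}{h})=O_P(\sqrt{\log T/(Th)})$ uniformly by Lemma \ref{L5}, while the increment term is $\frac{1}{Th}\sum_t$ of (a slowly‑varying factor of size $O_P(\sqrt{Th})$) $\times$ (the mean‑zero $\bm\beta^\top\widetilde{\mathbf z}_{t-1}(\tau_{t-1})$) $\times$ kernel, whose conditional variance over the window is $O(T^2h^2)$, so it is $O_P(Th\sqrt{\log T})$ uniformly by a Nagaev‑type bound; after the prefactor $\frac{1}{\sqrt{Th}}\mathbf D_T^{-1}$ both contributions are $O_P(\sqrt{\log T/(Th)})$. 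Part 5 is the same with $\mathbf u_t=\bm\omega(\tau_t)\bm\varepsilon_t$ in place of $\bm\beta^\top\mathbf y_{t-1}$: the level term uses $\sup_\tau\big|\frac{1}{Th}\sum_t\mathbf u_t^\top(\frac{\tau_t-\tau}{h})^lK(\frac{\tau_t-\tau}{h})\big|=O_P(\sqrt{\log T/(Th)})$ (a martingale Nagaev bound, cf.\ Lemma \ref{L5}), and the increment term $\sum_t\bm\alpha_\perp^\top(\tau)(\mathbf y_{t-1}-\mathbf y_{\delta_T})\mathbf u_t^\top(\cdots)$ is a genuine martingale sum in $t$ ($\mathbf y_{t-1}-\mathbf y_{\delta_T}$ is $\mathcal F_{t-1}$‑measurable and $\mathbf u_t$ is an MDS), so a Doob/Burkholder inequality over the $O(Th)$‑length window plus a union bound over a sufficiently fine $\tau$‑grid gives $O_P(\sqrt{\log T/(Th)})$; since only upper bounds are needed, the supremum may be taken over all of $[0,1]$.

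\emph{Main obstacle.} The crux is Part 3: one must fix a single probability space on which, simultaneously for every $\tau$, the ``level'' $\mathbf W_d(\tau,\bm\Sigma_{\mathbf y}(\tau))$ and the kernel‑window ``fluctuation'' $\mathbf W_d^{*}(\cdot,\bm\Sigma_{\bm\alpha}(\tau))$ coexist as genuine, mutually independent Gaussian processes, and then show that the sample quadratic form converges to $\bm\Delta_l(\tau)$ uniformly in $\tau$ despite the randomness of the data‑driven rotation $\bm\Xi_T(\tau)$ and the $\tau$‑dependence of $\bm\Sigma_{\bm\alpha}(\cdot)$; tracking the three distinct normalisation rates inside $\mathbf D_T$ across the $\bm\xi_T$ and $\bm\xi_{T,\perp}$ directions is where most of the bookkeeping lies.
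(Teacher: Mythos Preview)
Your overall strategy matches the paper's: upgrade each pointwise statement in Lemma~\ref{L4} via the Gaussian coupling for the integrated part, Lemmas~\ref{L5}--\ref{L6} for the stationary part, and Lipschitz/grid arguments for the supremum. Parts~1, 3 and~5 are essentially as the paper does them, and your Part~2 route---direct substitution $T^{-1/2}\mathbf y_{t-1}\mapsto\mathbf W_d(\tau_{t-1},\cdot)$ via Part~1 plus modulus of continuity of Brownian motion---is in fact a shade cleaner than the paper's $\mathbf y_{\delta_T}$-split.

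There is, however, a real gap in your Part~4, and you have mis-located the main obstacle. The increment term $\sum_t \bm\alpha_\perp^\top(\tau)(\mathbf y_{t-1}-\mathbf y_{\delta_T})\,\mathbf z_{t-1}^\top\bm\beta\,(\cdot)^lK(\cdot)$ is \emph{not} a martingale sum in $t$: unlike Part~5 (where $\mathbf u_t$ is an MDS), here $\bm\beta^\top\mathbf z_{t-1}$ is $\mathcal F_{t-1}$-measurable and correlated with the last summand of the partial-sum factor, so neither a ``conditional-variance'' heuristic nor a direct Nagaev/Freedman bound applies to the product. The paper's fix (this is by far the longest stretch of the proof) is a \emph{second} Beveridge--Nelson decomposition, now on $\widetilde{\mathbf z}_{t-1}(\tau_{t-1})=\bm\Psi_{\tau_{t-1}}(1)\bm\omega(\tau_{t-1})\bm\varepsilon_{t-1}+(\mathbf w_{t-2}-\mathbf w_{t-1})$: the telescoping piece is controlled by summation by parts together with the partial-sum maximum bound of Lemma~\ref{L6}, while the genuinely martingale piece $\mathbf p_{t-2}(\tau)\,\mathbf v_{t-1}^\top$ (with $\mathbf p_{t-2}(\tau)=\sum_{j=\delta_T+1}^{t-2}\widetilde{\mathbf z}_j(\tau_j)$ and $\mathbf v_{t-1}=\bm\Psi_{\tau_{t-1}}(1)\bm\omega(\tau_{t-1})\bm\varepsilon_{t-1}$) is handled by truncating \emph{both} factors at level $(Th/\log T)^{1/4}$, applying Freedman's exponential inequality, and taking a union bound over a $\tau$-grid of mesh $O(h^2\sqrt{\log T/(Th)}/\sqrt{\log T})$. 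Bounding the tail probability of the \emph{random} partial-sum factor under truncation is precisely where independence of $\{\bm\varepsilon_t\}$ and the extra bandwidth condition $T^{1-4/(\delta+1)}h/(\log T)^{(\delta-1)/(\delta+1)}\to\infty$ are used---neither ingredient appears anywhere in your sketch. So the plan is right in spirit, but the ``main obstacle'' belongs at Part~4, not Part~3, and you should budget for this extra BN/truncation/Freedman machinery there.
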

	
	\begin{lemma}\label{L8}
		Let Assumptions \ref{Ass1}--\ref{Ass2} hold. In addition, Suppose further $\frac{Th^2}{(\log T)^2} \to \infty$ and $Th^6 \to 0$,  we have
		\begin{enumerate}[wide, labelwidth=!, labelindent=0pt]
			\item $\sup_{\tau\in[0,1]}\left|\frac{1}{T^{3/2}h}\sum_{t=1}^{T}(\Delta\mathbf{x}_{t-1}\mathbf{y}_{t-1}^\top) \otimes \widehat{\bm{\alpha}}(\tau_t)\left(\frac{\tau_t-\tau}{h}\right)^l K\left(\frac{\tau_t-\tau}{h}\right)\right|= O_P(\sqrt{\log T/(Th)})$;
			
			\item $\frac{1}{T^2}\sum_{t=1}^{T}\widetilde{\mathbf{R}}_t(\widehat{\bm{\alpha}})\widehat{\bm{\Omega}}^{-1}(\tau_t)\widetilde{\mathbf{R}}_t^\top(\widehat{\bm{\alpha}}) \to_D \int_{0}^{1}\mathbf{W}_{d-r_0}(u) \mathbf{W}_{d-r_0}^\top(u) \otimes \bm{\alpha}^\top(u)\bm{\Omega}^{-1}(u)\bm{\alpha}(u)\mathrm{d}u$, where $\mathbf{W}_{d-r_0}(u) = \left[\mathbf{0}_{(d-r_0)\times r_0},\mathbf{I}_{d-r_0}\right]\mathbf{W}_d(u,\bm{\Sigma}_{\mathbf{y}}(u))$;
			
			\item $\frac{1}{T}\sum_{t=1}^{T}\widetilde{\mathbf{R}}_t(\widehat{\bm{\alpha}})\widehat{\bm{\Omega}}^{-1}(\tau_t)\mathbf{u}_t \to_D \int_{0}^{1}\mathbf{W}_{d-r_0}(u)\otimes \mathrm{d}\mathbf{W}_{r_0}(u)$, where $\mathbf{W}_{r_0}(u) = \mathbf{W}_{r_0}(u,\bm{\alpha}^\top(u)\bm{\Omega}^{-1}(u)\bm{\alpha}(u))$ is independent of $\mathbf{W}_{d-r_0}(u)$;
			
			\item $\frac{1}{T}\sum_{t=1}^{T}\widetilde{\mathbf{R}}_t(\widehat{\bm{\alpha}})\widehat{\bm{\Omega}}^{-1}(\tau_t)\left(\bm{\alpha}(\tau_t) - \widehat{\bm{\alpha}}(\tau_t)\right)\bm{\beta}^{\top}\mathbf{y}_{t-1} = o_P(1)$;
			
			\item $\frac{1}{T}\sum_{t=1}^{T}\widetilde{\mathbf{R}}_t(\widehat{\bm{\alpha}})\widehat{\bm{\Omega}}^{-1}(\tau_t)\left(\bm{\Gamma}(\tau_t) - \widehat{\bm{\Gamma}}(\tau_t,\widehat{\bm{\alpha}}\bm{\beta}^\top)\right)\Delta \mathbf{x}_{t-1}=o_P(1)$, where
			{\small
			\begin{eqnarray*}
				\widehat{\bm{\Gamma}}(\tau,\widehat{\bm{\alpha}}\bm{\beta}^\top) =  \sum_{s=1}^{T}(\Delta\mathbf{y}_{s}-\widehat{\bm{\alpha}}(\tau_s)\bm{\beta}^\top\mathbf{y}_{s-1})\Delta\mathbf{x}_{s-1}^{*,\top} K\left(\frac{\tau_s-\tau}{h}\right) \cdot \left(\sum_{s=1}^{T}\Delta\mathbf{x}_{s-1}^*\Delta\mathbf{x}_{s-1}^{*,\top} K\left(\frac{\tau_s-\tau}{h}\right)\right)^{-1}\left[\begin{matrix}
					\mathbf{I}_{d(p_0-1)} \\
					\mathbf{0}_{d(p_0-1)}
				\end{matrix} \right].\nonumber
			\end{eqnarray*}}
		\end{enumerate}
	\end{lemma}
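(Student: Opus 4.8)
All five claims are the ingredients behind Theorem \ref{Thm3}, and my plan is to read them off the partialling-out (Frisch--Waugh--Lovell) representation $\widetilde{\mathbf{r}}_t(\bm{\alpha}) = \widetilde{\mathbf{R}}_t^\top(\bm{\alpha})\mathrm{vec}(\bm{\beta}^{*,\top})+\mathbf{u}_t^*$, in which $\widetilde{\mathbf{R}}_t(\bm{\alpha})$ is $\mathbf{R}_t(\bm{\alpha}) = \mathbf{y}_{t-1}^{(2)}\otimes\bm{\alpha}^\top(\tau_t)$ with its kernel-weighted local-linear projection on $\Delta\mathbf{X}$ removed. Throughout I would first pass to the local integrated VMA$(\infty)$ approximations of Lemma \ref{L1} (with $L^\delta$-errors $O(T^{-1/2})$), under which $\mathbf{y}_{t-1}^{(2)}$ is a partial sum of a locally stationary array while $\bm{\beta}^\top\mathbf{y}_{t-1}=\bm{\beta}^\top\mathbf{z}_{t-1}$ and $\Delta\mathbf{x}_{t-1}$ are locally stationary (I(0)) arrays, and replace $\widehat{\bm{\alpha}}(\tau_t)$, $\widehat{\bm{\Omega}}(\tau_t)$ by $\bm{\alpha}(\tau_t)$, $\bm{\Omega}(\tau_t)$ using the uniform rates in Theorem \ref{Thm1}; the remaining sums are then evaluated with Lemmas \ref{L4}--\ref{L7}.

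Part 1 is a uniform-in-$\tau$ bound on a kernel-weighted cross-moment between the I(0) array $\Delta\mathbf{x}_{t-1}$ and the I(1) array $\mathbf{y}_{t-1}$ (tensored with the bounded deterministic $\widehat{\bm{\alpha}}(\tau_t)$). A crude Cauchy--Schwarz argument only gives $O_P(1)$, so the sharp $\sqrt{\log T/(Th)}$ rate must be produced by the martingale-difference structure of $\Delta\mathbf{x}_{t-1}$ and of the increments of $\mathbf{y}_{t-1}$: after a Beveridge--Nelson/martingale decomposition of $\Delta\mathbf{x}_{t-1}$, the weighted sum splits into a martingale term controlled by the Nagaev-type and maximal bounds of Lemmas \ref{L5}--\ref{L6} and a telescoping term handled by Abel summation and the Lipschitz continuity of $K$, with uniformity over $\tau$ via a $1/T$-grid. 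In this argument the sum built from the true $\bm{\alpha}$ (and from the deterministic bias part of $\widehat{\bm{\alpha}}-\bm{\alpha}$) is of strictly smaller order $O_P(1/\sqrt{Th})$; the stated rate is in fact supplied by the stochastic part of $\widehat{\bm{\alpha}}-\bm{\alpha}$, whose uniform order $O_P(\sqrt{\log T/(Th)})$ (Theorem \ref{Thm1}.2) multiplies the $O_P(1)$ Cauchy--Schwarz bound $T^{-3/2}h^{-1}\sum_t\|\Delta\mathbf{x}_{t-1}\|\,\|\mathbf{y}_{t-1}\|\,K(\cdot)$.

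For parts 2 and 3 I would first show that the local-linear projection correction inside $\widetilde{\mathbf{R}}_t$ is asymptotically negligible in both normalisations: $\mathbf{R}^\top(\widehat{\bm{\alpha}})\mathbf{K}(\tau_t)\Delta\mathbf{X}^*$ is controlled by part 1 and is of strictly smaller order than $\mathbf{R}_t$ itself (it carries an I(0) factor $\Delta\mathbf{x}$ in place of an I(1) one), while $(\Delta\mathbf{X}^{*,\top}\mathbf{K}(\tau_t)\Delta\mathbf{X}^*)^{-1}=O_P((Th)^{-1})$ uniformly by Lemma \ref{L5}; the conditions $Th^2/(\log T)^2\to\infty$ and $Th^6\to0$ then kill the cross- and square-correction terms. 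It remains to treat $T^{-2}\sum_t(\mathbf{y}_{t-1}^{(2)}\mathbf{y}_{t-1}^{(2),\top})\otimes(\bm{\alpha}^\top(\tau_t)\bm{\Omega}^{-1}(\tau_t)\bm{\alpha}(\tau_t))$ and $T^{-1}\sum_t\mathbf{y}_{t-1}^{(2)}\otimes(\bm{\alpha}^\top(\tau_t)\bm{\Omega}^{-1}(\tau_t)\mathbf{u}_t)$. The first converges by the invariance principle of Lemma \ref{L1}.4 and Lemma \ref{L7}.1 together with the continuous-mapping theorem, the Riemann sum of $T^{-1}\mathbf{y}_{t-1}^{(2)}\mathbf{y}_{t-1}^{(2),\top}$ being taken against the continuous weight $\bm{\alpha}^\top(\cdot)\bm{\Omega}^{-1}(\cdot)\bm{\alpha}(\cdot)$. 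For the second, $\mathbf{y}_{t-1}^{(2)}$ is $\mathcal{F}_{t-1}$-measurable and $\bm{\alpha}^\top(\tau_t)\bm{\Omega}^{-1}(\tau_t)\mathbf{u}_t=\bm{\alpha}^\top(\tau_t)\bm{\Omega}^{-1}(\tau_t)\bm{\omega}(\tau_t)\bm{\varepsilon}_t$ is a martingale-difference array with conditional covariance $\bm{\alpha}^\top(\tau_t)\bm{\Omega}^{-1}(\tau_t)\bm{\alpha}(\tau_t)$, so its partial sums form a martingale and a martingale stochastic-integral limit theorem yields $\int_0^1\mathbf{W}_{d-r_0}(u)\otimes\mathrm{d}\mathbf{W}_{r_0}(u)$ with no one-sided correction. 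The independence of $\mathbf{W}_{d-r_0}$ and $\mathbf{W}_{r_0}$ follows from joint Gaussianity of the limit together with $\bm{\alpha}^\top(\tau)\bm{\Psi}_\tau^\top(1)\mathbf{P}_{\bm{\beta}_{\perp}}=\mathbf{0}$, itself a consequence of $\mathbf{P}_{\bm{\beta}_{\perp}}\bm{\Psi}_\tau(1)=\bm{\beta}_{\perp}[\bm{\alpha}_{\perp}^\top(\tau)\bm{\Gamma}_\tau(1)\bm{\beta}_{\perp}]^{-1}\bm{\alpha}_{\perp}^\top(\tau)$ in Lemma \ref{L1}.1 and $\bm{\alpha}^\top(\tau)\bm{\alpha}_{\perp}(\tau)=\mathbf{0}$: the innovations driving the partial sum $\mathbf{y}_{t-1}^{(2)}$ are asymptotically orthogonal to the martingale $\bm{\alpha}^\top\bm{\Omega}^{-1}\mathbf{u}_t$.

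Parts 4 and 5 assert the negligibility of the estimation-error terms, and I expect these to be the main obstacle. The factors $(\bm{\alpha}(\tau_t)-\widehat{\bm{\alpha}}(\tau_t))\bm{\beta}^\top\mathbf{y}_{t-1}$ and $(\bm{\Gamma}(\tau_t)-\widehat{\bm{\Gamma}}(\tau_t,\widehat{\bm{\alpha}}\bm{\beta}^\top))\Delta\mathbf{x}_{t-1}$ are products of an $O_P(h^2+\sqrt{\log T/(Th)})$ error with an I(0) array, and the naive bound $\sup_\tau\|\widehat{\bm{\alpha}}-\bm{\alpha}\|\cdot T^{-1}\sum_t\|\widetilde{\mathbf{R}}_t\|\,\|\bm{\beta}^\top\mathbf{y}_{t-1}\|=O_P\big((h^2+\sqrt{\log T/(Th)})\sqrt{T}\big)$ does \emph{not} vanish, so the I(1) factor inside $\widetilde{\mathbf{R}}_t$ must be kept paired with the I(0) factor. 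I would decompose $\widehat{\bm{\alpha}}-\bm{\alpha}$ (and $\widehat{\bm{\Gamma}}(\cdot,\widehat{\bm{\alpha}}\bm{\beta}^\top)-\bm{\Gamma}$, which in addition inherits a uniformly $O_P(h^2+\sqrt{\log T/(Th)})$ contribution from $(\bm{\alpha}-\widehat{\bm{\alpha}})(\cdot)\bm{\beta}^\top\mathbf{y}$ precisely because $\bm{\beta}^\top\mathbf{y}$ is I(0)) into a deterministic bias part, smooth in $\tau_t$ and of order $h^2$, and a kernel-smoothed-innovation part of order $\sqrt{\log T/(Th)}$. For the bias part the sum is $\tfrac12 h^2\widetilde{c}_2\,T^{-1}\sum_t\widetilde{\mathbf{R}}_t\widehat{\bm{\Omega}}^{-1}(\tau_t)\bm{\alpha}^{(2)}(\tau_t)\bm{\beta}^\top\mathbf{y}_{t-1}$, a kernel-free I(1)$\times$I(0) weighted sum that is $O_P(1)$ (it converges to a stochastic integral as in part 3), hence the term is $O_P(h^2)=o_P(1)$. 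For the stochastic part I would interchange the two summations, use that $\mathbf{u}_s$ is orthogonal to $\mathcal{F}_{s-1}$ so that the ``future'' innovations drop out of the cross-moment, invoke the maximal and quadratic-form bounds of Lemmas \ref{L5}--\ref{L7}, and exploit that $\widetilde{\mathbf{R}}_t$ is itself the kernel-weighted local-linear residual of $\mathbf{R}_t$ on $\Delta\mathbf{X}$, which annihilates the leading local-linear part of the error; the residual double sum is then $o_P(1)$. Part 5 is analogous, and making this last step rigorous --- retaining the I(1)$\times$I(0) cancellation while controlling the smoothing-induced double sum --- is the delicate point.
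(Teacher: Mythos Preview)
Your overall strategy matches the paper's proof closely: pass from $\widetilde{\mathbf{R}}_t$ to $\mathbf{R}_t$ using part~1, replace hats by truth, then invoke the invariance principle and the martingale stochastic-integral limit. The independence argument for $\mathbf{W}_{d-r_0}$ and $\mathbf{W}_{r_0}$ via $\bm{\alpha}_\perp^\top(\tau)\bm{\alpha}(\tau)=\mathbf{0}$ is exactly what the paper does, and your plan for parts~4--5 (bias/stochastic decomposition of $\widehat{\bm{\alpha}}-\bm{\alpha}$, then Burkholder for the kernel-smoothed innovation piece) is also the paper's route.

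There is, however, a gap in part~3. You write that you would ``replace $\widehat{\bm{\alpha}}(\tau_t),\widehat{\bm{\Omega}}(\tau_t)$ by $\bm{\alpha}(\tau_t),\bm{\Omega}(\tau_t)$ using the uniform rates in Theorem~\ref{Thm1}'' and then treat only $T^{-1}\sum_t\mathbf{y}_{t-1}^{(2)}\otimes\bm{\alpha}^\top(\tau_t)\bm{\Omega}^{-1}(\tau_t)\mathbf{u}_t$. But the naive uniform bound for, say, $T^{-1}\sum_t\mathbf{y}_{t-1}^{(2)}\otimes\bm{\alpha}^\top(\tau_t)\bigl(\widehat{\bm{\Omega}}^{-1}(\tau_t)-\bm{\Omega}^{-1}(\tau_t)\bigr)\mathbf{u}_t$ is $O_P(\sqrt{T})\cdot O_P\bigl(h^2+\sqrt{\log T/(Th)}\bigr)$, which diverges under the stated bandwidth conditions --- precisely the obstacle you correctly diagnose for parts~4--5. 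The paper handles these replacement errors (its terms $\mathbf{K}_{T,17,2}$--$\mathbf{K}_{T,17,4}$ and the projection piece $\mathbf{K}_{T,18}$) by the \emph{same} bias-plus-kernel-smoothed-innovation decomposition of $\widehat{\bm{\Omega}}-\bm{\Omega}$ and $\widehat{\bm{\alpha}}-\bm{\alpha}$ that you propose for parts~4--5, followed by Burkholder's inequality after splitting the inner kernel sum at $s=t$. So the technique you need is already in your toolkit; you just have to apply it in part~3 as well, not only in parts~4--5. (For part~2 the naive replacement \emph{does} work, because the normalisation there is $T^{-2}$.)

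A minor point on part~1: the sum built from the true $\bm{\alpha}$ is $O_P(1/\sqrt{Th})$ only \emph{pointwise} (Lemma~\ref{L4}(6)); uniformly in $\tau$ it is already $O_P(\sqrt{\log T/(Th)})$ by the argument of Lemma~\ref{L7}(4), so it is not of strictly smaller order than the contribution from the stochastic part of $\widehat{\bm{\alpha}}-\bm{\alpha}$. The paper obtains the stated rate from both pieces simultaneously.
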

	
	\begin{lemma}\label{L9}
		Let Assumptions \ref{Ass1}--\ref{Ass2} hold. Then
		\begin{enumerate}[wide, labelwidth=!, labelindent=0pt]
			\item if $p\geq p_0$, then $\text{RSS}(p)=\frac{1}{T}\sum_{t=1}^{T}\mathbf{u}_t^\top\mathbf{u}_t + O_P\left(c_T^2\right)$ with $c_T= h^2 + \sqrt{\frac{\log T}{Th}}$;
			\item if $p < p_0$, then $\text{RSS}(p)=\frac{1}{T}\sum_{t=1}^{T}\mathbf{u}_t^\top\mathbf{u}_t + c + o_P\left(1\right)$ with some constant $c >0$.
		\end{enumerate}
	\end{lemma}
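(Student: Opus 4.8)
Fix $p$ and set $\bm{\delta}_{p,t}:=\widehat{\mathbf{u}}_{p,t}-\mathbf{u}_t$, so that
\[
\text{RSS}(p)=\frac{1}{T}\sum_{t=1}^{T}\mathbf{u}_t^\top\mathbf{u}_t+\frac{2}{T}\sum_{t=1}^{T}\mathbf{u}_t^\top\bm{\delta}_{p,t}+\frac{1}{T}\sum_{t=1}^{T}\bm{\delta}_{p,t}^\top\bm{\delta}_{p,t}.
\]
When $p\ge p_0$ the true model is nested (take $\bm{\Gamma}_j(\cdot)\equiv\mathbf{0}$ for $j\ge p_0$), so $\bm{\delta}_{p,t}=-(\widehat{\bm{\Pi}}_p(\tau_t)-\bm{\Pi}(\tau_t))\mathbf{y}_{t-1}-(\widehat{\bm{\Gamma}}_p(\tau_t)-\bm{\Gamma}_p(\tau_t))\Delta\mathbf{x}_{p,t-1}$, where the subscript $p$ marks the fit with $p-1$ lagged differences; when $p<p_0$ the fit is misspecified and $\widehat{\bm{\Pi}}_p,\widehat{\bm{\Gamma}}_p$ target pseudo-true limits. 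I will show the two remainder sums are $O_P(c_T^2)$ in the first case, and that their sum equals $c+o_P(1)$ with $c>0$ in the second.

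For $p\ge p_0$, the main work is $\frac{1}{T}\sum_t|\bm{\delta}_{p,t}|^2$. I would split $\mathbf{y}_{t-1}=\overline{\bm{\beta}}(\bm{\beta}^\top\mathbf{y}_{t-1})+\overline{\bm{\beta}}_{\perp}(\bm{\beta}_{\perp}^\top\mathbf{y}_{t-1})$ into the cointegrated (locally stationary, $O_P(1)$) part $\bm{\beta}^\top\mathbf{y}_{t-1}$ and the integrated ($O_P(\sqrt{T})$) part $\bm{\beta}_{\perp}^\top\mathbf{y}_{t-1}$ using Lemma \ref{L1}, with $\max_{t}|\bm{\beta}_{\perp}^\top\mathbf{y}_{t-1}|=O_P(\sqrt{T\log T})$ from Lemma \ref{L6}. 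The structural fact I would exploit is that $\bm{\Pi}(\tau)=\bm{\alpha}(\tau)\bm{\beta}^\top$ and $\bm{\Pi}^{(2)}(\tau)=\bm{\alpha}^{(2)}(\tau)\bm{\beta}^\top$ both carry $\bm{\beta}^\top$ on the right, so $\bm{\Pi}(\tau)\overline{\bm{\beta}}_{\perp}=\mathbf{0}$ and the leading $O(h^2)$ bias of $\widehat{\bm{\Pi}}_p$ also annihilates the integrated part of $\mathbf{y}_{t-1}$: the bias contributes only $O_P(h^4)$ after averaging; the stochastic part of $\widehat{\bm{\Pi}}_p-\bm{\Pi}$ against the stationary part of $\mathbf{y}_{t-1}$ contributes $O_P(\log T/(Th))$ by Theorem \ref{Thm1}.2; and the stochastic part against the integrated part is controlled by combining the super-consistency rates of Lemmas \ref{L4}--\ref{L7} (the coefficient on the integrated regressor is estimated at rate $\mathbf{D}_T^{-1}=\mathrm{diag}\{1/(T\sqrt{h}),(1/(Th))\mathbf{I}\}$) with the fact that the rescaled integrated process $T^{-1/2}\bm{\beta}_{\perp}^\top\mathbf{y}_{\lfloor Tu\rfloor}$ varies only by $O(\sqrt{h})$ over a bandwidth window, which again yields an average of $O_P(\log T/(Th))$; the $\bm{\Gamma}$-terms are routine given Theorem \ref{Thm1}.2 and the moment bounds of Lemma \ref{L5}. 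Hence $\frac{1}{T}\sum_t|\bm{\delta}_{p,t}|^2=O_P(h^4+\log T/(Th))=O_P(c_T^2)$. For the cross sum $\frac{1}{T}\sum_t\mathbf{u}_t^\top\bm{\delta}_{p,t}$, Cauchy--Schwarz only gives $O_P(c_T)$, so I would argue directly: its bias part is $O_P(h^2/\sqrt{T})=o_P(c_T^2)$ by a martingale bound (using that $\{\mathbf{u}_t\}$ is a martingale difference sequence, the stationary regressors are $\mathcal{F}_{t-1}$-measurable and the integrated part is again killed), while for its stochastic part I would split off the own contribution of observation $t$ to the local-linear weight (of order $1/(Th)$), giving a bounded quadratic form of size $O_P(1/(Th))=O_P(c_T^2)$, and bound the remaining mean-zero second-order term by a variance computation (using the independence of $\{\mathbf{u}_t\}$ in Assumption \ref{Ass2}.2) showing it is $o_P(c_T^2)$. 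Combining, $\text{RSS}(p)=\frac{1}{T}\sum_t\mathbf{u}_t^\top\mathbf{u}_t+O_P(c_T^2)$.

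For $p<p_0$, $\widehat{\bm{\Pi}}_p(\cdot),\widehat{\bm{\Gamma}}_p(\cdot)$ converge uniformly to pseudo-true coefficients giving, $\tau$ by $\tau$, the best weighted-$L^2$ approximation of $\Delta\widetilde{\mathbf{y}}_t(\tau)$ by $\{\widetilde{\mathbf{y}}_{t-1}(\tau),\Delta\widetilde{\mathbf{y}}_{t-1}(\tau),\dots,\Delta\widetilde{\mathbf{y}}_{t-p+1}(\tau)\}$, so $\widehat{\mathbf{u}}_{p,t}=\mathbf{u}_t+\mathbf{v}_{p,t}+O_P(c_T)$, where $\mathbf{v}_{p,t}$ equals $\sum_{j=p}^{p_0-1}\bm{\Gamma}_j(\tau_t)\Delta\mathbf{y}_{t-j}$ minus its local projection onto the included regressors. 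A law of large numbers for locally stationary processes (of the kind underlying Lemma \ref{L5}) gives $\frac{1}{T}\sum_t|\mathbf{v}_{p,t}|^2\to_P c:=\int_0^1\mathrm{tr}(\widetilde{\bm{\Omega}}_p(\tau)-\bm{\Omega}(\tau))\,\mathrm{d}\tau$, with $\widetilde{\bm{\Omega}}_p(\tau)$ the residual covariance of the pseudo-true local model; and $c>0$ because $\bm{\Gamma}_{p_0-1}(\cdot)\not\equiv\mathbf{0}$ while the relevant second-moment matrices are nonsingular ($\bm{\Omega}(\tau)>0$ and the covariances in Lemma \ref{L4}), so on a set of $\tau$ of positive measure the omitted lags retain a component not spanned by the included regressors. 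The cross term $\frac{1}{T}\sum_t\mathbf{v}_{p,t}^\top\mathbf{u}_t$ is a martingale sum, hence $o_P(1)$, and the $O_P(c_T)$ estimation-error remainders are $o_P(1)$ by the bounds of the previous paragraph; therefore $\text{RSS}(p)=\frac{1}{T}\sum_t\mathbf{u}_t^\top\mathbf{u}_t+c+o_P(1)$.

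The main obstacle is the case $p\ge p_0$: the naive bounds overshoot (Cauchy--Schwarz on the cross term gives $O_P(c_T)$, and a crude matrix-norm bound on the integrated-direction term gives $O_P(\log T/(Th^2))$), so the argument must carefully interlock four ingredients --- the reduced-rank structure of $\bm{\Pi}$ and of its $O(h^2)$ bias, which removes the integrated component; the super-consistency rates of Lemmas \ref{L4}--\ref{L7}; the bandwidth-scale smoothness of the rescaled integrated process; and the independence of $\{\mathbf{u}_t\}$, used to control the dependence of the nonparametric estimator on $\mathbf{u}_t$ itself via a degenerate second-order sum. For $p<p_0$ the only subtle point is the strict positivity of $c$, which rests on a uniform-in-$\tau$ version of the statement that omitting a relevant regressor strictly increases the residual variance.
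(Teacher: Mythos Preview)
Your proposal is correct and follows essentially the same route as the paper. Both arguments decompose $\text{RSS}(p)$ into $\frac{1}{T}\sum_t\mathbf{u}_t^\top\mathbf{u}_t$ plus a cross term and a quadratic term; for $p\ge p_0$ the paper bounds the quadratic term by a terse reference to Theorem~\ref{Thm1}.2 (implicitly using the rotation $\mathbf{Q}^{*,-1}\bm{\Xi}_T^*\mathbf{D}_T^*$ from its proof, which is exactly the super-consistency/reduced-rank mechanism you spell out), and handles the cross term by splitting it into bias pieces $K_{T,22,1},K_{T,22,2}=O_P(T^{-1/2}h^2+h^2\sqrt{\log T/(Th)})$ and a stochastic piece $K_{T,22,3}$ which, after replacing the weighted sample design matrix by its limit, is written as $\frac{1}{T}\sum_t\mathbf{u}_t\mathbf{w}_{p,t-1}^\top\bm{\Sigma}_{\mathbf{w}}^{-1}(\tau_t)\bigl(\frac{1}{Th}\sum_s\mathbf{w}_{p,s-1}\mathbf{u}_s^\top K\bigr)$, separated into the diagonal $O_P(1/(Th))$ and an off-diagonal martingale sum of order $O_P(1/(T\sqrt{h}))$ via Burkholder's inequality --- precisely your ``own contribution plus mean-zero second-order term'' split. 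For $p<p_0$ the paper likewise passes to pseudo-true limits $\mathbf{B}_p(\tau)$ and obtains the positive constant from the quadratic form in the (positive-definite) second-moment matrix of $\mathbf{w}_{P,t-1}$, matching your argument. One small remark: the martingale-difference structure of $\{\bm{\varepsilon}_t\}$ from Assumption~\ref{Ass1}.3 already suffices for the Burkholder step, so you need not invoke the independence in Assumption~\ref{Ass2}.2 there.
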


	\begin{lemma}\label{L10}
		Let Assumptions \ref{Ass1}--\ref{Ass2} hold. 
		
		Define $w_{s,v}=\frac{1}{T\sqrt{h}}\int_{-1}^{1}K\left(u\right)K\left(u+\frac{s-v}{Th}\right)\mathrm{d}u$ and $\widetilde{U} = \sum_{s=2}^{T}\sum_{v=1}^{s-1}\mathbf{u}_s^\top\mathbf{W}_{s-1}^\top\mathbf{H}_s\mathbf{W}_{v-1}\mathbf{u}_v w_{s,v}$ with some deterministic unknown matrices $\{\mathbf{H}_s\}$ and $\mathbf{W}_{s} = \mathbf{w}_s\otimes \mathbf{I}_d$. Then
		$$
		\widetilde{U} \to_D N\left(0,\sigma_{\widetilde{U}}^2\right),
		$$
		where $ \sigma_{\widetilde{U}}^2 = \lim_{T \to \infty}\sum_{s=2}^{T}\mathrm{tr}\left\{E\left(\mathbf{H}_s^\top\mathbf{W}_{s-1}\mathbf{u}_s\mathbf{u}_s^\top\mathbf{W}_{s-1}^\top\mathbf{H}_s\right)E\left(\sum_{v=1}^{s-1}\mathbf{W}_{v-1}\mathbf{u}_v\mathbf{u}_v^\top\mathbf{W}_{v-1}^\top\right)w_{s,v}^2\right\}$.
	\end{lemma}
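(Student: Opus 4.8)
The plan is to exhibit $\widetilde{U}$ as the sum of a martingale difference array and to invoke a standard martingale central limit theorem (e.g., Hall and Heyde, 1980, Corollary 3.1). Set $\bm{\zeta}_s=\sum_{v=1}^{s-1}\mathbf{W}_{v-1}\mathbf{u}_v w_{s,v}$ and $\mathbf{a}_s=\mathbf{W}_{s-1}^\top\mathbf{H}_s\bm{\zeta}_s$, both of which are $\mathcal{F}_{s-1}$-measurable, so that $\widetilde{U}=\sum_{s=2}^{T}D_s$ with $D_s=\mathbf{u}_s^\top\mathbf{a}_s$. Since $\mathbf{u}_s=\bm{\omega}(\tau_s)\bm{\varepsilon}_s$ with $E(\bm{\varepsilon}_s\mid\mathcal{F}_{s-1})=\mathbf{0}$ and $E(\bm{\varepsilon}_s\bm{\varepsilon}_s^\top\mid\mathcal{F}_{s-1})=\mathbf{I}_d$ (Assumption \ref{Ass1}.3), $\{D_s,\mathcal{F}_s\}$ is a martingale difference array with $E(D_s^2\mid\mathcal{F}_{s-1})=\mathbf{a}_s^\top\bm{\Omega}(\tau_s)\mathbf{a}_s$. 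It therefore suffices to establish (i) the conditional variance convergence $V_T:=\sum_{s=2}^{T}\mathbf{a}_s^\top\bm{\Omega}(\tau_s)\mathbf{a}_s\to_P\sigma_{\widetilde{U}}^2$, and (ii) a conditional Lindeberg condition, which I would verify through a Lyapunov-type bound $\sum_{s=2}^{T}E|D_s|^{2+\epsilon}\to 0$ for some small $\epsilon>0$.

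Throughout I would first replace the process driving $\mathbf{W}_s=\mathbf{w}_s\otimes\mathbf{I}_d$ by its local approximation from Lemma \ref{L1}: the cointegrated combination $\bm{\beta}^\top\mathbf{y}_s$ and the lagged differences $\Delta\mathbf{x}_s$ are, up to errors controlled by Lemma \ref{L1}, locally stationary time-varying VMA$(\infty)$ processes with geometrically decaying coefficients $\{\bm{\Psi}_j(\cdot)\}$, and the substitution perturbs $\widetilde{U}$ only by $o_P(1)$; the geometric decay then supplies the weak-dependence bounds used below. For (ii): since $w_{s,v}=O((T\sqrt{h})^{-1})$ uniformly and is supported on $|s-v|\lesssim Th$, one has $\sum_{v<s}w_{s,v}^2=O(T^{-1})$ uniformly in $s$; combined with a Rosenthal/Burkholder inequality for the martingale difference sum $\bm{\zeta}_s$ and the $\delta>4$ moment bound of Assumption \ref{Ass1}.3, this yields $\|\bm{\zeta}_s\|_{2+\epsilon}=O(T^{-1/2})$, hence $E|D_s|^{2+\epsilon}=O(T^{-1-\epsilon/2})$ after a H\"older step, and $\sum_s E|D_s|^{2+\epsilon}=O(T^{-\epsilon/2})\to 0$.

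The substantive step is (i). Writing $V_T=\sum_s\mathrm{tr}\bigl(\mathbf{H}_s^\top\mathbf{W}_{s-1}\bm{\Omega}(\tau_s)\mathbf{W}_{s-1}^\top\mathbf{H}_s\,\bm{\zeta}_s\bm{\zeta}_s^\top\bigr)$ and expanding $\bm{\zeta}_s\bm{\zeta}_s^\top=\sum_{v,v'<s}\mathbf{W}_{v-1}\mathbf{u}_v\mathbf{u}_{v'}^\top\mathbf{W}_{v'-1}^\top w_{s,v}w_{s,v'}$, I would (a) show $V_T-E(V_T)\to_P 0$ by bounding $\mathrm{Var}(V_T)$: the martingale difference property of $\{\mathbf{u}_v\}$ kills the $v\neq v'$ terms in expectation, and in the variance the surviving quadruple index configurations are controlled by $\sum_v w_{s,v}^2=O(T^{-1})$, the cross-summability of $\sum_s w_{s,v}w_{s,v'}$, and the geometric decay that decouples $\mathbf{W}_{s-1}$ from $\mathbf{W}_{v-1}$ whenever $v$ is not close to $s$; then (b) identify $E(V_T)=\sum_s\sum_{v<s}w_{s,v}^2\,\mathrm{tr}\bigl(E(\mathbf{H}_s^\top\mathbf{W}_{s-1}\mathbf{u}_s\mathbf{u}_s^\top\mathbf{W}_{s-1}^\top\mathbf{H}_s)\,E(\mathbf{W}_{v-1}\mathbf{u}_v\mathbf{u}_v^\top\mathbf{W}_{v-1}^\top)\bigr)+o(1)$, using once more that $E(\mathbf{u}_v\mathbf{u}_{v'}^\top)=\mathbf{0}$ for $v\neq v'$ and the near-independence of distant blocks, so that the right-hand side is precisely the defining expression for $\sigma_{\widetilde{U}}^2$; and (c) verify its convergence to a finite limit by a Riemann-sum argument in which, after the change of variables $v=s-j$, the weights $\sum_j w_{s,s-j}^2$ concentrate on the scale $|j|\lesssim Th$ and produce the constant $C_B=\int_0^2\bigl(\int_{-1}^{1-v}K(u)K(u+v)\,\mathrm{d}u\bigr)^2\mathrm{d}v$ of Theorem \ref{Thm4}. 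I expect the control of $\mathrm{Var}(V_T)$, that is, the bookkeeping of which products $w_{s,v}w_{s,v'}w_{s',u}w_{s',u'}$ are non-negligible together with the weak-dependence cancellations among $\mathbf{W}_{s-1},\mathbf{W}_{v-1},\mathbf{W}_{s'-1},\mathbf{W}_{u-1}$, to be the main obstacle, whereas the Lyapunov bound (ii), the replacement by locally stationary VMA$(\infty)$ processes, and the identification of the limiting constant are comparatively routine.
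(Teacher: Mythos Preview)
Your overall strategy---writing $\widetilde U$ as a sum of martingale differences $D_s=\mathbf{u}_s^\top\mathbf{a}_s$ and invoking the martingale CLT---matches the paper, and your Lyapunov verification via $\sum_{v<s} w_{s,v}^2=O(T^{-1})$ together with Burkholder's inequality is essentially what the paper does (the paper uses exponent $\delta/2$ rather than $2+\epsilon$, but the mechanism is the same).

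The gap is in your treatment of the conditional variance. You propose to show $V_T-E(V_T)\to_P 0$ by a direct $\mathrm{Var}(V_T)$ bound and then to identify $E(V_T)$ with the factored form $\sum_s\sum_{v<s}w_{s,v}^2\,\mathrm{tr}\bigl(E[A_s]\,E[B_{s,v}]\bigr)$, where $A_s=\mathbf{H}_s^\top\mathbf{W}_{s-1}\bm{\Omega}(\tau_s)\mathbf{W}_{s-1}^\top\mathbf{H}_s$ and $B_{s,v}=\mathbf{W}_{v-1}\mathbf{u}_v\mathbf{u}_v^\top\mathbf{W}_{v-1}^\top$. But this factorization is precisely the delicate step, and the martingale property alone does not deliver it: $A_s$ depends on $\mathbf{w}_{s-1}$, which is correlated with every $\mathbf{u}_v$ for $v<s$, so the $v\neq v'$ cross-terms in $E(V_T)$ do \emph{not} vanish, nor do the diagonal terms split as $E[\mathrm{tr}(A_s B_{s,v})]=\mathrm{tr}(E[A_s]\,E[B_{s,v}])$. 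Your phrase ``near-independence of distant blocks'' points in the right direction but supplies no mechanism; note also that $w_{s,v}$ is supported on $|s-v|\lesssim Th\to\infty$, so the relevant blocks are not uniformly distant in any useful sense.

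The paper handles this by a different decomposition. It first shows, in $L^1$, that one may replace the random matrix $E(\mathbf{H}_s^\top\mathbf{W}_{s-1}\mathbf{u}_s\mathbf{u}_s^\top\mathbf{W}_{s-1}^\top\mathbf{H}_s\mid\mathcal{F}_{s-1})$ by its unconditional expectation; this is done via an $I$-truncation $\mathbf{h}_{s-1,I}^*=E(\mathbf{h}_{s-1}^*\mid\mathcal{F}_{s-I})$ combined with a telescoping over projection operators $\mathcal{P}_{s-j}$, with $I\to\infty$ slowly so that both the truncation error and the residual dependence of $E(\mathbf{w}_s^*\mid\mathcal{F}_{s-I})$ vanish. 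Only after this replacement---so that the outer factor is deterministic---does the paper reduce $\mathbf{h}_{s-1}^*\mathbf{h}_{s-1}^{*,\top}$ to its mean by splitting into diagonal and off-diagonal pieces and applying Burkholder bounds. The advantage of the paper's ordering is that the second step becomes a clean variance calculation with deterministic weights; your route keeps both random factors coupled throughout, so the quadruple-index bookkeeping you anticipate for $\mathrm{Var}(V_T)$ would in fact require the same decoupling machinery, and you have located the main obstacle in the wrong place (it is the factorization of $E(V_T)$, not the fluctuation $V_T-E(V_T)$).
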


	\section{Proofs of the Main Results}\label{AppPMain}
	
	\begin{proof}[Proof of Lemma \ref{L1}]
	\item
	\noindent (1)--(2). Write
	\begin{eqnarray*}
		\mathbf{C}_{\tau}(L)&=& [\bm{\Gamma}_\tau(L)(1-L) - \bm{\alpha}(\tau)\bm{\beta}^\top L] \mathbf{J}^{-1}\mathbf{J}\nonumber \\
		&=&[\bm{\Gamma}_\tau(L)\overline{\bm{\beta}}(1-L) - \bm{\alpha}(\tau)L,\bm{\Gamma}_\tau(L)\bm{\beta}_{\perp}(1-L)] \cdot[\bm{\beta},\bm{\beta}_{\perp}]^\top \nonumber\\
		&=&\mathbf{J}^{-1}\mathbf{J}[\bm{\Gamma}_\tau(L)\overline{\bm{\beta}}(1-L) - \bm{\alpha}(\tau)L,\bm{\Gamma}_\tau(L)\bm{\beta}_{\perp}] \cdot[\bm{\beta},\bm{\beta}_{\perp}(1-L)]^\top\nonumber\\
		&=&\mathbf{J}^{-1} \mathbf{B}_\tau^*(L) \mathbf{P}(L),\nonumber
	\end{eqnarray*}
	where $\mathbf{P}(L) = [\bm{\beta},\bm{\beta}_{\perp}(1-L)]^\top$. Clearly, $\mathrm{det}(\mathbf{P}(L))$ has exactly $d-r_0$ unit roots and, thus by Assumption \ref{Ass1}, $\mathbf{B}_\tau^*(L) \neq 0$ for all $|L|\leq 1$ must hold. We now can conclude that $\mathbf{B}_\tau^*(L)$ is an invertible operator.
	
	Define $\mathbf{z}_t:= \mathbf{J}^{-1}\mathbf{P}(L)\mathbf{y}_t= \mathbf{P}_{\bm{\beta}} \mathbf{y}_t + \mathbf{P}_{\bm{\beta_\perp}} \Delta \mathbf{y}_t$ and thus $\mathbf{J}^{-1} \mathbf{B}_{\tau_t}^*(L)\mathbf{J}\mathbf{z}_t = \mathbf{u}_t$. Note that  
	\begin{eqnarray*}
		\mathbf{B}_\tau(0) = \mathbf{J}^{-1}\mathbf{B}_\tau^*(0)\mathbf{J} = \bm{\Gamma}_{\tau}(0)\mathbf{P}_{\bm{\beta}}+\bm{\Gamma}_{\tau}(0)\mathbf{P}_{\bm{\beta}_\perp}= \bm{\Gamma}_{\tau}(0) = \mathbf{I}_d\nonumber
	\end{eqnarray*}		
	and thus $\mathbf{B}_\tau(L)$ has the representation $\mathbf{I}_d - \sum_{i=1}^{p}\mathbf{B}_i(\tau)L^i$. In addition, we have $\mathrm{det}(\mathbf{B}_\tau(L)) \neq 0$ for all $|L|\leq 1$ since the roots of $\mathrm{det}(\mathbf{B}_\tau^*(L))$  all lie outside the unit circle.
	
		Consider $\mathbf{P}_{\bm{\beta}_\perp}\bm{\Psi}_\tau(1)$. We have
	\begin{eqnarray*}
		\mathbf{P}_{\bm{\beta}_\perp}\bm{\Psi}_\tau(1) &=& \mathbf{P}_{\bm{\beta}_\perp}\mathbf{J}^{-1}\mathbf{B}_\tau^{*,-1}(1)\mathbf{J}=\bm{\beta}_{\perp}[\mathbf{0}_{d-r_0\times r_0}, \mathbf{I}_{d-r_0}] (\mathbf{J}^{-1}\mathbf{B}_\tau^{*}(1))^{-1}\nonumber\\
		&=&\bm{\beta}_{\perp}[\mathbf{0}_{d-r_0\times r_0}, \mathbf{I}_{d-r_0}] [-\bm{\alpha}(\tau),\bm{\Gamma}_{\tau}(1)\bm{\beta}_{\perp}]^{-1}\nonumber\\
		&=&\bm{\beta}_{\perp} [\bm{\alpha}_{\perp}^\top(\tau)\bm{\Gamma}_{\tau}(1) \bm{\beta}_{\perp}]^{-1} \bm{\alpha}_{\perp}^\top(\tau),\nonumber
	\end{eqnarray*}
where the last step follows from
	
	\begin{eqnarray*}
	[-\bm{\alpha}(\tau),\bm{\Gamma}_{\tau}(1)\bm{\beta}_{\perp}]^{-1} = \left[\begin{matrix}
		(\bm{\alpha}^\top(\tau)\bm{\alpha}(\tau))^{-1}\bm{\alpha}^\top(\tau)\left\{\bm{\Gamma}_{\tau}(1)\bm{\beta}_{\perp}[\bm{\alpha}_{\perp}^\top(\tau) \bm{\Gamma}_{\tau}(1)\bm{\beta}_{\perp}]^{-1}\bm{\alpha}_{\perp}^\top(\tau)-\mathbf{I}_d\right\}  \\
		[\bm{\alpha}_{\perp}^{\top}(\tau)\bm{\Gamma}_{\tau}(1)\bm{\beta}_{\perp}]^{-1}\bm{\alpha}_{\perp}^{\top}(\tau)
	\end{matrix} \right].\nonumber
	\end{eqnarray*}
	
Recall that $\mathbf{I}_d =  \mathbf{P}_{\bm{\beta}} + \mathbf{P}_{\bm{\beta}_\perp}$, so $\Delta \mathbf{y}_t = \mathbf{P}_{\bm{\beta}}\Delta \mathbf{y}_t + \mathbf{P}_{\bm{\beta}_\perp}\Delta \mathbf{y}_t$. In connection with $\mathbf{z}_t = \mathbf{P}_{\bm{\beta}} \mathbf{y}_t + \mathbf{P}_{\bm{\beta}_\perp} \Delta \mathbf{y}_t$, we have $\Delta \mathbf{y}_t = \mathbf{z}_t - \mathbf{P}_{\bm{\beta}}\mathbf{z}_{t-1}$, $\overline{\bm{\beta}}_{\perp}^\top \Delta \mathbf{y}_t =\overline{\bm{\beta}}_{\perp}^\top \mathbf{z}_t$ and $\Delta \mathbf{y}_t = \mathbf{P}_{\bm{\beta}_\perp}  \mathbf{z}_t + \mathbf{P}_{\bm{\beta}}\Delta \mathbf{y}_t$. Hence,
	\begin{eqnarray*}
		\mathbf{y}_t = \mathbf{y}_0 + \sum_{i=1}^{t}\Delta \mathbf{y}_i = \mathbf{y}_0 +  \mathbf{P}_{\bm{\beta}_\perp}\sum_{j=1}^{t}\mathbf{z}_j + \mathbf{P}_{\bm{\beta}}\mathbf{y}_t - \mathbf{P}_{\bm{\beta}} \mathbf{y}_0 = \mathbf{P}_{\bm{\beta}_\perp}\sum_{j=1}^{t}\mathbf{z}_j + \mathbf{P}_{\bm{\beta}}\mathbf{z}_t + \mathbf{P}_{\bm{\beta}_\perp} \mathbf{y}_0.\nonumber
	\end{eqnarray*}
	The proof of parts (1)-(2) is now completed.
	
	\medskip
	
	\noindent (3). Define $\widetilde{\mathbf{z}}_t(\tau):= \mathbf{J}^{-1}\mathbf{P}(L)\widetilde{\mathbf{y}}_t(\tau) = \mathbf{P}_{\bm{\beta}} \widetilde{\mathbf{y}}_t(\tau) + \mathbf{P}_{\bm{\beta}_\perp} \Delta \widetilde{\mathbf{y}}_t(\tau)$. Write
	\begin{eqnarray*}
		\mathbf{C}_{\tau}(L)\widetilde{\mathbf{y}}_t(\tau) &=& \mathbf{v}(\tau) + \widetilde{\mathbf{u}}_t(\tau), \nonumber\\
		\mathbf{J}^{-1} \mathbf{B}_\tau^*(L) \mathbf{P}_\tau(L)\widetilde{\mathbf{y}}_t(\tau) &=& \widetilde{\mathbf{u}}_t(\tau), \nonumber\\
		\mathbf{J}^{-1}\mathbf{B}_\tau^*(L)\mathbf{J}\widetilde{\mathbf{z}}_t(\tau)&=& \widetilde{\mathbf{u}}_t(\tau),\nonumber \\
		\widetilde{\mathbf{z}}_t(\tau)=\mathbf{B}_\tau^{-1}(L)\widetilde{\mathbf{u}}_t(\tau)&=&\sum_{j=0}^{\infty}\bm{\Psi}_j(\tau)\widetilde{\mathbf{u}}_{t-j}(\tau).\nonumber
	\end{eqnarray*}
	Hence, $\widetilde{\mathbf{z}}_t(\tau)$ is a stationary VAR$(p)$ process which has an MA$(\infty)$ representation. Similar to the proof of part (1) and additionally by BN decomposition, we have
	\begin{eqnarray*}
		\Delta \widetilde{\mathbf{y}}_t(\tau) &=& \mathbf{P}_{\bm{\beta}}\Delta \widetilde{\mathbf{y}}_t(\tau) + \mathbf{P}_{\bm{\beta}_\perp}\Delta \widetilde{\mathbf{y}}_t(\tau)= \mathbf{P}_{\bm{\beta}_\perp}\widetilde{\mathbf{z}}_t(\tau) + \mathbf{P}_{\bm{\beta}}\Delta \widetilde{\mathbf{y}}_t(\tau)\nonumber\\
		&=& \mathbf{P}_{\bm{\beta}_\perp}(\bm{\Psi}_\tau(1)\widetilde{\mathbf{u}}_t(\tau) - \bm{\Psi}_\tau^*(L) \Delta\widetilde{\mathbf{u}}_t(\tau)) + \mathbf{P}_{\bm{\beta}}\Delta \widetilde{\mathbf{y}}_t(\tau),\nonumber
	\end{eqnarray*}
	where $\bm{\Psi}_\tau(L):= \sum_{j=0}^{\infty}\bm{\Psi}_j(\tau) L^j =\mathbf{B}_\tau^{-1}(L)$, $\bm{\Psi}_\tau^*(L) = \sum_{j=0}^\infty \bm{\Psi}_j^*(\tau)L^j$ and $\bm{\Psi}_j^*(\tau) = \sum_{i=j+1}^{\infty}\bm{\Psi}_i(\tau)$.
	
	Hence, we have
	\begin{eqnarray*}
		\widetilde{\mathbf{y}}_t(\tau) &=& \mathbf{P}_{\bm{\beta}_\perp}\left[\bm{\Psi}_\tau(1)\sum_{i=1}^{t}\widetilde{\mathbf{u}}_i(\tau) - \bm{\Psi}_\tau^*(L)\widetilde{\mathbf{u}}_t(\tau) +\bm{\Psi}_\tau^*(L)\widetilde{\mathbf{u}}_0(\tau)\right]  + \mathbf{P}_{\bm{\beta}} \widetilde{\mathbf{z}}_t(\tau)+\mathbf{P}_{\bm{\beta}_\perp} \widetilde{\mathbf{y}}_0(\tau)\nonumber\\
		&=&\mathbf{P}_{\bm{\beta}_\perp}\bm{\Psi}_\tau(1)\sum_{i=1}^{t}\widetilde{\mathbf{u}}_i(\tau) +  \mathbf{P}_{\bm{\beta}} \bm{\Psi}_\tau(1)\widetilde{\mathbf{u}}_t(\tau) - \bm{\Psi}_\tau^*(L) \widetilde{\mathbf{u}}_t  + \mathbf{P}_{\bm{\beta}}\bm{\Psi}_\tau^*(L)\widetilde{\mathbf{u}}_{t-1}(\tau)+\widetilde{\mathbf{y}}_0^*(\tau),\nonumber
	\end{eqnarray*}
	where $\widetilde{\mathbf{y}}_0^*(\tau) =\mathbf{P}_{\bm{\beta}_\perp}\bm{\Psi}_\tau^*(L)\widetilde{\mathbf{u}}_0(\tau) +\mathbf{P}_{\bm{\beta}_\perp} \widetilde{\mathbf{y}}_0(\tau)$.

	By the proof of   (1)-(2), we have $\Delta \mathbf{y}_t =\mathbf{z}_t - \mathbf{P}_{\bm{\beta}}\mathbf{z}_{t-1}$ and $\mathbf{z}_t =  \sum_{i=1}^{p}\mathbf{B}_i(\tau_t)\mathbf{z}_{t-i} + \mathbf{u}_t$. Let $\mathbf{B}(\tau)$ be the companion matrix of $\mathbf{z}_t$ such that 
	\begin{eqnarray*}
	\mathbf{B}(\tau)=\left[\begin{matrix}
		\mathbf{B}_{1}(\tau) & \cdots & \mathbf{B}_{p_0-1}(\tau) & \mathbf{B}_{p_0}(\tau)  \\
		\mathbf{I}_d & \cdots& \mathbf{0}_d & \mathbf{0}_d\\
		\vdots & \ddots&\vdots & \vdots\\
		\mathbf{0}_d &\cdots &\mathbf{I}_d & \mathbf{0}_d\\
	\end{matrix} \right]\nonumber
	\end{eqnarray*}
	and thus $\mathbf{z}_t = \sum_{j=0}^{\infty}\overline{\mathbf{J}}\prod_{i=0}^{j-1}\mathbf{B}(\tau_{t-m})\overline{\mathbf{J}}^\top \mathbf{v}(\tau_{t-j})+ \sum_{j=0}^{\infty}\overline{\mathbf{J}}\prod_{i=0}^{j-1}\mathbf{B}(\tau_{t-m})\overline{\mathbf{J}}^\top \bm{\omega}(\tau_{t-j})\bm{\varepsilon}_{t-j}$, where $\overline{\mathbf{J}}=[\mathbf{I}_{d},\mathbf{0}_{d\times d(p_0-1)}]$.
	
	Define $\rho_B = \max\{|\rho|: \rho\ \text{is the eigenvalue of}\ \mathbf{B}(\tau)\}$. Since $\mathrm{det}(\mathbf{B}_\tau(L)) \neq 0$ for all $|L|\leq 1$, we have $\rho_{B} < 1$. Similar to the proof of Proposition 2.4 in \cite{dahlhaus2009empirical}, we have $\max_{t\geq 1}|\prod_{m=0}^{j-1}\mathbf{B}(\tau_{t-m}) | \leq M \rho_B^j$. Note that $\Delta \widetilde{\mathbf{y}}_t(\tau) = \widetilde{\mathbf{z}}_t(\tau) - \mathbf{P}_{\bm{\beta}}\widetilde{\mathbf{z}}_{t-1}(\tau)$ and $\widetilde{\mathbf{z}}_t(\tau) = \sum_{i=1}^{p}\mathbf{B}_i(\tau)\widetilde{\mathbf{z}}_{t-i}(\tau) + \widetilde{\mathbf{u}}_t(\tau) = \sum_{j=0}^{\infty}\overline{\mathbf{J}}\mathbf{B}^j(\tau)\overline{\mathbf{J}}^\top\bm{\omega}(\tau)\bm{\varepsilon}_{t-j}$. 
	For any conformable matrices $\{\mathbf{A}_i\}$ and $\{\mathbf{B}_i\}$, since
\begin{eqnarray*}
\prod_{i=1}^{r}\mathbf{A}_i-\prod_{i=1}^{r}\mathbf{B}_i=\sum_{j=1}^{r}\left( \prod_{k=1}^{j-1}\mathbf{A}_k \right)\left(\mathbf{A}_j-\mathbf{B}_j\right)\left(\prod_{k=j+1}^{r}\mathbf{B}_k\right),\nonumber
\end{eqnarray*}	
we then obtain 
	\begin{eqnarray*}
		&& \left|\overline{\mathbf{J}}\prod_{i=0}^{j-1}\mathbf{B}(\tau_{t-i})\overline{\mathbf{J}}^\top\bm{\omega}(\tau_{t-j})-\overline{\mathbf{J}}\mathbf{B}^j(\tau_t)\overline{\mathbf{J}}^\top\bm{\omega}(\tau_t)\right|\nonumber\\
		& = & \left| \left(\overline{\mathbf{J}}\prod_{i=0}^{j-1}\mathbf{B}(\tau_{t-i})\overline{\mathbf{J}}^\top-\overline{\mathbf{J}}\mathbf{B}^j(\tau_t)\overline{\mathbf{J}}^\top\right)\bm{\omega}(\tau_t)+\overline{\mathbf{J}}\prod_{i=0}^{j-1}\mathbf{B}(\tau_{t-i})\overline{\mathbf{J}}^\top\left( \bm{\omega}(\tau_{t-j})-\bm{\omega}(\tau_t)\right)\right|\nonumber\\
		&\leq& M\sum_{i=1}^{j-1}\left|\mathbf{B}^i(\tau_t)(\mathbf{B}(\tau_{t-i})-\mathbf{B}(\tau_t))\prod_{k=i+1}^{j-1}\mathbf{B}(\tau_{t-k}) \right|+M \rho_B^j \frac{j}{T}\nonumber\\
		&\leq& M\sum_{i=1}^{j-1}\frac{i}{T}\rho_B^{j-1}+M \rho_B^j \frac{j}{T}.\nonumber
	\end{eqnarray*}
	In addition, by Minkowski inequality, we have
	\begin{eqnarray*}
		\left\|\mathbf{z}_t-\widetilde{\mathbf{z}}_t(\tau_t)\right\|_\delta&\leq & \sum_{j=1}^{\infty}\left|\overline{\mathbf{J}}\prod_{i=0}^{j-1}\mathbf{B}(\tau_{t-i})\overline{\mathbf{J}}^\top\bm{\omega}(\tau_{t-j})-\overline{\mathbf{J}}\mathbf{B}^j(\tau_t)\overline{\mathbf{J}}^\top\bm{\omega}(\tau_t)\right| \cdot \left\|\bm{\varepsilon}_t\right\|_\delta \nonumber\\
		&\leq& M\sum_{j=1}^{\infty} \left(\sum_{i=1}^{j-1}\frac{i}{T}\rho_B^{j-1}+ \rho_B^j \frac{j}{T}\right) =O\left(T^{-1}\right).\nonumber
	\end{eqnarray*}
	Similarly, we have $ \| \widetilde{\mathbf{z}}_t(\tau)- \widetilde{\mathbf{z}}_t(\tau')\|_\delta = O(|\tau-\tau'|)$.
	
	By the proof of (1)-(2), we have $\Delta \mathbf{y}_t = \mathbf{z}_t - \mathbf{P}_{\bm{\beta}}\mathbf{z}_{t-1}$ and $\Delta \widetilde{\mathbf{y}}_t(\tau) = \widetilde{\mathbf{z}}_t(\tau) - \mathbf{P}_{\bm{\beta}}\widetilde{\mathbf{z}}_{t-1}(\tau)$. Then, since $\max_t \|\mathbf{z}_t-\widetilde{\mathbf{z}}_t(\tau_t)\|_\delta  = O(1/T)$ and $\|\widetilde{\mathbf{z}}_t(\tau)- \widetilde{\mathbf{z}}_t(\tau')\|_\delta = O(|\tau-\tau'|)$ for any $\tau,\tau' \in [0,1]$, we have $\max_t \|\Delta\mathbf{y}_t-\Delta\widetilde{\mathbf{y}}_t(\tau_t)\|_\delta = O(1/T)$ and $\| \Delta\widetilde{\mathbf{y}}_t(\tau)-\Delta \widetilde{\mathbf{y}}_t(\tau')\|_\delta = O(|\tau-\tau'|)$.
	
Since $\Delta \mathbf{y}_t = \mathbf{P}_{\bm{\beta}_\perp}  \mathbf{z}_t + \mathbf{P}_{\bm{\beta}}\Delta \mathbf{y}_t$ and $\bm{\beta}^\top \mathbf{y}_t=\bm{\beta}^\top\mathbf{z}_t$, we have
	\begin{eqnarray*}
	\mathbf{y}_t = \mathbf{y}_0 +  \mathbf{P}_{\bm{\beta}_\perp}\sum_{j=1}^{t}\mathbf{z}_j + \mathbf{P}_{\bm{\beta}}\mathbf{z}_t -\mathbf{P}_{\bm{\beta}} \mathbf{z}_0\nonumber
	\end{eqnarray*}
	and 
\begin{eqnarray*}
\sum_{j=1}^{t}\Delta\widetilde{\mathbf{y}}_j(\tau_j) =\mathbf{P}_{\bm{\beta}_\perp}\sum_{j=1}^{t}\widetilde{\mathbf{z}}_j(\tau_j) + \mathbf{P}_{\bm{\beta}}\widetilde{\mathbf{z}}_t(\tau_t) - \mathbf{P}_{\bm{\beta}}\widetilde{\mathbf{z}}_0(\tau_1)+ \mathbf{P}_{\bm{\beta}}\sum_{j=1}^{t-1}(\widetilde{\mathbf{z}}_j(\tau_j)-\widetilde{\mathbf{z}}_j(\tau_{j+1})).\nonumber
\end{eqnarray*}

	Hence, by part (1) we have
	\begin{eqnarray*}
	\|\mathbf{y}_t - \mathbf{y}_0 - \sum_{j=1}^{t} \Delta\widetilde{\mathbf{y}}_t(\tau_j)\|_\delta &\leq& \| \mathbf{P}_{\bm{\beta}_\perp}\sum_{j=1}^{t}(\mathbf{z}_j-\widetilde{\mathbf{z}}_j(\tau_j))\|_\delta + \|\mathbf{P}_{\bm{\beta}}\sum_{j=1}^{t-1}(\widetilde{\mathbf{z}}_j(\tau_j)-\widetilde{\mathbf{z}}_j(\tau_{j+1}))\|_\delta + O(1/T)\nonumber\\
	&=&I_{T,1} + I_{T,2} + O(1/T).\nonumber
	\end{eqnarray*}

For $I_{T,1}$, as $\{(\mathbf{J}\prod_{i=0}^{j-1}\mathbf{B}(\tau_{t-i})\mathbf{J}^\top \bm{\omega}(\tau_{t-j}) - \mathbf{J}\mathbf{B}^j(\tau_t)\mathbf{J}^\top\bm{\omega}(\tau_t))\bm{\varepsilon}_{t-j}\}_{t}$ is a sequence of martingale differences, by Burkholder's inequality, we have
\begin{eqnarray*}
\left|\left|\sum_{j=1}^{t} (\mathbf{z}_j - \widetilde{\mathbf{z}}_j(\tau_j)) \right|\right|_\delta 	&=&\left|\left|\sum_{j=1}^{t}\sum_{k=0}^{\infty} (\mathbf{J}\prod_{i=0}^{k-1}\mathbf{B}(\tau_{j-i})\mathbf{J}^\top \bm{\omega}(\tau_{j-k}) - \mathbf{J}\mathbf{B}^k(\tau_j)\mathbf{J}^\top\bm{\omega}(\tau_j))\bm{\varepsilon}_{j-k} \right|\right|_\delta \nonumber\\
	&\leq &\sum_{k=0}^{\infty} \left|\left|\sum_{j=1}^{t} (\mathbf{J}\prod_{i=0}^{k-1}\mathbf{B}(\tau_{j-i})\mathbf{J}^\top \bm{\omega}(\tau_{j-k}) - \mathbf{J}\mathbf{B}^k(\tau_j)\mathbf{J}^\top\bm{\omega}(\tau_j))\bm{\varepsilon}_{j-k} \right|\right|_\delta \nonumber \\
	&\leq &\sum_{k=0}^{\infty}O(\sqrt{T}) \max_j \left|\mathbf{J}\prod_{i=0}^{k-1}\mathbf{B}(\tau_{j-i})\mathbf{J}^\top \bm{\omega}(\tau_{j-k}) - \mathbf{J}\mathbf{B}^k(\tau_j)\mathbf{J}^\top\bm{\omega}(\tau_j)\right| = O(1/\sqrt{T}).\nonumber
\end{eqnarray*}

Hence, $I_{T,1} = O(1/\sqrt{T})$. Similarly, we have $I_{T,2} =O(1/\sqrt{T})$. The proof of part (3) is now completed.

\medskip

\noindent (4). By (1)-(3) of this lemma, we have $\sup_{u\in[0,1]}|\mathbf{z}_{\lfloor Tu \rfloor}|\leq \left\{\sum_{t=1}^{T}|\mathbf{z}_t|^{\delta}\right\}^{1/\delta}= O_P(T^{1/\delta})$ and
\begin{eqnarray*}
\sup_{u\in[0,1]} \left|\sum_{t=1}^{\lfloor Tu \rfloor}(\mathbf{z}_t-\widetilde{\mathbf{z}}_t(\tau_t))|\leq \sum_{t=1}^{T}|\mathbf{z}_t-\widetilde{\mathbf{z}}_t(\tau_t)\right| =O_P(1).\nonumber
\end{eqnarray*}
Then, we have uniformly over $u\in [0,1]$
\begin{eqnarray*}
	T^{-1/2} \mathbf{y}_{\lfloor Tu \rfloor} &=&\mathbf{P}_{\bm{\beta}_\perp} T^{-1/2}\sum_{t=1}^{\lfloor Tu \rfloor}\mathbf{z}_t + T^{-1/2}\mathbf{P}_{\bm{\beta}}\mathbf{z}_{\lfloor Tu \rfloor} + T^{-1/2}\mathbf{P}_{\bm{\beta}_\perp} \mathbf{y}_0\nonumber \\
	&=& \mathbf{P}_{\bm{\beta}_\perp} T^{-1/2}\sum_{t=1}^{\lfloor Tu \rfloor} \sum_{j=0}^{\infty}\bm{\Psi}_j(\tau_t)\bm{\omega}(\tau_t) \bm{\varepsilon}_{t-j} + O_P(T^{1/\delta-1/2})\nonumber
\end{eqnarray*}
with $|\bm{\Psi}_j(\tau)|$ converging to zero exponentially. By BN decomposition, we have $\bm{\Psi}_\tau(L):= \sum_{j=0}^{\infty}\bm{\Psi}_j(\tau) L^j = \bm{\Psi}_\tau(1) - (1-L) \bm{\Psi}_\tau^*(L)$ with $\bm{\Psi}_\tau^*(L) = \sum_{j=0}^\infty \bm{\Psi}_j^*(\tau) L^j$ and $\bm{\Psi}_j^*(\tau) = \sum_{i=j+1}^{\infty}\bm{\Psi}_i(\tau)$. Hence,
\begin{eqnarray*}
	&&T^{-1/2}\sum_{t=1}^{\lfloor Tu\rfloor} \sum_{j=0}^{\infty}\bm{\Psi}_j(\tau_t)\bm{\omega}(\tau_t) \bm{\varepsilon}_{t-j}\nonumber\\
	&=&T^{-1/2}\sum_{t=1}^{\lfloor Tu \rfloor}\bm{\Psi}_{\tau_t}(1)\bm{\omega}(\tau_t)\bm{\varepsilon}_t+T^{-1/2}\bm{\Psi}_{\tau_1}^*(L)\bm{\omega}(\tau_1)\bm{\varepsilon}_{0}-T^{-1/2}\bm{\Psi}_{\tau_{\lfloor Tu \rfloor}}^*(L)\bm{\omega}(\tau_{\lfloor Tu \rfloor})\bm{\varepsilon}_{\lfloor Tu \rfloor} \nonumber\\
	&&+ T^{-1/2} \sum_{t=1}^{\lfloor Tu \rfloor-1}\left(\bm{\Psi}_{\tau_{t+1}}(L)\bm{\omega}(\tau_{t+1})-\bm{\Psi}_{\tau_t}(L)\bm{\omega}(\tau_t)\right)\bm{\varepsilon}_t \nonumber\\
	&=& \mathbf{I}_{T,3} + \mathbf{I}_{T,4} + \mathbf{I}_{T,5} + \mathbf{I}_{T,6}.\nonumber
\end{eqnarray*}

By the usual functional central limit theory for martingale difference sequences (e.g., Theorem 4.1 in \citealp{hall2014martingale}), we have 
\begin{eqnarray*}
\mathbf{I}_{T,3} \Rightarrow \int_{0}^{u}\bm{\Psi}_s(1) \bm{\omega}(s)\mathrm{d}\mathbf{W}_d(s).\nonumber
\end{eqnarray*}
Note that the Lindeberg condition can be verified as $\|\bm{\Psi}_{\tau_t}(1)\bm{\omega}(\tau_t)\bm{\varepsilon}_t\|_\delta <\infty$ for some $\delta >2$, and the convergence of conditional variance can be verified as $E(\bm{\varepsilon}_t\bm{\varepsilon}_t^\top \mid \mathcal{F}_{t-1}) = \mathbf{I}_d$ a.s.

In addition, we have $\mathbf{I}_{T,4} = O_P(T^{-1/2})$ uniformly over $u \in [0,1]$ as 
\begin{eqnarray*}
\left\|\bm{\Psi}_{\tau_1}^*(L)\bm{\omega}(\tau_1)\bm{\varepsilon}_{0}\right\|_\delta \leq O(1) \sup_{\tau\in[0,1]}\sum_{j=1}^\infty j|\bm{\Psi}_j(\tau)| < \infty.\nonumber
\end{eqnarray*}
For $\mathbf{I}_{T,5}$, we have
\begin{eqnarray*}
\sup_{u\in[0,1]}\left|T^{-1/2}\bm{\Psi}_{\tau_{\lfloor Tu \rfloor}}^*(L)\bm{\omega}(\tau_{\lfloor Tu\rfloor})\bm{\varepsilon}_{\lfloor Tu \rfloor}\right| \leq T^{-1/2}\sum_{t=1}^{T}|\bm{\Psi}_{\tau_{t}}^*(L)\bm{\omega}(\tau_{t})\bm{\varepsilon}_{t}|=O_P(T^{1/\delta-1/2}).\nonumber
\end{eqnarray*}

Finally, for $\mathbf{I}_{T,6}$, as $\sum_{t=1}^{T-1}\left\|\left(\bm{\Psi}_{\tau_{t+1}}(L)\bm{\omega}(\tau_{t+1})-\bm{\Psi}_{\tau_t}(L)\bm{\omega}(\tau_t)\right)\bm{\varepsilon}_t\right\|_\delta< \infty$, we have
\begin{eqnarray*}
	&& \sup_{u\in[0,1]}\|\mathbf{I}_{T,6}\| \leq \sup_{u\in[0,1]}\frac{1}{\sqrt{T}}\sum_{t=1}^{\lfloor Tu \rfloor-1}\left\|\left(\widetilde{\mathbb{B}}_{t+1}(L)-\widetilde{\mathbb{B}}_{t}(L)\right)\bm{\epsilon}_t\right\|\nonumber\\
	&&\leq \frac{1}{\sqrt{T}}\sum_{t=1}^{T-1}\left\|\left(\bm{\Psi}_{\tau_{t+1}}(L)\bm{\omega}(\tau_{t+1})-\bm{\Psi}_{\tau_t}(L)\bm{\omega}(\tau_t)\right)\bm{\varepsilon}_t\right\| = O(1/\sqrt{T}).\nonumber
\end{eqnarray*}

The proof of part (4) is now completed.
	\end{proof}

	\begin{proof}[Proof of Theorem \ref{Thm1}]
		\item
		\noindent (1). Recall we have defined many relevant notations in Appendix \ref{AppNot}. Then it is easy to know that
		$$
		\mathbf{Q}^{-1}(\tau) = \left[\bm{\beta}_{\perp}(\bm{\alpha}_{\perp}^\top(\tau)\bm{\beta}_{\perp})^{-1},\bm{\alpha}(\tau)(\bm{\beta}^\top\bm{\alpha}(\tau))^{-1}\right].
		$$
Let
		$$
		\mathbf{M}(\tau_t) = [\bm{\Pi}(\tau_t),\bm{\Gamma}(\tau_t)] - [\bm{\Pi}(\tau),\bm{\Gamma}(\tau)] - [\bm{\Pi}^{(1)}(\tau),\bm{\Gamma}^{(1)}(\tau)](\tau_t-\tau) -\frac{1}{2}[\bm{\Pi}^{(2)}(\tau),\bm{\Gamma}^{(2)}(\tau)](\tau_t-\tau)^2.
		$$
Also, define $\mathbf{h}_t^* = \mathbf{h}_t \otimes [1,\frac{\tau_{t+1}-\tau}{h}]^\top$, $\mathbf{V}_{T}(\tau) = \left[\mathbf{V}_{T,0}(\tau), \mathbf{V}_{T,1}(\tau)\right]$ and 
		$$
		\mathbf{S}_T(\tau) = \left[\begin{matrix}
			\mathbf{S}_{T,0}(\tau) & \mathbf{S}_{T,1}(\tau)\\ \mathbf{S}_{T,1}(\tau) & \mathbf{S}_{T,2}(\tau)
		\end{matrix} \right].
		$$

We now proceed, and write
		\begin{eqnarray*}
			&&[\widehat{\bm{\Pi}}(\tau), \widehat{\bm{\Gamma}}(\tau)]\mathbf{Q}^{*,-1}(\tau)\bm{\Xi}_T^*(\tau)\mathbf{D}_T^{*}\nonumber\\ 
			&=&\mathbf{V}_{T}(\tau) \left(\mathbf{I}_2\otimes\mathbf{Q}^{*,\top}(\tau)\bm{\Xi}_T^*(\tau)\mathbf{D}_T^{*,-1}\right)\nonumber\\
			&&\times\left(\mathbf{I}_2\otimes\mathbf{D}_T^{*}\bm{\Xi}_T^{*,\top}(\tau)\mathbf{Q}^{*,-1,\top}(\tau)\right) \mathbf{S}_{T}^{+}(\tau) \left(\mathbf{I}_2\otimes\mathbf{Q}^{*,-1}(\tau)\bm{\Xi}_T^*(\tau)\mathbf{D}_T^{*}\right)\times [\mathbf{I}_{dp_0},\mathbf{0}_{dp_0}]^\top\nonumber\\
			&=& [\bm{\Pi}(\tau), \bm{\Gamma}(\tau),h\bm{\Pi}^{(1)}(\tau), h\bm{\Gamma}^{(1)}(\tau)]\left(\mathbf{I}_2\otimes\mathbf{Q}^{*,-1}(\tau)\bm{\Xi}_T^*(\tau)\mathbf{D}_T^{*}\right)\nonumber\\
			&&\times\left(\mathbf{I}_2\otimes\mathbf{D}_T^{*,-1}\bm{\Xi}_T^{*,\top}(\tau)\mathbf{Q}^{*}(\tau)\right)\mathbf{S}_{T}(\tau)\left(\mathbf{I}_2\otimes \mathbf{Q}^{*,\top}(\tau)\bm{\Xi}_T^*(\tau)\mathbf{D}_T^{*,-1}\right)\nonumber\\
			&&\times\left(\mathbf{I}_2\otimes\mathbf{D}_T^{*}\bm{\Xi}_T^{*,\top}(\tau)\mathbf{Q}^{*,-1,\top}(\tau)\right) \mathbf{S}_{T}^{+}(\tau) \left(\mathbf{I}_2\otimes\mathbf{Q}^{*,-1}(\tau)\bm{\Xi}_T^*(\tau)\mathbf{D}_T^{*}\right)[\mathbf{I}_{dp_0},\mathbf{0}_{dp_0}]^\top\nonumber\\
			&&+\frac{1}{2}h^2[\bm{\Pi}^{(2)}(\tau), \bm{\Gamma}^{(2)}(\tau)]\mathbf{Q}^{*,-1}(\tau)\bm{\Xi}_T^*(\tau)\mathbf{D}_T^{*}\nonumber\\
			&&\times\mathbf{D}_T^{*,-1}\bm{\Xi}_T^{*,\top}(\tau)\mathbf{Q}^{*}(\tau)[\mathbf{S}_{T,2}(\tau),\mathbf{S}_{T,3}(\tau)]\left(\mathbf{I}_2\otimes\mathbf{Q}^{*,\top}(\tau)\bm{\Xi}_T^*(\tau)\mathbf{D}_T^{*,-1}\right)\nonumber\\
			&&\times\left(\mathbf{I}_2\otimes\mathbf{D}_T^{*}\bm{\Xi}_T^{*,\top}(\tau)\mathbf{Q}^{*,-1,\top}(\tau)\right) \mathbf{S}_{T}^{+}(\tau) \left(\mathbf{I}_2\otimes\mathbf{Q}^{*,-1}(\tau)\bm{\Xi}_T^*(\tau)\mathbf{D}_T^{*}\right)[\mathbf{I}_{dp_0},\mathbf{0}_{dp_0}]^\top\nonumber\\
			&& +
			\sum_{t=1}^{T}\mathbf{M}(\tau_t)\mathbf{h}_{t-1}\mathbf{h}_{t-1}^{*,\top} K\left(\frac{\tau_t-\tau}{h}\right) \left(\mathbf{I}_2\otimes\mathbf{Q}^{*,\top}(\tau)\bm{\Xi}_T^*(\tau)\mathbf{D}_T^{*,-1}\right)\nonumber\\
			&& \times \left(\mathbf{I}_2\otimes\mathbf{D}_T^{*}\bm{\Xi}_T^{*,\top}(\tau)\mathbf{Q}^{*,-1,\top}(\tau)\right) \mathbf{S}_{T}^{+}(\tau) \left(\mathbf{I}_2\otimes\mathbf{Q}^{*,-1}(\tau)\bm{\Xi}_T^*(\tau)\mathbf{D}_T^{*}\right)[\mathbf{I}_{dp_0},\mathbf{0}_{dp_0}]^\top\\
			&& + 
			\sum_{t=1}^{T}\mathbf{u}_{t}\mathbf{h}_{t-1}^{*,\top} K\left(\frac{\tau_t-\tau}{h}\right) \left(\mathbf{I}_2\otimes\mathbf{Q}^{*,\top}(\tau)\bm{\Xi}_T^*(\tau)\mathbf{D}_T^{*,-1}\right)\nonumber\\
			&&\times  \left(\mathbf{I}_2\otimes\mathbf{D}_T^{*}\bm{\Xi}_T^{*,\top}(\tau)\mathbf{Q}^{*,-1,\top}(\tau)\right) \mathbf{S}_{T}^{+}(\tau) \left(\mathbf{I}_2\otimes\mathbf{Q}^{*,-1}(\tau)\bm{\Xi}_T^*(\tau)\mathbf{D}_T^{*}\right)[\mathbf{I}_{dp_0},\mathbf{0}_{dp_0}]^\top\nonumber\\
			&=& \mathbf{J}_{T,1} + \mathbf{J}_{T,2} + \mathbf{J}_{T,3} + \mathbf{J}_{T,4}.
		\end{eqnarray*}

		Consider $\mathbf{J}_{T,1}$. By Lemma \ref{L4}, 
		 $\left(\mathbf{I}_2\otimes\mathbf{D}_T^{*,-1}\bm{\Xi}_T^{*,\top}(\tau)\mathbf{Q}^{*}(\tau)\right)\mathbf{S}_{T}(\tau)\left(\mathbf{I}_2\otimes \mathbf{Q}^{*,\top}(\tau)\bm{\Xi}_T^*(\tau)\mathbf{D}_T^{*,-1}\right)$ is asymptotically non-singular and thus
		$$
		\mathbf{J}_{T,1} = [\bm{\Pi}(\tau), \bm{\Gamma}(\tau)]\mathbf{Q}^{*,-1}(\tau)\bm{\Xi}_T^*(\tau)\mathbf{D}_T^{*}.
		$$
		
		Consider $\mathbf{J}_{T,2}$. Note that $\bm{\Pi}^{(j)}(\tau) = \bm{\alpha}^{(j)}(\tau)\bm{\beta}^\top$ and thus 
		$$
		[\bm{\Pi}^{(j)}(\tau), \bm{\Gamma}^{(j)}(\tau)]\mathbf{Q}^{*,-1}(\tau) = [\bm{\alpha}^{(j)}(\tau)\bm{\beta}^\top\mathbf{Q}^{-1}(\tau), \bm{\Gamma}^{(j)}(\tau)] = [\mathbf{0}_{d\times (d-r_0)},\bm{\alpha}^{(j)}(\tau),\bm{\Gamma}^{(j)}(\tau)].
		$$
		By Lemma \ref{L4}, we have
		\begin{eqnarray*}
			&&\mathbf{D}_T^{*,-1}\bm{\Xi}_T^{*,\top}(\tau)\mathbf{Q}^{*}(\tau)[\mathbf{S}_{T,2}(\tau),\mathbf{S}_{T,3}(\tau)]\left(\mathbf{I}_2\otimes\mathbf{Q}^{*,\top}(\tau)\bm{\Xi}_T^*(\tau)\mathbf{D}_T^{*,-1}\right)\nonumber\\
			&&\times\left(\mathbf{I}_2\otimes\mathbf{D}_T^{*}\bm{\Xi}_T^{*,\top}(\tau)\mathbf{Q}^{*,-1,\top}(\tau)\right) \mathbf{S}_{T}^{+}(\tau) \left(\mathbf{I}_2\otimes\mathbf{Q}^{*,-1}(\tau)\bm{\Xi}_T^*(\tau)\mathbf{D}_T^{*}\right)[\mathbf{I}_{dp_0+1},\mathbf{0}_{dp_0+1}]^\top\nonumber\\
			&=&\left[\begin{matrix}
				O_P(1) & \mathbf{0}_{(d-r_0)\times (r_0+d(p_0-1))}+ O_P(1/\sqrt{Th})&
				\\ \mathbf{0}_{ (r_0+d(p_0-1))\times(d-r_0)}+O_P(1/\sqrt{Th}) & \widetilde{c}_2\mathbf{I}_{ (r_0+d(p_0-1))}+O_P(h^2+1/\sqrt{Th})\\
			\end{matrix} \right]\nonumber
		\end{eqnarray*}
		and thus
		\begin{eqnarray*}
			\mathbf{J}_{T,2} &=& \frac{1}{2} h^2[\mathbf{0}_{d\times (d-r_0)},\sqrt{Th}\bm{\alpha}^{(2)}(\tau),\sqrt{Th}\bm{\Gamma}^{(2)}(\tau)] \nonumber\\
			&&\times \left[\begin{matrix}
				O_P(1) & \mathbf{0}_{(d-r_0)\times (r_0+d(p_0-1))}+ O_P(1/\sqrt{Th})&
				\\ \mathbf{0}_{ (r_0+d(p_0-1))\times(d-r_0)}+O_P(1/\sqrt{Th}) & \widetilde{c}_2\mathbf{I}_{ (r_0+d(p_0-1))}+O_P(h^2+1/\sqrt{Th})\\
			\end{matrix} \right]\nonumber \\
			&=& \frac{1}{2}\sqrt{Th}h^2\widetilde{c}_2[\mathbf{0}_{d\times (d-r_0)},\bm{\alpha}^{(2)}(\tau),\bm{\Gamma}^{(2)}(\tau)] + O_P(h^2)\nonumber \\
			&=& \frac{1}{2}\sqrt{Th}h^2\widetilde{c}_2[\bm{\alpha}^{(2)}(\tau)\bm{\beta}^\top,\bm{\Gamma}^{(2)}(\tau)] \mathbf{Q}^{*,-1}(\tau)+ O_P(h^2).\nonumber
		\end{eqnarray*}
		
		Consider $\mathbf{J}_{T,3}$. Note that $\bm{\Pi}^{(j)}(\tau) = \bm{\alpha}^{(j)}(\tau)\bm{\beta}^\top$ and thus $\bm{\Pi}^{(j)}(\tau)\mathbf{Q}^{-1}(\tau) = [\mathbf{0}_{d\times (d-r_0)},\bm{\mathbf{\alpha}}^{(j)}(\tau)]$. Let $\mathbf{M}(\tau_t)\mathbf{Q}^{*,-1}(\tau) = [\mathbf{0}_{d\times (d-r_0)},\mathbf{M}_1(\tau_t),\mathbf{M}_2(\tau_t)]$ such that $\mathbf{M}_1(\tau_t)$ is $d\times r_0$ and $\mathbf{M}_2(\tau_t)$ is $d\times d(p_0-1)$. Hence, by using similar arguments as in the proof of Lemma \ref{L4}, $\mathbf{M}_1(\tau_t) = O(h^3)$ and $\mathbf{M}_2(\tau_t) = O(h^3)$, we have
		\begin{eqnarray*}
			\mathbf{J}_{T,3}&=&
			\sum_{t=1}^{T}\mathbf{M}(\tau_t)\mathbf{Q}^{*,-1}(\tau)\mathbf{Q}^{*}(\tau)\mathbf{h}_{t-1}\mathbf{h}_{t-1}^{*,\top} K\left(\frac{\tau_t-\tau}{h}\right) \left(\mathbf{I}_2\otimes\mathbf{Q}^{*,\top}(\tau)\bm{\Xi}_T^*(\tau)\mathbf{D}_T^{*,-1}\right)\nonumber\\
			&&\times \left(\mathbf{I}_2\otimes\mathbf{D}_T^{*}\bm{\Xi}_T^{*,\top}(\tau)\mathbf{Q}^{*,-1,\top}(\tau)\right) \mathbf{S}_{T}^{+}(\tau) \left(\mathbf{I}_2\otimes\mathbf{Q}^{*,-1}(\tau)\bm{\Xi}_T^*(\tau)\mathbf{D}_T^{*}\right)[\mathbf{I}_{dp_0},\mathbf{0}_{dp_0}]^\top\nonumber\\
			&=& 
			\sum_{t=1}^{T}\left( \mathbf{M}_1(\tau_t)\bm{\beta}^\top\mathbf{y}_{t-1}+ \mathbf{M}_2(\tau_t)\Delta\mathbf{x}_{t-1}\right)\nonumber\\
			&&\times\left[\mathbf{y}_{t-1}^\top\bm{\alpha}_{\perp}(\tau)\bm{\Xi}_T(\tau)\mathbf{D}_T^{-1},\frac{1}{\sqrt{Th}}\mathbf{w}_{t-1}\right]\otimes \left[1,\frac{\tau_t-\tau}{h}\right] K\left(\frac{\tau_t-\tau}{h}\right) \cdot O_P(1)\nonumber\\
			&=&[O_P(h^3),O_P(h^3\sqrt{Th})].\nonumber
		\end{eqnarray*}

		Consider $\mathbf{J}_{T,4}$. Write
		\begin{eqnarray*}
			&&\left[\begin{matrix}
				\sum_{t=1}^{T}\mathbf{u}_{t}\mathbf{y}_{t-1}^\top\left(\frac{\tau_t-\tau}{h}\right)^l K\left(\frac{\tau_t-\tau}{h}\right) & \sum_{t=1}^{T}\mathbf{u}_{t}\Delta\mathbf{x}_{t-1}^\top\left(\frac{\tau_t-\tau}{h}\right)^l K\left(\frac{\tau_t-\tau}{h}\right)
			\end{matrix} \right] \mathbf{Q}^{*,\top}(\tau)\bm{\Xi}_T^*(\tau)\mathbf{D}_T^{*,-1}\nonumber\\
			&=&\left[\begin{matrix}
				\sum_{t=1}^{T}\mathbf{u}_{t}\mathbf{y}_{t-1}^\top\bm{\alpha}_{\perp}(\tau)\bm{\Xi}_T(\tau)\mathbf{D}_T^{-1}\left(\frac{\tau_t-\tau}{h}\right)^l K\left(\frac{\tau_t-\tau}{h}\right) & \frac{1}{\sqrt{Th}}\sum_{t=1}^{T}\mathbf{u}_{t}\mathbf{w}_{t-1}^\top\left(\frac{\tau_t-\tau}{h}\right)^l K\left(\frac{\tau_t-\tau}{h}\right)
			\end{matrix} \right]\nonumber\\
			& =& [\mathbf{J}_{T,41}, \mathbf{J}_{T,42}],\nonumber
		\end{eqnarray*}
		where $\mathbf{w}_t = [\mathbf{z}_t^\top\bm{\beta} ,\Delta\mathbf{x}_t^\top]^\top$. For $\mathbf{J}_{T,42}$, note that $\{\mathrm{vec}\left(\mathbf{u}_{t}\mathbf{w}_{t-1}^\top\right)\}_{t=1}^T$ is a sequence of martingale differences, the Lindeberg's condition can be verified as $\|\mathbf{u}_{t}\mathbf{w}_{t-1}^\top\|_{\delta/2}\leq \|\mathbf{u}_{t}\|_\delta\|\mathbf{w}_{t-1}\|_{\delta}$ for some $\delta > 4$ and the convergence of conditional variance can be verified by Lemma \ref{L5} (2). We then have
		$$
		\frac{1}{\sqrt{Th}}\sum_{t=1}^{T}\mathrm{vec}\left(\mathbf{u}_{t}\mathbf{w}_{t-1}^\top\right)\left(\frac{\tau_t-\tau}{h}\right)^l K\left(\frac{\tau_t-\tau}{h}\right) \to_D N\left(\mathbf{0}, \widetilde{v}_{2l}\bm{\Sigma}_{\mathbf{w}}(\tau)\otimes \bm{\Omega}(\tau)\right)
		$$
		and thus $\mathbf{J}_{T,42} = O_P(1)$. Next, consider $\mathbf{J}_{T,41}$,
		\begin{eqnarray*}
			\mathbf{J}_{T,41} &=& \left[\begin{matrix}
				\frac{1}{T\sqrt{h}}\sum_{t=1}^{T}\mathbf{u}_t\mathbf{y}_{t-1}^\top\bm{\alpha}_{\perp}(\tau)\bm{\xi}_T(\tau)\left(\frac{\tau_t-\tau}{h}\right)^lK\left(\frac{\tau_t-\tau}{h}\right)  & \frac{1}{Th}\sum_{t=1}^{T}\mathbf{u}_t\mathbf{y}_{t-1}^\top\bm{\alpha}_{\perp}(\tau)\bm{\xi}_{T,\perp}(\tau)\left(\frac{\tau_t-\tau}{h}\right)^lK\left(\frac{\tau_t-\tau}{h}\right)\nonumber\\
			\end{matrix}  \right] \\
			&=& \left[\mathbf{J}_{T,411}, \mathbf{J}_{T,412} \right].\nonumber
		\end{eqnarray*}
		
		For $\mathbf{J}_{T,411}$, similar to the proof of Lemma \ref{L4} (1), by the continuous mapping theory, we have
		\begin{eqnarray*}
			\mathbf{J}_{T,411} &=& \frac{1}{\sqrt{Th}}\sum_{t=1}^{T}\mathbf{u}_t\left(\frac{\tau_t-\tau}{h}\right)^lK\left(\frac{\tau_t-\tau}{h}\right) \left(\frac{1}{\sqrt{T}}\mathbf{y}_{\delta_T}^\top\bm{\alpha}_{\perp}(\tau)\bm{\xi}_T(\tau)\right) \nonumber\\
			&&+ \frac{1}{\sqrt{T}}\sum_{t=1}^{T}\mathbf{u}_t\frac{1}{\sqrt{Th}}\left(\mathbf{y}_{t-1}-\mathbf{y}_{\delta_T}\right)^\top\bm{\alpha}_{\perp}(\tau)\bm{\xi}_T(\tau)\left(\frac{\tau_t-\tau}{h}\right)^lK\left(\frac{\tau_t-\tau}{h}\right) \nonumber\\
			&=& \frac{1}{\sqrt{Th}}\sum_{t=1}^{T}\mathbf{u}_t\left(\frac{\tau_t-\tau}{h}\right)^lK\left(\frac{\tau_t-\tau}{h}\right) \left(\frac{1}{\sqrt{T}}\mathbf{y}_{\delta_T}^\top\bm{\alpha}_{\perp}(\tau)\bm{\xi}_T(\tau)\right) + O_P(\sqrt{h})\nonumber\\
			&\to_D &  \sqrt{2\mathbf{q}(\tau)^\top\mathbf{q}(\tau)} \int_{-1}^{1}u^lK(u)\mathrm{d}\mathbf{W}_d^{*}((u+1)/2,\bm{\Omega}(\tau)).\nonumber
		\end{eqnarray*}
		
		Similarly, for $\mathbf{J}_{T,412}$, we have
		\begin{eqnarray*}
			\mathbf{J}_{T,412} &=& \frac{2}{\sqrt{2Th}}\sum_{t=1}^{T}\mathbf{u}_t\frac{1}{\sqrt{2Th}}\left(\mathbf{y}_{t-1}-\mathbf{y}_{\delta_T}\right)^\top\bm{\alpha}_{\perp}(\tau)\bm{\xi}_{T,\perp}(\tau)\left(\frac{\tau_t-\tau}{h}\right)^lK\left(\frac{\tau_t-\tau}{h}\right) \nonumber\\
			&\to_D &  \left(\bm{\xi}_{\perp}^\top(\tau)\int_{-1}^{1}\mathbf{W}_d^{*}((u+1)/2,\bm{\Sigma}_{\bm{\alpha}}(\tau))u^lK(u)\mathrm{d}\mathbf{W}_d^{*,\top}((u+1)/2,\bm{\Omega}(\tau))\right)^\top.\nonumber
		\end{eqnarray*}
		Therefore, we have $\mathbf{J}_{T,41} = O_P(1)$. Hence, by Lemma \ref{L4} (6), we have
		\begin{eqnarray*}
			&&\mathbf{J}_{T,4}=\left[O_P(1), \frac{1}{\sqrt{Th}}\sum_{t=1}^{T}\mathbf{u}_{t}\mathbf{w}_{t-1}^\top K\left(\frac{\tau_t-\tau}{h}\right),O_P(1),\frac{1}{\sqrt{Th}}\sum_{t=1}^{T}\mathbf{u}_{t}\mathbf{w}_{t-1}^\top\left(\frac{\tau_t-\tau}{h}\right)  K\left(\frac{\tau_t-\tau}{h}\right) \right]\nonumber\\
			&&\times \left[\begin{matrix}
				O_P(1) & O_P(1/\sqrt{Th})\\
				O_P(1/\sqrt{Th}) & \bm{\Sigma}_{\mathbf{w}}^{-1}(\tau)+O_P(h^2+1/\sqrt{Th})\\
				O_P(1) & O_P(1/\sqrt{Th})\\
				O_P(1/\sqrt{Th}) & O_P(h+1/\sqrt{Th})\\
			\end{matrix} \right]\nonumber\\
			&=&\left[O_P(1), \frac{1}{\sqrt{Th}}\sum_{t=1}^{T}\mathbf{u}_{t}\mathbf{w}_{t-1}^\top \bm{\Sigma}_{\mathbf{w}}^{-1}(\tau)K\left(\frac{\tau_t-\tau}{h}\right) \right] +o_P(1).\nonumber
		\end{eqnarray*}
		
		Combining the above derivations, as $Th^5 = O(1)$, we have
		\begin{eqnarray*}
			&& \sqrt{Th}\mathrm{vec}\left(\left([\widehat{\bm{\Pi}}(\tau), \widehat{\bm{\Gamma}}(\tau)] - [\bm{\Pi}(\tau), \bm{\Gamma}(\tau)]- \frac{1}{2}\widetilde{c}_2h^2[\bm{\Pi}^{(2)}(\tau), \bm{\Gamma}^{(2)}(\tau)]\right)\mathbf{Q}^{*,-1}(\tau)\right)\nonumber\\
			& = & \left(\mathbf{Q}^{*,-1,\top}(\tau)\otimes \mathbf{I}_d \right) \sqrt{Th}\left(\mathrm{vec}\left( [\widehat{\bm{\Pi}}(\tau), \widehat{\bm{\Gamma}}(\tau)] - [\bm{\Pi}(\tau), \bm{\Gamma}(\tau)]- \frac{1}{2}\widetilde{c}_2h^2[\bm{\Pi}^{(2)}(\tau), \bm{\Gamma}^{(2)}(\tau)]\right)\right)\nonumber\\
			&=& \mathrm{vec}\left[\begin{matrix}
				\mathbf{0}_{d\times (d-r_0)}& \frac{1}{\sqrt{Th}}\sum_{t=1}^{T}\mathbf{u}_{t}\mathbf{w}_{t-1}^\top K\left(\frac{\tau_t-\tau}{h}\right) \bm{\Sigma}_{\mathbf{w}}^{-1}(\tau)
			\end{matrix} \right] + o_P(1)\nonumber\\
			&\to_D & \left[ \begin{matrix}
				\mathbf{0}_{d(d-r_0)\times 1}\\
				N\left(\mathbf{0}, \widetilde{v}_0\bm{\Sigma}_{\mathbf{w}}^{-1}(\tau)\otimes \bm{\Omega}(\tau) \right)
			\end{matrix}\right],\nonumber
		\end{eqnarray*}
		which implies
		\begin{eqnarray*}
			&&\sqrt{Th}\mathrm{vec}\left([\widehat{\bm{\Pi}}(\tau), \widehat{\bm{\Gamma}}(\tau)] - [\bm{\Pi}(\tau), \bm{\Gamma}(\tau)] -\frac{1}{2}h^2\widetilde{c}_2[\bm{\Pi}^{(2)}(\tau), \bm{\Gamma}^{(2)}(\tau)]\right)\nonumber\\
			&\to_D& N\left(\mathbf{0}, \widetilde{v}_0\left(\left[ \begin{matrix}
				\bm{\beta} & \mathbf{0}\\
				\mathbf{0} & \mathbf{I}_{d(p_0-1)}\\
			\end{matrix}\right]\bm{\Sigma}_{\mathbf{w}}^{-1}(\tau)\left[ \begin{matrix}
				\bm{\beta}^\top & \mathbf{0}\\
				\mathbf{0}& \mathbf{I}_{d(p_0-1)}\\
			\end{matrix}\right]\right)\otimes \bm{\Omega}(\tau) \right).\nonumber
		\end{eqnarray*}
		
		\medskip
		
		\noindent (2). Applying the uniform convergence results in Lemmas \ref{L5} and \ref{L7} to terms $\mathbf{J}_{T,1}$--$\mathbf{J}_{T,4}$, which are defined in the proof of part (1), we have 
		$$
		\sup_{\tau \in [0,1]}\left|[\widehat{\bm{\Pi}}(\tau), \widehat{\bm{\Gamma}}(\tau)] - [\bm{\Pi}(\tau), \bm{\Gamma}(\tau)]\right| = O_P\left(h^2 + \sqrt{\log T/(Th)} \right).
		$$
		
		\medskip
		
		\noindent (3). Consider $\widehat{\bm{\Omega}}(\tau)$ first. Note that by the proof of part (1) and the uniform convergence results in Lemmas \ref{L5} and \ref{L7}, we have
		$$
		\sup_{0\leq \tau \leq 1}\left|\left([\widehat{\bm{\Pi}}(\tau), \widehat{\bm{\Gamma}}(\tau)] - [\bm{\Pi}(\tau), \bm{\Gamma}(\tau)]\right)\mathbf{Q}^{*,-1}(\tau)\bm{\Xi}_T^*(\tau)\mathbf{D}_T^{*} \right| = O_P\left([\log T, \sqrt{Th}h^2 + \sqrt{\log T}]\right).
		$$
		
		Note that
		$$
		\widehat{\mathbf{u}}_t = \mathbf{u}_t + \left(\bm{\Pi}(\tau_t)-\widehat{\bm{\Pi}}(\tau_t)\right)\mathbf{y}_{t-1} + \left(\bm{\Gamma}(\tau_t)-\widehat{\bm{\Gamma}}(\tau_t)\right)\Delta \mathbf{x}_{t-1},
		$$
		which implies that
		\begin{eqnarray*}
			&&\frac{1}{T}\sum_{t=1}^{T}\left(\mathbf{u}_t-\widehat{\mathbf{u}}_t\right)\mathbf{u}_t^\top w_t(\tau)\nonumber\\
			&=& \frac{1}{T}\sum_{t=1}^{T}\left([\widehat{\bm{\Pi}}(\tau_t), \widehat{\bm{\Gamma}}(\tau_t)] - [\bm{\Pi}(\tau_t), \bm{\Gamma}(\tau_t)]\right)\mathbf{h}_{t-1}\mathbf{u}_t^\top w_t(\tau)\nonumber\\
			&=& \frac{1}{T}\sum_{t=1}^{T}\left([\widehat{\bm{\Pi}}(\tau_t), \widehat{\bm{\Gamma}}(\tau_t)] - [\bm{\Pi}(\tau_t), \bm{\Gamma}(\tau_t)]\right)\mathbf{Q}^{*,-1}(\tau)\bm{\Xi}_T^*(\tau)\mathbf{D}_T^{*}\left[\begin{matrix}
				\frac{1}{T\sqrt{h}}\bm{\xi}_T^\top(\tau_t)\bm{\alpha}_{\perp}^\top(\tau_t)\mathbf{y}_{t-1}\nonumber\\
				\frac{1}{Th}\bm{\xi}_{T,\perp}^\top(\tau_t)\bm{\alpha}_{\perp}^\top(\tau_t)(\mathbf{y}_{t-1}-\mathbf{y}_{\delta_T})\nonumber\\
				\frac{1}{\sqrt{Th}}\bm{\beta}^\top\mathbf{y}_{t-1}\nonumber\\
				\frac{1}{\sqrt{Th}}\Delta\mathbf{x}_{t-1}
			\end{matrix}\right]\mathbf{u}_t^\top w_t(\tau)\nonumber\\
			&=&O_P(h^2 + \sqrt{\log T/(Th)})\nonumber
		\end{eqnarray*}
		uniformly over $\tau \in [0,1]$. Similarly, it is easy to show that 
		$$
		\sup_{\tau\in[0,1]}\left|\frac{1}{T}\sum_{t=1}^{T}\left(\mathbf{u}_t-\widehat{\mathbf{u}}_t\right)\left(\mathbf{u}_t-\widehat{\mathbf{u}}_t\right)^\top w_t(\tau)\right| = O_P \left(h^4 + \log T/(Th)\right).
		$$
		Hence, we have
		\begin{eqnarray*}
			\sup_{\tau \in [0,1]}\left|\widehat{\bm{\Omega}}(\tau) - \bm{\Omega}(\tau)\right| =\sup_{\tau \in [0,1]}\left|T^{-1}\sum_{t=1}^{T}\mathbf{u}_t\mathbf{u}_t^\top w_t\left(\tau\right) - \bm{\Omega}(\tau)\right|+ O_P\left(h^2 + \sqrt{\log T/(Th)} \right).\nonumber
		\end{eqnarray*}
		
		In addition, by using similar arguments to the proof of Lemma \ref{L5}, we have
		$$
		\sup_{\tau \in [0,1]}\left|T^{-1}\sum_{t=1}^{T}\mathbf{u}_t\mathbf{u}_t^\top w_t\left(\tau\right) - \bm{\Omega}(\tau)\right| =  O_P\left(h^2 + \sqrt{\log T/(Th)}\right).
		$$
		
		Next, consider $\left[\sum_{t=1}^{T}K\left(\frac{\tau_t-\tau}{h}\right) \right]\mathbf{S}_{T,0}^+(\tau)$. By Lemmas \ref{L5} and \ref{L7}, we have uniformly over $\tau \in [h,1-h]$,
		\begin{eqnarray*}
			&&\left[\sum_{t=1}^{T}K\left(\frac{\tau_t-\tau}{h}\right) \right]\mathbf{S}_{T,0}^+(\tau)\nonumber\\
			&=& Th(\widetilde{c}_0+O(1/(Th)))\mathbf{Q}^{*,\top}(\tau)\bm{\Xi}_T^*(\tau)\mathbf{D}_T^{*,-1}\nonumber\\
			&&\left[\begin{matrix}
				\bm{\Delta}_{0}^{-1}(\tau)+o_P(1) & O_P(\sqrt{\log T /(Th)}) &
				\\ O_P(\sqrt{\log T /(Th)})& \widetilde{c}_0^{-1}\bm{\Sigma}_{\mathbf{w}}^{-1}(\tau) + O_P(h^2+\sqrt{\log T /(Th)}) \\
			\end{matrix} \right]\cdot  \mathbf{D}_T^{*,-1}\bm{\Xi}_T^{*,\top}(\tau)\mathbf{Q}^{*}(\tau)\nonumber\\
			&=& \mathbf{Q}^{*,\top}(\tau)\left[\begin{matrix}
				\mathbf{0} & \mathbf{0} &
				\\ \mathbf{0}& \widetilde{c}_0^{-1}\bm{\Sigma}_{\mathbf{w}}^{-1}(\tau) \\
			\end{matrix} \right]\mathbf{Q}^{*}(\tau)+ O_P(h^2+\sqrt{\log T /(Th)})\nonumber\\
			&=&\left[ \begin{matrix}
				\bm{\beta} & \mathbf{0}\\
				\mathbf{0} & \mathbf{I}_{d(p_0-1)}\\
			\end{matrix}\right]\bm{\Sigma}_{\mathbf{w}}^{-1}(\tau)\left[ \begin{matrix}
				\bm{\beta}^\top & \mathbf{0}\\
				\mathbf{0}& \mathbf{I}_{d(p_0-1)}\\
			\end{matrix}\right] + O_P(h^2+\sqrt{\log T /(Th)}).\nonumber
		\end{eqnarray*}

		The proof is now completed.

	\end{proof}

\begin{proof}[Proof of Theorem \ref{Thm3}]
	\item
	Note that $\widetilde{\mathbf{r}}_t(\widehat{\bm{\alpha}}) = \widetilde{\mathbf{R}}_t^\top(\widehat{\bm{\alpha}}) \mathrm{vec}\left(\bm{\beta}^{*,\top}\right) + \mathbf{u}_t^*$, where 
	$$
	\mathbf{u}_t^* = \mathbf{u}_t + \left(\bm{\alpha}(\tau_t) - \widehat{\bm{\alpha}}(\tau_t)\right)\bm{\beta}^{\top}\mathbf{y}_{t-1} + \left(\bm{\Pi}(\tau_t) - \widehat{\bm{\Pi}}(\tau_t,\widehat{\bm{\alpha}}\bm{\beta}^\top)\right)\Delta \mathbf{x}_{t-1}
	$$
	and
	\begin{eqnarray*}
		\widehat{\bm{\Gamma}}(\tau,\widehat{\bm{\alpha}}\bm{\beta}^\top) &=& \sum_{s=1}^{T}(\Delta\mathbf{y}_{s}-\widehat{\bm{\alpha}}(\tau_s)\bm{\beta}^\top\mathbf{y}_{s-1})\Delta\mathbf{x}_{s-1}^{*,\top} K\left(\frac{\tau_s-\tau}{h}\right)\nonumber\\
		&&\times\left(\sum_{s=1}^{T}\Delta\mathbf{x}_{s-1}^*\Delta\mathbf{x}_{s-1}^{*,\top} K\left(\frac{\tau_s-\tau}{h}\right)\right)^{-1}\left[\begin{matrix}
			\mathbf{I}_{d(p_0-1)} \\
			\mathbf{0}_{d(p_0-1)}
		\end{matrix} \right],\nonumber
	\end{eqnarray*}
	which implies that
	\begin{eqnarray*}
		&&T\mathrm{vec}\left[\widehat{\bm{\beta}}^{*,\top}-\bm{\beta}^{*,\top}\right]\nonumber\\
		&=& \left(\frac{1}{T^2}\sum_{t=1}^{T}\widetilde{\mathbf{R}}_t(\widehat{\bm{\alpha}})\widehat{\bm{\Omega}}^{-1}(\tau_t)\widetilde{\mathbf{R}}_t^\top(\widehat{\bm{\alpha}})\right)^{-1}\frac{1}{T}\sum_{t=1}^{T}\widetilde{\mathbf{R}}_t(\widehat{\bm{\alpha}})\widehat{\bm{\Omega}}^{-1}(\tau_t)\mathbf{u}_t^*\nonumber\\
		&=& \left(\frac{1}{T^2}\sum_{t=1}^{T}\widetilde{\mathbf{R}}_t(\widehat{\bm{\alpha}})\widehat{\bm{\Omega}}^{-1}(\tau_t)\widetilde{\mathbf{R}}_t^\top(\widehat{\bm{\alpha}})\right)^{-1}\frac{1}{T}\sum_{t=1}^{T}\widetilde{\mathbf{R}}_t(\widehat{\bm{\alpha}})\widehat{\bm{\Omega}}^{-1}(\tau_t)\mathbf{u}_t\nonumber\\
		&& + \left(\frac{1}{T^2}\sum_{t=1}^{T}\widetilde{\mathbf{R}}_t(\widehat{\bm{\alpha}})\widehat{\bm{\Omega}}^{-1}(\tau_t)\widetilde{\mathbf{R}}_t^\top(\widehat{\bm{\alpha}})\right)^{-1}\frac{1}{T}\sum_{t=1}^{T}\widetilde{\mathbf{R}}_t(\widehat{\bm{\alpha}})\widehat{\bm{\Omega}}^{-1}(\tau_t)\left(\bm{\alpha}(\tau_t) - \widehat{\bm{\alpha}}(\tau_t)\right)\bm{\beta}^{\top}\mathbf{y}_{t-1}\nonumber\\
		&& + \left(\frac{1}{T^2}\sum_{t=1}^{T}\widetilde{\mathbf{R}}_t(\widehat{\bm{\alpha}})\widehat{\bm{\Omega}}^{-1}(\tau_t)\widetilde{\mathbf{R}}_t^\top(\widehat{\bm{\alpha}})\right)^{-1}\frac{1}{T}\sum_{t=1}^{T}\widetilde{\mathbf{R}}_t(\widehat{\bm{\alpha}})\widehat{\bm{\Omega}}^{-1}(\tau_t)\left(\bm{\Pi}(\tau_t) - \widehat{\bm{\Pi}}(\tau_t,\widehat{\bm{\alpha}}\bm{\beta}^\top)\right)\Delta \mathbf{x}_{t-1}\nonumber\\
		&=&\mathbf{J}_{T,5} + \mathbf{J}_{T,6} + \mathbf{J}_{T,7}. \nonumber
	\end{eqnarray*}
	
	By Lemma \ref{L8}, $\mathbf{J}_{T,6}$ and $\mathbf{J}_{T,7}$ are both $o_P(1)$. Next, consider $\mathbf{J}_{T,5}$. By Lemma \ref{L8}, we have
	$$
	\mathbf{J}_{T,5} \to_D \left(\int_{0}^{1}\mathbf{W}_{d-r_0}(u) \mathbf{W}_{d-r_0}^\top(u) \otimes \bm{\alpha}^\top(u)\bm{\Omega}^{-1}(u)\bm{\alpha}(u)\mathrm{d}u\right)^{-1} \int_{0}^{1}\mathbf{W}_{d-r_0}(u)\otimes \mathrm{d}\mathbf{W}_{r_0}(u),
	$$
	where $\mathbf{W}_{d-r_0}(u)$ and $\mathbf{W}_{r_0}(u)$ are mutually independent.
	
	Since $\mathbf{W}_{d-r_0}(u)$ and $\mathbf{W}_{r_0}(u)$ are mutually independent, the conditional distribution of $\mathbf{J}_{T,5}$, given $\mathbf{W}_{d-r_0}(u)$, is
	$$
	\mathbf{J}_{T,5} \to_D N\left(\mathbf{0}, \left(\int_{0}^{1}\mathbf{W}_{d-r_0}(u) \mathbf{W}_{d-r_0}^\top(u) \otimes \bm{\alpha}^\top(u)\bm{\Omega}^{-1}(u)\bm{\alpha}(u)\mathrm{d}u\right)^{-1}\right),
	$$
	which implies that
	$$
	\left(\frac{1}{T^2}\sum_{t=1}^{T}\mathbf{y}_{t-1}^{(2)}\mathbf{y}_{t-1}^{(2),\top}  \otimes \widehat{\bm{\alpha}}^\top(\tau_t)\widehat{\bm{\Omega}}^{-1}(\tau_t) \widehat{\bm{\alpha}}(\tau_t)\right)^{1/2}\mathbf{J}_{T,1} \to_D N\left(\mathbf{0}, \mathbf{I}_{(d-r_0)r_0}\right).
	$$
	
	The proof is now completed.
\end{proof}

	\begin{proof}[Proof of Theorem \ref{Prop2.1}]
		\item
		We need to prove that $\lim_{T\to \infty}\Pr\left(\text{IC}(p) < \text{IC}(p_0)\right)=0$ for all $p\neq p_0$ and $p\leq P$. Note that
		$$
		\text{IC}(p)-\text{IC}(p_0)=\log[\text{RSS}(p)/\text{RSS}(p_0)]+(p-p_0)\chi_T.
		$$
		
		(i) For $p < p_0$, Lemma \ref{L9} implies that $\text{RSS}(p)/\text{RSS}(p_0) > 1 + \nu$ for some $\nu > 0 $ with large probability for all large $T$. Thus, $\log[\text{RSS}(p)/\text{RSS}(p_0)] \geq \nu/2$ for large $T$. Because $\chi_T\to 0$, we have $\text{IC}(p)-\text{IC}(p_0)\geq \nu/2-(p_0-p)\chi_T \geq \nu/3$ for large $T$ with large probability. Thus $\Pr\left(\text{IC}(p) < \text{IC}(p_0)\right) \to 0$ for $p < p_0$.
		
		(ii) We then consider $p > p_0$.  Lemma \ref{L9} implies that $\text{RSS}(p)/\text{RSS}(p_0)=1+O_P(c_T^2)$ with $c_T= h^2 + \sqrt{\frac{\log(T)}{Th}}$. Hence, $\log[\text{RSS}(p)/\text{RSS}(p_0)]=O_P(c_T^2)$. Because $(p-p_0) \chi_T \geq \chi_T $, which converges to zero at a slower rate than $c_T^2$, it implies that
		$$
		\Pr\left(\text{IC}(p) < \text{IC}(p_0)\right)\leq \Pr\left(\log[\text{RSS}(p)/\text{RSS}(p_0)] + \chi_T < 0\right) \to 0.
		$$

		The proof is now completed.		
	\end{proof}

	\begin{proof}[Proof of Theorem \ref{Thm2}]
		\item
		By the proof of Theorem \ref{Thm1} and using the uniform convergence results in Lemmas \ref{L5} and \ref{L7}, we have uniformly over $\tau \in [0,1]$
		\begin{eqnarray*}
			&&\left[ \widehat{\bm{\Pi}}(\tau) - {\bm{\Pi}}(\tau) , \widehat{\bm{\Gamma}}(\tau) - {\bm{\Gamma}}(\tau)\right] \mathbf{Q}^{*,-1}(\tau) \bm{\Xi}_T^*(\tau) \mathbf{D}_T^*\nonumber\\
			&=& \left[\mathbf{0}_{d\times (d-r_0)} , \frac{1}{2}\sqrt{Th^5}\widetilde{c}_2(\tau)\bm{\alpha}^{(2)}(\tau) ,\frac{1}{2}\sqrt{Th^5}\widetilde{c}_2(\tau)\bm{\Gamma}^{(2)}(\tau)\right] + O_P([\log T,\sqrt{\log T}]),\nonumber
		\end{eqnarray*}
		where $\tilde{c}_k(\tau) =\int_{-\tau/h}^{(1-\tau)/h} u^k K(u) \mathrm{d}u$. Hence, we have
		$$
		\int_{0}^{1}\widehat{\bm{\Pi}}(\tau)\mathrm{d}\tau - \int_{0}^{1}\bm{\alpha}(\tau)\mathrm{d}\tau\bm{\beta}^\top = \frac{1}{2}h^2\int_{0}^{1}\tilde{c}_2(\tau)\bm{\alpha}^{(2)}(\tau)\mathrm{d}\tau\bm{\beta}^\top + \left[\begin{matrix}
			O_P\left(\log T/(Th)\right)\int_{0}^{1}\bm{\alpha}_{\perp}^\top(\tau)\mathrm{d}\tau\\
			O_P\left(\sqrt{\log T/(Th)}\right)\bm{\beta}^\top
		\end{matrix}\right],
		$$
		which implies that $\bm{\beta}_{\perp}^\top\int_{0}^{1}\widehat{\bm{\Pi}}^\top(\tau)\mathrm{d}\tau = O_P\left(\log T/(Th)\right)$ and
		$$
		\bm{\beta}^\top\int_{0}^{1}\widehat{\bm{\Pi}}^\top(\tau)\mathrm{d}\tau = \bm{\beta}^\top \bm{\beta}\int_{0}^{1}\bm{\alpha}^\top(\tau)\mathrm{d}\tau + O_P\left(h^2+\sqrt{\log T/(Th)}\right).
		$$
		Let $\bm{\beta}_0 = [\bm{\beta}(\bm{\beta}^\top \bm{\beta})^{-1/2},\bm{\beta}_{\perp}]^\top$. Since $\bm{\beta}_0^\top\bm{\beta}_0=\mathbf{I}_d$, by the unitary invariance properties of singular values, we have
		$$
		\sigma_j\left(\bm{\beta}_0 \int_{0}^{1}\widehat{\bm{\Pi}}^\top(\tau)\mathrm{d}\tau\right) = \sigma_j\left(\widehat{\mathbf{S}} \int_{0}^{1}\widehat{\bm{\Pi}}^\top(\tau)\mathrm{d}\tau\right) = \sigma_j(\widehat{\mathbf{R}})
		$$
		for all $j=1,\ldots,d$. Since $
		(\bm{\beta}^\top \bm{\beta})^{-1/2}\bm{\beta}^\top\int_{0}^{1}\widehat{\bm{\Pi}}^\top(\tau)\mathrm{d}\tau = (\bm{\beta}^\top \bm{\beta})^{1/2}\int_{0}^{1}\bm{\alpha}^\top(\tau)\mathrm{d}\tau + O_P\left(h^2+\sqrt{\log T/(Th)}\right)$, by matrix perturbation theory, we have $\sigma_j\left(\widehat{\mathbf{R}} \right) - \sigma_j\left((\bm{\beta}^\top \bm{\beta})^{1/2}\int_{0}^{1}\bm{\alpha}(\tau)\mathrm{d}\tau\right) = O_P\left(h^2+\sqrt{\log T/(Th)}\right)$ for $j=1,\ldots,r_0$. Similarly, we have $\sigma_j\left(\widehat{\mathbf{R}} \right) = O_P\left( \log T/(Th) \right)$ for $j=r_0+1,\ldots,d$.
		
		\medskip
		
		By the column-pivoting step in the QR decomposition, we can conclude that
		$$
		\sigma_{r_0}(\widehat{\mathbf{R}}) \leq \widehat{\mu}_k = \sqrt{\sum_{j=k}^{d} \widehat{\mathbf{R}}^2(k,j)} \leq \sigma_{1}(\widehat{\mathbf{R}}) \quad \text{for}\quad  k=1,\ldots,r_0
		$$
		and
		$$
		\widehat{\mu}_k = \sqrt{\sum_{j=k}^{d} \widehat{\mathbf{R}}^2(k,j)} = O_P(\log T/(Th)) \quad \text{for}\quad  k=r_0+1,\ldots,d.
		$$ 
		
		When $r_0 = 0$, we have $\widehat{\mu}_k = O_P(\log T/(Th))$ for $k=1,\ldots,d$, which implies that $\widehat{\mu}_0/\widehat{\mu}_1 \to \infty$, while $\widehat{\mu}_r/\widehat{\mu}_{r+1} = 1$ for $r \geq 1$ with probability approaching to 1. Similarly, when $0< r_0 < d$, $\widehat{\mu}_r/\widehat{\mu}_{r+1} \asymp 1$ for $r=1,\ldots,r_0-1$, $\widehat{\mu}_{r_0}/\widehat{\mu}_{r_0+1} \to \infty$, and $\widehat{\mu}_r/\widehat{\mu}_{r+1} = 1$ for $r \geq r_0+ 1$ with probability approaching to 1. The proof is now completed.
		
	\end{proof}

	\begin{proof}[Proof of Theorem \ref{Thm4}]
		\item
		\noindent (1). By part (2) of Theorem \ref{Thm1} , we have 
		$$
		\sup_{\tau \in [0,1]}\left|[\widehat{\bm{\Pi}}(\tau), \widehat{\bm{\Gamma}}(\tau)] - [\bm{\Pi}(\tau), \bm{\Gamma}(\tau)]\right| = O_P\left(h^2 + \sqrt{\log T/(Th)} \right)
		$$
		and thus by $Th^6\to 0$, we have
		$$
		\sqrt{T}\left(\widehat{\mathbf{c}} - \mathbf{c}\right) = \sqrt{T}\frac{1}{T}\sum_{t=1}^{T}\mathbf{C}\left(\widehat{\mathbf{b}}(\tau_t)-\mathbf{b}(\tau_t)\right) = \sqrt{T}\frac{1}{T}\sum_{t=\lfloor Th\rfloor+1}^{\lfloor T(1-h)\rfloor}\mathbf{C}\left(\widehat{\mathbf{b}}(\tau_t)-\mathbf{b}(\tau_t)\right) + o_P(1).
		$$
		
		In addition, by the proof of part(1) of Theorem \ref{Thm1} and the uniform convergence results in Lemmas \ref{L5} and \ref{L7}, we have uniformly over $\tau \in [h,1-h]$,
		\begin{eqnarray*}
			&& [\widehat{\bm{\Pi}}(\tau)-\bm{\Pi}(\tau), \widehat{\bm{\Gamma}}(\tau)-\bm{\Gamma}(\tau)]\nonumber\\
			&=&\frac{1}{2}h^2\widetilde{c}_2[\bm{\alpha}^{(2)}(\tau)\bm{\beta}^\top,\bm{\Gamma}^{(2)}(\tau)]+ O(h^3) + O_P\left( h^2 \sqrt{\log T/(Th)}\right)\nonumber\\
			&&+ \frac{1}{Th}\sum_{t=1}^{T}\mathbf{u}_{t}\mathbf{w}_{t-1}^\top\bm{\Sigma}_{\mathbf{w}}^{-1}(\tau)  K\left(\frac{\tau_t-\tau}{h}\right) \left[ \begin{matrix}
				\bm{\beta}^\top & \mathbf{0}_{r_0\times d(p_0-1)}\\
				\mathbf{0}_{d(p_0-1)\times r_0}& \mathbf{I}_{d(p_0-1)}\\
			\end{matrix}\right] + O_P\left(\log T /(Th) \right).\nonumber
		\end{eqnarray*}
		Hence, we have
		\begin{eqnarray*}
			\widehat{\mathbf{b}}(\tau)-\mathbf{b}(\tau) &=& \frac{1}{2}h^2\widetilde{c}_2 \mathbf{b}^{(2)}(\tau) + \bm{\Sigma}_{\mathbf{w}}^{-1}(\tau)\otimes \mathbf{I}_d\left(\frac{1}{Th}\sum_{t=1}^{T}\mathbf{w}_{t-1}\otimes \mathbf{u}_t K\left(\frac{\tau_t-\tau}{h}\right)\right)\nonumber\\
			&& + O_P(h^2\sqrt{\log T/(Th)}) + O(h^3)+O_P(\log T/(Th))\nonumber
		\end{eqnarray*}
		uniformly over $\tau \in [h,1-h]$.
		
		Combining the above two results, as $Th^6\to0$ and $Th^2/(\log T)^2 \to \infty$, we have
		\begin{eqnarray*}
			&&\sqrt{T}\left(\widehat{\mathbf{c}} - \mathbf{c} -\frac{1}{2}h^2\widetilde{c}_2\int_{0}^{1}\mathbf{C}\mathbf{b}^{(2)}(\tau)\mathrm{d}\tau\right) \nonumber\\
			&=& \mathbf{C}\frac{1}{\sqrt{T}}\sum_{t=1}^{T}\left(\frac{1}{Th}\sum_{s=1}^{T}\bm{\Sigma}_{\mathbf{w}}^{-1}(\tau_s)\otimes \mathbf{I}_dK\left(\frac{\tau_s-\tau_t}{h}\right)\right)\mathbf{w}_{t-1} \otimes \mathbf{u}_t +o_P(1)\nonumber\\
			&=& \mathbf{C}\frac{1}{\sqrt{T}}\sum_{t=1}^{T}\left(\bm{\Sigma}_{\mathbf{w}}^{-1}(\tau_t)\otimes \mathbf{I}_d\right)\mathbf{w}_{t-1}\otimes \mathbf{u}_t +o_P(1)\nonumber\\
			&\to_D& N\left(\bm{0},\int_{0}^{1}\mathbf{C}\left(\mathbf{\Sigma}_{\mathbf{w}}^{-1}(\tau)\otimes \mathbf{\Omega}(\tau)\right) \mathbf{C}^\top\mathrm{d}\tau\right)\nonumber
		\end{eqnarray*}
		by the conventional martingale central limit theory. Note that the Lindeberg's condition can be verified as $\|\mathbf{w}_{t-1}\otimes \mathbf{u}_{t}\|_{\delta/2}\leq \|\mathbf{u}_{t}\|_\delta\|\mathbf{w}_{t-1}\|_{\delta}$ for some $\delta > 4$ and the convergence of conditional variance can be verified by Lemma \ref{L5} (2).
		
		\medskip
		
		\noindent (2). By part (1), we have $\sup_{\tau\in [0,1]} \|\widehat{\mathbf{b}}(\tau)-\mathbf{b}(\tau) \|=O_P\left(h^2 + \sqrt{\log T/(Th)}\right)$. Then we can conclude that as $Th^{11/2} = o(1)$,
		\begin{eqnarray}\label{EqA.2.1}
			&&\frac{1}{T}\sum_{t\in\mathcal{B}_T}\left[\mathbf{C}\widehat{\mathbf{b}}(\tau_t)-\mathbf{c}\right]^\top\mathbf{H}(\tau_t)
			\left[\mathbf{C}\widehat{\mathbf{b}}(\tau_t)-\mathbf{c}\right] = T^{-1} \times (Th)\times O_P\left(\log T / (Th) +h^4\right)  \nonumber\\
			& = &  O_P\left(\log T / T +h^5\right) =o_P\left(T^{-1}h^{-1/2}\right),
		\end{eqnarray}
		where $\mathcal{B}_T = \{1,\ldots,\lfloor Th \rfloor\}\cup \{\lfloor T(1-h) \rfloor, \ldots, T\}$.
		
		In addition, by  Lemma \ref{L5}, we have
		\begin{eqnarray}\label{Rrate}
			&&\sup_{\tau\in[h,1-h]}\left|\frac{1}{Th}\sum_{t=1}^{T}\mathbf{w}_{t-1}\mathbf{w}_{t-1}^\top K\left(\frac{\tau_t-\tau}{h}\right)-\bm{\Sigma}_{\mathbf{w}}(\tau)\right| =  O_P\left(h^2 + \sqrt{\frac{\log T}{Th}}\right),\nonumber \\
			&&\sup_{\tau\in[0,1]}\left| \mathbf{R}_T(\tau) \right|= O_P\left(\sqrt{\frac{\log T}{Th}}\right),
		\end{eqnarray}
		where $\mathbf{R}_T(\tau)=\frac{1}{Th}\sum_{t=1}^{T}\mathbf{W}_{t-1}\mathbf{u}_tK\left(\frac{\tau_t-\tau}{h}\right)$ and $\mathbf{W}_t = \mathbf{w}_t \otimes \mathbf{I}_d$. Hence, under the null hypothesis, we have the following Bahadur representation
		\begin{eqnarray}\label{Rrate2}
			\sup_{[h,1-h]}\left|\mathbf{C}\widehat{\mathbf{b}}(\tau)- \mathbf{c}- \mathbf{C}\left(\bm{\Sigma}_{\mathbf{w}}^{-1}(\tau) \otimes \mathbf{I}_d\right) \mathbf{R}_T(\tau)\right|=O_P(\rho_T^2)
		\end{eqnarray}
		with $\rho_T = h^2 + \sqrt{{\log T}/{(Th)}}$.
		
		By \eqref{EqA.2.1}-\eqref{Rrate2} and $Th^{11/2} \to 0$, under the null we have
		\begin{eqnarray*}
			&&\frac{1}{T}\sum_{t=1}^{T}\left[\mathbf{C}\widehat{\mathbf{b}}(\tau_t)-\mathbf{c}\right]^\top\mathbf{H}(\tau_t)
			\left[\mathbf{C}\widehat{\mathbf{b}}(\tau_t)-\mathbf{c}\right]\nonumber\\
			&=& \frac{1}{T}\sum_{t=\lfloor Th\rfloor+1}^{\lfloor T(1-h)\rfloor}\mathbf{R}_T^\top(\tau_t) \mathbf{H}_0 (\tau_t)\mathbf{R}_T(\tau_t) + O_P\left(\log T / T +h^5\right) + O_P\left(\rho_T^2 \sqrt{\frac{\log T}{Th}}\right) \nonumber\\
			&=& \frac{1}{T}\sum_{t=1}^{T}\mathbf{R}_T^\top(\tau_t) \mathbf{H}_{0}(\tau_t)\mathbf{R}_T(\tau_t) + o_P(T^{-1}h^{-1/2}),\nonumber
		\end{eqnarray*}
		where $\mathbf{H}_{0}(\tau)=\bm{\Sigma}_{\mathbf{W}}^{-1}(\tau)\mathbf{C}^\top\mathbf{H}(\tau)\mathbf{C}\bm{\Sigma}_{\mathbf{W}}^{-1}(\tau)$ and $\bm{\Sigma}_{\mathbf{W}}(\tau) = \bm{\Sigma}_{\mathbf{w}}(\tau)\otimes \mathbf{I}_d$.
		
		Consider $ \frac{1}{T}\sum_{t=1}^{T}\mathbf{R}_T^\top(\tau_t) \mathbf{H}_{0}(\tau_t)\mathbf{R}_T(\tau_t)$, and write
		\begin{eqnarray*}
			&&  \frac{1}{T}\sum_{t=1}^{T}\mathbf{R}_T^\top(\tau_t) \mathbf{H}_{0}(\tau_t)\mathbf{R}_T(\tau_t) =  \frac{1}{T^2h^2}\sum_{s=1}^{T}\mathbf{u}_s^\top\mathbf{W}_{s-1}^\top \left\{\frac{1}{T}\sum_{t=1}^{T}\mathbf{H}_0(\tau_t)K^2\left(\frac{\tau_s-\tau_t}{h}\right)\right\} \mathbf{W}_{s-1}\mathbf{u}_s\nonumber\\
			&& + \frac{1}{T^2h^2}\sum_{s=1}^{T}\sum_{v=1,\neq s}^{T}\mathbf{u}_s^\top\mathbf{W}_{s-1}^\top \left\{\frac{1}{T}\sum_{t=1}^{T}\mathbf{H}_0(\tau_t)K\left(\frac{\tau_s-\tau_t}{h}\right)K\left(\frac{\tau_v-\tau_t}{h}\right) \right\}\mathbf{W}_{v-1} \mathbf{u}_v\nonumber\\
			&:=&J_{T,8}+J_{T,9},\nonumber
		\end{eqnarray*}
		where the definitions of $J_{T,8}$ and $J_{T,9}$ should be obvious.
		
		Consider $J_{T,8}$. We first introduce some additional notation to facilitate the development. Let $A_Q=\widetilde{v}_0\cdot \mathrm{tr}\left\{\int_{0}^{1} \bm{\Sigma}_Q(\tau) \mathrm{d}\tau\right\}$ and $B_Q= 4 C_B\cdot \mathrm{tr}\left\{\int_{0}^{1}\bm{\Sigma}_Q^2(\tau) \mathrm{d}\tau\right\}$, where 
		$$
		\bm{\Sigma}_Q(\tau)=\mathbf{H}(\tau)^{1/2}\mathbf{C}\mathbf{V}_{\mathbf{b}}(\tau)\mathbf{C}^\top \mathbf{H}(\tau)^{1/2}. 
		$$
		Simple algebra shows that
		\begin{eqnarray*}
			&&\mathbf{u}_t^\top\mathbf{W}_{t-1}^\top\bm{\Sigma}_{\mathbf{W}}^{-1}(\tau)\mathbf{C}^\top\mathbf{H}(\tau)\mathbf{C}\bm{\Sigma}_{\mathbf{W}}^{-1}(\tau)\mathbf{W}_{t-1}\mathbf{u}_t \nonumber\\
			&=& \mathrm{tr}\left\{\left[\left(\bm{\Sigma}_{\mathbf{w}}^{-1}(\tau)\mathbf{w}_{t-1}\mathbf{w}_{t-1}^\top\bm{\Sigma}_{\mathbf{w}}^{-1}(\tau)\right)\otimes  \mathbf{u}_t\mathbf{u}_t^\top \right]\cdot\mathbf{C}^\top\mathbf{H}(\tau)\mathbf{C} \right\},\nonumber
		\end{eqnarray*}
		which implies that
		\begin{eqnarray*}
			J_{T,8} &=& \widetilde{v}_0\frac{1}{T^2h}\sum_{s=1}^{T}\mathbf{u}_s^\top\mathbf{W}_{s-1}^\top\left[\mathbf{H}_0(\tau_s)+O(h)\right]\mathbf{W}_{s-1}\mathbf{u}_s\nonumber\\
			&=& \widetilde{v}_0\frac{1}{T^2h}\sum_{s=1}^{T}\mathrm{tr}\left\{\left[\left(\bm{\Sigma}_{\mathbf{w}}^{-1}(\tau_s)\mathbf{w}_{s-1}\mathbf{w}_{s-1}^\top\bm{\Sigma}_{\mathbf{w}}^{-1}(\tau_s)\right)\otimes  \mathbf{u}_s\mathbf{u}_s^\top \right]\cdot \mathbf{C}^\top\mathbf{H}(\tau_s)\mathbf{C}\right\}+O_P(T^{-1})\nonumber\\
			&=& \widetilde{v}_0\frac{1}{T^2h}\sum_{t=1}^{T}\mathrm{tr}\left\{\left[\bm{\Sigma}_{\mathbf{w}}^{-1}(\tau_t)\otimes  \bm{\Omega}(\tau_t)\right]\cdot\mathbf{C}^\top\mathbf{H}(\tau_t)\mathbf{C}\right\}+O_P(T^{-1}+T^{-3/2}h^{-1})\nonumber\\
			&=&(Th)^{-1}A_Q + o_P\left(T^{-1}h^{-1/2}\right).\nonumber
		\end{eqnarray*}
		
		Consider $J_{T,9}$. Let $w_{s,v}=\frac{1}{T\sqrt{h}}\int_{-1}^{1}K\left(u\right)K\left(u+\frac{s-v}{Th}\right)\mathrm{d}u$. Since
		\begin{eqnarray*}
			&&\frac{1}{T}\sum_{t=1}^{T}\mathbf{H}_0(\tau_t)K\left(\frac{\tau_s-\tau_t}{h}\right)K\left(\frac{\tau_v-\tau_t}{h}\right)\nonumber\\
			&=&h\int_{-1}^{1}\mathbf{H}_0(\tau_s+uh)K\left(u\right)K\left(u+\frac{s-v}{Th}\right)\mathrm{d}u + O(1/(Th)),\nonumber
		\end{eqnarray*}
		we have
		$$
		T\sqrt{h}J_{T,9}= 2\sum_{s=2}^{T}\sum_{v=1}^{s-1}\mathbf{u}_s^\top\mathbf{W}_{s-1}^\top\mathbf{H}_0(\tau_s)\mathbf{W}_{v-1}\mathbf{u}_v w_{s,v}(1+o(1))= 2 \widetilde{U}+o_P(1),
		$$
		where the definition of $\widetilde{U}$ is obvious. By Lemma \ref{L10}, we have $$\widetilde{U} \to_D N\left(0,\sigma_{\widetilde{U}}^2\right),$$ where $ \sigma_{\widetilde{U}}^2 = \lim_{T \to \infty}\sum_{s=2}^{T}\mathrm{tr}\left\{E\left(\mathbf{H}_0^\top(\tau_s)\mathbf{W}_{s-1}\mathbf{u}_s\mathbf{u}_s^\top\mathbf{W}_{s-1}^\top\mathbf{H}_0(\tau_s)\right)E\left(\sum_{v=1}^{s-1}\mathbf{W}_{v-1}\mathbf{u}_v\mathbf{u}_v^\top\mathbf{W}_{v-1}^\top\right)w_{s,v}^2\right\}$. We then show that $\sigma_{\widetilde{U}}^2 = C_B \mathrm{tr}\left\{\int_{0}^{1}\bm{\Sigma}_Q(\tau)^2\mathrm{d}\tau\right\}$. Let $\mathbf{V}_1 (\tau)=\mathbf{H}_0(\tau)\mathbf{V}_2(\tau)\mathbf{H}_0(\tau)$ and $\mathbf{V}_2(\tau) = \bm{\Sigma}_{\mathbf{w}}(\tau) \otimes \bm{\Omega}(\tau)$. Write
		\begin{eqnarray*}
			&&\sum_{t=2}^{T}E\left(\mathbf{H}_0^\top(\tau_t)\mathbf{W}_{t-1}\bm{\eta}_t\mathbf{u}_t^\top\mathbf{W}_{t-1}^\top\mathbf{H}_0(\tau_t)\right)E\left( \sum_{s=1}^{t-1}\mathbf{W}_{s-1}\mathbf{u}_s\mathbf{u}_s^\top\mathbf{W}_{s-1}^\top\right)w_{s,t}^2\nonumber\\
			&=&\sum_{t=2}^{T}\sum_{s=1}^{t-1}\mathbf{V}_1(\tau_t)\mathbf{V}_2(\tau_s)w_{s,t}^2 =  \frac{1}{T^2h}\sum_{t=2}^{T}\sum_{s=1}^{t-1}\mathbf{V}_1(\tau_t)\mathbf{V}_2(\tau_s) \left[\int_{-1}^{1}K\left(u\right)K\left(u+\frac{t-s}{Th}\right)\mathrm{d}u\right]^2\nonumber\\
			&=&\frac{1}{T^2h}\sum_{s=1}^{T-1}\sum_{j=1}^{T-s}\mathbf{V}_1(\tau_s+j/T)\mathbf{V}_2(\tau_s)\left[\int_{-1}^{1}K\left(u\right)K\left(u+\frac{t-s}{Th}\right)\mathrm{d}u\right]^2\nonumber\\
			&=&\frac{1}{T^2h}\sum_{s=1}^{T-1}\sum_{j=1}^{T-s}\mathbf{V}_1(\tau_s+j/T)\mathbf{V}_2(\tau_s)\left[\int_{-1}^{1}K\left(u\right)K\left(u+\frac{j}{Th}\right)\mathrm{d}u\right]^2\nonumber\\
			&=&\frac{1}{T^2h}\sum_{s=1}^{T-1}\sum_{j=1}^{T-s}\mathbf{V}_1(\tau_s)\mathbf{V}_2(\tau_s)\left[\int_{-1}^{1}K\left(u\right)K\left(u+\frac{j}{Th}\right)\mathrm{d}u\right]^2\nonumber\\
			&&+\frac{1}{T^2h}\sum_{s=1}^{T-1}\sum_{j=1}^{T-s}O(j/T)\mathbf{V}_2(\tau_s)\left[\int_{-1}^{1}K\left(u\right)K\left(u+\frac{j}{Th}\right)\mathrm{d}u\right]^2 := J_{T,9} + J_{T,10},\nonumber
		\end{eqnarray*}
		where the definitions of $J_{T,9} $ and $J_{T,10}$ are obvious.
		
		It is easy to verify $\mathrm{tr}\left\{J_{T,9}\right\} \to C_B \mathrm{tr}\left\{\int_{0}^{1}\bm{\Sigma}_Q(\tau)^2\mathrm{d}\tau\right\}$. For $J_{T,10}$, we have
		\begin{eqnarray*}
			\left\|J_{T,10}\right\|&\leq&M \frac{1}{Th}\sum_{j=1}^{T}j/T\left[\int_{-1}^{1}K\left(u\right)K\left(u+\frac{j}{Th}\right)\mathrm{d}u\right]^2\nonumber\\
			&=&Mh  \int_{0}^{2}v \left[\int_{-1}^{1}K\left(u\right)K\left(u+v\right)\mathrm{d}u\right]^2\mathrm{d}v + o(1) =o(1).\nonumber
		\end{eqnarray*}
		
		Combining the above results, we have proved
		\begin{eqnarray*}
			T\sqrt{h}\left[\frac{1}{T}\sum_{t=1}^{T}\mathbf{R}_T^\top(\tau_t) \mathbf{H}_{0}(\tau_t)\mathbf{R}_T(\tau_t)-(Th)^{-1}A_Q\right]\to_D N\left(0,B_Q\right).\nonumber
		\end{eqnarray*}
		
		Note that
		\begin{eqnarray*}
			&&\frac{1}{T}\sum_{t=1}^{T}\left[\mathbf{C}\widehat{\mathbf{b}}(\tau_t)-\widehat{\mathbf{c}}\right]^\top\mathbf{H}(\tau_t)\left[\mathbf{C}\widehat{\mathbf{b}}(\tau_t)-\widehat{\mathbf{c}}\right]-\frac{1}{T}\sum_{t=1}^{T}\left[\mathbf{C}\widehat{\mathbf{b}}(\tau_t)-\mathbf{c}\right]^\top\mathbf{H}(\tau_t)
			\left[\mathbf{C}\widehat{\mathbf{b}}(\tau_t)-\mathbf{c}\right]\nonumber\\
			&=&\left( \widehat{\mathbf{c}}-\mathbf{c}\right)^\top\frac{1}{T}\sum_{t=1}^{T} \mathbf{H}(\tau_t)\left(\widehat{\mathbf{c}}-\mathbf{c}\right) - 2\left( \widehat{\mathbf{c}}-\mathbf{c}\right)^\top\frac{1}{T}\sum_{t=1}^{T} \mathbf{H}(\tau_t)\left(\mathbf{C}\widehat{\mathbf{b}}(\tau_t)-\mathbf{c}\right) := J_{T,11} - 2 J_{T,12}.\nonumber
		\end{eqnarray*}
		
		Since $\widehat{\mathbf{c}} = \mathbf{c} + O_P\left(T^{-1/2}\right)$ by part (1), we have $J_{T,11} = O_P(T^{-1}) = o_P(T^{-1}h^{-1/2})$. For $J_{T,12}$, by \eqref{EqA.2.1} and \eqref{Rrate2}, we have
		\begin{eqnarray*}
			J_{T,12} &=& \left(\widehat{\mathbf{c}}-\mathbf{c}\right)^\top\frac{1}{T}\sum_{t=\lfloor Th\rfloor+1}^{\lfloor T(1-h)\rfloor}\mathbf{H}(\tau_t)\mathbf{C}\bm{\Sigma}_{\mathbf{W}}^{-1}(\tau_t)\mathbf{R}_T(\tau_t) + o_P\left(T^{-1}h^{-1/2}\right)\nonumber\\
			&=&  O_P(T^{-1})+o_P\left(T^{-1}h^{-1/2}\right)=o_P\left(T^{-1}h^{-1/2}\right)\nonumber
		\end{eqnarray*}
		provided that
		\begin{eqnarray*}
			&& \frac{1}{T}\sum_{t=\lfloor Th\rfloor+1}^{\lfloor T(1-h)\rfloor}\mathbf{H}(\tau_t)\mathbf{C}\bm{\Sigma}_{\mathbf{W}}^{-1}(\tau_t)\mathbf{R}_T(\tau_t) \nonumber\\
			&=& \frac{1}{T}\sum_{s=1}^{T}\left\{\frac{1}{Th}\sum_{t=\lfloor Th\rfloor+1}^{\lfloor T(1-h)\rfloor}\mathbf{H}(\tau_t)\mathbf{C}\bm{\Sigma}_{\mathbf{W}}^{-1}(\tau_t)K\left(\frac{\tau_s-\tau_t}{h}\right)\right\} \mathbf{W}_{s-1}\mathbf{u}_s = O_P(T^{-1/2}).\nonumber
		\end{eqnarray*}
		
		We then conclude that $T\sqrt{h}\left[\widehat{Q}_{\mathbf{C},\mathbf{H}}-(Th)^{-1}A_Q\right]\to_D N\left(0,B_Q\right)$. We next complete this proof by showing $T\sqrt{h}\left[\widehat{Q}_{\mathbf{C},\mathbf{H}} - \widehat{Q}_{\mathbf{C},\widehat{\mathbf{H}}}\right] = o_P(1)$.
		
		Observe that
		\begin{eqnarray*}
			&&T\sqrt{h}\left[\widehat{Q}_{\mathbf{C},\mathbf{H}} - \widehat{Q}_{\mathbf{C},\widehat{\mathbf{H}}}\right]\nonumber\\ 
			&=& T\sqrt{h}\frac{1}{T}\sum_{t=\lfloor Th\rfloor+1}^{\lfloor T(1-h)\rfloor}\left(\left[\mathbf{C}\widehat{\mathbf{b}}(\tau_t)-\widehat{\mathbf{c}}\right]^\top\mathbf{H}(\tau_t)\left[\mathbf{C}\widehat{\mathbf{b}}(\tau_t)-\widehat{\mathbf{c}}\right]-\left[\mathbf{C}\widehat{\mathbf{b}}(\tau_t)-\widehat{\mathbf{c}}\right]^\top\widehat{\mathbf{H}}(\tau_t)
			\left[\mathbf{C}\widehat{\mathbf{b}}(\tau_t)-\widehat{\mathbf{c}}\right]\right)\nonumber\\
			&& + O_P(Th^{5.5} + \sqrt{h}\log T)\nonumber\\
			&=& T\sqrt{h} \times O_P(\rho_T^3) + O_P(Th^{5.5} + \sqrt{h}\log T) = o_P(1)\nonumber
		\end{eqnarray*}
		using $\sup_{\tau\in[h,1-h]}\left|\mathbf{H}(\tau) - \widehat{\mathbf{H}}(\tau) \right| = O_P(\rho_T)$.
			\end{proof}

		\begin{proof}[Proof of Corollary \ref{Coro3}]
			\item
			\noindent (1). Under the local alternative (9), we have $\mathbf{C}\mathbf{b}(\tau) = \mathbf{c} + d_T \mathbf{f}(\tau)$ and thus
			\begin{eqnarray*}
				&&\widehat{Q}_{\mathbf{C},\mathbf{H}} - \frac{1}{T}\sum_{t=1}^{T}\mathbf{R}_T^\top(\tau_t) \mathbf{H}_{0}(\tau_t)\mathbf{R}_T(\tau_t) \nonumber\\
				&=& d_T^2 \frac{1}{T}\sum_{t=1}^{T}\mathbf{f}(\tau_t)^\top \mathbf{H}(\tau_t)\mathbf{f}(\tau_t) + 2d_T \frac{1}{T}\sum_{t=1}^{T}\mathbf{f}(\tau_t)^\top \mathbf{H}(\tau_t)\left(\mathbf{C}\widehat{\mathbf{b}}(\tau_t)-\mathbf{C}\mathbf{b}(\tau_t)\right)\nonumber\\
				&& + \left[\frac{1}{T}\sum_{t=1}^{T}\left(\mathbf{C}\widehat{\mathbf{b}}(\tau_t)-\mathbf{C}\mathbf{b}(\tau_t)\right)^\top \mathbf{H}(\tau_t)\left(\mathbf{C}\widehat{\mathbf{b}}(\tau_t)-\mathbf{C}\mathbf{b}(\tau_t)\right) - \frac{1}{T}\sum_{t=1}^{T}\mathbf{R}_T^\top(\tau_t) \mathbf{H}_{0}(\tau_t)\mathbf{R}_T(\tau_t) \right]\nonumber\\
				&=& d_T^2 \frac{1}{T}\sum_{t=1}^{T}\mathbf{f}(\tau_t)^\top \mathbf{H}(\tau_t)\mathbf{f}(\tau_t) + I_{T,7} + I_{T,8}.\nonumber
			\end{eqnarray*}
			
			Since $\mathbf{C}\widehat{\mathbf{b}}(\tau)-\mathbf{C}\mathbf{b}(\tau) = O_P\left(d_T \rho_T+\sqrt{\frac{\log T}{Th}}\rho_T\right) + \mathbf{C}\bm{\Sigma}_{\mathbf{W}}^{-1}(\tau)\mathbf{R}_T(\tau)$ uniformly over $\tau\in[h,1-h]$ and
			$$
			\frac{1}{T}\sum_{t=1}^{T}\mathbf{f}(\tau_t)^\top \mathbf{H}(\tau_t) \mathbf{C}\bm{\Sigma}_{\mathbf{W}}^{-1}(\tau_t)\mathbf{R}_T(\tau_t)=O_P(T^{-1/2}),
			$$
			we have $I_{T,7} = O_P\left(d_T (d_T \rho_T + \sqrt{\frac{\log T}{Th}}\rho_T + T^{-1/2} )\right)=o_P(T^{-1}h^{-1/2}) $.
			
			For $I_{T,8}$, since $\sup_{\tau \in [0,1]}\left\|\mathbf{R}_T(\tau)\right\| = O_P\left(\sqrt{\frac{\log T}{Th}} \right)$, we have
			$$
			I_{T,8} = O_P\left(d_T^2\rho_T^2 + d_T\rho_T\sqrt{\frac{\log T}{Th}} \right) = o_P(T^{-1}h^{-1/2}).
			$$
			
			As $T\sqrt{h}\left(\frac{1}{T}\sum_{t=1}^{T}\mathbf{R}_T^\top(\tau_t) \mathbf{H}_{0}(\tau_t)\mathbf{R}_T(\tau_t)- (Th)^{-1}A_Q\right) \to N(0,B_Q)$,
			we have
			$$
			T\sqrt{h}\left(\widehat{Q}_{\mathbf{C},\mathbf{H}}- (Th)^{-1}A_Q\right) \to N(\delta_1,B_Q).
			$$
			
			In addition, similar to the proof of Theorem \ref{Thm4}, we have $T\sqrt{h}\left(\widehat{Q}_{\mathbf{C},\mathbf{H}}-\widehat{Q}_{\mathbf{C},\widehat{\mathbf{H}}}\right)=o_P(1)$. The proof is now completed.
			
			\medskip
			
			\noindent (2). Part (2) follows directly from part (1).
			
		\end{proof}

	\begin{proof}[Proof of Corollary \ref{Coro4}]
		\item
	Indeed,	we can regard $\Delta\mathbf{y}_t^*$ as from the data generating process \eqref{Eq2.10} with $\bm{\alpha}^*(\tau) = \mathbf{0}_{d\times r_0}$, $\bm{\Gamma}_j^*(\tau) = \mathbf{0}_d$, $\bm{\omega}^*(\tau_t) = \mathbf{I}_d$, and further we replace the cointegrated component $\bm{\beta}^\top \mathbf{y}_{t-1}$ by $\mathbf{z}_t^*$. By using similar arguments to those used in the proof of Theorem \ref{Thm4}, we can show that 
			$$
	T\sqrt{h}\left(\widetilde{Q}_{\mathbf{C}, \widehat{\mathbf{H}}}^b -\frac{1}{Th}s\widetilde{v}_0\right)\to_D N(0,4s C_B).
	$$
	
	\end{proof}

	\section{Proofs of the Preliminary Lemmas}\label{AppPPre}

	\begin{proof}[Proof of Lemma \ref{L4}]
		\item
		
		\noindent (1). Define $\delta_T = \lfloor T(\tau-h) \rfloor$. Write
		\begin{eqnarray*}
			&&T^{-2}h^{-1}\sum_{t=1}^{T}\mathbf{y}_{t-1}\mathbf{y}_{t-1}^\top K\left(\frac{\tau_t-\tau}{h}\right) \nonumber\\
			&=& (T^{-1/2}\mathbf{y}_{\delta_T})(T^{-1/2}\mathbf{y}_{\delta_T})^\top \cdot (Th)^{-1}\sum_{t=1}^{T}K\left(\frac{\tau_t-\tau}{h}\right)\nonumber\\
			&&+ (T^{-1/2}\mathbf{y}_{\delta_T})\cdot T^{-3/2}h^{-1}\sum_{t=1}^{T} (\mathbf{y}_{t-1}-\mathbf{y}_{\delta_T})^\top K\left(\frac{\tau_t-\tau}{h}\right)\nonumber\\
			&& +  T^{-3/2}h^{-1}\sum_{t=1}^{T} (\mathbf{y}_{t-1}-\mathbf{y}_{\delta_T}) K\left(\frac{\tau_t-\tau}{h}\right)\cdot (T^{-1/2}\mathbf{y}_{\delta_T})^\top\nonumber\\
			&& + T^{-2}h^{-1}\sum_{t=1}^{T} (\mathbf{y}_{t-1}-\mathbf{y}_{\delta_T}) (\mathbf{y}_{t-1}-\mathbf{y}_{\delta_T})^\top K\left(\frac{\tau_t-\tau}{h}\right)\nonumber\\
			&=&\mathbf{K}_{T,1}+ \mathbf{K}_{T,2} + \mathbf{K}_{T,3} + \mathbf{K}_{T,4}.\nonumber
		\end{eqnarray*}
		
		Consider $\mathbf{K}_{T,1}$. By part (1) and the continuous mapping theorem, we have
		\begin{eqnarray*}
			\mathbf{K}_{T,1} &=& \left(T^{-1/2}\mathbf{P}_{\bm{\beta}_\perp}\sum_{t=1}^{\delta_T}\bm{\Psi}_{\tau_t}(1)\bm{\omega}(\tau_t)\bm{\varepsilon}_t\right)\left(T^{-1/2}\mathbf{P}_{\bm{\beta}_\perp}\sum_{t=1}^{\delta_T}\bm{\Psi}_{\tau_t}(1)\bm{\omega}(\tau_t)\bm{\varepsilon}_t\right)^\top + O_P(T^{1/\delta-1/2})\nonumber\\
			&\to_D & \mathbf{P}_{\bm{\beta}_\perp}\int_{0}^{\tau}\bm{\Psi}_u(1) \bm{\omega}(u)\mathrm{d} \mathbf{W}_d(u) \left(\int_{0}^{\tau}\bm{\Psi}_u(1) \bm{\omega}(u)\mathrm{d} \mathbf{W}_d(u)\right)^\top\overline{\bm{\beta}}_{\perp}\bm{\beta}_{\perp}^\top.\nonumber
		\end{eqnarray*}

		By the proof of part (1), we have $\|\mathbf{y}_{t-1}-\mathbf{y}_{\delta_T}\|_\delta = O(\sqrt{Th})$ for $t \in \left[\lfloor T(\tau-h) \rfloor,\lfloor T(\tau+h)\rfloor \right]$. Hence, both $\mathbf{K}_{T,2}$ and $\mathbf{K}_{T,3}$ are $O_P(\sqrt{h})$. Similarly, $\mathbf{K}_{T,4} = O_P(h)$.
		
		The proof of part (1) is now completed.
		
		\medskip

		\noindent (2). Write
		\begin{eqnarray*}
			\mathbf{D}_T^{-1} \bm{\Xi}_T^\top(\tau)\left[ \sum_{t=1}^{T}\bm{\alpha}_{\perp}^\top(\tau)\mathbf{y}_{t-1}\mathbf{y}_{t-1}^\top\bm{\alpha}_{\perp}(\tau)\left(\frac{\tau_t-\tau}{h}\right)^l K\left(\frac{\tau_t-\tau}{h}\right)\right] \bm{\Xi}_T(\tau) \mathbf{D}_T^{-1}=\left[\begin{matrix}
				\mathbf{K}_{T,l}(1) & \mathbf{K}_{T,l}(2)\\
				\mathbf{K}_{T,l}^\top(2) & \mathbf{K}_{T,l}(3)\\
			\end{matrix} \right],\nonumber
		\end{eqnarray*}
		where
		\begin{eqnarray*}
			\mathbf{K}_{T,l}(1) &=& \frac{1}{T^2h}\bm{\xi}_T^\top(\tau)\sum_{t=1}^{T}\bm{\alpha}_{\perp}^\top(\tau)\mathbf{y}_{t-1}\mathbf{y}_{t-1}^\top\bm{\alpha}_{\perp}(\tau)\bm{\xi}_T(\tau)\left(\frac{\tau_t-\tau}{h}\right)^l K\left(\frac{\tau_t-\tau}{h}\right),\nonumber\\
			\mathbf{K}_{T,l}(2) &=& \frac{1}{T^2h^{3/2}}\bm{\xi}_T^\top(\tau)\sum_{t=1}^{T}\bm{\alpha}_{\perp}^\top(\tau)\mathbf{y}_{t-1}\mathbf{y}_{t-1}^\top\bm{\alpha}_{\perp}(\tau)\bm{\xi}_{T,\perp}(\tau)\left(\frac{\tau_t-\tau}{h}\right)^l K\left(\frac{\tau_t-\tau}{h}\right),\nonumber\\
			\mathbf{K}_{T,l}(3) &=& \frac{1}{T^2h^{2}}\bm{\xi}_{T,\perp}^\top(\tau)\sum_{t=1}^{T}\bm{\alpha}_{\perp}^\top(\tau)\mathbf{y}_{t-1}\mathbf{y}_{t-1}^\top\bm{\alpha}_{\perp}(\tau)\bm{\xi}_{T,\perp}(\tau)\left(\frac{\tau_t-\tau}{h}\right)^l K\left(\frac{\tau_t-\tau}{h}\right).\nonumber
		\end{eqnarray*}
		
		Let $\delta_T = \lfloor T(\tau-h) \rfloor$. Consider $\mathbf{K}_{T,l}(1)$. For $\lfloor T(\tau-h) \rfloor + 2 \leq t \leq \lfloor T(\tau+h) \rfloor+1$, we have
		$$
		\max_t\|1/\sqrt{T}\sum_{j=\delta_T+1}^{t-1}\Delta \mathbf{y}_j\|_\delta = O(\sqrt{h})
		$$
		and thus
		\begin{eqnarray*}
			\mathbf{K}_{T,l}(1) &=& \bm{\xi}_T^\top(\tau)\bm{\alpha}_{\perp}^\top(\tau)\mathbf{y}_{\delta_T}/\sqrt{T} (\bm{\xi}_T^\top(\tau)\bm{\alpha}_{\perp}^\top(\tau)\mathbf{y}_{\delta_T}/\sqrt{T})^\top \widetilde{c}_l + O_P(\sqrt{h})\nonumber\\
			& =& \widetilde{c}_l\mathbf{q}_T^\top(\tau)\mathbf{q}_T(\tau) + O_P(\sqrt{h}).\nonumber
		\end{eqnarray*}

		Consider $\mathbf{K}_{T,l}(2)$. Since $\mathbf{q}_T^\top(\tau) \bm{\xi}_{T,\perp}(\tau) = 0$, we have
		\begin{eqnarray*}
			&&\mathbf{K}_{T,l}(2)\nonumber\\
			&=& \bm{\xi}_T^\top(\tau)\mathbf{q}_T(\tau) \frac{1}{Th^{3/2}} \sum_{t=1}^{T}\frac{1}{\sqrt{T}}(\mathbf{y}_{t-1} - \mathbf{y}_{\delta_T})^\top\left(\frac{\tau_t-\tau}{h}\right)^l K\left(\frac{\tau_t-\tau}{h}\right)\bm{\alpha}_{\perp}(\tau)\bm{\xi}_{T,\perp}(\tau)\nonumber\\
			&& + \bm{\xi}_T^\top(\tau)\bm{\alpha}_{\perp}^\top(\tau) \frac{1}{Th^{3/2}} \sum_{t=1}^{T}\frac{1}{\sqrt{T}}(\mathbf{y}_{t-1} - \mathbf{y}_{\delta_T})\frac{1}{\sqrt{T}}(\mathbf{y}_{t-1} - \mathbf{y}_{\delta_T})^\top \left(\frac{\tau_t-\tau}{h}\right)^lK\left(\frac{\tau_t-\tau}{h}\right)\bm{\alpha}_{\perp}(\tau)\bm{\xi}_{T,\perp}(\tau)\nonumber\\
			&=&\mathbf{K}_{T,l}(2,1) + \mathbf{K}_{T,l}(2,2).\nonumber
		\end{eqnarray*}
		
		Again, for $\lfloor T(\tau-h) \rfloor + 2 \leq t \leq \lfloor T(\tau+h) \rfloor+1$, we have
		$$
		\max_t\|1/\sqrt{T}\sum_{j=\delta_T+1}^{t-1}\Delta \mathbf{y}_j\|_\delta = O(\sqrt{h})
		$$
		and thus $\mathbf{K}_{T,l}(2,2) = O_P(\sqrt{h})$. For $\mathbf{K}_{T,l}(2,1)$, by the proof of Lemma \ref{L1} and $|\tau_j-\tau|\leq h$, we have
		\begin{eqnarray*}
			&&\frac{1}{\sqrt{2Th}}(\mathbf{y}_{t-1} - \mathbf{y}_{\delta_T})^\top \nonumber\\
			&=& \frac{1}{\sqrt{2Th}}\sum_{j=\delta_T+1}^{t-1}\Delta\widetilde{\mathbf{y}}_j(\tau_j) + O_P(\sqrt{h/T})\nonumber \\
			&=& \frac{1}{\sqrt{2Th}}\sum_{j=\delta_T+1}^{t-1} \mathbf{P}_{\bm{\beta}_\perp}\bm{\Psi}_{\tau_j}(1)\bm{\omega}(\tau_j)\bm{\varepsilon}_j +  O_P(\sqrt{h/T} + 1/\sqrt{Th})\nonumber\\
			&=& \frac{1}{\sqrt{2Th}}\sum_{j=\delta_T+1}^{t-1} \mathbf{P}_{\bm{\beta}_\perp}\bm{\Psi}_{\tau}(1)\bm{\omega}(\tau)\bm{\varepsilon}_j +  O_P(\sqrt{h/T} + 1/\sqrt{Th}+h),\nonumber
		\end{eqnarray*}
		which implies
		\begin{eqnarray*}
			\mathbf{K}_{T,l}(2,1) &=& \bm{\xi}_T^\top(\tau)\mathbf{q}_T(\tau) \frac{\sqrt{2}}{Th} \sum_{t=1}^{T}\left(\frac{1}{\sqrt{2Th}}\sum_{j=\delta_T+1}^{t-1} \mathbf{P}_{\bm{\beta}_\perp}\bm{\Psi}_{\tau}(1)\bm{\omega}(\tau)\bm{\varepsilon}_j\right)^\top\nonumber\\ 
			&& \times \left(\frac{\tau_t-\tau}{h}\right)^lK\left(\frac{\tau_t-\tau}{h}\right)\bm{\alpha}_{\perp}(\tau)\bm{\xi}_{T,\perp}(\tau) + O_P(\sqrt{h}).\nonumber
		\end{eqnarray*}
		Similarly, we have
		\begin{eqnarray*}
\mathbf{K}_{T,l}(3) &= &\bm{\xi}_{T,\perp}^\top(\tau)\bm{\alpha}_{\perp}^\top(\tau) \frac{2}{Th} \sum_{t=1}^{T}\left(\frac{1}{\sqrt{2Th}}\sum_{j=\delta_T+1}^{t-1} \mathbf{P}_{\bm{\beta}_\perp}\bm{\Psi}_{\tau}(1)\bm{\omega}(\tau)\bm{\varepsilon}_j\right)\nonumber \\
			&&\times\left(\frac{1}{\sqrt{2Th}}\sum_{j=\delta_T+1}^{t-1} \mathbf{P}_{\bm{\beta}_\perp}\bm{\Psi}_{\tau}(1)\bm{\omega}(\tau)\bm{\varepsilon}_j\right)^\top \left(\frac{\tau_t-\tau}{h}\right)^lK\left(\frac{\tau_t-\tau}{h}\right)\bm{\alpha}_{\perp}(\tau)\bm{\xi}_{T,\perp}(\tau) + O_P(h).\nonumber
		\end{eqnarray*}
		
		Furthermore, by the usual functional central limit theory (e.g., Theorem 4.1 in \citealp{hall2014martingale}), we have
		\begin{eqnarray*}
			&&\left(\frac{1}{\sqrt{T}}\sum_{t=1}^{\delta_T}\mathbf{P}_{\bm{\beta}_\perp}\bm{\Psi}_{\tau_t}(1)\bm{\omega}(\tau_t)\bm{\varepsilon}_t, \frac{1}{\sqrt{2Th}}\sum_{t=\delta_T+1}^{\delta_T(u)}\bm{\alpha}_{\perp}^\top(\tau)\mathbf{P}_{\bm{\beta}_\perp}\bm{\Psi}_{\tau}(1)\bm{\omega}(\tau)\bm{\varepsilon}_t \right)\nonumber\\ 
			&\Rightarrow& \left(\mathbf{W}_d(\tau,\bm{\Sigma}_{\mathbf{y}}(\tau)), \mathbf{W}_d^{*}(u,\bm{\Sigma}_{\bm{\alpha}}(\tau)) \right),\nonumber
		\end{eqnarray*}
		where $\delta_T(u) = \delta_T + \lfloor2uTh \rfloor+ 1$ and $\mathbf{W}_d^{*}(\cdot,\cdot)$ is independent of $\mathbf{W}_d(\cdot,\cdot)$. Note that the Lindeberg condition can be verified as $\|\bm{\Psi}_{\tau_t}(1)\bm{\omega}(\tau_t)\bm{\varepsilon}_t\|_\delta <\infty$ for some $\delta >2$, and the convergence of conditional variance can be verified as $E(\bm{\varepsilon}_t\bm{\varepsilon}_t^\top \mid \mathcal{F}_{t-1}) = \mathbf{I}_d$ a.s. Then part (2) follows directly from the continuous mapping theorem.
		
		\medskip
		
		\noindent (3). Write
		\begin{eqnarray*}
			&&\frac{1}{\sqrt{Th}}\mathbf{D}_T^{-1} \bm{\Xi}_T^\top(\tau) \sum_{t=1}^{T}\bm{\alpha}_{\perp}^\top(\tau)\mathbf{y}_{t-1}\mathbf{y}_{t-1}^\top\bm{\beta}\left(\frac{\tau_t-\tau}{h}\right)^l K\left(\frac{\tau_t-\tau}{h}\right)\nonumber\\
			&=&\left[\begin{matrix}
				\frac{1}{T^{3/2}h}\sum_{t=1}^{T}\bm{\xi}_T^\top(\tau)\bm{\alpha}_{\perp}^\top(\tau)\mathbf{y}_{t-1}\mathbf{z}_{t-1}^\top\bm{\beta} \left(\frac{\tau_t-\tau}{h}\right)^lK\left(\frac{\tau_t-\tau}{h}\right)\nonumber\\
				\frac{1}{T^{3/2}h^{3/2}}\sum_{t=1}^{T}\bm{\xi}_{T,\perp}^\top(\tau)\bm{\alpha}_{\perp}^\top(\tau)\mathbf{y}_{t-1}\mathbf{z}_{t-1}^\top\bm{\beta} \left(\frac{\tau_t-\tau}{h}\right)^lK\left(\frac{\tau_t-\tau}{h}\right)\nonumber\\
			\end{matrix} \right] =\left[\begin{matrix}
				\mathbf{K}_{T,l}(4)\\
				\mathbf{K}_{T,l}(5)\\
			\end{matrix} \right].\nonumber
		\end{eqnarray*}
		
		Consider $\mathbf{K}_{T,l}(4)$, 
		\begin{eqnarray*}
			&&\frac{1}{T^{3/2}h}\sum_{t=1}^{T}\bm{\xi}_T^\top(\tau)\bm{\alpha}_{\perp}^\top(\tau)\mathbf{y}_{t-1}\mathbf{z}_{t-1}^\top \left(\frac{\tau_t-\tau}{h}\right)^lK\left(\frac{\tau_t-\tau}{h}\right)\nonumber\\
			&=& \frac{1}{Th}\bm{\xi}_T^\top(\tau)\sum_{t=1}^{T}\left(\frac{1}{\sqrt{T}}\bm{\alpha}_{\perp}^\top(\tau)\mathbf{y}_{\delta_T}+\frac{1}{\sqrt{T}}\bm{\alpha}_{\perp}^\top(\tau)(\mathbf{y}_{t-1} - \mathbf{y}_{\delta_T})\right) \mathbf{z}_{t-1}^\top \left(\frac{\tau_t-\tau}{h}\right)^lK\left(\frac{\tau_t-\tau}{h}\right)\nonumber\\
			&=&\sqrt{\mathbf{q}_T^\top(\tau)\mathbf{q}_T(\tau)}\frac{1}{Th}\sum_{t=1}^{T}\mathbf{z}_{t-1}^\top \left(\frac{\tau_t-\tau}{h}\right)^lK\left(\frac{\tau_t-\tau}{h}\right)\nonumber\\
			&& + \frac{\sqrt{h}}{Th}\bm{\xi}_T^\top(\tau)\sum_{t=1}^{T}\frac{1}{\sqrt{Th}}\bm{\alpha}_{\perp}^\top(\tau)(\mathbf{y}_{t-1} - \mathbf{y}_{\delta_T})\mathbf{z}_{t-1}^\top\left(\frac{\tau_t-\tau}{h}\right)^l K\left(\frac{\tau_t-\tau}{h}\right)\nonumber\\
			&=& \mathbf{K}_{T,l}(4,1) + \mathbf{K}_{T,l}(4,2).\nonumber
		\end{eqnarray*} 
		
		Similar to the proof of Lemma \ref{L1}, by $\max_{t}\|\mathbf{z}_t - \widetilde{\mathbf{z}}_{t}(\tau_{t})\|_\delta = O(1/T)$ and using BN decomposition, we have
		\begin{eqnarray*}
			&&\frac{1}{Th}\sum_{t=1}^{T}\mathbf{z}_{t-1}\left(\frac{\tau_t-\tau}{h}\right)^l K\left(\frac{\tau_t-\tau}{h}\right)\nonumber\\
			&=&\frac{1}{Th}\sum_{t=1}^{T}\widetilde{\mathbf{z}}_{t-1}(\tau_{t-1})\left(\frac{\tau_t-\tau}{h}\right)^l K\left(\frac{\tau_t-\tau}{h}\right) + O_P(1/T) = O_P(1/\sqrt{Th})\nonumber
		\end{eqnarray*}
		and thus $\mathbf{K}_{T,l}(4,1) = O_P(1/\sqrt{Th})$. For $\mathbf{K}_{T,l}(4,2)$, by Lemma \ref{L1}(1), we have
		\begin{eqnarray*}
			&&\frac{1}{Th}\sum_{t=1}^{T}\frac{1}{\sqrt{Th}}\sum_{j=\delta_T+1}^{t-1}\Delta \mathbf{y}_{j}\mathbf{z}_{t-1}^\top \left(\frac{\tau_t-\tau}{h}\right)^l K\left(\frac{\tau_t-\tau}{h}\right) \nonumber\\
			&=& \frac{1}{Th}\sum_{t=1}^{T}\frac{1}{\sqrt{Th}}\sum_{j=\delta_T+1}^{t-1}\Delta \widetilde{\mathbf{y}}_{j}(\tau_j)\widetilde{\mathbf{z}}_{t-1}^\top(\tau_{t-1})\left(\frac{\tau_t-\tau}{h}\right)^l K\left(\frac{\tau_t-\tau}{h}\right) + O_P(\sqrt{h/T}).\nonumber
		\end{eqnarray*}
		Note that $\frac{1}{\sqrt{Th}}\sum_{j=\delta_T+1}^{t-1}\Delta \widetilde{\mathbf{y}}_{j}(\tau_{j}) = \mathbf{P}_{\bm{\beta}_\perp}\frac{1}{\sqrt{Th}}  \sum_{j=\delta_T+1}^{t-1}\widetilde{\mathbf{z}}_{j}(\tau_j) + \frac{1}{\sqrt{Th}}\mathbf{P}_{\bm{\beta}}\sum_{j=\delta_T+1}^{t-1}\Delta \widetilde{\mathbf{z}}_{j}(\tau_{j})$. Similarly, by Lemma \ref{L1}, we have
		\begin{eqnarray*}
			\|\frac{1}{\sqrt{Th}}\mathbf{P}_{\bm{\beta}}\sum_{j=\delta_T+1}^{t-1}\Delta \widetilde{\mathbf{z}}_{j}(\tau_{j})\|_{\delta/2}
			&\leq&  \frac{1}{\sqrt{Th}}\|\mathbf{P}_{\bm{\beta}}\widetilde{\mathbf{z}}_{\delta_T}(\tau_{\delta_T+1})\|_{\delta/2} + \frac{1}{\sqrt{Th}}\|\mathbf{P}_{\bm{\beta}}\widetilde{\mathbf{z}}_{t-1}(\tau_{t-1})\|_{\delta/2} \nonumber\\
			&&+ \frac{1}{\sqrt{Th}}\|\mathbf{P}_{\bm{\beta}}\sum_{j=\delta_T+1}^{t-1}(\widetilde{\mathbf{z}}_{j}(\tau_{j})-\widetilde{\mathbf{z}}_{j}(\tau_{j+1}))\|_{\delta/2}\nonumber\\
			& =& O(1/\sqrt{Th}).\nonumber
		\end{eqnarray*}
		In addition, since $\widetilde{\mathbf{z}}_{t}(\tau) = \bm{\Psi}_{\tau}(1)\bm{\omega}(\tau)\bm{\varepsilon}_t+\bm{\Psi}_{\tau}^*(L)\bm{\omega}(\tau)\bm{\varepsilon}_{t-1}-\bm{\Psi}_{\tau}^*(L)\bm{\omega}(\tau)\bm{\varepsilon}_{t}$,
		we have
		\begin{eqnarray*}
			&&\frac{1}{Th}\sum_{t=1}^{T}\frac{1}{\sqrt{Th}}\sum_{j=\delta_T+1}^{t-1}\Delta \widetilde{\mathbf{y}}_{j}(\tau_j)\widetilde{\mathbf{z}}_{t-1}^\top(\tau_{t-1})\left(\frac{\tau_t-\tau}{h}\right)^l K\left(\frac{\tau_t-\tau}{h}\right)\nonumber\\
			&=&\frac{1}{Th}\sum_{t=1}^{T}\mathbf{P}_{\bm{\beta}_\perp}\frac{1}{\sqrt{Th}}  \sum_{j=\delta_T+1}^{t-1}\widetilde{\mathbf{z}}_{j}(\tau_j)\widetilde{\mathbf{z}}_{t-1}^\top(\tau_{t-1})\left(\frac{\tau_t-\tau}{h}\right)^l K\left(\frac{\tau_t-\tau}{h}\right) + O_P(1/\sqrt{Th})\nonumber\\
			&=& \mathbf{P}_{\bm{\beta}_\perp}\frac{1}{Th}\sum_{t=1}^{T}\frac{1}{\sqrt{Th}}\sum_{j=\delta_T+1}^{t-1}\bm{\Psi}_{\tau_j}(1)\bm{\omega}(\tau_j)\bm{\varepsilon}_j (\bm{\Psi}_{\tau_{t-1}}(1)\bm{\omega}(\tau_{t-1})\bm{\varepsilon}_{t-1})^\top \left(\frac{\tau_t-\tau}{h}\right)^lK\left(\frac{\tau_t-\tau}{h}\right) \nonumber\\
			&& + \mathbf{P}_{\bm{\beta}_\perp}\frac{1}{Th}\sum_{t=1}^{T}\frac{1}{\sqrt{Th}}\sum_{j=\delta_T+1}^{t-1}\bm{\Psi}_{\tau_j}(1)\bm{\omega}(\tau_j)\bm{\varepsilon}_j (\bm{\Psi}_{\tau_{t-1}}^*(L)\bm{\omega}(\tau_{t-1})\bm{\varepsilon}_{t-2}-\bm{\Psi}_{\tau_{t-1}}^*(L)\bm{\omega}(\tau_{t-1})\bm{\varepsilon}_{t-1})^\top\\ &&\times\left(\frac{\tau_t-\tau}{h}\right)^lK\left(\frac{\tau_t-\tau}{h}\right) \nonumber\\
			&&+ \mathbf{P}_{\bm{\beta}_\perp}\frac{1}{Th}\sum_{t=1}^{T}\frac{1}{\sqrt{Th}}\bm{\Psi}_{\tau_{\delta_T+1}}^*(L)\bm{\omega}(\tau_{\delta_T+1})\bm{\varepsilon}_{\delta_T} \widetilde{\mathbf{z}}_{t-1}^\top(\tau_{t-1})\left(\frac{\tau_t-\tau}{h}\right)^l K\left(\frac{\tau_t-\tau}{h}\right)\nonumber\\
			&& + \mathbf{P}_{\bm{\beta}_\perp}\frac{1}{Th}\sum_{t=1}^{T}\frac{1}{\sqrt{Th}}\bm{\Psi}_{\tau_{t-1}}^*(L)\bm{\omega}(\tau_{t-1})\bm{\varepsilon}_{t-1} \widetilde{\mathbf{z}}_{t-1}^\top(\tau_{t-1})\left(\frac{\tau_t-\tau}{h}\right)^l K\left(\frac{\tau_t-\tau}{h}\right)\nonumber\\
			&& + \mathbf{P}_{\bm{\beta}_\perp}\frac{1}{Th}\sum_{t=1}^{T}\frac{1}{\sqrt{Th}}\sum_{j=\delta_T+1}^{t-2}\left(\bm{\Psi}_{\tau_{j+1}}^*(L)\bm{\omega}(\tau_{j+1})-\bm{\Psi}_{\tau_{j}}^*(L)\bm{\omega}(\tau_{j})\right)\bm{\varepsilon}_{j} \widetilde{\mathbf{z}}_{t-1}^\top(\tau_{t-1})\left(\frac{\tau_t-\tau}{h}\right)^l K\left(\frac{\tau_t-\tau}{h}\right)\nonumber\\
			&& =\mathbf{K}_{T,l}(4,21)+ \mathbf{K}_{T,l}(4,22) + O_P(1/\sqrt{Th}) .\nonumber
		\end{eqnarray*}
		Since $\sum_{j=\delta_T+1}^{t-2}\bm{\varepsilon}_j\bm{\varepsilon}_{t-1}^\top$ is a sequence of martingale differences, by using Burkholder's inequality, we have
		\begin{eqnarray*}
			&&\left\|\sum_{t=1}^{T}\sum_{j=\delta_T+1}^{t-2}\bm{\varepsilon}_j\bm{\varepsilon}_{t-1}^\top \left(\frac{\tau_t-\tau}{h}\right)^lK\left(\frac{\tau_t-\tau}{h}\right)\right\|_{\delta/2}^2\nonumber\\
			&\leq& O(1) \sum_{t=1}^{T}\left\|\sum_{j=\delta_T+1}^{t-2}\bm{\varepsilon}_j\bm{\varepsilon}_{t-1}^\top \left(\frac{\tau_t-\tau}{h}\right)^lK\left(\frac{\tau_t-\tau}{h}\right)\right\|_{\delta/2}^2 \nonumber\\
			&\leq&O(1) \sum_{t=1}^{T}\left\|\sum_{j=\delta_T+1}^{t-2}\bm{\varepsilon}_j\right\|_{\delta}^2\left\|\bm{\varepsilon}_{t-1}\right\|_{\delta}^2 \left(\frac{\tau_t-\tau}{h}\right)^{2l}K\left(\frac{\tau_t-\tau}{h}\right)^2 =O(T^2h^2).\nonumber
		\end{eqnarray*}
		Therefore, $\mathbf{K}_{T,l}(4,21) = O_P(1/\sqrt{Th})$. For $\mathbf{K}_{T,l}(4,22)$, let $\mathbf{p}_t = \sum_{j=\delta_T+1}^{t}\bm{\Psi}_{\tau_j}(1)\bm{\omega}(\tau_j)\bm{\varepsilon}_j$, $\mathbf{w}_t = \bm{\Psi}_{\tau}^*(L)\bm{\omega}(\tau)\bm{\varepsilon}_{t}$, we have
		\begin{eqnarray*}
			&&\sum_{t=1}^{T} \mathbf{p}_{t-1}(\mathbf{w}_{t-2} - \mathbf{w}_{t-1})^\top \left(\frac{\tau_t-\tau}{h}\right)^l K\left(\frac{\tau_t-\tau}{h}\right)\nonumber\\
			&=& \sum_{t=1}^{T} \mathbf{p}_{t-1}\mathbf{w}_{t-2}^\top \left(\frac{\tau_t-\tau}{h}\right)^lK\left(\frac{\tau_t-\tau}{h}\right) - \sum_{t=1}^{T} \mathbf{p}_{t-1}\mathbf{w}_{t-1}^\top \left(\frac{\tau_t-\tau}{h}\right)^lK\left(\frac{\tau_t-\tau}{h}\right)\nonumber\\
			&=& \sum_{t=1}^{T} \mathbf{p}_{t-2}\mathbf{w}_{t-2}^\top \left(\frac{\tau_t-\tau}{h}\right)^lK\left(\frac{\tau_t-\tau}{h}\right) - \sum_{t=1}^{T} \mathbf{p}_{t-1}\mathbf{w}_{t-1}^\top \left(\frac{\tau_t-\tau}{h}\right)^lK\left(\frac{\tau_t-\tau}{h}\right) + O_P(\sqrt{Th})\nonumber\\
			&=& \sum_{t=1}^{T} \mathbf{p}_{t-2}\mathbf{w}_{t-2}^\top \left(\frac{\tau_{t-1}-\tau}{h}\right)^lK\left(\frac{\tau_{t-1}-\tau}{h}\right) - \sum_{t=1}^{T} \mathbf{p}_{t-1}\mathbf{w}_{t-1}^\top \left(\frac{\tau_t-\tau}{h}\right)^lK\left(\frac{\tau_t-\tau}{h}\right) + O_P(\sqrt{Th})\nonumber\\
			&=& \mathbf{p}_{-1}\mathbf{w}_{-1}^\top \left(\frac{\tau_0-\tau}{h}\right)^lK\left(\frac{\tau_{0}-\tau}{h}\right)-\mathbf{p}_{T-1}\mathbf{w}_{T-1}^\top \left(\frac{\tau_T-\tau}{h}\right)^lK\left(\frac{\tau_{T}-\tau}{h}\right)+ O_P(\sqrt{Th})\nonumber\\
			&=& O_P(\sqrt{Th}).\nonumber
		\end{eqnarray*}
		Hence, $\mathbf{K}_{T,l}(4,22) = O_P(1/(Th))$. Therefore, $\mathbf{K}_{T,l}(4,2) = O_P(1/\sqrt{T})$ and thus $\mathbf{K}_{T,l}(4) = O_P(1/\sqrt{Th})$.

		Consider $\mathbf{K}_{T,l}(5)$. As $\bm{\xi}_{T,\perp}^\top(\tau)\bm{\alpha}_{\perp}^\top(\tau)\mathbf{y}_{\delta_T} = 0$, we have
		\begin{eqnarray*}
			&&\frac{1}{T^{3/2}h^{3/2}}\sum_{t=1}^{T}\bm{\xi}_{T,\perp}^\top(\tau)\bm{\alpha}_{\perp}^\top(\tau)\mathbf{y}_{t-1}\mathbf{z}_{t-1}^\top \left(\frac{\tau_t-\tau}{h}\right)^l K\left(\frac{\tau_t-\tau}{h}\right) \nonumber\\
			&=&\frac{1}{Th^{3/2}}\bm{\xi}_{T,\perp}^\top(\tau)\frac{1}{\sqrt{T}}\bm{\alpha}_{\perp}^\top(\tau)\sum_{t=1}^{T}\sum_{j=\delta_T+1}^{t-1}\Delta \mathbf{y}_{j}\mathbf{z}_{t-1}^\top \left(\frac{\tau_t-\tau}{h}\right)^lK\left(\frac{\tau_t-\tau}{h}\right)\nonumber
		\end{eqnarray*}
		
		Similar to the proof of $\mathbf{K}_{T,l}(4,2)$, we have 
		$$
		\frac{1}{Th^{3/2}}\bm{\xi}_{T,\perp}^\top(\tau)\frac{1}{\sqrt{T}}\bm{\alpha}_{\perp}^\top(\tau)\sum_{t=1}^{T}\sum_{j=\delta_T+1}^{t-1}\Delta \mathbf{y}_{j}\mathbf{z}_{t-1}^\top \left(\frac{\tau_t-\tau}{h}\right)^lK\left(\frac{\tau_t-\tau}{h}\right) = O_P(1/\sqrt{Th}).
		$$

The proof of part (3) is now completed.
		
\medskip
		
\noindent (4). Since $\bm{\beta}^\top \mathbf{y}_t = \bm{\beta}^\top \mathbf{z}_t$, by Lemma \ref{L1} and using BN decomposition developed in the proof of Lemma \ref{L1}, we have
		\begin{eqnarray*}
			&&\frac{1}{Th}\sum_{t=1}^{T}\bm{\beta}^\top \mathbf{y}_{t-1} \mathbf{y}_{t-1}^\top \bm{\beta}\left(\frac{\tau_t-\tau}{h}\right)^lK\left(\frac{\tau_t-\tau}{h}\right) \nonumber\\
			&=& \frac{1}{Th}\sum_{t=1}^{T}\bm{\beta}^\top \mathbf{z}_{t-1} \mathbf{z}_{t-1}^\top \bm{\beta}\left(\frac{\tau_t-\tau}{h}\right)^lK\left(\frac{\tau_t-\tau}{h}\right)\nonumber\\
			&=& \frac{1}{Th}\sum_{t=1}^{T}\bm{\beta}^\top \widetilde{\mathbf{z}}_{t-1}(\tau_{t-1}) \widetilde{\mathbf{z}}_{t-1}(\tau_{t-1})^\top \bm{\beta}\left(\frac{\tau_t-\tau}{h}\right)^lK\left(\frac{\tau_t-\tau}{h}\right)+O_P(1/T)\nonumber\\
			&=& \frac{1}{Th}\sum_{t=1}^{T}\bm{\beta}^\top E\left(\widetilde{\mathbf{z}}_{t-1}(\tau_{t-1}) \widetilde{\mathbf{z}}_{t-1}(\tau_{t-1})^\top\right) \bm{\beta}\left(\frac{\tau_t-\tau}{h}\right)^lK\left(\frac{\tau_t-\tau}{h}\right) + O_P(1/\sqrt{Th}).\nonumber
			%		&=& \widetilde{c}_{l}\bm{\beta}^\top\bm{\Sigma}_{\mathbf{z},0}\bm{\beta} + O(h^2) + O_P(1/\sqrt{Th}).
		\end{eqnarray*}
		
		Note that for $\tau \in [h,1-h]$, by the definition of Riemann integral, we have
		\begin{eqnarray*}
			&&\frac{1}{Th}\sum_{t=1}^{T}\bm{\Sigma}_{\mathbf{z},0}(\tau_t)\left(\frac{\tau_t-\tau}{h}\right)^lK\left(\frac{\tau_t-\tau}{h}\right)\nonumber\\
			&=& \frac{1}{h}\int_{0}^{1}\bm{\Sigma}_{\mathbf{z},0}(\tau_t)\left(\frac{\tau_t-\tau}{h}\right)^lK\left(\frac{\tau_t-\tau}{h}\right) +O(1/(Th)) \nonumber\\
			&=&\int_{-\tau/h}^{(1-\tau)/h}\bm{\Sigma}_{\mathbf{z},0}(\tau+uh)u^l K(u)\mathrm{d}u+O(1/(Th)) \nonumber\\
			&=&\int_{-\tau/h}^{(1-\tau)/h}\left(\bm{\Sigma}_{\mathbf{z},0}(\tau)+\bm{\Sigma}_{\mathbf{z},0}^{(1)}(\tau)uh+\frac{1}{2}\bm{\Sigma}_{\mathbf{z},0}^{(2)}(\tau)u^2h^2\right)u^l K(u)\mathrm{d}u+O(1/(Th)) + o(h^2) \nonumber\\
			&=&\widetilde{c}_l\bm{\Sigma}_{\mathbf{z},0}(\tau) + h \widetilde{c}_{l+1}\bm{\Sigma}_{\mathbf{z},0}^{(1)}(\tau) + O(h^2) + O(1/(Th)).\nonumber
		\end{eqnarray*}
		
		Note that $\widetilde{c}_{l+1} = 0$ if $l$ is even. The proof of part (4) is now complete.
		
		\medskip
		
		\noindent (5). Part (5) follows directly from parts (2)--(4).
		
		\medskip
		
		\noindent (6). Write
		\begin{eqnarray*}
			&&\frac{1}{\sqrt{Th}}\widetilde{\mathbf{D}}_T^{+} \widetilde{\bm{\Xi}}_T^\top(\tau) \sum_{t=1}^{T}\mathbf{Q}^\top(\tau)\mathbf{y}_{t-1} \Delta \mathbf{x}_{t-1}^\top \left(\frac{\tau_t-\tau}{h}\right)^l K\left(\frac{\tau_t-\tau}{h}\right)\nonumber\\
			&=&	\left[\begin{matrix}
				\frac{1}{\sqrt{Th}} \mathbf{D}_T^{-1} \bm{\Xi}_T^\top(\tau) \sum_{t=1}^{T}\bm{\alpha_{\perp}}^\top(\tau)\mathbf{y}_{t-1} \Delta \mathbf{x}_{t-1}^\top \left(\frac{\tau_t-\tau}{h}\right)^lK\left(\frac{\tau_t-\tau}{h}\right)\\
				\frac{1}{Th} \sum_{t=1}^{T}\bm{\beta}^\top\mathbf{y}_{t-1} \Delta \mathbf{x}_{t-1}^\top \left(\frac{\tau_t-\tau}{h}\right)^lK\left(\frac{\tau_t-\tau}{h}\right)\\
			\end{matrix}  \right].\nonumber
		\end{eqnarray*}
		
		Similar to the proof of part (2), we have 
		$$
		\frac{1}{\sqrt{Th}} \mathbf{D}_T^{-1} \bm{\Xi}_T^\top(\tau) \sum_{t=1}^{T}\bm{\alpha_{\perp}}^\top(\tau)\mathbf{y}_{t-1} [\Delta\mathbf{y}_{t-1}^\top,\ldots,\Delta\mathbf{y}_{t-p_0+1}^\top]\left(\frac{\tau_t-\tau}{h}\right)^l K\left(\frac{\tau_t-\tau}{h}\right) = O_P(1/\sqrt{Th}).
		$$
		
		In addition, similar to the proof of part (3), we have
		\begin{eqnarray*}
			&&\frac{1}{Th} \sum_{t=1}^{T}\bm{\beta}^\top\mathbf{y}_{t-1} \Delta \mathbf{x}_{t-1}^\top \left(\frac{\tau_t-\tau}{h}\right)^l K\left(\frac{\tau_t-\tau}{h}\right)\nonumber\\
			&= & \bm{\beta}^\top\frac{1}{Th} \sum_{t=1}^{T}\mathbf{z}_{t-1} \Delta \mathbf{x}_{t-1}^\top \left(\frac{\tau_t-\tau}{h}\right)^lK\left(\frac{\tau_t-\tau}{h}\right)\nonumber\\
			&=& \bm{\beta}^\top\frac{1}{Th} \sum_{t=1}^{T}\widetilde{\mathbf{z}}_{t-1}(\tau_{t-1}) \Delta \widetilde{\mathbf{x}}_{t-1}^\top(\tau_{t-1})\left(\frac{\tau_t-\tau}{h}\right)^l K\left(\frac{\tau_t-\tau}{h}\right) + O_P(1/T)\nonumber\\
			&=&  \bm{\beta}^\top\frac{1}{Th} \sum_{t=1}^{T}E\left(\widetilde{\mathbf{z}}_{t-1}(\tau) \Delta \widetilde{\mathbf{x}}_{t-1}^\top(\tau) \right)\left(\frac{\tau_t-\tau}{h}\right)^l K\left(\frac{\tau_t-\tau}{h}\right) + O_P(1/\sqrt{Th}).\nonumber
		\end{eqnarray*}
		Since $\Delta\mathbf{y}_t(\tau) = \mathbf{z}_t(\tau) - \bm{\overline{\beta}}\bm{\beta}^\top \mathbf{z}_{t-1}(\tau)$, the result follows.
		
		\medskip
		
		\noindent (7). Write
		$$
		\mathbf{D}_T^{*,-1} \bm{\Xi}_T^{*,\top}(\tau) \mathbf{Q}^*(\tau)\mathbf{S}_{T,l}(\tau) \mathbf{Q}^{*,\top}(\tau)\bm{\Xi}_T^{*}(\tau) \widetilde{\mathbf{D}}_T^{*,+}
		= \left[\begin{matrix}
			\bm{\Delta}_{T,l}(1) & \bm{\Delta}_{T,l}(2) \\
			\bm{\Delta}_{T,l}^\top(2) & \bm{\Delta}_{T,l}(3) \\
		\end{matrix} \right],
		$$
		where
		\begin{eqnarray*}
			\bm{\Delta}_{T,l}(1) &=& \widetilde{\mathbf{D}}_T^{+} \widetilde{\bm{\Xi}}_T^{\top}(\tau)\left[ \sum_{t=1}^{T}\mathbf{Q}(\tau)\mathbf{y}_{t-1}\mathbf{y}_{t-1}^\top\mathbf{Q}^\top(\tau)\left(\frac{\tau_t-\tau}{h}\right)^l K\left(\frac{\tau_t-\tau}{h}\right)\right] \widetilde{\bm{\Xi}}_T(\tau) \widetilde{\mathbf{D}}_T^{+} ,\nonumber\\
			\bm{\Delta}_{T,l}(2) &=& \frac{1}{\sqrt{Th}}\widetilde{\mathbf{D}}_T^{+} \widetilde{\bm{\Xi}}_T^\top(\tau) \sum_{t=1}^{T}\mathbf{Q}(\tau)\mathbf{y}_{t-1} \Delta \mathbf{x}_{t-1}^\top\left(\frac{\tau_t-\tau}{h}\right)^l K\left(\frac{\tau_t-\tau}{h}\right), \nonumber\\
			\bm{\Delta}_{T,l}(3) &=& \frac{1}{Th} \sum_{t=1}^{T}\Delta\mathbf{x}_{t-1} \Delta \mathbf{x}_{t-1}^\top \left(\frac{\tau_t-\tau}{h}\right)^lK\left(\frac{\tau_t-\tau}{h}\right).\nonumber
		\end{eqnarray*}
		Then part (7) follows directly from parts (2)--(6).
		
	\end{proof}

	\begin{proof}[Proof of Lemma \ref{L5}]
		\item 
		\noindent (1). Let $w_t(\tau) = \frac{1}{Th}\left(\frac{\tau_t-\tau}{h}\right)^l K\left(\frac{\tau_t-\tau}{h}\right)$. By using BN decomposition, we have
		\begin{eqnarray*}
			&&\frac{1}{Th}\sum_{t=1}^{T} \widetilde{\mathbf{z}}_t(\tau_t) \left(\frac{\tau_t-\tau}{h}\right)^l K\left(\frac{\tau_t-\tau}{h}\right)\nonumber\\
			&=&\sum_{t=1}^{T}\bm{\Psi}_{\tau_t}(1)\bm{\omega}(\tau_t)\bm{\varepsilon}_t w_t(\tau) +\bm{\Psi}_{\tau_1}^*(L)\bm{\omega}(\tau_1)\bm{\varepsilon}_{0}w_1(\tau)-\bm{\Psi}_{\tau_{T}}^*(L)\bm{\omega}(\tau_{T})\bm{\varepsilon}_{T} w_T(\tau)\nonumber\\
			&&+\sum_{t=1}^{T}\left(\bm{\Psi}_{\tau_{t+1}}(L)\bm{\omega}(\tau_{t+1})w_{t+1}(\tau)-\bm{\Psi}_{\tau_t}(L)\bm{\omega}(\tau_t)w_t(\tau)\right)\bm{\varepsilon}_t.\nonumber
		\end{eqnarray*}
		
		It is easy to see the second term and the third term are $ O_P(1/(Th))$ uniformly over $\tau \in [0,1]$. By the Lipchitz continuity of $K(\cdot)$, $\bm{\omega}(\cdot)$, $\bm{\Psi}_{\tau}(1)$ and $\sup_{\tau \in [0,1]}|w_{t+1}(\tau)-w_t(\tau)| = O(1/(Th)^2)$, the fourth term is $O_P(1/(Th))$ uniformly over $\tau \in [0,1]$.
		
		Next, we consider the first term. let $\{S_l\}$ be a finite number of sub-intervals covering the interval $[0,1]$, which are centered at $s_l$ with the length $\xi_T = o(h^2)$. Denote the number of these intervals by $N_T$ then $N_T = O(\xi_T^{-1})$. Hence, we have
		\begin{eqnarray*}
			\sup_{\tau\in[0,1]}\left|\sum_{t=1}^{T}\mathbf{v}_tw_t(\tau)\right|&\leq&\max_{1\leq l\leq N_T}\left|\sum_{t=1}^{T}\mathbf{v}_t w_t(s_l)\right| + \max_{1\leq l\leq N_T}\sup_{\tau\in S_l}\left|\sum_{t=1}^{T}\mathbf{v}_t\left( w_t(\tau)- w_t(s_l)\right)\right|\\
			&=& K_{T,5} + K_{T,6}.\nonumber
		\end{eqnarray*}
		where $\mathbf{v}_t = \bm{\Psi}_{\tau_t}(1)\bm{\omega}(\tau_t)\bm{\varepsilon}_t$.
		
		By the continuity of kernel function $K(\cdot)$ and taking $\xi_T = O(\gamma_Th^2)$ with $\gamma_T = \sqrt{\log T /(Th)}$, $K_{T,6}$ is bounded by $O(1)\frac{\xi_T}{h^2}E|\mathbf{v}_{t}| = O(\gamma_T)$. 
		
		We then apply the truncation method to prove term $K_{T,5}$. Define $\mathbf{v}_t' = \mathbf{v}_t' I(|\mathbf{v}_t|\leq (Th/\log T)^{1/2})$ and $\mathbf{v}_t'' = \mathbf{v}_t - \mathbf{v}_t'$. Then, we have
		\begin{eqnarray*}
			K_{T,5} &=& \max_{1\leq l\leq N_T}\left|\sum_{t=1}^{T}\left(\mathbf{v}_t' + \mathbf{v}_t'' - E(\mathbf{v}_t' + \mathbf{v}_t'' \mid \mathcal{F}_{t-1})\right) w_t(s_l)\right|\nonumber\\
			&\leq & \max_{1\leq l\leq N_T}\left|\sum_{t=1}^{T}\left(\mathbf{v}_t'  - E(\mathbf{v}_t' \mid \mathcal{F}_{t-1})\right) w_t(s_l)\right|+\max_{1\leq l\leq N_T}\left|\sum_{t=1}^{T} \mathbf{v}_t''w_t(s_l)\right| \nonumber\\
			&&+\max_{1\leq l\leq N_T}\left|\sum_{t=1}^{T} E(\mathbf{v}_t'' \mid \mathcal{F}_{t-1}) w_t(s_l)\right|\nonumber\\
			&=& K_{T,51} + K_{T,52} + K_{T,53}.\nonumber
		\end{eqnarray*}
		For $K_{T,52}$, by H\"older's inequality and Markov's inequality,
		\begin{eqnarray*}
			E|K_{T,52}|&\leq& O(1/(Th)) \sum_{t=1}^{T}E|\mathbf{v}_t''|\leq O(1/(Th)) \sum_{t=1}^{T} \|\mathbf{v}_t\|_\delta (E|I(|\mathbf{v}_t| > (Th/\log T)^{1/2})|)^{(\delta-1)/\delta} \nonumber\\
			&\leq &(1/(Th)) \sum_{t=1}^{T}E|\mathbf{v}_t''|\leq O(1/(Th)) \sum_{t=1}^{T} \|\mathbf{v}_t\|_\delta^\delta \left(\frac{Th}{\log T}\right)^{-(\delta-1)/2} \nonumber\\
			&=&o(\sqrt{\log T/Th})\nonumber
		\end{eqnarray*}
		if $\frac{T^{1-\frac{2}{\delta}}h}{(\log T)^{1-\frac{2}{\delta}}} \to \infty$. Similarly, $K_{T,53} = o(\sqrt{\log T/Th})$.
		
		We finally deal with $K_{T,51}$. For any fixed $1\leq l\leq N_T$, let 
		$$\mathbf{Y}_t = \left(\mathbf{v}_t'  - E(\mathbf{v}_t' \mid \mathcal{F}_{t-1})\right) w_t(s_l),
		$$
		then $E(\mathbf{Y}_t\mid \mathcal{F}_{t-1}) = 0$ and $|\mathbf{Y}_t| \leq 2\frac{\sqrt{Th}}{\sqrt{\log T}}d_T$ with $d_T = \max_{1\leq t\leq s_l}w_t(s_l)$. In addition, as $E(\bm{\varepsilon}_{t}\bm{\varepsilon}_{t}^\top \mid \mathcal{F}_{t-1}) = \mathbf{I}_d$ a.s., we have
		$$
		\max_{1\leq l\leq N_T}\left|\sum_{t=1}^{T}E\left(\mathbf{Y}_t\mathbf{Y}_t^\top \mid \mathcal{F}_{t-1} \right) \right| = O_{a.s.}(1/(Th)). 
		$$
		Then by Proposition 2.1 in \cite{freedman1975tail}, we have
		\begin{eqnarray*}
			\Pr\left(K_{T,51} \geq \sqrt{8M}\gamma_T \right)&\leq& \Pr\left(K_{T,51} \geq \sqrt{8M}\gamma_T,\max_{1\leq l\leq N_T}\left|\sum_{t=1}^{T}E\left(\mathbf{Y}_t\mathbf{Y}_t^\top \mid \mathcal{F}_{t-1} \right) \right|\leq \frac{M}{Th} \right)\nonumber\\
			&& + \Pr\left(\max_{1\leq l\leq N_T}\left|\sum_{t=1}^{T}E\left(\mathbf{Y}_t\mathbf{Y}_t^\top \mid \mathcal{F}_{t-1} \right) \right|> \frac{M}{Th} \right)\nonumber\\
			&\leq & N_T \exp\left( - \frac{8M\gamma_T^2}{\frac{M}{Th}} \right) = o(1).\nonumber
		\end{eqnarray*}
		Then, we have
		$$
		\sup_{\tau\in[0,1]}\left|\frac{1}{Th}\sum_{t=1}^{T} \widetilde{\mathbf{z}}_t(\tau_t) \left(\frac{\tau_t-\tau}{h}\right)^l K\left(\frac{\tau_t-\tau}{h}\right)\right| = O_P(\sqrt{\log T/(Th)}).
		$$
		
		\medskip
		
		\noindent (2). We start from $p=0$, the other cases with fixed $p\geq 1$ can be verified in a similar manner. Write
		\begin{eqnarray*}
			&&\mathrm{vec}\left[\widetilde{\mathbf{z}}_t(\tau_t)\widetilde{\mathbf{z}}_t^\top(\tau_t) - E(\widetilde{\mathbf{z}}_t(\tau_t)\widetilde{\mathbf{z}}_t^\top(\tau_t))\right]\nonumber\\
			&=& \sum_{j=0}^{\infty}\left(\mathbf{B}_j(\tau_t) \otimes \mathbf{B}_j(\tau_t) \right) \mathrm{vec}\left[\bm{\varepsilon}_{t-j}\bm{\varepsilon}_{t-j}^\top - \mathbf{I}_d\right] + \sum_{r=1}^{\infty}\sum_{j=0}^{\infty}\left(\mathbf{B}_{j+r}(\tau_t) \otimes \mathbf{B}_j(\tau_t) \right) \mathrm{vec}\left[\bm{\varepsilon}_{t-j}\bm{\varepsilon}_{t-j-r}^\top\right]\nonumber\\
			&& + \sum_{r=1}^{\infty}\sum_{j=0}^{\infty}\left(\mathbf{B}_{j}(\tau_t) \otimes \mathbf{B}_{j+r}(\tau_t) \right) \mathrm{vec}\left[\bm{\varepsilon}_{t-j-r}\bm{\varepsilon}_{t-j}^\top\right]\nonumber
		\end{eqnarray*}
	with $\mathbf{B}_j(\tau) = \bm{\Psi}_j(\tau)\bm{\omega}(\tau)$.
		Let $w_t(\tau) = \frac{1}{Th}\left(\frac{\tau_t-\tau}{h}\right)^l K\left(\frac{\tau_t-\tau}{h}\right)$, we have
		\begin{eqnarray*}
			&&\sup_{\tau\in[0,1]}\left|	\frac{1}{Th}\sum_{t=1}^{T}\left(\widetilde{\mathbf{z}}_{t}(\tau_t)\widetilde{\mathbf{z}}_{t}^\top(\tau_{t})-E(\widetilde{\mathbf{z}}_{t}(\tau_t)\widetilde{\mathbf{z}}_{t}^\top(\tau_{t}))\right)\left(\frac{\tau_t-\tau}{h}\right)^l K\left(\frac{\tau_t-\tau}{h}\right)\right|\nonumber\\
			&\leq&  \sup_{\tau\in[0,1]}\left|\sum_{t=1}^{T}\sum_{j=0}^{\infty}\left(\mathbf{B}_j(\tau_t) \otimes \mathbf{B}_j(\tau_t) \right) \mathrm{vec}\left[\bm{\varepsilon}_{t-j}\bm{\varepsilon}_{t-j}^\top - \mathbf{I}_d\right]w_t(\tau)\right|\nonumber\\
			&& + 2\sup_{\tau\in[0,1]}\left|\sum_{t=1}^{T}\sum_{r=1}^{\infty}\sum_{j=0}^{\infty}\left(\mathbf{B}_{j}(\tau_t) \otimes \mathbf{B}_{j+r}(\tau_t) \right) \mathrm{vec}\left[\bm{\varepsilon}_{t-j-r}\bm{\varepsilon}_{t-j}^\top\right]w_t(\tau)\right|\nonumber\\
			&=& K_{T,7} + K_{T,8}.\nonumber
		\end{eqnarray*}
		
		Similarly to the proof of part (1), if $\frac{T^{1-\frac{4}{\delta}}h}{(\log T)^{1-\frac{4}{\delta}}} \to \infty$ and $E\left(\left|\bm{\varepsilon}_t\right|^4\mid \mathcal{F}_{t-1}\right) < \infty$ a.s., we have $K_{T,7} = O_P\left(\sqrt{\log T/(Th)}\right)$.
		
		We now consider $K_{T,8}$. Define $\mathbb{B}_t^r(L) = \sum_{j=0}^{\infty}\left(\mathbf{B}_{j+r}(\tau_t) \otimes \mathbf{B}_{j}(\tau_t) \right)L^j$, which can be decomposed as
		$$
		\mathbb{B}_t^r(L) = \mathbb{B}_t^r(1) - (1-L)\widetilde{\mathbb{B}}_t^r(L),
		$$
		where $\widetilde{\mathbb{B}}_t^r(L) = \sum_{j=0}^{\infty}\widetilde{\mathbf{B}}_{j,t}^r$ and $\widetilde{\mathbf{B}}_{j,t}^r = \sum_{k=j+1}^{\infty}(\mathbf{B}_{k+r}(\tau_t) \otimes \mathbf{B}_{k}(\tau_t))$. By using the above BN decomposition, we have
		\begin{eqnarray*}
			K_{T,8} &\leq & \sup_{0\leq\tau\leq1} \left|\sum_{t=1}^{T}\xi_t\bm{\varepsilon}_tw_t(\tau)\right|+\sup_{0\leq\tau\leq1} \left|\sum_{r=1}^{\infty}\widetilde{\mathbb{B}}_1^r(L)\mathrm{vec}\left(\bm{\varepsilon}_0\bm{\varepsilon}_{-r}\right)w_1(\tau)\right|\nonumber\\
			&&+\sup_{0\leq\tau\leq1} \left|\sum_{r=1}^{\infty}\widetilde{\mathbb{B}}_T^r(L)\mathrm{vec}\left(\bm{\varepsilon}_T\bm{\varepsilon}_{T-r}\right)w_T(\tau)\right|\nonumber\\
			&&+\sup_{0\leq\tau\leq1} \left|\sum_{t=1}^{T}\sum_{r=1}^{\infty}\left(\widetilde{\mathbb{B}}_{t+1}^r(L)w_{t+1}(\tau)-\widetilde{\mathbb{B}}_t^r(L)w_t(\tau)\right)\mathrm{vec}\left(\bm{\varepsilon}_t\bm{\varepsilon}_{t-r}\right)\right|\nonumber\\
			&=& K_{T,81}+ K_{T,82}+K_{T,83}+K_{T,84},\nonumber
		\end{eqnarray*}
		where $\bm{\xi}_t = \sum_{r=1}^{\infty}\sum_{s=0}^{\infty}(\mathbf{B}_{s+r}(\tau_t)\bm{\varepsilon}_{t-r})\otimes \mathbf{B}_s(\tau_t)$. Similar to the proof of part (1), the terms $K_{T,82}$--$K_{T,84}$ are $O_P(1/(Th))$.
		
		For term $K_{T,81}$, similarly let $\{S_l\}$ be a finite number of sub-intervals covering the interval $[0,1]$, which are centered at $s_l$ with the length $\xi_T = o(h^2)$. Denote the number of these intervals by $N_T$ then $N_T = O(\xi_T^{-1})$. Hence, we have
		\begin{eqnarray*}
			\sup_{\tau\in[0,1]}\left|\sum_{t=1}^{T}\mathbf{v}_tw_t(\tau)\right|&\leq&\max_{1\leq l\leq N_T}\left|\sum_{t=1}^{T}\mathbf{v}_t w_t(s_l)\right| + \max_{1\leq l\leq N_T}\sup_{\tau\in S_l}\left|\sum_{t=1}^{T}\mathbf{v}_t\left( w_t(\tau)- w_t(s_l)\right)\right|\nonumber\\
			&=& K_{T,811} + K_{T,812}.\nonumber
		\end{eqnarray*}
		where $\mathbf{v}_t = \xi_t\bm{\varepsilon}_t$. By the continuity of kernel function $K(\cdot)$ and taking $\xi_T = O(\gamma_Th^2)$ with $\gamma_T = \sqrt{\log T /(Th)}$, $K_{T,812}$ is bounded by $O(1)\frac{\xi_T}{h^2}E|\mathbf{v}_{t}| = O(\gamma_T)$.

		We then apply the truncation method to prove term $K_{T,81}$. Define $\mathbf{v}_t' = \mathbf{v}_t' I(|\mathbf{v}_t|\leq (Th/\log T)^{1/2})$ and $\mathbf{v}_t'' = \mathbf{v}_t - \mathbf{v}_t'$. Then, we have
		\begin{eqnarray*}
			K_{T,811} &=& \max_{1\leq l\leq N_T}\left|\sum_{t=1}^{T}\left(\mathbf{v}_t' + \mathbf{v}_t'' - E(\mathbf{v}_t' + \mathbf{v}_t'' \mid \mathcal{F}_{t-1})\right) w_t(s_l)\right|\nonumber\\
			&\leq & \max_{1\leq l\leq N_T}\left|\sum_{t=1}^{T}\left(\mathbf{v}_t'  - E(\mathbf{v}_t'' \mid \mathcal{F}_{t-1})\right) w_t(s_l)\right|+\max_{1\leq l\leq N_T}\left|\sum_{t=1}^{T} \mathbf{v}_t''w_t(s_l)\right|\nonumber \\
			&&+\max_{1\leq l\leq N_T}\left|\sum_{t=1}^{T} E(\mathbf{v}_t'' \mid \mathcal{F}_{t-1}) w_t(s_l)\right|\nonumber\\
			&=& K_{T,8111} + K_{T,8112} + K_{T,8113}.\nonumber
		\end{eqnarray*}
		For $K_{T,8112}$, by H\"older's inequality and Markov's inequality,
		\begin{eqnarray*}
			E|K_{T,8112}|&\leq& O(1/(Th)) \sum_{t=1}^{T}E|\mathbf{v}_t''|\leq O(1/(Th)) \sum_{t=1}^{T} \|\mathbf{v}_t\|_{\delta/2} (E|I(|\mathbf{v}_t| > (Th/\log T)^{1/2})|)^{(\delta-2)/\delta}\nonumber \\
			&\leq &(1/(Th)) \sum_{t=1}^{T}E|\mathbf{v}_t''|\leq O(1/(Th)) \sum_{t=1}^{T} \|\mathbf{v}_t\|_{\delta/2}^{\delta/2} \left(\frac{Th}{\log T}\right)^{-(\delta-2)/4}\nonumber \\
			&=&o(\sqrt{\log T/Th})\nonumber
		\end{eqnarray*}
		if $\frac{T^{1-\frac{4}{\delta}}h}{(\log T)^{1-\frac{4}{\delta}}} \to \infty$. Similarly, $K_{T,8113} = o(\sqrt{\log T/Th})$.

		In the next step, we deal with $K_{T,8111}$. For any fixed $1\leq l\leq N_T$, let 
		$$\mathbf{Y}_t = \left(\mathbf{v}_t'  - E(\mathbf{v}_t' \mid \mathcal{F}_{t-1})\right) w_t(s_l),
		$$
		then $E(\mathbf{Y}_t\mid \mathcal{F}_{t-1}) = 0$ and $|\mathbf{Y}_t| \leq 2\frac{\sqrt{Th}}{\sqrt{\log T}}d_T$ with $d_T = \max_{1\leq t\leq s_l}w_t(s_l)$. Note that 
		\begin{eqnarray*}
			\max_{1\leq l\leq N_T}\left|\sum_{t=1}^{T}E\left(\mathbf{Y}_t\mathbf{Y}_t^\top \mid \mathcal{F}_{t-1} \right)\right| &\leq& \max_{1\leq l\leq N_T} \sum_{t=1}^{T}E\left(|\mathbf{Y}_t|^2 \mid \mathcal{F}_{t-1} \right)\nonumber\\
			&\leq & O(1/(Th))\max_{1\leq l\leq N_T} \sum_{t=1}^{T}E\left(\mathbf{v}_t^\top\mathbf{v}_t \mid \mathcal{F}_{t-1} \right)w_t(s_l).\nonumber
		\end{eqnarray*}
		
		In addition, using the summation by parts formula, we have
		\begin{eqnarray*}
			&& \max_{1\leq l\leq N_T}\left|\sum_{t=1}^{T}\left[E\left(\mathbf{v}_t^\top\mathbf{v}_t \mid \mathcal{F}_{t-1} \right) - E\left(\mathbf{v}_t^\top\mathbf{v}_t \right)\right] w_t(s_l)\right|\nonumber\\
			&\leq & (2\sup_{u}|u^lK(u)| + \int_{-1}^{1}|lu^{l-1}K(u) + u^{l}K^{(1)}(u)|\mathrm{d}u) \nonumber \\
			&&\times \left| \frac{1}{Th}\sum_{t=1}^{T} \left[E\left(\mathbf{v}_t^\top\mathbf{v}_t \mid \mathcal{F}_{t-1} \right) - E\left(\mathbf{v}_t^\top\mathbf{v}_t \right)\right] \right| = O_P(1/(\sqrt{T}h))=o_P(1).\nonumber
		\end{eqnarray*}
		
		We therefore have $\max_{1\leq l\leq N_T}\left|\sum_{t=1}^{T}E\left(\mathbf{Y}_t\mathbf{Y}_t^\top \mid \mathcal{F}_{t-1} \right)\right| = O_P(1/(Th))$.
		
		Finally, by Proposition 2.1 in \cite{freedman1975tail}, we have
		\begin{eqnarray*}
			\Pr\left(K_{T,8111} \geq \sqrt{8M}\gamma_T \right)&\leq& \Pr\left(K_{T,8111} \geq \sqrt{8M}\gamma_T,\max_{1\leq l\leq N_T}\left|\sum_{t=1}^{T}E\left(\mathbf{Y}_t\mathbf{Y}_t^\top \mid \mathcal{F}_{t-1} \right) \right|\leq \frac{M}{Th} \right)\nonumber\\
			&& + \Pr\left(\max_{1\leq l\leq N_T}\left|\sum_{t=1}^{T}E\left(\mathbf{Y}_t\mathbf{Y}_t^\top \mid \mathcal{F}_{t-1} \right) \right|> \frac{M}{Th} \right)\nonumber\\
			&\leq & N_T \exp\left( - \frac{8M\gamma_T^2}{\frac{M}{Th}} \right) = o(1).\nonumber
		\end{eqnarray*}
		
		\medskip
		
		\noindent (3). Part (3) can be proved in a similar way as that of part (2), so we omit the details.

	\end{proof}

	\begin{proof}[Proof of Lemma \ref{L6}]
		\item
		
		Write
		\begin{eqnarray*}
			\max_{t\leq T}\left|\sum_{j=1}^{t}\widetilde{\mathbf{z}}_j(\tau_j)\right|
			&\leq &\max_{t\leq T}\left|\sum_{j=1}^{t}\bm{\Psi}_{\tau_j}(1)\bm{\omega}(\tau_j)\bm{\varepsilon}_j\right|+\left|\bm{\Psi}_{\tau_1}^*(L)\bm{\omega}(\tau_1)\bm{\varepsilon}_{0}\right| + \max_{t\leq T}\left|\bm{\Psi}_{\tau_t}^*(L)\bm{\omega}(\tau_t)\bm{\varepsilon}_t\right|\nonumber\\
			&&+\max_{t\leq T}\left|\sum_{j=1}^{t-1}\left(\bm{\Psi}_{\tau_{j+1}}(L)\bm{\omega}(\tau_{j+1})-\bm{\Psi}_{\tau_j}(L)\bm{\omega}(\tau_j)\right)\bm{\varepsilon}_j\right|\nonumber\\
			&=& \mathbf{K}_{T,9} + \mathbf{K}_{T,10} + \mathbf{K}_{T,11} + \mathbf{K}_{T,12}.\nonumber
		\end{eqnarray*}
		
		For $\mathbf{K}_{T,10}$, by Chebyshev's inequality and $\|\bm{\Psi}_{\tau_1}^*(L)\bm{\omega}(\tau_1)\bm{\varepsilon}_{0} \|_\delta < \infty$, we have
		$$
		\Pr\left(\left|\bm{\Psi}_{\tau_1}^*(L)\bm{\omega}(\tau_1)\bm{\varepsilon}_{0}\right| \geq x\right)= O(1/x^\delta).
		$$
		Similarly, since
		$$
		\Pr\left(\max_{t\leq T}\left|\bm{\Psi}_{\tau_t}^*(L)\bm{\omega}(\tau_t)\bm{\varepsilon}_t\right| \geq x\right)\leq \sum_{t=1}^{T}\Pr\left(\left|\bm{\Psi}_{\tau_t}^*(L)\bm{\omega}(\tau_t)\bm{\varepsilon}_t\right| \geq x\right),
		$$ 
		we have $\Pr\left(\mathbf{K}_{T,11} \geq x\right)= O(T/x^\delta)$.
		
		Since
		$$
		\mathbf{K}_{T,12}\leq \sum_{j=1}^{T-1}\left|\left(\bm{\Psi}_{\tau_{j+1}}(L)\bm{\omega}(\tau_{j+1})-\bm{\Psi}_{\tau_j}(L)\bm{\omega}(\tau_j)\right)\bm{\varepsilon}_j\right|
		$$
		and
		\begin{eqnarray*}
			&&\|\sum_{j=1}^{T-1}\left|\left(\bm{\Psi}_{\tau_{j+1}}(L)\bm{\omega}(\tau_{j+1})-\bm{\Psi}_{\tau_j}(L)\bm{\omega}(\tau_j)\right)\bm{\varepsilon}_j\right| \|_\delta\nonumber\\
			&\leq& \sum_{j=1}^{T-1}\left|\bm{\Psi}_{\tau_{j+1}}(1)\bm{\omega}(\tau_{j+1})-\bm{\Psi}_{\tau_j}(1)\bm{\omega}(\tau_j)\right| \max_t\|\bm{\varepsilon}_t\|_\delta =O(1),\nonumber
		\end{eqnarray*}
		we have
		\begin{eqnarray*}
			&&\Pr\left(\mathbf{K}_{T,12} \geq x\right)\leq \Pr\left(\sum_{j=1}^{T-1}\left|\left(\bm{\Psi}_{\tau_{j+1}}(L)\bm{\omega}(\tau_{j+1})-\bm{\Psi}_{\tau_j}(L)\bm{\omega}(\tau_j)\right)\bm{\varepsilon}_j\right| \geq x\right)\nonumber\\
			&\leq& \frac{\|\sum_{j=1}^{T-1}\left|\left(\bm{\Psi}_{\tau_{j+1}}(L)\bm{\omega}(\tau_{j+1})-\bm{\Psi}_{\tau_j}(L)\bm{\omega}(\tau_j)\right)\bm{\varepsilon}_j\right|\|_\delta^\delta}{x^\delta} = O(1/x^\delta).\nonumber
		\end{eqnarray*}
		
		Then by the Nagaev inequality (\citealp{borovkov1973notes}) for the maximum of absolute partial sums of independent variables, we have if $x\geq c \sqrt{T \log T}$ for some constant $c\geq 0$
		$$
		\Pr\left(\mathbf{K}_{T,9}\geq x \right) \leq \left(1 + 2/\delta\right)^{\delta} \frac{\mu_{T,\delta}}{x^\delta} + 2\exp\left(-\frac{2x^2}{e^{\delta}(\delta+2)^2\mu_{T,2}}\right) = O\left(\frac{T}{x^\delta}\right),
		$$
		where $\mu_{T,\delta} = \sum_{t=1}^{T}E|\bm{\Psi}_{\tau_t}(1)\bm{\omega}(\tau_t)\bm{\varepsilon}_t|^\delta$.
		
		Combing the above results, we complete the proof.
		
	\end{proof}

	\begin{proof}[Proof of Lemma \ref{L7}]
		\item
		\noindent (1). By the proof of Lemma \ref{L1}, we have
		$$
		\sup_{u\in[0,1]}\left|T^{-1/2}\mathbf{y}_{\lfloor Tu \rfloor}- \mathbf{P}_{\bm{\beta}_\perp} T^{-1/2}\sum_{t=1}^{\lfloor Tu \rfloor}\bm{\Psi}_{\tau_t}(1)\bm{\omega}(\tau_t)\bm{\varepsilon}_t \right| = o_P(1).
		$$
		In addition, by Gaussian approximations for vector martingale differences (e.g., Theorem 1 in \citealp{eberlein1986strong}), we have
		$$
		\sup_{u\in[0,1]}\left| \mathbf{P}_{\bm{\beta}_\perp} T^{-1/2}\sum_{t=1}^{\lfloor Tu \rfloor}\bm{\Psi}_{\tau_t}(1)\bm{\omega}(\tau_t)\bm{\varepsilon}_t - \mathbf{W}_d(u,\bm{\Sigma}_{\mathbf{y}}(u))\right| = o_{a.s.}(1).
		$$
		The proof of part (1) is now completed.
		
		\medskip
		
		\noindent (2). Write
		\begin{eqnarray*}
			&&T^{-2}h^{-1}\sum_{t=1}^{T}\mathbf{y}_{t-1}\mathbf{y}_{t-1}^\top K\left(\frac{\tau_t-\tau}{h}\right) \nonumber\\
			&=& (T^{-1/2}\mathbf{y}_{\delta_T})(T^{-1/2}\mathbf{y}_{\delta_T})^\top \cdot (Th)^{-1}\sum_{t=1}^{T}K\left(\frac{\tau_t-\tau}{h}\right)\nonumber\\
			&&+ (T^{-1/2}\mathbf{y}_{\delta_T})\cdot T^{-3/2}h^{-1}\sum_{t=1}^{T} (\mathbf{y}_{t-1}-\mathbf{y}_{\delta_T})^\top K\left(\frac{\tau_t-\tau}{h}\right)\nonumber\\
			&& +  T^{-3/2}h^{-1}\sum_{t=1}^{T} (\mathbf{y}_{t-1}-\mathbf{y}_{\delta_T}) K\left(\frac{\tau_t-\tau}{h}\right)\cdot (T^{-1/2}\mathbf{y}_{\delta_T})^\top\nonumber\\
			&& + T^{-2}h^{-1}\sum_{t=1}^{T} (\mathbf{y}_{t-1}-\mathbf{y}_{\delta_T}) (\mathbf{y}_{t-1}-\mathbf{y}_{\delta_T})^\top K\left(\frac{\tau_t-\tau}{h}\right)\nonumber\\
			&=&\mathbf{K}_{T,13}+ \mathbf{K}_{T,14} + \mathbf{K}_{T,15} + \mathbf{K}_{T,16}.\nonumber
		\end{eqnarray*}
		
		Consider $\mathbf{K}_{T,13}$. By part (1) of this lemma, we have $\sup_{\tau\in[0,1]}|T^{-1/2}\mathbf{y}_{\delta_T} - \mathbf{W}_d(\tau,\bm{\Sigma}_{\mathbf{y}}(\tau))| = o_P(1)$. As $\sup_{\tau\in[0,1]}|\mathbf{W}_d(\tau,\bm{\Sigma}_{\mathbf{y}}(\tau))| = O_P(1)$, we have $\sup_{\tau\in[0,1]}|T^{-1/2}\mathbf{y}_{\delta_T}|=O_P(1)$ and thus
		$$
		\sup_{\tau\in[h,1-h]}|\mathbf{J}_{T,1} - \mathbf{W}_d(\tau,\bm{\Sigma}_{\mathbf{y}}(\tau))\mathbf{W}_d^\top(\tau,\bm{\Sigma}_{\mathbf{y}}(\tau))| = o_P(1).
		$$
		
		Consider $\mathbf{K}_{T,14}$. We first need to prove 
		$$
		\sup_{\tau \in[0,1],u\in[0,1]}\left|\frac{1}{\sqrt{2Th}}\sum_{t=\delta_T+1}^{\delta_T(u)}\Delta\mathbf{y}_t - \mathbf{W}_\tau(u)\right| = o_P(1),
		$$
		where $\delta_T(u) = \delta_T +\lfloor2uTh\rfloor+1$ and $\mathbf{W}_\tau(u)$ is a Brownian motion. Write
		\begin{eqnarray*}
			\frac{1}{\sqrt{2Th}}\sum_{t=\delta_T+1}^{\delta_T(u)}\Delta\mathbf{y}_t&=&\mathbf{P}_{\bm{\beta}_\perp} (2Th)^{-1/2}\sum_{t=\delta_T+1}^{\delta_T(u)}\mathbf{z}_t + (2Th)^{-1/2}\mathbf{P}_{\bm{\beta}}\mathbf{z}_{\delta_T(u)} + (2Th)^{-1/2}\mathbf{P}_{\bm{\beta}_\perp} \mathbf{z}_{\delta_T}.\nonumber
		\end{eqnarray*} 
		
		Since $\max_{1\leq t \leq T}|\mathbf{z}_{t}|\leq \left\{\sum_{t=1}^{T}|\mathbf{z}_t|^{\delta}\right\}^{1/\delta}= O_P(T^{1/\delta})$, the second term and the third term are $o_P(1)$ if $T^{1-\frac{2}{\delta}}h\to \infty$. Similar to the proof of Lemma \ref{L1}, if $T^{1-\frac{2}{\delta}}h\to \infty$, by using BN decomposition, we have
		$$
		\sup_{u\in[0,1]}\left|(Th)^{-1/2}\sum_{t=\delta_T+1}^{\delta_T(u)}\mathbf{z}_t- (Th)^{-1/2}\sum_{t=\delta_T+1}^{\delta_T(u)}\bm{\Psi}_{\tau_t}(1)\bm{\omega}(\tau_t)\bm{\varepsilon}_t \right| = o_P(1).
		$$
		For any fix $\tau \in [0,1]$, by strong invariance principle, there exists a Brownian motion $\mathbf{W}_\tau(u)$ such that 
		$$
		\sup_{u\in[0,1]}\left|(Th)^{-1/2}\sum_{t=\delta_T+1}^{\delta_T(u)}\bm{\Psi}_{\tau_t}(1)\bm{\omega}(\tau_t)\bm{\varepsilon}_t - \mathbf{W}_\tau(u) \right| = o_{a.s.}(1).
		$$
		
		Since this convergence is almost surely, for any $\epsilon>0$, we have
		\begin{eqnarray*}
			&&\Pr\left(\sup_{\tau\in[0,1]}\sup_{u\in[0,1]}\left|(Th)^{-1/2}\sum_{t=\delta_T+1}^{\delta_T(u)}\bm{\Psi}_{\tau_t}(1)\bm{\omega}(\tau_t)\bm{\varepsilon}_t - \mathbf{W}_{\tau}(u) \right| >\epsilon \right)\nonumber\\
			&\leq &\sum_{j=1}^{T}\Pr\left(\sup_{u\in[0,1]}\left|(Th)^{-1/2}\sum_{t=j+1}^{j+\lfloor2uTh\rfloor+1}\bm{\Psi}_{\tau_t}(1)\bm{\omega}(\tau_t)\bm{\varepsilon}_t - \mathbf{W}_{\tau}(u) \right| >\epsilon \right)\to 0.\nonumber
		\end{eqnarray*}
		Hence, we can conclude that $\sup_{\tau \in[0,1],u\in[0,1]}\left|\frac{1}{\sqrt{2Th}}\sum_{t=\delta_T+1}^{\delta_T(u)}\Delta\mathbf{y}_t - \mathbf{W}_\tau(u)\right| = o_P(1)$. In addition, since the order of maximum of absolute value of $T$ normal variables is $O_P(\sqrt{\log T})$, then we have $$\sup_{u\in[0,1]}\left|\frac{1}{\sqrt{2Th}}\sum_{t=\delta_T+1}^{\delta_T(u)}\Delta\mathbf{y}_t\right| = O_P(\sqrt{\log T}).$$
		
		Next, write
		\begin{eqnarray*}
			&&\sup_{\tau\in[0,1]}\left|(T^{-1/2}\mathbf{y}_{\delta_T})\cdot T^{-3/2}h^{-1}\sum_{t=1}^{T} (\mathbf{y}_{t-1}-\mathbf{y}_{\delta_T})^\top K\left(\frac{\tau_t-\tau}{h}\right)\right|\nonumber\\
			&\leq &O_P(1)\sqrt{h}\sup_{\tau \in[0,1],u\in[0,1]}\left|\frac{1}{\sqrt{2Th}}\sum_{t=\delta_T+1}^{\delta_T(u)}\Delta\mathbf{y}_t\right|\sup_{\tau\in[0,1]}\left|(Th)^{-1}\sum_{t=1}^{T} K\left(\frac{\tau_t-\tau}{h}\right)\right|\nonumber\\
			&=&O_P(\sqrt{h\log T}) =o_P(1).\nonumber
		\end{eqnarray*}
		
		Similarly, $\sup_{0\leq\tau\leq1}|\mathbf{K}_{T,15}| = o_P(1)$ and $\sup_{0\leq\tau\leq1}|\mathbf{K}_{T,16} | = o_P(1)$. The proof of part (2) is now completed.
		
		\medskip
		
		\noindent (3). Write
		\begin{eqnarray*}
			\mathbf{D}_T^{-1} \bm{\Xi}_T^\top(\tau)\left[ \sum_{t=1}^{T}\bm{\alpha}_{\perp}^\top(\tau)\mathbf{y}_{t-1}\mathbf{y}_{t-1}^\top\bm{\alpha}_{\perp}(\tau)\left(\frac{\tau_t-\tau}{h}\right)^l K\left(\frac{\tau_t-\tau}{h}\right)\right] \bm{\Xi}_T(\tau) \mathbf{D}_T^{-1}=\left[\begin{matrix}
				\mathbf{K}_{T,l}(6) & \mathbf{K}_{T,l}(7)\\
				\mathbf{K}_{T,l}^\top(7) & \mathbf{K}_{T,l}(8)\\
			\end{matrix} \right],\nonumber
		\end{eqnarray*}
		where
		\begin{eqnarray*}
			\mathbf{K}_{T,l}(6) &=& \frac{1}{T^2h}\bm{\xi}_T^\top(\tau)\sum_{t=1}^{T}\bm{\alpha}_{\perp}^\top(\tau)\mathbf{y}_{t-1}\mathbf{y}_{t-1}^\top\bm{\alpha}_{\perp}(\tau)\bm{\xi}_T(\tau)\left(\frac{\tau_t-\tau}{h}\right)^l K\left(\frac{\tau_t-\tau}{h}\right),\nonumber\\
			\mathbf{K}_{T,l}(7) &=& \frac{1}{T^2h^{3/2}}\bm{\xi}_T^\top(\tau)\sum_{t=1}^{T}\bm{\alpha}_{\perp}^\top(\tau)\mathbf{y}_{t-1}\mathbf{y}_{t-1}^\top\bm{\alpha}_{\perp}(\tau)\bm{\xi}_{T,\perp}(\tau)\left(\frac{\tau_t-\tau}{h}\right)^l K\left(\frac{\tau_t-\tau}{h}\right),\nonumber\\
			\mathbf{K}_{T,l}(8) &=& \frac{1}{T^2h^{2}}\bm{\xi}_{T,\perp}^\top(\tau)\sum_{t=1}^{T}\bm{\alpha}_{\perp}^\top(\tau)\mathbf{y}_{t-1}\mathbf{y}_{t-1}^\top\bm{\alpha}_{\perp}(\tau)\bm{\xi}_{T,\perp}(\tau)\left(\frac{\tau_t-\tau}{h}\right)^l K\left(\frac{\tau_t-\tau}{h}\right).\nonumber
		\end{eqnarray*}
		
		Consider $\mathbf{K}_{T,l}(6)$. Similar to the proof of part (2), by $\sup_{\tau\in[0,1]}|\bm{\xi}_T(\tau)-\bm{\xi}(\tau)|=o_P(1)$, we have $\sup_{\tau\in[h,1-h]}\left|\mathbf{K}_{T,l}(6) - \bm{\Delta}_{l,1}(\tau)\right| = o_P(1)$.
		
		Consider $\mathbf{K}_{T,l}(7)$. Since $\mathbf{q}_T^\top(\tau) \bm{\xi}_{T,\perp}(\tau) = 0$, we have
		\begin{eqnarray*}
			&&\mathbf{K}_{T,l}(7) = \bm{\xi}_T^\top(\tau)\mathbf{q}_T(\tau) \frac{1}{Th^{3/2}} \sum_{t=1}^{T}\frac{1}{\sqrt{T}}(\mathbf{y}_{t-1} - \mathbf{y}_{\delta_T})^\top\left(\frac{\tau_t-\tau}{h}\right)^l K\left(\frac{\tau_t-\tau}{h}\right)\bm{\alpha}_{\perp}(\tau)\bm{\xi}_{T,\perp}(\tau)\nonumber\\
			&& + \bm{\xi}_T^\top(\tau)\bm{\alpha}_{\perp}^\top(\tau) \frac{1}{Th^{3/2}} \sum_{t=1}^{T}\frac{1}{\sqrt{T}}(\mathbf{y}_{t-1} - \mathbf{y}_{\delta_T})\frac{1}{\sqrt{T}}(\mathbf{y}_{t-1} - \mathbf{y}_{\delta_T})^\top \left(\frac{\tau_t-\tau}{h}\right)^lK\left(\frac{\tau_t-\tau}{h}\right)\bm{\alpha}_{\perp}(\tau)\bm{\xi}_{T,\perp}(\tau)\nonumber\\
			&=&\mathbf{K}_{T,l}(7,1) + \mathbf{K}_{T,l}(7,2).\nonumber
		\end{eqnarray*}
		For $\mathbf{K}_{T,l}(7,2)$, similar to the proof of part (2), we have $\sup_{\tau\in[0,1]}|\mathbf{K}_{T,l}(7,2)| = o_P(1)$. In addition, by part (1), we have uniformly over $\tau\in[h,1-h]$
		\begin{eqnarray*}
			&&\sum_{t=1}^{T}\frac{1}{\sqrt{T}}(\mathbf{y}_{t-1} - \mathbf{y}_{\delta_T})\left(\frac{\tau_t-\tau}{h}\right)^l K\left(\frac{\tau_t-\tau}{h}\right) \nonumber\\
			&=&\sqrt{2} \int_{-1}^{1}\mathbf{P}_{\bm{\beta}_\perp}\left(\frac{1}{\sqrt{2h}}\int_{\tau-h}^{\tau-h+(u+1)h}\bm{\Psi}_{r}(1)\bm{\omega}(r)\mathrm{d}\mathbf{W}_d^*(r)\right)u^{l}K(u)\mathrm{d}u + o_P(1)\nonumber\\
			&=&\sqrt{2} \int_{-1}^{1}\mathbf{P}_{\bm{\beta}_\perp}\bm{\Psi}_{\tau}(1)\bm{\omega}(\tau)\frac{1}{\sqrt{2h}}\mathbf{W}_d^*((u+1)h)u^{l}K(u)\mathrm{d}u + o_P(1)\nonumber\\
			&=&\sqrt{2}\int_{-1}^{1}\mathbf{W}_d^{*}((u+1)/2,\bm{\Sigma}_{\bm{\alpha}}(\tau))u^lK(u)\mathrm{d}u + o_P(1).\nonumber
		\end{eqnarray*}
		Similarly, $\sup_{\tau\in[h,1-h]}\left|\mathbf{K}_{T,l}(8)-\bm{\Delta}_{l,3}(\tau)\right| = o_P(1)$.
		
		The proof of part (3) is now completed.
		
		\medskip
		
		\noindent (4). Write
		\begin{eqnarray*}
			&&\frac{1}{\sqrt{Th}}\mathbf{D}_T^{-1} \bm{\Xi}_T^\top(\tau) \sum_{t=1}^{T}\bm{\alpha}_{\perp}^\top(\tau)\mathbf{y}_{t-1}\mathbf{y}_{t-1}^\top\bm{\beta}\left(\frac{\tau_t-\tau}{h}\right)^l K\left(\frac{\tau_t-\tau}{h}\right)\nonumber\\
			&=&\left[\begin{matrix}
				\frac{1}{T^{3/2}h}\sum_{t=1}^{T}\bm{\xi}_T^\top(\tau)\bm{\alpha}_{\perp}^\top(\tau)\mathbf{y}_{t-1}\mathbf{z}_{t-1}^\top \left(\frac{\tau_t-\tau}{h}\right)^lK\left(\frac{\tau_t-\tau}{h}\right)\\
				\frac{1}{T^{3/2}h^{3/2}}\sum_{t=1}^{T}\bm{\xi}_{T,\perp}^\top(\tau)\bm{\alpha}_{\perp}^\top(\tau)\mathbf{y}_{t-1}\mathbf{z}_{t-1}^\top \left(\frac{\tau_t-\tau}{h}\right)^lK\left(\frac{\tau_t-\tau}{h}\right)\\
			\end{matrix} \right] =\left[\begin{matrix}
				\mathbf{K}_{T,l}(9)\\
				\mathbf{K}_{T,l}(10)\\
			\end{matrix} \right].\nonumber
		\end{eqnarray*}
		
		Consider $\mathbf{K}_{T,l}(9)$, 
		\begin{eqnarray*}
			&&\frac{1}{T^{3/2}h}\sum_{t=1}^{T}\bm{\xi}_T^\top(\tau)\bm{\alpha}_{\perp}^\top(\tau)\mathbf{y}_{t-1}\mathbf{z}_{t-1}^\top \left(\frac{\tau_t-\tau}{h}\right)^lK\left(\frac{\tau_t-\tau}{h}\right)\nonumber\\
			&=& \frac{1}{Th}\bm{\xi}_T^\top(\tau)\sum_{t=1}^{T}\left(\frac{1}{\sqrt{T}}\bm{\alpha}_{\perp}^\top(\tau)\mathbf{y}_{\delta_T}+\frac{1}{\sqrt{T}}\bm{\alpha}_{\perp}^\top(\tau)(\mathbf{y}_{t-1} - \mathbf{y}_{\delta_T})\right)\mathbf{z}_{t-1}^\top \left(\frac{\tau_t-\tau}{h}\right)^lK\left(\frac{\tau_t-\tau}{h}\right)\nonumber\\
			&=&\sqrt{\mathbf{q}_T^\top(\tau)\mathbf{q}_T(\tau)}\frac{1}{Th}\sum_{t=1}^{T}\mathbf{z}_{t-1}^\top \left(\frac{\tau_t-\tau}{h}\right)^lK\left(\frac{\tau_t-\tau}{h}\right)\nonumber\\
			&& + \frac{\sqrt{h}}{Th}\bm{\xi}_T^\top(\tau)\sum_{t=1}^{T}\frac{1}{\sqrt{Th}}\bm{\alpha}_{\perp}^\top(\tau)(\mathbf{y}_{t-1} - \mathbf{y}_{\delta_T})\mathbf{z}_{t-1}^\top\left(\frac{\tau_t-\tau}{h}\right)^l K\left(\frac{\tau_t-\tau}{h}\right)\nonumber\\
			&=& \mathbf{K}_{T,l}(9,1) + \mathbf{K}_{T,l}(9,2).\nonumber
		\end{eqnarray*}

		As $\sup_{\tau \in [0,1]} |\mathbf{q}_T(\tau)| = O_P(1)$ and 
		$$
		\sup_{\tau \in [0,1]} \left|\frac{1}{Th}\sum_{t=1}^{T}\mathbf{z}_{t-1} \left(\frac{\tau_t-\tau}{h}\right)^lK\left(\frac{\tau_t-\tau}{h}\right)\right| = O_P\left(\sqrt{\log T /(Th)}\right),
		$$
		we have $\mathbf{K}_{T,l}(9,1) = O_P\left(\sqrt{\log T/(Th)}\right)$. For $\mathbf{K}_{T,l}(9,2)$, write
		\begin{eqnarray*}
			&&\frac{1}{T^{3/2}h}\sum_{t=1}^{T}(\mathbf{y}_{t-1} - \mathbf{y}_{\delta_T})\mathbf{z}_{t-1}^\top\left(\frac{\tau_t-\tau}{h}\right)^l K\left(\frac{\tau_t-\tau}{h}\right)\nonumber\\
			&=&\frac{1}{T^{3/2}h}\sum_{t=1}^{T}(\mathbf{y}_{t-1} - \mathbf{y}_{\delta_T})(\mathbf{z}_{t-1}-\widetilde{\mathbf{z}}_{t-1}(\tau_{t-1}))^\top\left(\frac{\tau_t-\tau}{h}\right)^l K\left(\frac{\tau_t-\tau}{h}\right)\nonumber\\
			&& + \mathbf{P}_{\bm{\beta}_\perp}\frac{1}{T^{3/2}h}\sum_{t=1}^{T}\sum_{j=\delta_T+1}^{t-1}(\mathbf{z}_{j} - \widetilde{\mathbf{z}}_{j}(\tau_j))\widetilde{\mathbf{z}}_{t-1}^\top(\tau_{t-1})\left(\frac{\tau_t-\tau}{h}\right)^l K\left(\frac{\tau_t-\tau}{h}\right)\nonumber\\ 
			&& +\mathbf{P}_{\bm{\beta}}\frac{1}{T^{3/2}h}\sum_{t=1}^{T}(\mathbf{z}_{t-1} - \mathbf{z}_{\delta_T}) \widetilde{\mathbf{z}}_{t-1}^\top(\tau_{t-1})\left(\frac{\tau_t-\tau}{h}\right)^l K\left(\frac{\tau_t-\tau}{h}\right)\nonumber\\ 
			&& + \mathbf{P}_{\bm{\beta}_\perp}\frac{1}{T^{3/2}h}\sum_{t=1}^{T}\sum_{j=\delta_T+1}^{t-1} \widetilde{\mathbf{z}}_{j}(\tau_j)\widetilde{\mathbf{z}}_{t-1}^\top(\tau_{t-1})\left(\frac{\tau_t-\tau}{h}\right)^l K\left(\frac{\tau_t-\tau}{h}\right)\nonumber\\ 
			&=& \mathbf{K}_{T,l}(9,21) + \mathbf{K}_{T,l}(9,22) + \mathbf{K}_{T,l}(9,23)+ \mathbf{K}_{T,l}(9,24).\nonumber
		\end{eqnarray*}
		
		Note that $\sup_{\tau \in[0,1],u\in[0,1]}\left|\frac{1}{\sqrt{2Th}}\sum_{t=\delta_T+1}^{\delta_T(u)}\Delta\mathbf{y}_t\right| = O_P(\sqrt{\log T})$, and $\max_t \|\mathbf{z}_t-\widetilde{\mathbf{z}}_t(\tau_t)\|_\delta  = O(1/T)$. Thus 
		\begin{eqnarray*}
			\sup_{0\leq \tau\leq 1}\left|\mathbf{K}_{T,l}(9,21)\right| &\leq& \sup_{\tau \in[0,1],u\in[0,1]}\left|\frac{1}{\sqrt{Th}}\sum_{t=\delta_T+1}^{\delta_T(u)}\Delta\mathbf{y}_t\right|\times \sup_{-1\leq u\leq 1}|u^l K(u)|\times \frac{1}{T\sqrt{h}}\sum_{t=1}^{T}|\mathbf{z}_{t-1}-\widetilde{\mathbf{z}}_{t-1}(\tau_{t-1})|\nonumber \\
			&=&O_P(\sqrt{\log T}/(T\sqrt{h})).\nonumber
		\end{eqnarray*}
		
		For $\mathbf{K}_{T,l}(9,22)$, since $\sup_{\tau \in[0,1]}\left|\frac{1}{Th}\sum_{t=1}^{T}|\widetilde{\mathbf{z}}_{t-1}(\tau_{t-1})|\left(\frac{\tau_t-\tau}{h}\right)^l K\left(\frac{\tau_t-\tau}{h}\right)\right| = O_P(1)$, we have
		\begin{eqnarray*}
			\sup_{0\leq \tau\leq 1}\left|\mathbf{K}_{T,l}(9,22)\right| &\leq& O(1)\frac{1}{\sqrt{T}}\sum_{j=1}^{T}|\mathbf{z}_{j} - \widetilde{\mathbf{z}}_{j}(\tau_j)| \sup_{\tau \in[0,1]}\left|\frac{1}{Th}\sum_{t=1}^{T}|\widetilde{\mathbf{z}}_{t-1}(\tau_{t-1})|\left(\frac{\tau_t-\tau}{h}\right)^l K\left(\frac{\tau_t-\tau}{h}\right)\right|\nonumber\\
			&=& O_P(1/\sqrt{T}).\nonumber
		\end{eqnarray*}
		
		Similarly, we have $\sup_{0\leq \tau\leq 1}\left|\mathbf{K}_{T,l}(9,23)\right| = O_P(1/\sqrt{T})$. In addition, since $\widetilde{\mathbf{z}}_{t}(\tau) = \bm{\Psi}_{\tau}(1)\bm{\omega}(\tau)\bm{\varepsilon}_t+\bm{\Psi}_{\tau}^*(L)\bm{\omega}(\tau)\bm{\varepsilon}_{t-1}-\bm{\Psi}_{\tau}^*(L)\bm{\omega}(\tau)\bm{\varepsilon}_{t}$, let $\mathbf{p}_t(\tau) = \sum_{j=\delta_T+1}^{t} \widetilde{\mathbf{z}}_{j}(\tau_j)$, $\mathbf{w}_t = \bm{\Psi}_{\tau_{t-1}}^*(L)\bm{\omega}(\tau_{t-1})\bm{\varepsilon}_{t}$ and $\mathbf{v}_t = \bm{\Psi}_{\tau_{t-1}}(1)\bm{\omega}(\tau_{t-1})\bm{\varepsilon}_t$, we have
		\begin{eqnarray*}
			\mathbf{K}_{T,l}(9,24) &=&\frac{1}{T^{3/2}h}\sum_{t=1}^{T}\mathbf{p}_{t-1}(\tau)(\mathbf{w}_{t-2}-\mathbf{w}_{t-1})^\top\left(\frac{\tau_t-\tau}{h}\right)^l K\left(\frac{\tau_t-\tau}{h}\right)\nonumber\\
			&&+\frac{1}{T^{3/2}h}\sum_{t=1}^{T}\widetilde{\mathbf{z}}_{t-1}(\tau_{t-1})\mathbf{v}_{t-1}^\top\left(\frac{\tau_t-\tau}{h}\right)^l K\left(\frac{\tau_t-\tau}{h}\right)\nonumber\\
			&& + \frac{1}{T^{3/2}h}\sum_{t=1}^{T}\mathbf{p}_{t-2}(\tau)\mathbf{v}_{t-1}^\top\left(\frac{\tau_t-\tau}{h}\right)^l K\left(\frac{\tau_t-\tau}{h}\right)\nonumber\\
			&=&\mathbf{K}_{T,l}(9,241)+\mathbf{K}_{T,l}(9,242)+\mathbf{K}_{T,l}(9,243).\nonumber
		\end{eqnarray*}
		
		For $\mathbf{K}_{T,l}(9,241)$, write
		\begin{eqnarray*}
			\mathbf{K}_{T,l}(9,241) &=& \frac{1}{T^{3/2}h}\sum_{t=1}^{T}\mathbf{p}_{t-1}(\tau)\mathbf{w}_{t-2}^\top\left(\frac{\tau_t-\tau}{h}\right)^l K\left(\frac{\tau_t-\tau}{h}\right) \nonumber\\
			&& - \frac{1}{T^{3/2}h}\sum_{t=1}^{T}\mathbf{p}_{t-1}(\tau)\mathbf{w}_{t-1}^\top\left(\frac{\tau_t-\tau}{h}\right)^l K\left(\frac{\tau_t-\tau}{h}\right)\nonumber\\
			&=& \frac{1}{T^{3/2}h}\sum_{t=1}^{T}\widetilde{\mathbf{z}}_{t-1}(\tau_{t-1})\mathbf{w}_{t-2}^\top\left(\frac{\tau_t-\tau}{h}\right)^l K\left(\frac{\tau_t-\tau}{h}\right)\nonumber\\ 
			&&+\frac{1}{T^{3/2}h}\sum_{t=1}^{T}\mathbf{p}_{t-2}(\tau)\mathbf{w}_{t-2}^\top\left(\frac{\tau_t-\tau}{h}\right)^l K\left(\frac{\tau_t-\tau}{h}\right) \nonumber\\
			&& - \frac{1}{T^{3/2}h}\sum_{t=1}^{T}\mathbf{p}_{t-1}(\tau)\mathbf{w}_{t-1}^\top\left(\frac{\tau_t-\tau}{h}\right)^l K\left(\frac{\tau_t-\tau}{h}\right).\nonumber
		\end{eqnarray*}
		
		Note that $\sup_{\tau\in[0,1]}\left|\frac{1}{Th}\sum_{t=1}^{T}\widetilde{\mathbf{z}}_{t-1}(\tau_{t-1})\mathbf{w}_{t-2}^\top\left(\frac{\tau_t-\tau}{h}\right)^l K\left(\frac{\tau_t-\tau}{h}\right)\right|=O_P(1)$, we have
		$$
		\sup_{\tau\in[0,1]}\left|\frac{1}{T^{3/2}h}\sum_{t=1}^{T}\widetilde{\mathbf{z}}_{t-1}(\tau_{t-1})\mathbf{w}_{t-2}^\top\left(\frac{\tau_t-\tau}{h}\right)^l K\left(\frac{\tau_t-\tau}{h}\right)\right|=O_P(1/\sqrt{T}).
		$$ 
		In addition,
		\begin{eqnarray*}
			&&\frac{1}{T^{3/2}h}\sum_{t=1}^{T}\mathbf{p}_{t-2}(\tau)\mathbf{w}_{t-2}^\top\left(\frac{\tau_t-\tau}{h}\right)^l K\left(\frac{\tau_t-\tau}{h}\right) - \frac{1}{T^{3/2}h}\sum_{t=1}^{T}\mathbf{p}_{t-1}(\tau)\mathbf{w}_{t-1}^\top\left(\frac{\tau_t-\tau}{h}\right)^l K\left(\frac{\tau_t-\tau}{h}\right)\nonumber\\
			&=& - \frac{1}{T^{3/2}h} \mathbf{p}_{T-1}(\tau)\mathbf{w}_{T-1}^\top\left(\frac{1-\tau}{h}\right)^l K\left(\frac{1-\tau}{h}\right)\nonumber\\
			&& - \frac{1}{T^{3/2}h} \sum_{t=1}^{T-1}\mathbf{p}_{t-1}(\tau)\mathbf{w}_{t-1}^\top\left[\left(\frac{\tau_t-\tau}{h}\right)^l K\left(\frac{\tau_t-\tau}{h}\right)-\left(\frac{\tau_{t+1}-\tau}{h}\right)^l K\left(\frac{\tau_{t+1}-\tau}{h}\right) \right].\nonumber
		\end{eqnarray*}
		
		It is easy to see that the first term is $O_P(\sqrt{\log T}/(T\sqrt{h}))$ uniformly over $\tau \in [0,1]$ since 
		$$
		\sup_{\tau \in[0,1],u\in[0,1]}\left|\frac{1}{\sqrt{2Th}}\sum_{t=\delta_T+1}^{\delta_T(u)}\Delta\mathbf{y}_t\right| = O_P(\sqrt{\log T}).
		$$
		In addition, by Cauchy-Schwarz inequality, we have uniformly over $\tau \in [0,1]$
		\begin{eqnarray*}
			&&\frac{1}{T^{3/2}h} \sum_{t=1}^{T-1}\mathbf{p}_{t-1}(\tau)\mathbf{w}_{t-1}^\top\left[\left(\frac{\tau_t-\tau}{h}\right)^l K\left(\frac{\tau_t-\tau}{h}\right)-\left(\frac{\tau_{t+1}-\tau}{h}\right)^l K\left(\frac{\tau_{t+1}-\tau}{h}\right) \right]\nonumber\\
			&\leq &\frac{O_P(\sqrt{\log T})}{T\sqrt{h}} \left(\sum_{t=1}^{T-1}|\mathbf{w}_{t-1}^\top|^2\right)^{1/2} \left(\sum_{t=1}^{T-1}\left|\left(\frac{\tau_t-\tau}{h}\right)^l K\left(\frac{\tau_t-\tau}{h}\right)-\left(\frac{\tau_{t+1}-\tau}{h}\right)^l K\left(\frac{\tau_{t+1}-\tau}{h}\right) \right|^2\right)^{1/2}\nonumber\\
			&=& \frac{\sqrt{\log T}}{T\sqrt{h}} \times O_P(\sqrt{T})\times O(1/\sqrt{Th}) = O_P\left(\sqrt{\log T}/(Th)\right) .\nonumber
		\end{eqnarray*}
		
		As $\sup_{\tau\in[0,1]}\left|\frac{1}{Th}\sum_{t=1}^{T}\widetilde{\mathbf{z}}_{t-1}(\tau_{t-1})\mathbf{v}_{t-1}^\top\left(\frac{\tau_t-\tau}{h}\right)^l K\left(\frac{\tau_t-\tau}{h}\right)\right|=O_P(1)$, we have $$\sup_{\tau\in[0,1]}\left|\mathbf{K}_{T,l}(9,242)\right| = O_P(1/\sqrt{T}).$$ For $\mathbf{K}_{T,l}(9,243)$, note that $\left\{\frac{1}{\sqrt{Th}}\mathbf{p}_{t-2}(\tau)\mathbf{v}_{t-1}^\top\right\}_{t=1}^{T}$ is a sequence of squared integrable martingale differences, in the next we will show $\sup_{\tau \in[0,1]}|\mathbf{K}_{T,l}(9,243)| = O_P(\sqrt{\log T/T})$ and thus $\sup_{\tau \in[0,1]}|\mathbf{K}_{T,l}(9,2)| = O_P(1/\sqrt{T}) + O_P(\sqrt{\log T}/(Th)) + O_P(\sqrt{\log T/T})$.
		
		For $\mathbf{K}_{T,l}(9,243)$, let $\{S_l\}$ be a finite number of sub-intervals covering the interval $[0,1]$, which are centered at $s_l$ with the length $\xi_T = o(h^2)$. Denote the number of these intervals by $N_T$ then $N_T = O(\xi_T^{-1})$. Hence, we have 
		\begin{eqnarray*}
			&&\sup_{\tau\in[0,1]}\left|\frac{1}{Th}\sum_{t=1}^{T}\frac{1}{\sqrt{Th}}\mathbf{p}_{t-2}(\tau)\mathbf{v}_{t-1}^\top K\left(\frac{\tau_t-\tau}{h}\right)\right|\nonumber\\
			&\leq&\max_{1\leq l\leq N_T}\left|\frac{1}{Th}\sum_{t=1}^{T}\frac{1}{\sqrt{Th}}\mathbf{p}_{t-2}(s_l)\mathbf{v}_{t-1}^\top K\left(\frac{\tau_t-s_l}{h}\right)\right|\nonumber\\
			&& + \max_{1\leq l\leq N_T}\sup_{\tau\in S_l}\left|\frac{1}{Th}\sum_{t=1}^{T}\left(\frac{1}{\sqrt{Th}}\mathbf{p}_{t-2}(\tau) K\left(\frac{\tau_t-\tau}{h}\right)-\frac{1}{\sqrt{Th}}\mathbf{p}_{t-2}(s_l) K\left(\frac{\tau_t-s_l}{h}\right)\right)\mathbf{v}_{t-1}^\top\right|\nonumber\\
			&=& K_{T,l}(9,243,1) + K_{T,l}(9,243,2).\nonumber
		\end{eqnarray*}
		
		For $K_{T,l}(9,243,2)$, we have
		\begin{eqnarray*}
			&&\left|\frac{1}{Th}\sum_{t=1}^{T}\left(\frac{1}{\sqrt{Th}}\mathbf{p}_{t-2}(\tau) K\left(\frac{\tau_t-\tau}{h}\right)-\frac{1}{\sqrt{Th}}\mathbf{p}_{t-2}(s_l) K\left(\frac{\tau_t-s_l}{h}\right)\right)\mathbf{v}_{t-1}^\top\right|\nonumber\\
			&\leq&\frac{1}{Th}\sum_{t=1}^{T}\left|\frac{1}{\sqrt{Th}}\mathbf{p}_{t-2}(\tau) -\frac{1}{\sqrt{Th}}\mathbf{p}_{t-2}(s_l) \right|\cdot\left|\mathbf{v}_{t-1}\right|K\left(\frac{\tau_t-\tau}{h}\right)\nonumber\\
			&& + \frac{1}{Th}\sum_{t=1}^{T}\left|\frac{1}{\sqrt{Th}}\mathbf{p}_{t-2}(s_l)\right|\cdot \left|K\left(\frac{\tau_t-\tau}{h}\right)- K\left(\frac{\tau_t-s_l}{h}\right)\right| \left|\mathbf{v}_{t-1}\right|.\nonumber
		\end{eqnarray*}
		
		By the continuity of kernel function $K(\cdot)$, $\sup_{\tau \in[0,1],r\in[0,1]}\left|\frac{1}{\sqrt{2Th}}\sum_{t=\delta_T+1}^{\delta_T(r)}\widetilde{\mathbf{z}}_t(\tau_t)\right|=O_P(\sqrt{\log T})$ and taking $\xi_T = O(\gamma_Th^2/\sqrt{\log T})$ with $\gamma_T = \sqrt{\log T /(Th)}$, then the second term is uniformly bounded by $O(\sqrt{\log T})\frac{\xi_T}{h^2}E|\mathbf{v}_{t}| = O(\gamma_T)$. In addition, as 
		$$
		\sup_{\tau \in[0,1],|s_l-\tau|\leq\xi_T}\left|\frac{1}{\sqrt{T\xi_T}}\sum_{t=\lfloor T(\tau-h)+1 \rfloor }^{\lfloor T(s_l-h)+1 \rfloor}\widetilde{\mathbf{z}}_t(\tau_t)\right|=O_P(\sqrt{\log T}),
		$$
		the first term is uniformly $O_P(\gamma_T)$.
		
		We then apply the truncation method to prove term $K_{T,l}(9,243,1)$. Define $\bm{\eta}_t(s_l) = \frac{1}{\sqrt{Th}}\mathbf{p}_{t-1}(s_l)\mathbf{v}_{t}^\top$, $\bm{\eta}_t'(s_l) = \bm{\eta}_t(s_l) I(|\frac{1}{\sqrt{Th}}\mathbf{p}_{t-1}(s_l)|\leq (Th/\log T)^{1/4},|\mathbf{v}_t|\leq (Th/\log T)^{1/4})$ and $\bm{\eta}_t''(s_l) = \bm{\eta}_t(s_l) - \bm{\eta}_t'(s_l) $. Then, we have
		\begin{eqnarray*}
			K_{T,l}(9,243,1) &=& \max_{1\leq l\leq N_T}\left|\frac{1}{Th}\sum_{t=1}^{T}\left(\bm{\eta}_t'(s_l) + \bm{\eta}_t''(s_l) - E(\bm{\eta}_t'(s_l) + \bm{\eta}_t''(s_l) \mid \mathcal{F}_{t-1})\right) K\left(\frac{\tau_t-s_l}{h}\right)\right|\nonumber\\
			&\leq & \max_{1\leq l\leq N_T}\left|\frac{1}{Th}\sum_{t=1}^{T}\left(\bm{\eta}_t'(s_l)  - E(\bm{\eta}_t'(s_l) \mid \mathcal{F}_{t-1})\right) K\left(\frac{\tau_t-s_l}{h}\right)\right|\nonumber\\
			&&+\max_{1\leq l\leq N_T}\left|\frac{1}{Th}\sum_{t=1}^{T} \bm{\eta}_t''(s_l) K\left(\frac{\tau_t-s_l}{h}\right)\right| +\max_{1\leq l\leq N_T}\left|\frac{1}{Th}\sum_{t=1}^{T} E(\bm{\eta}_t''(s_l) \mid \mathcal{F}_{t-1}) K\left(\frac{\tau_t-s_l}{h}\right)\right|\nonumber\\
			&=& K_{T,l}(9,243,11) +K_{T,l}(9,243,12) + K_{T,l}(9,243,13).\nonumber
		\end{eqnarray*}
		
		For $K_{T,l}(9,243,12)$, write
		\begin{eqnarray*}
			K_{T,l}(9,243,12)&\leq& \max_{1\leq l\leq N_T} \frac{1}{Th}\sum_{t=1}^{T} |\frac{1}{\sqrt{Th}}\mathbf{p}_{t-1}(s_l)|\cdot|\mathbf{v}_{t}|\cdot I(|\frac{1}{\sqrt{Th}}\mathbf{p}_{t-1}(s_l)|> (Th/\log T)^{1/4})K\left(\frac{\tau_t-s_l}{h}\right) \nonumber\\
			&&+\max_{1\leq l\leq N_T} \frac{1}{Th}\sum_{t=1}^{T} \left|\frac{1}{\sqrt{Th}}\mathbf{p}_{t-1}(s_l)\right|\cdot|\mathbf{v}_{t}|\cdot I(|\mathbf{v}_t|> (Th/\log T)^{1/4})K\left(\frac{\tau_t-s_l}{h}\right).\nonumber
		\end{eqnarray*}
		
		For the first term, by Lemma \ref{L6} and $\max_{1\leq l \leq N_T , \lfloor T(s_l-h)\rfloor+1\leq t\leq \lfloor T(s_l+h)\rfloor+1} |\frac{1}{\sqrt{Th}}\mathbf{p}_{t-1}(s_l)| = O_P(\sqrt{\log T})$, we have
		\begin{eqnarray*}
			&& \max_{1\leq l\leq N_T} \frac{1}{Th}\sum_{t=1}^{T} |\frac{1}{\sqrt{Th}}\mathbf{p}_{t-1}(s_l)|\cdot|\mathbf{v}_{t}|\cdot I(|\frac{1}{\sqrt{Th}}\mathbf{p}_{t-1}(s_l)|> (Th/\log T)^{1/4})K\left(\frac{\tau_t-s_l}{h}\right)\nonumber\\
			&\leq & \sup_{-1\leq u\leq 1}|K(u)| \times \max_{1\leq l \leq N_T , \lfloor T(s_l-h)\rfloor+1\leq t\leq \lfloor T(s_l+h)\rfloor+1} |\frac{1}{\sqrt{Th}}\mathbf{p}_{t-1}(s_l)|\nonumber\\
			&& \times \sum_{l=1}^{N_T}\max_{\lfloor T(s_l-h)\rfloor+1\leq t\leq \lfloor T(s_l+h)\rfloor+1} I\left(\left|\frac{1}{\sqrt{Th}}\mathbf{p}_{t-1}(s_l)\right|> (Th/\log T)^{1/4}\right) \times \frac{1}{Th} \sum_{t=1}^{T}|\mathbf{v}_t|\nonumber\\
			&=& O_P\left( \frac{\sqrt{\log T}N_T\cdot Th }{h(\sqrt{Th} (Th/\log T)^{1/4})^{\delta} } \right) = o_P\left( \frac{\sqrt{\log T} }{\sqrt{Th}} \right),\nonumber
		\end{eqnarray*}
		if $\frac{T^{1-\frac{6}{3\delta + 4}}h}{(\log T)^{\delta/(3\delta +4)}} \to \infty$ which holds since $T^{1-2/\delta} h \to \infty$. For the second term, we have
		\begin{eqnarray*}
			&&\max_{1\leq l\leq N_T} \frac{1}{Th}\sum_{t=1}^{T} \left|\frac{1}{\sqrt{Th}}\mathbf{p}_{t-1}(s_l)\right|\cdot|\mathbf{v}_{t}|\cdot I(|\mathbf{v}_t|> (Th/\log T)^{1/4})K\left(\frac{\tau_t-s_l}{h}\right)\nonumber\\
			&\leq & O_P(\sqrt{\log T}/ (Th)) \times \sum_{t=1}^{T}|\mathbf{v}_{t}|\cdot I(|\mathbf{v}_t|> (Th/\log T)^{1/4})\nonumber\\
			&=&O_P(\sqrt{\log T}/ (Th)) \times O_P(T \times (Th/\log T)^{-(\delta-1)/4})= o_P\left( \frac{\sqrt{\log T} }{\sqrt{Th}} \right)\nonumber
		\end{eqnarray*}
		if $\frac{T^{1-\frac{4}{\delta+1}}h }{(\log T)^{(\delta-1)/(\delta+1)}} \to \infty$. Similarly, we have $K_{T,l}(9,243,13) = o_P\left( \frac{\sqrt{\log T} }{\sqrt{Th}} \right)$.
		
		In the next stage, we deal with $K_{T,l}(9,243,11)$. For any fixed $1\leq l\leq N_T$, let 
		$$\mathbf{Y}_t = \frac{1}{Th}\left(\bm{\eta}_t'(s_l)  - E(\bm{\eta}_t'(s_l) \mid \mathcal{F}_{t-1})\right) K\left(\frac{\tau_t-s_l}{h}\right),
		$$
		then $E(\mathbf{Y}_t\mid \mathcal{F}_{t-1}) = 0$ and $|\mathbf{Y}_t| \leq 2\frac{\sqrt{Th}}{\sqrt{\log T}}d_T$ with $d_T = \max_{1\leq t\leq s_l}K\left(\frac{\tau_t-s_l}{h}\right)/(Th)$. In addition, by the proof of part (3), we have
		\begin{eqnarray*}
			&&\max_{1\leq l\leq N_T}\left|\sum_{t=1}^{T}E\left(\mathbf{Y}_t\mathbf{Y}_t^\top \mid \mathcal{F}_{t=1} \right) \right| \nonumber\\
			&\leq& 4d_T\max_{1\leq l\leq N_T} \frac{1}{Th}\sum_{t=1}^{T} \left| \frac{1}{\sqrt{Th}}\mathbf{p}_{t-1}(s_l)E(\mathbf{v}_t\mathbf{v}_t^\top) \frac{1}{\sqrt{Th}}\mathbf{p}_{t-1}^\top(s_l) \right| K\left(\frac{\tau_t-s_l}{h}\right) = O_P(d_T). \nonumber
		\end{eqnarray*}
		
		Then by Proposition 2.1 in \cite{freedman1975tail}, we have
		\begin{eqnarray*}
			&&\Pr\left(K_{T,l}(9,243,11) \geq \sqrt{8M}\gamma_T \right)\nonumber\\
			&\leq& \Pr\left(K_{T,l}(9,243,11) \geq \sqrt{8M}\gamma_T,\max_{1\leq l\leq N_T}\left|\sum_{t=1}^{T}E\left(\mathbf{Y}_t\mathbf{Y}_t^\top \mid \mathcal{F}_{t=1} \right) \right|\leq \frac{M}{Th} \right)\nonumber\\
			&& + \Pr\left(\max_{1\leq l\leq N_T}\left|\sum_{t=1}^{T}E\left(\mathbf{Y}_t\mathbf{Y}_t^\top \mid \mathcal{F}_{t=1} \right) \right|> \frac{M}{Th} \right)\nonumber\\
			&\leq & N_T \exp\left( - \frac{8M\gamma_T^2}{\frac{M}{Th}} \right) + o(1) = o(1).\nonumber
		\end{eqnarray*}
		
		Then, we have
		$$
		\sup_{\tau\in[0,1]}\left|\frac{1}{Th}\sum_{t=1}^{T}\frac{1}{\sqrt{Th}}\mathbf{p}_{t-2}(\tau)\mathbf{v}_{t-1}^\top K\left(\frac{\tau_t-\tau}{h}\right)\right| = O_P(\sqrt{\log T/(Th)}).
		$$
		
		Finally, consider $\mathbf{K}_{T,l}(10)$,
		\begin{eqnarray*}
			&&\frac{1}{T^{3/2}h^{3/2}}\sum_{t=1}^{T}\bm{\xi}_{T,\perp}^\top(\tau)\bm{\alpha}_{\perp}^\top(\tau)\mathbf{y}_{t-1}\mathbf{z}_{t-1}^\top \left(\frac{\tau_t-\tau}{h}\right)^l K\left(\frac{\tau_t-\tau}{h}\right)\nonumber \\
			&=&\frac{1}{Th^{3/2}}\sum_{t=1}^{T}\bm{\xi}_{T,\perp}^\top(\tau)\left(\frac{1}{\sqrt{T}}\bm{\alpha}_{\perp}^\top(\tau)\mathbf{y}_{\delta_T}+\frac{1}{\sqrt{T}}\bm{\alpha}_{\perp}^\top(\tau)(\mathbf{y}_{\delta_T}-\mathbf{y}_{t-1}) \right)\mathbf{z}_{t-1}^\top \left(\frac{\tau_t-\tau}{h}\right)^lK\left(\frac{\tau_t-\tau}{h}\right)\nonumber\\
			&=&\frac{1}{Th^{3/2}}\bm{\xi}_{T,\perp}^\top(\tau)\frac{1}{\sqrt{T}}\bm{\alpha}_{\perp}^\top(\tau)\sum_{t=1}^{T}\sum_{j=\delta_T+1}^{t-1}\Delta \mathbf{y}_{j}\mathbf{z}_{t-1}^\top \left(\frac{\tau_t-\tau}{h}\right)^lK\left(\frac{\tau_t-\tau}{h}\right).\nonumber
		\end{eqnarray*}
		
		To complete the proof, it suffices to show
		$$
		\sup_{\tau \in [0,1]}\left|\frac{1}{Th}\sum_{t=1}^{T}\frac{1}{\sqrt{Th}}\sum_{j=\delta_T+1}^{t-1}\Delta \mathbf{y}_{j}\mathbf{z}_{t-1}^\top \left(\frac{\tau_t-\tau}{h}\right)^lK\left(\frac{\tau_t-\tau}{h}\right) \right| = O_P(\sqrt{\log T/(Th)}).
		$$
		
		Similar to the proof of term $\mathbf{K}_{T,l}(9,2)$, we have as $T h^2\rightarrow \infty$
		$$
		\sup_{\tau \in [0,1]}\left|\mathbf{K}_{T,l}(10)\right| = O_P(1/\sqrt{Th}) + O_P(\sqrt{\log T}/(Th^{3/2})) + O_P(\sqrt{\log T/(Th)}) = O_P(\sqrt{\log T/(Th)}).
		$$
		
		The proof is now completed.
		
		\medskip
		\noindent (5). The proof of part (5) is identical to that of part (4), so is omitted here.
	\end{proof}

	\begin{proof}[Proof of Lemma \ref{L8}]
		\item
		\noindent (1). By Theorem \ref{Thm1} (2), we have $\widehat{\bm{\alpha}}(\tau) - \bm{\alpha}(\tau) = O(h^2) + O_P(\sqrt{\log T /(Th)})$ uniformly over $\tau \in [0,1]$. Then, similar to the proof of Lemma \ref{L7} (4), we have
		\begin{eqnarray*}
			&&\sup_{\tau\in[0,1]}\left|\frac{1}{T^{3/2}h}\sum_{t=1}^{T}(\Delta\mathbf{x}_{t-1}\mathbf{y}_{t-1}^\top) \otimes \widehat{\bm{\alpha}}(\tau_t) K\left(\frac{\tau_t-\tau}{h}\right)\right|\nonumber\\
			&\leq & \sup_{\tau\in[0,1]}\left|\frac{1}{T^{3/2}h}\sum_{t=1}^{T}(\Delta\mathbf{x}_{t-1}\mathbf{y}_{t-1}^\top) \otimes \bm{\alpha}(\tau_t) K\left(\frac{\tau_t-\tau}{h}\right)\right|\nonumber\\
			&&+\sup_{\tau\in[0,1]}\left|\frac{1}{T^{3/2}h}\sum_{t=1}^{T}(\Delta\mathbf{x}_{t-1}\mathbf{y}_{t-1}^\top) \otimes O(h^2) K\left(\frac{\tau_t-\tau}{h}\right)\right|+ O_P\left(\sqrt{\log T /(Th)} \right)\nonumber\\
			&=&O_P\left(\sqrt{\log T /(Th)} \right).\nonumber
		\end{eqnarray*}
		
		\medskip
		\noindent (2). By part (1) and Lemma \ref{L5}, we have $\sup_{0\leq \tau \leq 1}\left|\frac{1}{T^{3/2}h}\mathbf{R}^\top(\widehat{\bm{\alpha}})\mathbf{K}(\tau)\Delta\mathbf{X}^* \right| = O_P\left(\sqrt{\log T /(Th)} \right)$ and $\sup_{0\leq \tau \leq 1}\left|\frac{1}{Th}\Delta\mathbf{X}^{*,\top}\mathbf{K}(\tau)\Delta\mathbf{X}^* \right| = O_P\left(1\right)$. Hence, by simple calculations, we have
		\begin{eqnarray*}
			&&\frac{1}{T^2}\sum_{t=1}^{T}\widetilde{\mathbf{R}}_t(\widehat{\bm{\alpha}})\widehat{\bm{\Omega}}^{-1}(\tau_t)\widetilde{\mathbf{R}}_t^\top(\widehat{\bm{\alpha}}) \nonumber\\
			&=& \frac{1}{T^2}\sum_{t=1}^{T}\mathbf{R}_t(\widehat{\bm{\alpha}})\widehat{\bm{\Omega}}^{-1}(\tau_t)\mathbf{R}_t^\top(\widehat{\bm{\alpha}}) + O_P\left(\sqrt{\log T /(Th)} \right)\nonumber\\
			&=& \frac{1}{T^2}\sum_{t=1}^{T}(\mathbf{y}_{t-1}^{(2)}\mathbf{y}_{t-1}^{(2),\top})\otimes(\widehat{\bm{\alpha}}^\top(\tau_t)\widehat{\bm{\Omega}}^{-1}(\tau_t)\widehat{\bm{\alpha}}(\tau_t)) + O_P\left(\sqrt{\log T /(Th)} \right).\nonumber
		\end{eqnarray*}
		
		Noting that uniformly over $\tau \in [0,1]$, we have $\widehat{\bm{\alpha}}(\tau) - \bm{\alpha}(\tau) = O(h^2) + O_P(\sqrt{\log T /(Th)}) $ and $\widehat{\bm{\Omega}}(\tau) - \bm{\Omega}(\tau) = O(h^2) + O_P(\sqrt{\log T /(Th)})$, by Lemma \ref{L1} and the continuous mapping theorem, we have
		\begin{eqnarray*}
			&&\frac{1}{T^2}\sum_{t=1}^{T}(\mathbf{y}_{t-1}^{(2)}\mathbf{y}_{t-1}^{(2),\top})\otimes(\widehat{\bm{\alpha}}^\top(\tau_t)\widehat{\bm{\Omega}}^{-1}(\tau_t)\widehat{\bm{\alpha}}(\tau_t))\nonumber\\
			&=& \frac{1}{T^2}\sum_{t=1}^{T}(\mathbf{y}_{t-1}^{(2)}\mathbf{y}_{t-1}^{(2),\top})\otimes(\bm{\alpha}^\top(\tau_t)\bm{\Omega}^{-1}(\tau_t)\bm{\alpha}(\tau_t)) +o_P(1)\nonumber\\
			&\to_D&\int_{0}^{1}\mathbf{W}_{d-r_0}(u) \mathbf{W}_{d-r_0}^\top(u) \otimes \bm{\alpha}^\top(u)\bm{\Omega}^{-1}(u)\bm{\alpha}(u)\mathrm{d}u.\nonumber
		\end{eqnarray*}
		
		\medskip

		\noindent (3). Write
		\begin{eqnarray*}
			&&\frac{1}{T}\sum_{t=1}^{T}\widetilde{\mathbf{R}}_t(\widehat{\bm{\alpha}})\widehat{\bm{\Omega}}^{-1}(\tau_t)\mathbf{u}_t\nonumber\\
			&=& \frac{1}{T}\sum_{t=1}^{T}\mathbf{R}_t(\widehat{\bm{\alpha}})\widehat{\bm{\Omega}}^{-1}(\tau_t)\mathbf{u}_t -\frac{1}{T}\sum_{t=1}^{T}\Delta \mathbf{X}_{t-1}^\top(\Delta \mathbf{X}^\top\mathbf{K}(\tau_t)\Delta \mathbf{X})^{-1}\Delta \mathbf{X}^\top\mathbf{K}(\tau_t)\mathbf{R}(\widehat{\bm{\alpha}})\widehat{\bm{\Omega}}^{-1}(\tau_t)\mathbf{u}_t\nonumber\\
			&=& \mathbf{K}_{T,17} + \mathbf{K}_{T,18}.\nonumber
		\end{eqnarray*}
		
		Consider $\mathbf{K}_{T,17}$,
		\begin{eqnarray*}
			\mathbf{K}_{T,17} &=& \frac{1}{T}\sum_{t=1}^{T} \mathbf{y}_{t-1}^{(2)}\otimes \bm{\alpha}^\top(\tau_t)\bm{\Omega}^{-1}(\tau_t)\mathbf{u}_t +\frac{1}{T}\sum_{t=1}^{T} \mathbf{y}_{t-1}^{(2)}\otimes \bm{\alpha}^\top(\tau_t)\left(\widehat{\bm{\Omega}}^{-1}(\tau_t)-\bm{\Omega}^{-1}(\tau_t)\right)\mathbf{u}_t  \nonumber\\ 
			&& + \frac{1}{T}\sum_{t=1}^{T} \mathbf{y}_{t-1}^{(2)}\otimes (\widehat{\bm{\alpha}}^\top(\tau_t)-\bm{\alpha}^\top(\tau_t))\left(\widehat{\bm{\Omega}}^{-1}(\tau_t)-\bm{\Omega}^{-1}(\tau_t)\right)\mathbf{u}_t\nonumber\\
			&& + \frac{1}{T}\sum_{t=1}^{T} \mathbf{y}_{t-1}^{(2)}\otimes (\widehat{\bm{\alpha}}^\top(\tau_t)-\bm{\alpha}^\top(\tau_t))\bm{\Omega}^{-1}(\tau_t)\mathbf{u}_t\nonumber\\
			&=&\mathbf{K}_{T,17,1}+\mathbf{K}_{T,17,2}+\mathbf{K}_{T,17,3}+\mathbf{K}_{T,17,4}.\nonumber
		\end{eqnarray*}
		
		Consider $\mathbf{K}_{T,17,1}$ first. Note that $\mathbf{y}_{t-1} = \mathbf{P}_{\bm{\beta}_\perp} \sum_{j=1}^{t-1}\mathbf{z}_j + \mathbf{P}_{\bm{\beta}}\mathbf{z}_{t-1} + \mathbf{P}_{\bm{\beta}_\perp} \mathbf{y}_0$, since $\mathbf{y}_0 = O_P(1)$ and $\mathbf{z}_{t}$ is a time-varying VAR process, we have
		$$
		\frac{1}{T}\sum_{t=1}^{T} \mathbf{z}_{t-1} \otimes \bm{\alpha}^\top(\tau_t)\bm{\Omega}^{-1}(\tau_t)\mathbf{u}_t =O_P(1/\sqrt{T})
		$$
		and
		$$
		\frac{1}{T}\sum_{t=1}^{T} \mathbf{y}_{0} \otimes \bm{\alpha}^\top(\tau_t)\bm{\Omega}^{-1}(\tau_t)\mathbf{u}_t =O_P(1/\sqrt{T}).
		$$
		Similarly, by using BN decomposition and the proof of Lemma \ref{L1}, we have
		\begin{eqnarray*}
			\mathbf{K}_{T,17,1} &=& \frac{1}{T}\sum_{t=1}^{T}\left(\left[\mathbf{0}_{(d-r_0)\times r_0},\mathbf{I}_{d-r_0}\right]\mathbf{P}_{\bm{\beta}_\perp} \sum_{s=1}^{t-1}\bm{\Psi}_{\tau_s}(1)\mathbf{u}_s\right) \otimes \bm{\alpha}^\top(\tau_t)\bm{\Omega}^{-1}(\tau_t)\mathbf{u}_t + O_P(1/\sqrt{T})\nonumber\\
			&\to_D& \int_{0}^{1}\mathbf{W}_{d-r_0}(u)\otimes \mathrm{d}\mathbf{W}_{r_0}(u).\nonumber
		\end{eqnarray*}
		
		In addition, by Lemma \ref{L1}, we have 
		$$
		\mathbf{P}_{\bm{\beta}_\perp} \bm{\Psi}_{\tau_t}(1)\mathbf{u}_t = \bm{\beta}_{\perp} [\bm{\alpha}_{\perp}^\top(\tau_t)\bm{\Gamma}_{\tau_t}(1) \bm{\beta}_{\perp}]^{-1} \bm{\alpha}_{\perp}^\top(\tau_t)\mathbf{u}_t
		$$
		and thus
		\begin{eqnarray*}
			&& E\left( \mathbf{P}_{\bm{\beta}_\perp} \bm{\Psi}_{\tau_t}(1)\mathbf{u}_t\mathbf{u}_t^\top\bm{\Omega}^{-1}(\tau_t) \bm{\alpha}(\tau_t)\right) \nonumber\\
			&= &\bm{\beta}_{\perp} [\bm{\alpha}_{\perp}^\top(\tau_t)\bm{\Gamma}_{\tau_t}(1) \bm{\beta}_{\perp}]^{-1} \bm{\alpha}_{\perp}^\top(\tau_t)\bm{\alpha}(\tau_t) =0,\nonumber
		\end{eqnarray*}
		which implies that $\mathbf{W}_{d-r_0}(\cdot)$ and $\mathbf{W}_{r_0}(\cdot)$ are mutually independent Brownian motions.
		
		For $\mathbf{K}_{T,17,2}$, we have
		\begin{eqnarray*}
			&& \frac{1}{T}\sum_{t=1}^{T} \mathbf{y}_{t-1}^{(2)}\otimes \bm{\alpha}^\top(\tau_t)\left(\widehat{\bm{\Omega}}^{-1}(\tau_t)-\bm{\Omega}^{-1}(\tau_t)\right)\mathbf{u}_t\nonumber\\
			&=&  \frac{1}{T}\sum_{t=1}^{T} \mathbf{y}_{t-1}^{(2)}\otimes \bm{\alpha}^\top(\tau_t)\bm{\Omega}^{-1}(\tau_t)\left(\bm{\Omega}(\tau_t)-\widehat{\bm{\Omega}}(\tau_t)\right)\widehat{\bm{\Omega}}^{-1}(\tau_t)\mathbf{u}_t\nonumber\\
			&=& \frac{1}{T}\sum_{t=1}^{T} \mathbf{y}_{t-1}^{(2)}\otimes \bm{\alpha}^\top(\tau_t)\bm{\Omega}^{-1}(\tau_t)\left(\bm{\Omega}(\tau_t)-\widehat{\bm{\Omega}}(\tau_t)\right)\bm{\Omega}^{-1}(\tau_t)\mathbf{u}_t\nonumber\\
			&& + \frac{1}{T}\sum_{t=1}^{T} \mathbf{y}_{t-1}^{(2)}\otimes \bm{\alpha}^\top(\tau_t)\bm{\Omega}^{-1}(\tau_t)\left(\bm{\Omega}(\tau_t)-\widehat{\bm{\Omega}}(\tau_t)\right)\left(\widehat{\bm{\Omega}}^{-1}(\tau_t)-\bm{\Omega}^{-1}(\tau_t)\right)\mathbf{u}_t\nonumber
		\end{eqnarray*}
		
		By $\sup_{\tau \in [0,1]}\left|\widehat{\bm{\Omega}}(\tau)-\bm{\Omega}(\tau) \right| = O_P(h^2+\sqrt{\log T/(Th)})$, the second term is bounded by
		$$
		\sqrt{T}O_P(h^2+\sqrt{\log T/(Th)})^2 = o_P(1)
		$$
		if $Th^8 \to 0$. For the first term, write
		\begin{eqnarray*}
			&&\frac{1}{T}\sum_{t=1}^{T} \mathbf{y}_{t-1}^{(2)}\otimes \bm{\alpha}^\top(\tau_t)\bm{\Omega}^{-1}(\tau_t)\left(\bm{\Omega}(\tau_t)-\widehat{\bm{\Omega}}(\tau_t)\right)\bm{\Omega}^{-1}(\tau_t)\mathbf{u}_t\nonumber\\
			&=& \frac{1}{T}\sum_{t=1}^{T} \mathbf{y}_{t-1}^{(2)}\otimes \bm{\alpha}^\top(\tau_t)\bm{\Omega}^{-1}(\tau_t)\left(\bm{\Omega}(\tau_t)-\frac{1}{T}\sum_{s=1}^{T}\mathbf{u}_s\mathbf{u}_s^\top w_s(\tau_t)\right)\bm{\Omega}^{-1}(\tau_t)\mathbf{u}_t\nonumber\\
			&& + \frac{1}{T}\sum_{t=1}^{T} \mathbf{y}_{t-1}^{(2)}\otimes \bm{\alpha}^\top(\tau_t)\bm{\Omega}^{-1}(\tau_t)\left(\frac{1}{T}\sum_{s=1}^{T}(\mathbf{u}_s-\widehat{\mathbf{u}}_s)\mathbf{u}_s^\top w_s(\tau_t)\right)\bm{\Omega}^{-1}(\tau_t)\mathbf{u}_t\nonumber\\
			&& + \frac{1}{T}\sum_{t=1}^{T} \mathbf{y}_{t-1}^{(2)}\otimes \bm{\alpha}^\top(\tau_t)\bm{\Omega}^{-1}(\tau_t)\left(\frac{1}{T}\sum_{s=1}^{T}\mathbf{u}_s(\mathbf{u}_s-\widehat{\mathbf{u}}_s)^\top w_s(\tau_t)\right)\bm{\Omega}^{-1}(\tau_t)\mathbf{u}_t\nonumber\\
			&& + \frac{1}{T}\sum_{t=1}^{T} \mathbf{y}_{t-1}^{(2)}\otimes \bm{\alpha}^\top(\tau_t)\bm{\Omega}^{-1}(\tau_t)\left(\frac{1}{T}\sum_{s=1}^{T}(\widehat{\mathbf{u}}_s-\mathbf{u}_s)(\widehat{\mathbf{u}}_s-\mathbf{u}_s)^\top w_s(\tau_t)\right)\bm{\Omega}^{-1}(\tau_t)\mathbf{u}_t\nonumber\\
			&=&\mathbf{J}_1 + \mathbf{J}_2 + \mathbf{J}_3 + \mathbf{J}_4,\nonumber
		\end{eqnarray*}
		For $\mathbf{J}_1$, by $\sup_{\tau\in[0,1]}\left|\frac{1}{T}\sum_{s=1}^{T}E(\mathbf{u}_s\mathbf{u}_s^\top)w_s(\tau)-\bm{\Omega}(\tau)\right| = o(1)$, we have
		\begin{eqnarray*}
			\|\mathbf{J}_1\|_{\delta/2} &\leq& \left\|\frac{1}{T}\sum_{t=1}^{T} \mathbf{y}_{t-1}^{(2)}\otimes \bm{\alpha}^\top(\tau_t)\bm{\Omega}^{-1}(\tau_t)\left(\frac{1}{T}\sum_{s=1}^{T}[E(\mathbf{u}_s\mathbf{u}_s^\top)-\mathbf{u}_s\mathbf{u}_s^\top] w_s(\tau_t)\right)\bm{\Omega}^{-1}(\tau_t)\mathbf{u}_t\right\|_{\delta/2}+o(1)\nonumber\\
			&\leq& \left\|\frac{1}{T}\sum_{t=1}^{T} \mathbf{y}_{t-1}^{(2)}\otimes \bm{\alpha}^\top(\tau_t)\bm{\Omega}^{-1}(\tau_t)\left(\frac{1}{T}\sum_{s=1}^{t-1}[E(\mathbf{u}_s\mathbf{u}_s^\top)-\mathbf{u}_s\mathbf{u}_s^\top] w_s(\tau_t)\right)\bm{\Omega}^{-1}(\tau_t)\mathbf{u}_t\right\|_{\delta/2}\nonumber\\
			&& + \left\|\frac{1}{T}\sum_{t=1}^{T} \mathbf{y}_{t-1}^{(2)}\otimes \bm{\alpha}^\top(\tau_t)\bm{\Omega}^{-1}(\tau_t)\left(\frac{1}{T}\sum_{s=t+1}^{T}[E(\mathbf{u}_s\mathbf{u}_s^\top)-\mathbf{u}_s\mathbf{u}_s^\top] w_s(\tau_t)\right)\bm{\Omega}^{-1}(\tau_t)\mathbf{u}_t\right\|_{\delta/2} + o(1)\nonumber\\
			&=&\mathbf{J}_{11} + \mathbf{J}_{12}.\nonumber
		\end{eqnarray*}
		For $\mathbf{J}_{11}$, by using Burkholder inequality for martingale differences, we have
		\begin{eqnarray*}
			\mathbf{J}_{11} &\leq& \frac{1}{T}\left(\sum_{t=1}^{T}\left\| \mathbf{y}_{t-1}^{(2)}\otimes \bm{\alpha}^\top(\tau_t)\bm{\Omega}^{-1}(\tau_t)\left(\frac{1}{T}\sum_{s=1}^{t-1}[E(\mathbf{u}_s\mathbf{u}_s^\top)-\mathbf{u}_s\mathbf{u}_s^\top] k_s(\tau_t)\right)\bm{\Omega}^{-1}(\tau_t)\mathbf{u}_t\right\|_{\delta/2}\right)^{1/2}\nonumber\\
			&\leq & O(1)\frac{1}{T}\left(\sum_{t=1}^{T}\left\| \mathbf{u}_t\mathbf{y}_{t-1}^{(2),\top}\right\|_\delta\left\|\frac{1}{T}\sum_{s=1}^{t-1}[E(\mathbf{u}_s\mathbf{u}_s^\top)-\mathbf{u}_s\mathbf{u}_s^\top] k_s(\tau_t)\right\|_{\delta}\right)^{1/2}\nonumber\\
			&=& O\left(T^{-1} (T\times \sqrt{T} \times 1/{\sqrt{Th}})^{1/2}\right) = o(1).\nonumber
		\end{eqnarray*}
		Similarly, we have $\mathbf{J}_{12} = o(1)$ and thus $\mathbf{J}_1 = o_P(1)$. In addition, by using uniform convergence result of $[\widehat{\bm{\Pi}}(\tau), \widehat{\bm{\Gamma}}(\tau)]$ and  $Th^8 \to 0$, we have $\mathbf{J}_4 = o_P(1)$.
		
		Next, consider $\mathbf{J}_2$ and $\mathbf{J}_3$. By the proof of Theorem \ref{Thm1} and Lemmas \ref{L5}--\ref{L7}, we have uniformly over $\tau \in [h,1-h]$,
		\begin{eqnarray*}
			&& [\widehat{\bm{\Pi}}(\tau)-\bm{\Pi}(\tau), \widehat{\bm{\Gamma}}(\tau)-\bm{\Gamma}(\tau)]\nonumber\\
			&=&\frac{1}{2}h^2\widetilde{c}_2[\bm{\alpha}^{(2)}(\tau)\bm{\beta}^\top,\bm{\Gamma}^{(2)}(\tau)]+ O(h^3) +O_P\left( h^2 \sqrt{\log T/(Th)}\right)\nonumber\\
			&&+ \frac{1}{Th}\sum_{t=1}^{T}\mathbf{u}_{t}\mathbf{w}_{t-1}^\top\bm{\Sigma}_{\mathbf{w}}^{-1}(\tau)  K\left(\frac{\tau_t-\tau}{h}\right) \left[ \begin{matrix}
				\bm{\beta}^\top & \mathbf{0}_{r_0\times d(p_0-1)}\nonumber\\
				\mathbf{0}_{d(p_0-1)\times r_0}& \mathbf{I}_{d(p_0-1)}\nonumber\\
			\end{matrix}\right] + O_P\left(\log T /(Th) \right).\nonumber
		\end{eqnarray*}
		In addition, for $1\leq t \leq \lfloor Th \rfloor +1$ and $\lfloor T(1-h) \rfloor +1\leq t \leq T$, since
		$$
		\sup_{\tau \in [0,1]}\left|[\widehat{\bm{\Pi}}(\tau), \widehat{\bm{\Gamma}}(\tau)] - [\bm{\Pi}(\tau), \bm{\Gamma}(\tau)]\right| = O_P\left(h^2 + \sqrt{\log T/(Th)} \right),
		$$
		we have
		\begin{eqnarray*}
			&&\frac{1}{T}\sum_{1\leq t \leq \lfloor Th \rfloor +1} \mathbf{y}_{t-1}^{(2)}\otimes \bm{\alpha}^\top(\tau_t)\bm{\Omega}^{-1}(\tau_t)\left(\frac{1}{T}\sum_{s=1}^{T}(\mathbf{u}_s-\widehat{\mathbf{u}}_s)\mathbf{u}_s^\top w_s(\tau_t)\right)\bm{\Omega}^{-1}(\tau_t)\mathbf{u}_t \nonumber\\
			&=& \frac{1}{\sqrt{T}} Th O_P\left(h^2 + \sqrt{\frac{\log T}{Th}}\right)=o_P(1)
		\end{eqnarray*}
		if $Th^6\to 0$, which implies that
		\begin{eqnarray*}
			\mathbf{J}_{2} &=& \frac{1}{T}\sum_{t=\lfloor Th\rfloor }^{\lfloor T(1-h)+1\rfloor} \mathbf{y}_{t-1}^{(2)}\otimes \bm{\alpha}^\top(\tau_t)\bm{\Omega}^{-1}(\tau_t)\left(\frac{1}{T}\sum_{s=1}^{T}(\mathbf{u}_s-\widehat{\mathbf{u}}_s)\mathbf{u}_s^\top w_s(\tau_t)\right)\bm{\Omega}^{-1}(\tau_t)\mathbf{u}_t + o_P(1)\nonumber\\
			&=& \frac{1}{T}\sum_{t=\lfloor Th\rfloor }^{\lfloor T(1-h)+1\rfloor} \mathbf{y}_{t-1}^{(2)}\otimes \bm{\alpha}^\top(\tau_t)\bm{\Omega}^{-1}(\tau_t)\left(\frac{1}{T}\sum_{s=1}^{T}\frac{1}{2}h^2\widetilde{c}_2[\bm{\alpha}^{(2)}(\tau_s),\bm{\Gamma}^{(2)}(\tau_s)]\mathbf{w}_{s-1}\mathbf{u}_s^\top w_s(\tau_t)\right)\bm{\Omega}^{-1}(\tau_t)\mathbf{u}_t\nonumber\\
			&& + \frac{1}{T}\sum_{t=\lfloor Th\rfloor }^{\lfloor T(1-h)+1\rfloor} \mathbf{y}_{t-1}^{(2)}\otimes \bm{\alpha}^\top(\tau_t)\bm{\Omega}^{-1}(\tau_t)\nonumber\\
			&&\times \left(\frac{1}{T}\sum_{s=1}^{T}\frac{1}{Th}\sum_{v=1}^{T}\mathbf{u}_{v}\mathbf{w}_{v-1}^\top\bm{\Sigma}_{\mathbf{w}}^{-1}(\tau_s)  K\left(\frac{\tau_v-\tau_s}{h}\right)\mathbf{w}_{s-1}\mathbf{u}_s^\top w_s(\tau_t)\right)\bm{\Omega}^{-1}(\tau_t)\mathbf{u}_t+o_P(1).\nonumber
		\end{eqnarray*}
		
		Then, by using Burkholder inequality for martingale differences again, we can prove that $\mathbf{J}_2 =o_P(1)$ and $\mathbf{J}_3 = o_P(1)$. Combining the above results, we have proved $\mathbf{K}_{T,17,2} = o_P(1)$.
		
		Using similar arguments, we can prove $\mathbf{K}_{T,17,3} = o_P(1)$, $\mathbf{K}_{T,17,4} = o_P(1)$ and $\mathbf{K}_{T,18}= o_P(1)$.
		
		The proof of part (3) is now completed.
		
		\medskip
		
		\noindent (4). By $\sup_{\tau\in[0,1]}\left|\frac{1}{T^{3/2}h}\Delta \mathbf{X}^\top\mathbf{K}(\tau)\mathbf{R}(\widehat{\bm{\alpha}})\right| = O_P(\sqrt{\log T/(Th)})$ and $\sup_{\tau\in[0,1]}\left|\bm{\alpha}(\tau_t) - \widehat{\bm{\alpha}}(\tau_t)\right| = O_P(h^2+\sqrt{\log T/(Th)})$, if $Th^8\to0$, we have
		\begin{eqnarray*}
			&&\frac{1}{T}\sum_{t=1}^{T}\widetilde{\mathbf{R}}_t(\widehat{\bm{\alpha}})\widehat{\bm{\Omega}}^{-1}(\tau_t)\left(\bm{\alpha}(\tau_t) - \widehat{\bm{\alpha}}(\tau_t)\right)\bm{\beta}^{\top}\mathbf{y}_{t-1}\nonumber \\
			&=& \frac{1}{T}\sum_{t=1}^{T}\mathbf{R}_t(\widehat{\bm{\alpha}})\widehat{\bm{\Omega}}^{-1}(\tau_t)\left(\bm{\alpha}(\tau_t)- \widehat{\bm{\alpha}}(\tau_t)\right)\bm{\beta}^{\top}\mathbf{z}_{t-1} \nonumber\\
			&& -\frac{1}{T}\sum_{t=1}^{T}\mathbf{R}^\top(\bm{\alpha})\mathbf{K}(\tau_t)\Delta \mathbf{X}^{*}\left(\Delta \mathbf{X}^{*,\top}\mathbf{K}(\tau_t) \Delta \mathbf{X}^*\right)^{-1}\left[\mathbf{I}_{d^2(p_0-1)},\mathbf{0}_{d^2(p_0-1)} \right]^\top \Delta\mathbf{X}_{t-1}\nonumber\\
			&&\times \widehat{\bm{\Omega}}^{-1}(\tau_t)\left(\bm{\alpha}(\tau_t) - \widehat{\bm{\alpha}}(\tau_t)\right)\bm{\beta}^{\top}\mathbf{z}_{t-1}\nonumber\\
			&=& \frac{1}{T}\sum_{t=1}^{T}\left(\mathbf{y}_{t-1}^{(2)}\otimes \bm{\alpha}^\top(\tau_t)\right)\bm{\Omega}^{-1}(\tau_t)\left(\bm{\alpha}(\tau_t)- \widehat{\bm{\alpha}}(\tau_t)\right)\bm{\beta}^{\top}\mathbf{z}_{t-1} + o_P(1).\nonumber
		\end{eqnarray*}
		
		Similar to the proof of part (3), we have 
		$$
		\frac{1}{T}\sum_{t=1}^{T}\left(\mathbf{y}_{t-1}^{(2)}\otimes \bm{\alpha}^\top(\tau_t)\right)\bm{\Omega}^{-1}(\tau_t)\left(\bm{\alpha}(\tau_t)- \widehat{\bm{\alpha}}(\tau_t)\right)\bm{\beta}^{\top}\mathbf{z}_{t-1} = o_P(1).
		$$
		The proof of part (4) is now completed.
		
		\medskip
		
		\noindent (5). For $\widehat{\bm{\Gamma}}(\tau,\widehat{\bm{\alpha}}\bm{\beta}^\top)-\bm{\Gamma}(\tau)$, we have
		\begin{eqnarray*}
			&&\widehat{\bm{\Gamma}}(\tau,\widehat{\bm{\alpha}}\bm{\beta}^\top)-\bm{\Gamma}(\tau) \nonumber\\
			&=& \sum_{t=1}^{T}\bm{\Gamma}(\tau_t)\Delta\mathbf{x}_{t-1}\Delta\mathbf{x}_{t-1}^{*,\top} K\left(\frac{\tau_t-\tau}{h}\right) \left(\sum_{t=1}^{T}\Delta\mathbf{x}_{t-1}^*\Delta\mathbf{x}_{t-1}^{*,\top} K\left(\frac{\tau_t-\tau}{h}\right)\right)^{-1}\left[\mathbf{I}_{d(p_0-1)},\mathbf{0}_{d(p_0-1)}\right]^\top-\bm{\Gamma}(\tau)\nonumber\\
			&& + \sum_{t=1}^{T}(\bm{\alpha}(\tau_t)-\widehat{\bm{\alpha}}(\tau_t))\bm{\beta}^\top\mathbf{y}_{t-1}\Delta\mathbf{x}_{t-1}^{*,\top} K\left(\frac{\tau_t-\tau}{h}\right) \left(\sum_{t=1}^{T}\Delta\mathbf{x}_{t-1}^*\Delta\mathbf{x}_{t-1}^{*,\top} K\left(\frac{\tau_t-\tau}{h}\right)\right)^{-1}\left[\mathbf{I}_{d(p_0-1)},\mathbf{0}_{d(p_0-1)}\right]^\top\nonumber\\
			&& + \sum_{t=1}^{T}\mathbf{u}_{t}\Delta\mathbf{x}_{t-1}^{*,\top} K\left(\frac{\tau_t-\tau}{h}\right) \left(\sum_{t=1}^{T}\Delta\mathbf{x}_{t-1}^*\Delta\mathbf{x}_{t-1}^{*,\top} K\left(\frac{\tau_t-\tau}{h}\right)\right)^{-1}\left[\mathbf{I}_{d(p_0-1)},\mathbf{0}_{d(p_0-1)}\right]^\top\nonumber\\
			&=&\mathbf{K}_{T,19} + \mathbf{K}_{T,20} + \mathbf{K}_{T,21}.\nonumber
		\end{eqnarray*}
		
		By standard arguments for local linear estimation method and the uniform convergence results in Lemmas \ref{L5} and \ref{L7}, we have
		$$
		\sup_{\tau\in[0,1]}\left|\widehat{\bm{\Gamma}}(\tau,\widehat{\bm{\alpha}}\bm{\beta}^\top)-\bm{\Gamma}(\tau)\right| = O_P\left(h^2 + \sqrt{\frac{\log T}{Th}}\right).
		$$ 
		
		In addition, as $\sup_{\tau\in[0,1]}\left|\frac{1}{T^{3/2}h}\mathbf{R}^\top(\widehat{\bm{\alpha}})\mathbf{K}(\tau)\Delta\mathbf{X}^*\right| = O_P\left(\sqrt{\log T /(Th)}\right) $, \\$\sup_{\tau\in[0,1]}\left|\widehat{\bm{\Omega}}(\tau) - \bm{\Omega}(\tau)\right| = O_P\left(h^2+\sqrt{\log T /(Th)}\right)$, $Th^6\to 0$ and $\frac{Th^2}{(\log T)^2}\to \infty$, we have
		\begin{eqnarray*}
			&&\frac{1}{T}\sum_{t=1}^{T}\widetilde{\mathbf{R}}_t(\widehat{\bm{\alpha}})\widehat{\bm{\Omega}}^{-1}(\tau_t)\left(\bm{\Gamma}(\tau_t) - \widehat{\bm{\Gamma}}(\tau_t,\widehat{\bm{\alpha}}\bm{\beta}^\top)\right)\Delta \mathbf{x}_{t-1}\nonumber\\
			&=&\frac{1}{T}\sum_{t=1}^{T}\mathbf{y}_{t-1}^{(2)}\otimes \bm{\alpha}^\top(\tau_t)\bm{\Omega}^{-1}(\tau_t)\left(\bm{\Gamma}(\tau_t) - \widehat{\bm{\Gamma}}(\tau_t,\widehat{\bm{\alpha}}\bm{\beta}^\top)\right)\Delta \mathbf{x}_{t-1} + o_P(1)\nonumber\\
			&=&\frac{1}{T}\sum_{t=\lfloor Th \rfloor}^{\lfloor T(1-h) \rfloor}\mathbf{y}_{t-1}^{(2)}\otimes \bm{\alpha}^\top(\tau_t)\bm{\Omega}^{-1}(\tau_t)\left(\bm{\Gamma}(\tau_t) - \widehat{\bm{\Gamma}}(\tau_t,\widehat{\bm{\alpha}}\bm{\beta}^\top)\right)\Delta \mathbf{x}_{t-1} + o_P(1).\nonumber
		\end{eqnarray*}

		Again, by standard arguments for local linear estimation method, uniformly over $\tau\in [h,1-h]$, we have
		\begin{eqnarray*}
			\mathbf{K}_{T,19} = \frac{1}{2}h^2\widetilde{c}_2\bm{\Gamma}^{(2)}(\tau) + o(h^2) + h^2O_P\left(h^2 + \sqrt{\frac{\log T}{Th}}\right).\nonumber
		\end{eqnarray*}
		
		Similarly to the proof of Lemma \ref{L4} (3), we have
		$$
		\frac{1}{T}\sum_{t=\lfloor Th \rfloor}^{\lfloor T(1-h) \rfloor}\mathbf{y}_{t-1}^{(2)}\otimes \bm{\alpha}^\top(\tau_t)\bm{\Omega}^{-1}(\tau_t)\bm{\Gamma}^{(2)}(\tau_t)\Delta \mathbf{x}_{t-1} = O_P(1).
		$$
		Hence, we have
		$$
		\frac{1}{T}\sum_{t=\lfloor Th \rfloor}^{\lfloor T(1-h) \rfloor}\mathbf{y}_{t-1}^{(2)}\otimes \bm{\alpha}^\top(\tau_t)\bm{\Omega}^{-1}(\tau_t)\times \mathbf{K}_{T,19}\times \Delta \mathbf{x}_{t-1} = o_P(1).
		$$
		
		In addition, similarly to the proof of part (4), by using Burkholder inequality for martingale differences, we can prove that 
		$$
		\frac{1}{T}\sum_{t=\lfloor Th \rfloor}^{\lfloor T(1-h) \rfloor}\mathbf{y}_{t-1}^{(2)}\otimes \bm{\alpha}^\top(\tau_t)\bm{\Omega}^{-1}(\tau_t)\times \mathbf{K}_{T,20}\times \Delta \mathbf{x}_{t-1} = o_P(1)
		$$
		and 
		$$
		\frac{1}{T}\sum_{t=\lfloor Th \rfloor}^{\lfloor T(1-h) \rfloor}\mathbf{y}_{t-1}^{(2)}\otimes \bm{\alpha}^\top(\tau_t)\bm{\Omega}^{-1}(\tau_t)\times \mathbf{K}_{T,21}\times \Delta \mathbf{x}_{t-1} = o_P(1).
		$$ 
		
		Note that in these two cases, we can use the uniform convergence results in Lemma \ref{L5} to replace the inverse of the sample covariance matrix with its population version. The proof is now completed.
	\end{proof}
	
	\medskip
	
	Define $\bm{\Gamma}_{p}(\tau)=[\bm{\Gamma}_{p,1}(\tau),\ldots,\bm{\Gamma}_{p,p-1}(\tau)]$, where $\bm{\Gamma}_{p,j}(\tau)=\bm{\Gamma}_{j}(\tau)$ for $1\leq j \leq p_0-1$ and $\bm{\Gamma}_{p,j}(\tau)=\mathbf{0}$ for $j > p_0-1$. Let $\mathbf{h}_{p,t-1}=[\mathbf{y}_{t-1}^\top,\Delta\mathbf{x}_{p,t-1}^\top]^\top$, $\Delta\mathbf{x}_{p,t-1}=[\Delta\mathbf{y}_{t-1}^\top,\ldots,\Delta\mathbf{y}_{t-p-1}^\top]^\top$, $\mathbf{h}_{p,t-1}^* = \left[\mathbf{h}_{p,t-1}^\top, \frac{\tau_t-\tau}{h}\mathbf{h}_{p,t-1}^\top \right]^\top$,
	$$
	\mathbf{M}_p(\tau_t) = [\bm{\Pi}(\tau_t),\bm{\Gamma}_p(\tau_t)]-[\bm{\Pi}(\tau),\bm{\Gamma}_p(\tau)]-[\bm{\Pi}^{(1)}(\tau),\bm{\Gamma}_p^{(1)}(\tau)](\tau_t-\tau)-\frac{1}{2}[\bm{\Pi}^{(2)}(\tau),\bm{\Gamma}_p^{(2)}(\tau)](\tau_t-\tau)^2,
	$$
	$\bm{\Gamma}_{\overline{p}}(\tau)=[\bm{\Gamma}_{p,p}(\tau),\ldots,\mathbf{\Gamma}_{p,P-1}(\tau)]$ and $\Delta\mathbf{x}_{\overline{p},t-1}=[\Delta\mathbf{x}_{t-p-2}^\top,\ldots,\Delta\mathbf{x}_{t-P-1}^\top]^\top$.
	
	\begin{proof}[Proof of Lemma \ref{L9}]
		\item
		\noindent (1). Since $p\geq p_0$, we have $\widehat{\mathbf{u}}_{p,t} = \mathbf{u}_t+\left[\bm{\Pi}(\tau_t) - \widehat{\bm{\Pi}}(\tau_t) ,\bm{\Gamma}_p(\tau_t)-\widehat{\bm{\Gamma}}_p(\tau_t)\right]\mathbf{h}_{p,t-1}$ and
		\begin{eqnarray*}
			&&\text{RSS}(p)\nonumber\\
			&=&\frac{1}{T}\sum_{t=1}^{T}\mathbf{u}_t^\top\mathbf{u}_t-2\frac{1}{T}\sum_{t=1}^{T}\left(\mathbf{u}_t-\widehat{\mathbf{u}}_{p,t}\right)^\top\mathbf{u}_t  \nonumber\\
			&&+\frac{1}{T}\sum_{t=1}^{T}\mathbf{h}_{p,t-1}^\top\left[\bm{\Pi}(\tau_t) - \widehat{\bm{\Pi}}(\tau_t) ,\bm{\Gamma}_p(\tau_t)-\widehat{\bm{\Gamma}}_p(\tau_t)\right]^\top\left[\bm{\Pi}(\tau_t) - \widehat{\bm{\Pi}}(\tau_t) ,\bm{\Gamma}_p(\tau_t)-\widehat{\bm{\Gamma}}_p(\tau_t)\right] \mathbf{h}_{p,t-1}\nonumber\\
			&=&  \frac{1}{T}\sum_{t=1}^{T}\mathbf{u}_t^\top\mathbf{u}_t+ K_{T,22}+K_{T,23}.\nonumber
		\end{eqnarray*}
		
		For $K_{T,23}$, by Theorem 1 (ii), we have $K_{T,23} = O_P\left( c_T^2 \right)$ with $c_T = h^2+(\log T/(Th))^{1/2}$.
		
		For $K_{T,22}$, write
		\begin{eqnarray*}
			&& \frac{1}{T}\sum_{t=1}^{T} \mathbf{u}_t\left(\mathbf{u}_t-\widehat{\mathbf{u}}_{p,t}\right)^\top = \frac{1}{T}\sum_{t=1}^{T} \mathbf{u}_t\mathbf{h}_{p,t-1}^\top\left[\widehat{\bm{\Pi}}(\tau_t) - \bm{\Pi}(\tau_t) ,\widehat{\bm{\Gamma}}_p(\tau_t)-\bm{\Gamma}_p(\tau_t)\right]^\top\nonumber\\
			&=&\frac{1}{2} h^2 \cdot \frac{1}{T}\sum_{t=1}^{T}\mathbf{u}_t[\mathbf{h}_{p,t-1}^\top,\mathbf{0}_{d^2p\times 1}^\top] \mathbf{S}_{T}^{+}(\tau_t) \left(\begin{matrix}
				\mathbf{S}_{T,2}(\tau_t) \\
				\mathbf{S}_{T,3}(\tau_t)
			\end{matrix}\right)\left[\bm{\alpha}^{(2)}(\tau_t)\bm{\beta}^\top ,\bm{\Gamma}_p^{(2)}(\tau_t)\right]^\top\nonumber\\
			&&+\frac{1}{T}\sum_{t=1}^{T}\mathbf{u}_t[\mathbf{h}_{p,t-1}^\top,\mathbf{0}_{d^2p\times 1}^\top]\mathbf{S}_{T}^{+}(\tau_t) \left(\sum_{s=1}^{T} \mathbf{h}_{p,s-1}^*\mathbf{h}_{p,s-1}^\top  \mathbf{M}_p^\top(\tau_s)K\left(\frac{\tau_s-\tau_t}{h}\right) \right)\nonumber\\
			&&+\frac{1}{T}\sum_{t=1}^{T}\mathbf{u}_t[\mathbf{h}_{p,t-1}^\top,\mathbf{0}_{d^2p\times 1}^\top] \mathbf{S}_{T}^{+}(\tau_t) \left(\sum_{s=1}^{T}\mathbf{h}_{p,s-1}^*\mathbf{u}_s^\top K\left(\frac{\tau_s-\tau_t}{h}\right)\right)\nonumber\\
			&=& K_{T,22,1}+K_{T,22,2}+K_{T,22,3},\nonumber
		\end{eqnarray*}
		where $\mathbf{S}_{T,l}(\tau) = \sum_{t=1}^{T}\mathbf{h}_{p,t-1}\mathbf{h}_{p,t-1}^\top \left(\frac{\tau_t-\tau}{h}\right)^lK\left(\frac{\tau_t-\tau}{h}\right)$ and
		$$
		\mathbf{S}_T(\tau) = \left[\begin{matrix}
			\mathbf{S}_{T,0}(\tau) & \mathbf{S}_{T,1}(\tau)\\ \mathbf{S}_{T,1}(\tau) & \mathbf{S}_{T,2}(\tau)
		\end{matrix} \right].
		$$
		
		By the uniform convergence results stated in Lemmas \ref{L5} and \ref{L7} and using similar arguments as those in the proof of Theorem \ref{Thm1}, we replace the weighed sample covariance with its limit plus the rate $O_P\left((\log T/(Th))^{1/2}\right)$, and hence
		\begin{eqnarray*}
			|K_{T,22,1}|+|K_{T,22,2}|=O_P\left(T^{-\frac{1}{2}}h^2+h^2(\log T/(Th))^{1/2}\right).\nonumber
		\end{eqnarray*}
		
		Meanwhile, by Lemmas \ref{L5} and \ref{L7} and the proof of Theorem \ref{Thm1}, we have
		\begin{eqnarray*}
			|K_{T,22,3}| &=& \frac{1}{T}\sum_{t=1}^{T} \mathbf{u}_t\mathbf{w}_{p,t-1}^\top \bm{\Sigma}_{\mathbf{w}}^{-1}(\tau_t) \left(\frac{1}{Th}\sum_{s=1}^{T}\mathbf{w}_{p,s-1}\mathbf{u}_s^\top K\left(\frac{\tau_s-\tau_t}{h}\right)\right)\nonumber\\
			&&+O_P\left((Th)^{-1/2}\cdot(h^2+\sqrt{\log T/(Th)})\right),\nonumber
		\end{eqnarray*}
		where $\mathbf{w}_{p,t-1} = [\mathbf{y}_{t-1}^\top\bm{\beta},\Delta\mathbf{x}_{p,t-1}^\top]^\top$. Then by using Burkholder inequality for martingale differences, we have
		\begin{eqnarray*}
			&&\frac{1}{T}\sum_{t=1}^{T} \mathbf{u}_t\mathbf{w}_{p,t-1}^\top \bm{\Sigma}_{\mathbf{w}}^{-1}(\tau_t) \left(\frac{1}{Th}\sum_{s=1}^{T}\mathbf{w}_{p,s-1}\mathbf{u}_s^\top K\left(\frac{\tau_s-\tau_t}{h}\right)\right)\nonumber\\
			&=&2\frac{1}{T}\sum_{t=1}^{T} \mathbf{u}_t\mathbf{w}_{p,t-1}^\top \bm{\Sigma}_{\mathbf{w}}^{-1}(\tau_t) \left(\frac{1}{Th}\sum_{s=1}^{t-1}\mathbf{w}_{p,s-1}\mathbf{u}_s^\top K\left(\frac{\tau_s-\tau_t}{h}\right)\right) + O_P(1/(Th))\nonumber\\
			&=&O_P(1/(T\sqrt{h})+1/(Th)).\nonumber
		\end{eqnarray*}
		
		Combining the above results, we have proved part (1).
		
		\medskip
		
		\noindent (2).  For $p < p_0$, we have $[\widehat{\bm{\Pi}}(\tau),\widehat{\bm{\Gamma}}_p(\tau)]-[\bm{\Pi}(\tau),\bm{\Gamma}_{p}(\tau)]\mathbf{Q}^{*,-1}(\tau) = \mathbf{B}_p(\tau)+o_P(1)$ uniformly over $\tau \in [0,1]$, where $\mathbf{B}_p(\tau)$ is a nonrandom bias term. Since 
		$$
		\widehat{\mathbf{u}}_{p,t}=\mathbf{u}_{t}+\left[\bm{\Pi}(\tau_t) - \widehat{\bm{\Pi}}(\tau_t) ,\bm{\Gamma}_p(\tau_t)-\widehat{\bm{\Gamma}}_p(\tau_t)\right]\mathbf{Q}^{*,-1}(\tau)\mathbf{w}_{p,t-1}+\bm{\Gamma}_{\overline{p}}(\tau_t)\Delta\mathbf{x}_{\overline{p},t-1},
		$$
		we have
		$$
		\text{RSS}(p)= \frac{1}{T}\sum_{t=1}^{T} \mathbf{u}_{t}^\top\mathbf{u}_{t}+\frac{1}{T}\sum_{t=1}^{T}\mathrm{tr}\left( [\mathbf{B}_p(\tau_t),\bm{\Gamma}_{\overline{p}}(\tau_t)]E\left(\mathbf{w}_{P,t-1}\mathbf{w}_{P,t-1}^\top\right)[\mathbf{B}_p(\tau_t),\bm{\Gamma}_{\overline{p}}(\tau_t)]^\top \right)+o_P(1).
		$$
		Since $[\mathbf{B}_p(\tau_t),\bm{\Gamma}_{\overline{p}}(\tau_t)]\neq 0$ and $E\left(\mathbf{w}_{P,t-1}\mathbf{w}_{P,t-1}^\top\right)$ is positively definite, the result follows.
	\end{proof}
	
	\begin{proof}[Proof of Lemma \ref{L10}]
		\item
		Noting that $\{\mathbf{u}_s^\top\mathbf{W}_{s-1}^\top\mathbf{H}_s\sum_{v=1}^{s-1}\mathbf{W}_{v-1}\mathbf{u}_v w_{s,v}\}_{s=2}^{T}$ is a sequence of martingale differences, we apply the martingale central limit theory to prove the asymptotic normality of $\widetilde{U}$. 
		
		We first verify the Lindeberg condition. Since $\{\bm{\varepsilon}_t\}$ is a sequence of independent random variables, we have for $\delta > 4$
		$$
		\|\mathbf{y}_t^*\|_{\delta} \leq \|\mathbf{W}_{t-1}\|_\delta \|\mathbf{u}_t\|_\delta =O(1)
		$$
		with $\mathbf{y}_t^* := \mathbf{W}_{t-1}\mathbf{u}_t$. Let $\mathbf{h}_{t-1}^* = \sum_{v=1}^{t-1}\mathbf{y}_v^*w_{t,v}$. By Cauchy-Schwarz inequality, we have
		\begin{eqnarray*}
			\sum_{t=2}^T \|\mathbf{y}_t^{*,\top}\mathbf{H}_s \mathbf{h}_{t-1}^*\|_{\delta/2}^{\delta/2} \leq O(1)\sum_{t=2}^{T} \|\mathbf{y}_t^{*}\|_{\delta}^{\delta/2} \|\mathbf{h}_{t-1}^*\|_{\delta}^{\delta/2}.\nonumber
		\end{eqnarray*}
		
		Then, in order to verify the Lindeberg condition, it suffices to show $\sum_{t=2}^{T}\|\mathbf{h}_{t-1}^*\|_{\delta}^{\delta/2} = o(1)$. Note that $\{\mathbf{y}_t\}$ is a sequence of martingale differences, then by Burkholder inequality, we have
		$$
		\|\mathbf{h}_{t-1}^*\|_{\delta}^{\delta} \leq O(1) \left\|\sum_{v=1}^{t-1}|\mathbf{y}_v^*w_{t,v}|^2 \right\|_{\delta/2}^{\delta/2} \leq \left(\sum_{v=1}^{t-1}\|\mathbf{y}_v^*w_{t,v}\|_{\delta}^2 \right)^{\delta/2} = O\left((\sum_{v=1}^{t-1}|w_{t,v}|^2)^{\delta/2}\right).
		$$
		
		Hence, we have $\sum_{t=2}^T \|\mathbf{y}_t^{*,\top}\mathbf{H}_t \mathbf{h}_{t-1}^*\|_{\delta/2}^{\delta/2} =  O\left(\sum_{t=2}^{T}(\sum_{v=1}^{t-1}|w_{t,v}|^2)^{\delta/4}\right)$. Let $a_t = \sum_{v=1}^{t-1}|w_{t,v}|^2$. Since 
		\begin{eqnarray*}
			\max_t|a_t| &=& \max_t\left|\sum_{i=1}^{t-1}\frac{1}{T^2h}\left[\int_{-1}^{1}K\left(u\right)K\left(u+\frac{i}{Th}\right)\mathrm{d}u\right]^2\right| \nonumber\\
			&\leq& \frac{1}{T^2h}\sum_{i=1}^{T}\left[\int_{-1}^{1}K\left(u\right)K\left(u+\frac{i}{Th}\right)\mathrm{d}u\right]^2\nonumber \\
			&=& \frac{1}{T}\int_{0}^{\infty}\left[\int_{-1}^{1}K\left(u\right)K\left(u+v\right)\mathrm{d}u\right]^2 \mathrm{d}v(1+o(1)) = O(1/T),\nonumber
		\end{eqnarray*}
		we have $\sum_{t=2}^T \|\mathbf{y}_t^{*,\top}\mathbf{H}_t \mathbf{h}_{t-1}^*\|_{\delta/2}^{\delta/2} =  O\left(\sum_{t=2}^{T}(\sum_{v=1}^{t-1}|w_{t,v}|^2)^{\delta/4}\right) = \max_t|a_t|^{\delta/4-1}=o(1)$ as $\delta > 4$.
		
		We next prove the convergence of conditional variance of $\widetilde{U}$, i.e., 
		$$
		\sum_{t=2}^{T}\mathrm{tr}\left(E(\mathbf{H}_t^\top\mathbf{y}_t^{*}\mathbf{y}_t^{*,\top}\mathbf{H}_t \mid  \mathcal{F}_{t-1})\mathbf{h}_{t-1}^*\mathbf{h}_{t-1}^{*,\top}\right) \to_P  \sum_{t=2}^{T}\mathrm{tr}\left(E(\mathbf{H}_t^\top\mathbf{y}_t^{*}\mathbf{y}_t^{*,\top}\mathbf{H}_t)E(\mathbf{h}_{t-1}^*\mathbf{h}_{t-1}^{*,\top})\right).
		$$
		We will prove this convergence result by showing
		$$
		\left\|\sum_{t=2}^{T}\mathrm{tr}\left(\left[E(\mathbf{H}_t^\top\mathbf{y}_t^{*}\mathbf{y}_t^{*,\top}\mathbf{H}_t \mid  \mathcal{F}_{t-1})-E(\mathbf{H}_t^\top\mathbf{y}_t^{*}\mathbf{y}_t^{*,\top}\mathbf{H}_t)\right]\mathbf{h}_{t-1}^*\mathbf{h}_{t-1}^{*,\top}\right)\right\|_1 = o(1)
		$$
		and
		$$
		\left\|\sum_{t=2}^{T}\mathrm{tr}\left(E(\mathbf{H}_t^\top\mathbf{y}_t^{*}\mathbf{y}_t^{*,\top}\mathbf{H}_t)\left[\mathbf{h}_{t-1}^*\mathbf{h}_{t-1}^{*,\top}-E(\mathbf{h}_{t-1}^*\mathbf{h}_{t-1}^{*,\top})\right]\right)\right\|_1 = o(1),
		$$
		respectively.
		
		Consider $\left\|\sum_{t=2}^{T}\mathrm{tr}\left(\left[E(\mathbf{H}_t^\top\mathbf{y}_t^{*}\mathbf{y}_t^{*,\top}\mathbf{H}_t \mid  \mathcal{F}_{t-1})-E(\mathbf{H}_t^\top\mathbf{y}_t^{*}\mathbf{y}_t^{*,\top}\mathbf{H}_t)\right]\mathbf{h}_{t-1}^*\mathbf{h}_{t-1}^{*,\top}\right)\right\|_1$. 
		
		Let $\mathbf{w}_t^* = E(\mathbf{H}_t^\top\mathbf{y}_t^{*}\mathbf{y}_t^{*,\top}\mathbf{H}_t \mid  \mathcal{F}_{t-1}) - E(\mathbf{H}_t^\top\mathbf{y}_t^{*}\mathbf{y}_t^{*,\top}\mathbf{H}_t)$ and thus $E(\mathbf{w}_t^*) = 0$. For any integer $I \geq 1$ introduce the truncated process $\mathbf{h}_{t-1,I}^* = E\left(\mathbf{h}_{t-1}^*|\mathcal{F}_{t-I}\right)$. Then $\mathbf{h}_{t-1,I}^* = 0$ if $t\leq I$ and $\mathbf{h}_{t-1,I}^* = \sum_{s=1}^{t-I} w_{s,t}\mathbf{y}_s$ for $1\leq I < t$. For $2 \leq t\leq T$, by using Burkholder inequality again,
		\begin{equation*}
			\|\mathbf{h}_{t-1,I}^*-\mathbf{h}_{t-1}^*\|_\delta^2 \leq M \max_t \|\mathbf{y}_t\|_\delta^2 \sum_{s=\max(1,t-I+1)}^{t-1}w_{s,t}^2 = O\left(\sum_{s=\max(1,t-I+1)}^{t-1}w_{s,t}^2\right).
		\end{equation*}
		
		Let $L(I) = \sum_{J=1}^{I}l(J)$ with $l(J) = \sum_{s=1}^{T-J}w_{s,s+J}^2$, $V(I)=\sum_{t=2}^{T}\mathrm{tr}\left[\mathbf{w}_t^* \mathbf{h}_{t-1,I}^*\mathbf{h}_{t-1,I}^{*,\top}\right]$ and 
		$$T(I) = \sum_{t=2}^{T}\mathrm{tr}\left[E(\mathbf{w}_t^*|\mathcal{F}_{t-I})\mathbf{h}_{t-1,I}^*\mathbf{h}_{t-1,I}^{*,\top}\right].
		$$
		Note that for any fixed integer $J$, we have
		\begin{eqnarray*}
			\sum_{s=1}^{T-J}w_{s,s+J}^2 &=& \sum_{s=1}^{T-J}\frac{1}{T^2h}\left[\int_{-1}^{1}K\left(u\right)K\left(u+\frac{J}{Th}\right)\mathrm{d}u\right]^2\nonumber \\
			&=& \frac{T-J}{T^2h}\left[\int_{-1}^{1}K\left(u\right)K\left(u+\frac{J}{Th}\right)\mathrm{d}u\right]^2 = O(1/(Th)).\nonumber
		\end{eqnarray*}
		In addition, by Cauchy--Schwarz inequality, if $I/(Th) \to 0$, we have
		\begin{eqnarray*}
			\|V(1)-V(I)\|_1 &\leq& \sum_{t=2}^{T} \left\|\mathrm{tr}\left[\mathbf{w}_t^*\left(\mathbf{h}_{t-1}^*\mathbf{h}_{t-1}^{*,\top}-\mathbf{h}_{t-1,I}^*\mathbf{h}_{t-1,I}^{*,\top}\right)\right]\right\|_1 \nonumber\\
			&\leq& \sum_{t=2}^{T}\|\mathbf{w}_t^*\|_2 \|\mathbf{h}_{t-1}^*-\mathbf{h}_{t-1,I}^*\|_4 \|\mathbf{h}_{t-1}^*+\mathbf{h}_{t-1,I}^*\|_4 \nonumber\\
			&\leq& M\sum_{t=2}^{T}\|\mathbf{h}_{t-1}^*-\mathbf{h}_{t-1,I}^*\|_4 a_t^{1/2}\nonumber\\
			&\leq& M\left\{\sum_{t=2}^{T}\|\mathbf{h}_{t-1}^*-\mathbf{h}_{t-1,I}^*\|_4^2\right\}^{1/2} \left\{\sum_{t=2}^{T}a_t\right\}^{1/2}\nonumber\\
			&=&O(1)[L(I)]^{1/2} \to 0,\nonumber
		\end{eqnarray*}
		since
		\begin{eqnarray*}
			\|\mathbf{h}_{t-1}^*+\mathbf{h}_{t-1,I}^*\|_4 &\leq& \left\{M\sum_{s=1}^{t-1}\|w_{s,t}\mathbf{y}_{s}\|_4^2\right\}^{1/2}+\left\{M\sum_{s=1}^{t-I}\|w_{s,t}\mathbf{y}_{s}\|_4^2\right\}^{1/2}\nonumber\\
			&=&O\left(\left\{\sum_{s=1}^{t-1}w_{s,t}^2\right\}^{1/2} \right)=O(a_t^{1/2}).\nonumber
		\end{eqnarray*}
		
		Define the projection operator $\mathcal{P}_t\bm{\xi} = E\left(\bm{\xi}\mid\mathcal{F}_{t}\right)-E\left(\bm{\xi}\mid\mathcal{F}_{t-1}\right)$. For $0\leq j\leq I-1$, let $U(j) = \sum_{t=2}^{T}\mathrm{tr}\left[\left(\mathcal{P}_{t-j} \mathbf{w}_t^* \right)\mathbf{h}_{t-1,I}^*\mathbf{h}_{t-1,I}^{*,\top}\right]$, then
		$$
		V(I)-T(I)= \sum_{t=2}^{T}\mathrm{tr}\left[\left(\sum_{j=0}^{I-1}\mathcal{P}_{t-j} \mathbf{w}_t^*\right) \mathbf{h}_{t-1,I}^*\mathbf{h}_{t-1,I}^{*,\top}\right] = \sum_{j=0}^{I-1}U(j).
		$$
		Note that $\left\{\left(\mathcal{P}_{t-j} \mathbf{w}_t^*\right) \mathbf{h}_{t-1,I}^*\mathbf{h}_{t-1,I}^{*,\top}\right\}_{t=2}^{T}$ forms a martingale difference sequence since
		$$
		E\left\{\left(\mathcal{P}_{t-j} \mathbf{w}_t^*\right) \mathbf{h}_{t-1,I}^*\mathbf{h}_{t-1,I}^{*,\top}|\mathcal{F}_{t-j-1}\right\} =\left[ E( \mathbf{w}_t^*|\mathcal{F}_{t-j-1})-E( \mathbf{w}_t^*|\mathcal{F}_{t-j-1})\right]\mathbf{h}_{t-1,I}^*\mathbf{h}_{t-1,I}^{*,\top}=0.
		$$
		
		By Cauchy--Schwarz inequality, since $\|\mathcal{P}_{t-j}\mathbf{w}_t^*\|_{\delta/2} \leq 2 \|\mathbf{w}_t^*\|_{\delta/2} < \infty$,
		\begin{eqnarray*}
			\|U(j)\|_{\delta/4}^{\delta/4}&\leq& M \sum_{t=2}^{T}\|\left(\mathcal{P}_{t-j} \mathbf{w}_t^*\right) \mathbf{h}_{t-1,I}^*\mathbf{h}_{t-1,I}^{*,\top}\|_{\delta/4}^{\delta/4} \leq M \sum_{t=2}^{T}\|\mathbf{h}_{t-1,I}^*\|_\delta^{\delta/2}\nonumber\\
			&\leq&M \sum_{t=2}^{T}a_t^{\delta/4} \leq M \max_t a_t^{\delta/4-1} \sum_{t=2}^{T}a_t = O\left(T^{1-\delta/4}\right).\nonumber
		\end{eqnarray*}
		In addition, by $\|V(1)-V(I)\|_1 \to 0$,
		\begin{eqnarray*}
			\|V(1)\|_1 &\leq& \|V(I)-T(I)\|_{\delta/4}+E|T(I)|+o(1)\nonumber \\
			&\leq& \sum_{j=0}^{I-1}\|U(j)\|_{\delta/4} + \max_t\|E(\mathbf{w}_t^*|\mathcal{F}_{t-I})\|_2 \sum_{t=2}^{T} \|\mathbf{h}_{t-1,I}^*\|_4^2 = o(1),\nonumber
		\end{eqnarray*}
		since $\max_t\|E(\mathbf{w}_t^*|\mathcal{F}_{t-I})\|_2 \to 0$  as $I \to \infty$. We then have proved 
		$$
		\left\|\sum_{t=2}^{T}\mathrm{tr}\left(\left[E(\mathbf{H}_t^\top\mathbf{y}_t^{*}\mathbf{y}_t^{*,\top}\mathbf{H}_t \mid  \mathcal{F}_{t-1})-E(\mathbf{H}_t^\top\mathbf{y}_t^{*}\mathbf{y}_t^{*,\top}\mathbf{H}_t)\right]\mathbf{h}_{t-1}^*\mathbf{h}_{t-1}^{*,\top}\right)\right\|_1 = o(1).
		$$
		
		Consider $\left\|\sum_{t=2}^{T}\mathrm{tr}\left(E(\mathbf{H}_t^\top\mathbf{y}_t^{*}\mathbf{y}_t^{*,\top}\mathbf{H}_t)\left[\mathbf{h}_{t-1}^*\mathbf{h}_{t-1}^{*,\top}-E(\mathbf{h}_{t-1}^*\mathbf{h}_{t-1}^{*,\top})\right]\right)\right\|_1$. For notational simplicity, we omit $E(\mathbf{H}_t^\top\mathbf{y}_t^{*}\mathbf{y}_t^{*,\top}\mathbf{H}_t)$ and write
		\begin{eqnarray*}
			&&\sum_{t=2}^{T}\mathrm{tr}\left[\mathbf{h}_{t-1}^*\mathbf{h}_{t-1}^{*,\top}-E\left(\mathbf{h}_{t-1}^*\mathbf{h}_{t-1}^{*,\top}\right)\right] \nonumber\\
			&=&\sum_{t=2}^{T}\sum_{s=1}^{t-1}\mathrm{tr}\left[\left(\mathbf{W}_{s-1}\mathbf{u}_s\mathbf{u}_{s}^\top\mathbf{W}_{s-1}^\top -E\left(\mathbf{W}_{s-1}\mathbf{u}_s\mathbf{u}_{s}^\top\mathbf{W}_{s-1}^\top\right)\right)w_{s,t}^2\right]\nonumber\\
			&&+2\sum_{t=3}^{T}\sum_{s_1=2}^{t-1}\sum_{s_2=1}^{s_1-1}\mathrm{tr}\left[\mathbf{W}_{s_1-1}\mathbf{u}_{s_1}\mathbf{u}_{s_2}\mathbf{W}_{s_2-1}^\top w_{s_1,t}w_{s_2,t}\right]\nonumber\\
			&=&|K_{T,24}| + 2|K_{T,25}|.\nonumber
		\end{eqnarray*}
		
		Consider $|K_{T,24}|$. Write
		\begin{eqnarray*}
			K_{T,24} &=& \sum_{t=2}^{T}\sum_{s=1}^{t-1}\mathrm{tr}\left[\left(\mathbf{u}_s\mathbf{u}_{s}^\top-\bm{\Omega}(\tau_s)\right)\mathbf{W}_{s-1}^\top\mathbf{W}_{s-1}\right]w_{s,t}^2\nonumber \\ 
			&&+\sum_{t=2}^{T}\sum_{s=1}^{t-1}\mathrm{tr}\left[\bm{\Omega}(\tau_s)\left(\mathbf{W}_{s-1}^\top\mathbf{W}_{s-1}-E\left(\mathbf{W}_{s-1}^\top\mathbf{W}_{s-1}\right)\right)\right]w_{s,t}^2 \nonumber\\
			&=&\frac{1}{T}\sum_{s=1}^{T-1}\mathrm{tr}\left[\left(\mathbf{u}_s\mathbf{u}_{s}^\top-\bm{\Omega}(\tau_s)\right)\mathbf{W}_{s-1}^\top\mathbf{W}_{s-1}\right]\left(T\sum_{t=s+1}^{T}w_{s,t}^2\right)\nonumber\\ 
			&&+\frac{1}{T}\sum_{s=1}^{T-1}\mathrm{tr}\left[\bm{\Omega}(\tau_s)\left(\mathbf{W}_{s-1}^\top\mathbf{W}_{s-1}-E\left(\mathbf{W}_{s-1}^\top\mathbf{W}_{s-1}\right)\right)\right]\left(T\sum_{t=s+1}^{T}w_{s,t}^2\right)  \nonumber\\
			&=&K_{T,24,1}+K_{T,24,2}.\nonumber
		\end{eqnarray*}
		
		Since $\mathrm{tr}\left[\left(\mathbf{u}_s\mathbf{u}_{s}^\top-\bm{\Omega}(\tau_s)\right)\mathbf{W}_{s-1}^\top\mathbf{W}_{s-1}\right]$, $s=1,2,\ldots$ are a martingale difference sequence and $T\sum_{t=1}^{T}w_{s,t}^2=O(1)$, we have $K_{T,24,1} = o_P(1)$. In addition, similarly to the proof of Lemma \ref{L4} (the convergence of the estimator of the sample covariance of time-varying VMA($\infty$) processes), we have $K_{T,24,2}= o_P(1)$.
		
		Next, consider $K_{T,25}$. By Cauchy--Schwarz inequality and Burkholder inequality,
		\begin{eqnarray*}
			\|K_{T,25}\|_2^2 &\leq&M \sum_{s_1=2}^{T-1}\|\mathrm{tr}\left[\mathbf{W}_{s_1-1}\bm{\eta}_{s_1}\sum_{s_2=1}^{s_1-1}\mathbf{u}_{s_2}\mathbf{W}_{s_2-1}^\top \sum_{t=s_1+1}^{T}w_{s_1,t}w_{s_2,t}\right]\|_2^2\nonumber \\
			&\leq& M \sum_{s_1=2}^{T-1}\|\mathbf{W}_{s_1-1}\mathbf{u}_{s_1}\|_4^2\|\sum_{s_2=1}^{s_1-1}\mathbf{W}_{s_2-1}\mathbf{u}_{s_2} \sum_{t=s_1+1}^{T}w_{s_1,t}w_{s_2,t}\|_4^2 \nonumber\\
			&\leq& M \sum_{s_1=2}^{T-1}\|\mathbf{W}_{s_1-1}\mathbf{u}_{s_1}\|_4^2\sum_{s_2=1}^{s_1-1}\|\mathbf{W}_{s_2-1}\mathbf{u}_{s_2} \sum_{t=s_1+1}^{T}w_{s_1,t}w_{s_2,t}\|_4^2 \nonumber\\
			&=&O\left(\sum_{s_1=2}^{T-1}\sum_{s_2=1}^{s_1-1}\left(\sum_{t=s_1+1}^{T}w_{s_1,t}w_{s_2,t}\right)^2 \right)=O(1/T)\nonumber
		\end{eqnarray*}
		provided that
		\begin{eqnarray*}
			&& \sum_{k=1}^{T-1}\sum_{t=1}^{k-1}\left[\sum_{j=k+1}^{T}w_{k,j}w_{t,j}\right]^2 \nonumber\\
			&\leq& \frac{1}{T^4h^2}\sum_{k=1}^{T-1}\sum_{t=1}^{k-1}\left(\sum_{j=1+k}^{T}\left[\int_{-1}^{1}K\left(u\right)K\left(u+\frac{j-k}{Th}\right)\mathrm{d}u\right]^2\right)\nonumber \\
			&&\times\left(\sum_{j=1+k}^{T}\left[\int_{-1}^{1}K\left(u\right)K\left(u+\frac{j-t}{Th}\right)\mathrm{d}u\right]^2\right)\nonumber\\
			&\leq&\frac{M}{T^3h}\sum_{k=1}^{T-1}\sum_{t=1}^{k-1}\sum_{j=1+k}^{T}\left[\int_{-1}^{1}K\left(u\right)K\left(u+\frac{j-t}{Th}\right)\mathrm{d}u\right]^2 \nonumber \\
			&\leq& \frac{M}{T^3h}\sum_{k=1}^{T-1}\sum_{j=2}^{T}\sum_{t=1}^{j-1}\left[\int_{-1}^{1}K\left(u\right)K\left(u+\frac{j-t}{Th}\right)\mathrm{d}u\right]^2 =O(1/T).\nonumber
		\end{eqnarray*}
		
		Combining the above results, the proof is now completed.
	\end{proof}

\end{document}